\documentclass[12pt]{article}

\usepackage{amsmath}       
\usepackage{amsthm}        
\usepackage{amssymb}       
\usepackage{amsfonts}
\usepackage[all]{xy}
\usepackage{xspace}
\usepackage{calc}
\usepackage{comment}
\usepackage{pgfplots}
\usepgflibrary{fpu}
\usepackage{mathtools}
\usepackage{tabularx}
\usepackage{ifthen}             
\usepackage{graphicx}
\usepackage{docmute}
\usepackage{array}
\usepackage{subfloat} 
\usepackage{textcomp}
\usepackage{bbm}            
\usepackage{etoolbox}  
\usepackage{verbatim}
\usepackage{stmaryrd}
\usepackage{cancel}
\usepackage{xcolor}
\colorlet{Black}{black}
\usepackage{tensor}
\usepackage{enumerate}  
\usepackage{thm-restate}
\usepackage[numbers,sort]{natbib}
\usepackage{enumitem}
\usepackage{tocloft}
\usepackage{pdfpages}
\usepackage[export]{adjustbox}

\setlength\cftbeforesecskip{1.8pt}
\setlength\cftaftertoctitleskip{5pt}
\setlength\cftbeforesubsecskip{.3pt}

\usepackage[margin=3cm]{geometry}

\pagestyle{plain}


\newenvironment{tz}[1][]{%
                                \begin{tikzpicture}[baseline={([yshift=-.8ex]current bounding                        box.center)},#1] %
                                }{%
                        \end{tikzpicture} %
                        }



\DeclareRobustCommand{\SkipTocEntry}[5]{}


\usepackage{tikz}
\usetikzlibrary{matrix}
\usetikzlibrary{decorations.pathreplacing}
\usetikzlibrary{decorations.markings}
\usetikzlibrary{arrows}
\usetikzlibrary{calc}
\usetikzlibrary{shapes.misc}
\usetikzlibrary{fit}
\usepgflibrary{decorations.pathmorphing}
\usepgflibrary{shapes.geometric}

\pgfdeclarelayer{edgelayer}
\pgfdeclarelayer{nodelayer}
\pgfdeclarelayer{foreground}
\pgfsetlayers{edgelayer,nodelayer,main,foreground}

\tikzstyle{none}=[inner sep=0pt]

\tikzstyle{rn}=[circle,fill=Red,draw=Black,line width=0.8 pt]
\tikzstyle{gn}=[circle,fill=Lime,draw=Black,line width=0.8 pt]
\tikzstyle{bl}=[circle,fill=Blue,draw=Black,line width=0.8 pt]

\tikzstyle{simple}=[-,draw=Black,thick]
\tikzstyle{arrow}=[-,draw=Black,postaction={decorate},decoration={markings,mark=at position .5 with {\arrow{>}}},thick]
\tikzstyle{tick}=[-,draw=Black,postaction={decorate},decoration={markings,mark=at position .5 with {\draw (0,-0.1) -- (0,0.1);}},line width=2.000]
\makeatother\def\thickness{0.7pt}
\tikzstyle{dot}=[circle, draw=black, fill=black, inner sep=.5ex, line width=\thickness, node on layer=foreground]




\makeatletter
\pgfkeys{%
  /tikz/on layer/.code={
    \pgfonlayer{#1}\begingroup
    \aftergroup\endpgfonlayer
    \aftergroup\endgroup
  },
  /tikz/node on layer/.code={
     \gdef\node@@on@layer{%
      \setbox\tikz@tempbox=\hbox\bgroup\pgfonlayer{#1}\unhbox\tikz@tempbox\endpgfonlayer\egroup}
    \aftergroup\node@on@layer
  },
  /tikz/end node on layer/.code={
    \endpgfonlayer\endgroup\endgroup
  }
}

\def\node@on@layer{\aftergroup\node@@on@layer}

\makeatletter
\def\calign@preamble{%
   &\hfil\strut@
    \setboxz@h{\@lign$\m@th\displaystyle{##}$}%
    \ifmeasuring@\savefieldlength@\fi
    \set@field
    \hfil
    \tabskip\alignsep@
}
\let\cmeasure@\measure@
\patchcmd\cmeasure@{\divide\@tempcntb\tw@}{}{}{}
\patchcmd\cmeasure@{\divide\@tempcntb\tw@}{}{}{}
\patchcmd\cmeasure@{\ifodd\maxfields@
  \global\advance\maxfields@\@ne
  \fi}{}{}{}    
\newenvironment{calign}
{%
  \let\align@preamble\calign@preamble
  \let\measure@\cmeasure@
  \align
}
{%
  \endalign
}  
\makeatother
\tikzset{
    master/.style={
        execute at end picture={
            \coordinate (lower right) at (current bounding box.south east);
            \coordinate (upper left) at (current bounding box.north west);
        }
    },
    slave/.style={
        execute at end picture={
            \pgfresetboundingbox
            \path (upper left) rectangle (lower right);
        }
    }
}

\tikzset{blob/.style={draw, circle, fill=white, inner sep=1pt, minimum width=15pt, font=\scriptsize, line width=0.7pt}}
\tikzset{greenregion/.style={fill=green, fill opacity=0.3, draw=none}}
\tikzset{redregion/.style={fill=red, fill opacity=0.3, draw=none}}
\tikzset{blueregion/.style={fill=blue, fill opacity=0.3, draw=none}}
\tikzset{yellowregion/.style={fill=yellow, fill opacity=0.5, draw=none}}
\tikzset{cyanregion/.style={fill=cyan, fill opacity=0.3, draw=none}}
\tikzset{orangeregion/.style={fill=orange, fill opacity=0.6, draw=none}}
\tikzset{solidgreenregion/.style={fill=green!30, fill opacity=1, draw=none}}
\tikzset{solidredregion/.style={fill=red!30, fill opacity=1, draw=none}}
\tikzset{solidblueregion/.style={fill=blue!30, fill opacity=1, draw=none}}
\tikzset{solidyellowregion/.style={fill=yellow!30, fill opacity=1, draw=none}}
\tikzset{string/.style={line width=0.7pt}}
\tikzset{zig/.style={decoration={zigzag,segment length=3, amplitude=0.5}}}
\tikzset{bnd/.style={draw,string}}   
\tikzset{projector/.style={circle, draw, font=\scriptsize, inner sep=-5pt, minimum width=0.35cm, string, fill=white}}
\tikzset{dimension/.style={font=\scriptsize, inner sep=1pt}}

\tikzset{arrow data/.style 2 args={
      decoration={
         markings,
         mark=at position #1 with \arrow{#2}},
         postaction=decorate}
}

\tikzset{along path/.style={every path/.style={}, sloped, allow upside down}}

\def\zxnormal {
                \def \zxscale{0.55}
                \def\zxnodescale{0.8}
                \def\vertexscale{0.7}
                \def\zxshift{0.075cm}
                \def\hadscale{0.8}
                \def\trianglescale{1}
                \def\boxscale{1}
                }

\def\zxgreen{white}

\def\zxwhite{white}

\tikzset{front/.style ={node on layer=foreground}}
\tikzset{zx/.style = {string, scale=\zxscale}}
\tikzset{zxnode/.style n args={1}{blob,scale=\zxnodescale,fill=#1,node on layer=foreground}}
\tikzset{box/.style={draw, rectangle, fill=white, inner sep=1pt, minimum width=10pt,minimum height=10pt, font=\scriptsize, line width=0.7pt,scale=\zxnodescale,node on layer=foreground}}
\tikzset{boxvertex/.style={draw, rectangle, fill=white, line width=0.733pt,scale=0.75*\vertexscale}}
\tikzset{bigbox/.style={draw, rectangle, fill=white,  minimum width=\boxscale *18pt,minimum height=\boxscale*8pt, line width=0.7pt,scale=\zxnodescale}}
\newlength{\unitbox}
\setlength{\unitbox}{1cm}
\newcommand\boxwidth[1] { \setlength{\unitbox}{0.2cm + #1cm}}
\tikzset{widebox/.style ={draw,rectangle, fill=white, line width=0.7pt,scale=0.75*\zxnodescale,minimum height=15pt,inner sep=1pt,  minimum width = \unitbox,   anchor=center }}
\tikzset{wideboxm/.style n args={1}{draw,rectangle, fill=white, line width=0.7pt,scale=0.75*\zxnodescale,minimum height=15pt,inner sep=1pt,  minimum width =2\unitbox+#1\unitbox,   anchor=center }}
\tikzset{triangleup/.style n args={1}{draw, shape=isosceles triangle, isosceles triangle stretches, fill=white, line width=0.7pt,scale=0.75*\zxnodescale,minimum height=15pt,inner sep=1pt,  minimum width = #1*\trianglescale cm +0.15*\trianglescale cm,  shape border rotate=90, anchor=south }}
\tikzset{triangledown/.style n args={1}{draw, shape=isosceles triangle, isosceles triangle stretches, fill=white, line width=0.7pt,scale=0.75*\zxnodescale,minimum height=15pt,inner sep=1pt,  minimum width = #1*\trianglescale cm +0.15*\trianglescale cm,  shape border rotate=-90, anchor=north }}
\tikzset{zxvertex/.style n args={1}{draw,fill=#1,circle,line width=0.7pt,scale=0.75*\vertexscale}}
\tikzset{zxdown/.style={yshift=-\zxshift}}
\tikzset{zxup/.style={yshift=\zxshift}}

\newcommand\mult[3]{ 
\draw[string] (#1.center) to [out=up, in=-135] +(0.5*#2,#3) to [out=-45, in=up] +(0.5*#2,-#3);
\node[zxvertex=\zxgreen,zxdown] at ($(#1)+(0.5*#2,#3)$){};
}
\newcommand\comult[3]{ 
\draw[string] (#1.center) to [out=down, in=135] +(0.5*#2,-#3) to [out=45, in=down] +(0.5*#2,#3);
\node[zxvertex=\zxgreen,zxup] at ($(#1) +(0.5*#2,-#3)$){};}

\newcommand\unit[2]{ 
\draw[string] (#1.center) to + (0, -#2);
\node[zxvertex=\zxgreen] at ($(#1) +(0,-#2)$){};
}

\newcommand\emptydiagram{
\begin{tz}[zx]
\draw[dotted] (-0.5,-0.5) rectangle (0.5,0.5);
\end{tz}
}


\newcommand\C{\ensuremath{\mathbb{C}}\xspace}

\newcommand\superequals[1]{\stackrel {\makebox[0pt]{\tiny #1}} =}
\newcommand\super[2]{\stackrel{\makebox[0pt]{\tiny #1}} #2}
\newcommand\superequalseq[1]{\stackrel {\makebox[0pt]{\tiny\eqref{#1}}} =}
\newcommand{\ket}[1]{\left|#1\right\rangle}
\newcommand{\bra}[1]{\left\langle#1\right|}

\newcommand{\ketbra}[1]{{\small\ket{#1}\!\bra{#1}}}


\renewcommand{\to}[1][]{\ensuremath{\xrightarrow{#1}}}

\allowdisplaybreaks[1]

\theoremstyle{plain} 

\newtheorem{theorem}{Theorem}[section]

\newtheorem{corollary}[theorem]{Corollary}          
\newtheorem{proposition}[theorem]{Proposition}              
          
\newtheorem{prop}[theorem]{Proposition}

\theoremstyle{definition} 
\newtheorem{definition}[theorem]{Definition}
\newtheorem{remark}[theorem]{Remark}

\newtheorem{philosophy}[theorem]{Philosophy}
\newtheorem{terminology}[theorem]{Terminology}

\theoremstyle{remark}  
\newtheorem{example}[theorem]{Example}


\newtheoremstyle{special_statement} 
        {\topskip}
        {\topskip}
        {\addtolength{\leftskip}{2.5em} \itshape }
        {}
        {\bfseries}
        {:}
        {.5em}
        {}
\theoremstyle{special_statement}



\DeclareMathOperator{\Hom}{Hom}
\DeclareMathOperator{\End}{End}

\newcommand{\id}{\mathrm{id}}

\newcommand{\Aut}{\ensuremath{\mathrm{Aut}}}

\newcommand{\Cat}{\mathrm{Cat}}

\newcommand{\Hilb}{\ensuremath{\mathrm{Hilb}}}
\newcommand{\Set}{\ensuremath{\mathrm{Set}}}
\newcommand{\deloop}{\ensuremath{\mathbf{B}}}

\newcommand\conj[1]{\overline{#1}}

\newcommand{\op}{\text{op}}
\newcommand{\BHilb}{\ensuremath{\deloop\Hilb}}
\newcommand{\CAlg}{\ensuremath{C^*\mathrm{Alg}}}
\newcommand{\QSet}{\ensuremath{\mathrm{QSet}}}
\newcommand{\QEl}{\ensuremath{\mathrm{QEl}}}
\newcommand{\QBij}{\ensuremath{\mathrm{QBij}}}

\newcommand{\QGraph}{\ensuremath{\mathrm{QGraph}}}

\newcommand{\QIso}{\ensuremath{\mathrm{QIso}}}
\newcommand{\CF}{\ensuremath{\mathrm{SCFA}}}
\newcommand{\F}{\ensuremath{\mathrm{SSFA}}}

\makeatletter
\DeclareFontFamily{OMX}{MnSymbolE}{}
\DeclareSymbolFont{MnLargeSymbols}{OMX}{MnSymbolE}{m}{n}
\SetSymbolFont{MnLargeSymbols}{bold}{OMX}{MnSymbolE}{b}{n}
\DeclareFontShape{OMX}{MnSymbolE}{m}{n}{
    <-6>  MnSymbolE5
   <6-7>  MnSymbolE6
   <7-8>  MnSymbolE7
   <8-9>  MnSymbolE8
   <9-10> MnSymbolE9
  <10-12> MnSymbolE10
  <12->   MnSymbolE12
}{}
\DeclareFontShape{OMX}{MnSymbolE}{b}{n}{
    <-6>  MnSymbolE-Bold5
   <6-7>  MnSymbolE-Bold6
   <7-8>  MnSymbolE-Bold7
   <8-9>  MnSymbolE-Bold8
   <9-10> MnSymbolE-Bold9
  <10-12> MnSymbolE-Bold10
  <12->   MnSymbolE-Bold12
}{}

\let\llangle\@undefined
\let\rrangle\@undefined
\DeclareMathDelimiter{\llangle}{\mathopen}%
                     {MnLargeSymbols}{'164}{MnLargeSymbols}{'164}
\DeclareMathDelimiter{\rrangle}{\mathclose}%
                     {MnLargeSymbols}{'171}{MnLargeSymbols}{'171}
\makeatother


\usepackage[colorlinks=true, linkcolor=black, citecolor=black, urlcolor=black, hypertexnames=false
        ]{hyperref}



\setlength\marginparwidth{2cm}
\setlength\marginparsep{0.05in}
\newcounter{DRcomment}
\setcounter{DRcomment}{1}
\newcommand\DR[1]{\ensuremath{{}^{\color{red}\theDRcomment}}\marginpar{\color{red}\tiny\raggedright \theDRcomment: #1}\stepcounter{DRcomment}}
\newcommand\DRcomm[1]{{\color{red}#1}}

\newcounter{DVcomment}
\setcounter{DVcomment}{1}
\newcommand\DV[1]{\ensuremath{{}^{\color{green}\theDVcomment}}\marginpar{\color{green}\tiny\raggedright \theDVcomment: #1}\stepcounter{DVcomment}}

\newcounter{BMcomment}
\setcounter{BMcomment}{1}

\newcounter{JVcomment}
\setcounter{JVcomment}{1}
\newcommand\JV[1]{\ensuremath{{}^{\color{black}\theJVcomment}}\marginpar{\color{black}\tiny\raggedright \theJVcomment: #1}\stepcounter{JVcomment}}

\newcommand\ignore[1]{}

\tikzstyle{blackdot}=[circle, draw=black, fill=black, inner sep=.5ex, line width=\thickness, node on layer=foreground]
\tikzstyle{whitedot}=[circle, draw=black, fill=white, inner sep=.5ex, line width=\thickness, node on layer=foreground]

\tikzset{proofdiagram/.style={scale=1}}
\newlength\morphismheight
\setlength\morphismheight{0.6cm}
\newlength\minimummorphismwidth
\setlength\minimummorphismwidth{0.6cm}
\newlength\stateheight
\setlength\stateheight{0.6cm}

\title{A compositional approach to quantum functions}
\author{\small\hspace{-1cm}\begin{tabular}{c c c}
Benjamin Musto &\, David Reutter &\, Dominic Verdon\\
\texttt{benjamin.musto@cs.ox.ac.uk} &\; \texttt{david.reutter@cs.ox.ac.uk} &\;\texttt{dominic.verdon@cs.ox.ac.uk} \end{tabular}\\[20pt]
Department of Computer Science, University of Oxford}
\date{\today}

   
\begin{document}

\normalsize
\zxnormal
\maketitle      

\setcounter{tocdepth}{2}

\begin{abstract}

We introduce a notion of quantum function, and develop a compositional framework for finite quantum set theory based on a $2$-category of quantum sets and quantum functions.
We use this framework to formulate a $2$-categorical theory of quantum graphs, which captures the quantum graphs and quantum graph homomorphisms recently discovered in the study of nonlocal games and zero-error communication, and relates them to quantum automorphism groups of graphs considered in the setting of compact quantum groups. 
We show that the $2$-categories of quantum sets and quantum graphs are semisimple. We analyse dualisable and invertible $1$-morphisms in these $2$-categories and show that they correspond precisely to the existing notions of quantum isomorphism and classical isomorphism between sets and graphs. 
\end{abstract}
\section{Introduction}
 \paragraph{On quantum functions.}
In this work, we introduce a notion of \emph{quantum function} between finite \emph{quantum sets} and propose a framework for finite quantum set theory. \ignore{This can be used as a foundation for quantum combinatorial theories; here, we specifically investigate quantum graph theory.

 such as quantum graph theory which we also investigate here. 
\ignore{ in this work, we also investigate quantum graph theory. }}
%
These quantum functions underlie various notions of `quantum morphism' previously defined in quantum information theory and noncommutative topology.\footnote{An extensive discussion of related work can be found in Section~\ref{sec:relatedwork}.}

Several such notions\ignore{quantum variants of classical combinatorial structures} have recently emerged from the study of \emph{quantum pseudo-telepathy}~\cite{Brassard2005}, a phenomenon in quantum theory where pre-shared entanglement is used to perform a task classically impossible without communication. Such tasks are usually formulated as games,     where winning classical strategies correspond to certain homomorphisms between combinatorial structures; quantum strategies can then be understood as quantised versions of these homomorphisms. One such game is the \emph{graph homomorphism game}~\cite{Mancinska2016}, leading to a notion of \emph{quantum graph homomorphism}. 
We show that quantum graph homomorphisms are quantum functions between vertex sets preserving the graph structure; in fact, quantum functions can themselves be understood as perfect quantum strategies for a certain `function game'. 
The theory of quantum functions places this game-theoretic approach to quantisation into a broader mathematical context, in particular relating it to more conventional approaches to quantisation.\looseness=-2

Indeed, since the advent of quantum mechanics, quantisation has been associated with the passage from commuting to noncommuting variables. In the spirit of noncommutative topology\footnote{Here, noncommutative topology refers to the study of $C^*$-algebras in light of Gelfand duality. The term noncommutative geometry is usually reserved for the study of spectral triples (see e.g.~\cite{Connes2000}), a noncommutative version of the theory of Riemannian manifolds.}, we regard $C^*$-algebras as noncommutative analogues of topological spaces, or \emph{quantum spaces}, with finite-dimensional $C^*$-algebras playing the role of \emph{finite quantum sets}.
 Finding an appropriate definition of a quantum space of quantum functions between finite sets is a more subtle issue. Conventionally, these quantum spaces are defined as $C^*$-algebras satisfying a universal property. A prominent example is Wang's \emph{quantum permutation group}~\cite{Wang1998} of a set, defined in the framework of compact quantum groups~\cite{Woronowicz1998} and later extended to the \emph{quantum automorphism group}~\cite{Banica2005} of a graph.
Our framework captures the finite-dimensional representation theory of these $C^*$-algebras and extends the existing theory of quantum permutation and automorphism groups to a theory of general morphisms between different sets and graphs. In particular, it relates quantum automorphism groups to the \emph{quantum graph isomorphisms}~\cite{Atserias2016} considered in the study of pseudo-telepathy, providing a concrete link between results in noncommutative topology and recent problems in quantum information theory.  

Our theory of quantum functions --- unifying and extending the above approaches to the quantisation of combinatorial structures --- is formulated in categorical terms and allows us to apply techniques from categorical algebra, such as the notions of semisimplicity and dualisability. The following is a brief summary of our main results:
 
\ignore{We define quantum elements, quantum functions and quantum bijections, as well as quantum graph homomorphisms and quantum graph isomorphisms (Definitions \ref{def:quantumelement},~\ref{def:quantumfunction},~\ref{def:quantumbijection},~\ref{def:quantumgraphhom} and~\ref{def:quantumgraphiso}).} 
\begin{itemize}
\item We show that quantum sets and quantum functions naturally form a 2-category $\QSet$ (Definition~\ref{def:2catqset}) `quantising' the category of finite sets and functions. Similarly, quantum graphs (Definition~\ref{def:quantumgraphsbyadjmats}, see also~\cite{Duan2013,Weaver2015}) and quantum graph homomorphisms naturally form a 2-category $\QGraph$ (Definition~\ref{def:2catqgraph}) `quantising' the category of finite graphs and graph homomorphisms. This higher compositional structure has not been noticed before. 
\item We characterise quantum bijections and quantum isomorphisms as \emph{dagger-dualisable} 1-morphisms (Definition~\ref{def:daggerdualisable}) in the 2-categories $\QSet$ and $\QGraph$, respectively (Theorem \ref{thm:dual2}, Theorem~\ref{thm:dualisablequantumgraph}). We also show that classical bijections and graph isomorphisms are equivalences in these 2-categories (see Proposition~\ref{prop:equivalence}). In particular, we emphasise that quantum bijections and quantum graph isomorphisms should not be thought of as invertible but merely as dualisable, a characterisation which crucially depends on the $2$-categorical structure. 

\item We show that the categories $\QSet(A,B)$ of quantum functions between quantum sets $A$ and $B$, and the categories $\QGraph(G,H)$ of quantum homomorphisms between quantum graphs $G$ and $H$, are \emph{semisimple}, a categorical property which commonly arises in representation theory. This is crucial for the structural understanding of quantum functions, and allows us to characterise certain quantum functions as essentially classical (Definition~\ref{def:classical}).
\end{itemize}

\noindent
Our notions of quantum graph homomorphism and quantum graph isomorphism coincide with those considered in the theory of nonlocal games~\cite{Mancinska2016,Atserias2016} (Propositions \ref{prop:mancinskagraphhom} and \ref{prop:Atserias}). 
We also show that the monoidal category of quantum bijections on a quantum set is the category of finite-dimensional representations of the Hopf $C^*$-algebra associated to Wang's quantum permutation group (Proposition~\ref{prop:Wang},~\cite{Wang1998}). Similarly, we show that the monoidal category of quantum automorphisms of a graph is the category of finite-dimensional representations of the Hopf $C^*$-algebra associated to Banica's quantum automorphism group (Proposition~\ref{prop:Banica},~\cite{Banica2005}).
\ignore{In other words, our quantum permutations and quantum graph automorphisms can be understood as `quantum elements' of these compact quantum groups (cf. Theorem~\ref{thm:universalprop}, Section~\ref{sec:Wang} and Remark~\ref{rem:Banicagraph}).} The $2$-categories $\QSet$ and $\QGraph$ therefore extend these quantum groups analogously to the way in which categories extend groups or monoids.

The results in this work serve as a foundation for finite quantum set and quantum graph theory. In a subsequent paper~\cite{paper1b}, we use this framework to classify quantum isomorphic graphs in terms of algebraical and group theoretical data.

\paragraph{A new approach to quantum functions.}While noncommutative topology is formulated in the language of operator algebras, we present a compositional, linear algebraic framework based on the graphical calculus of string diagrams. In our diagrams, wires represent finite-dimensional Hilbert spaces, and vertices represent linear maps between them. Wiring diagrams, read from bottom to top, represent composite linear maps in the obvious way. We start from the following simple graphical observation:\footnote{This observation is a version of the famous \emph{Eckmann-Hilton argument}~\cite{Eckmann1962} playing a central role in algebraic topology and higher category theory.}
\begin{calign}\label{eq:basicidea}
\def\s{0.4cm}
\def \d{0.1cm}
\begin{tz}[zx,every to/.style={out=up, in=down},scale=1.5,yscale=-1]
\path (1.5,0) to node[zxnode=\zxwhite, pos=0.25,scale=1.2](a) {$a$} node[zxnode=\zxwhite, pos=0.75,scale=1.2] (b) {$b$}(0,2);
\draw[dashed,->] ([xshift=\s+\d, yshift=\s-\d]b.center)  to [out=20, in=70,looseness=1.1] ([xshift=\s-\d,yshift=\s+\d]a.center);
\draw[dashed, ->] ([xshift=-\s-\d, yshift=-\s+\d] a.center) to [out=-160, in=-110, looseness=1.1] ([xshift=-\s+\d, yshift=-\s-\d] b.center) ;
\end{tz} 
~~=~~
\begin{tz}[zx,every to/.style={out=up, in=down},scale=1.5,yscale=-1]
\path (1.5,0) to node[zxnode=\zxwhite, pos=0.25,scale=1.2]{$b$} node[zxnode=\zxwhite, pos=0.75,scale=1.2] {$a$}(0,2);
\end{tz} 
&
\begin{tz}[zx,every to/.style={out=up, in=down},scale=1.5,yscale=-1]
\draw (1.5,0) to node[zxnode=\zxwhite, pos=0.25,scale=1.2]{$a$} node[zxnode=\zxwhite, pos=0.75,scale=1.2] {$b$}(0,2);
\end{tz} 
~~\neq ~~
\begin{tz}[zx,every to/.style={out=up, in=down},scale=1.5,yscale=-1]
\draw (1.5,0) to node[zxnode=\zxwhite, pos=0.25,scale=1.2]{$b$} node[zxnode=\zxwhite, pos=0.75,scale=1.2] {$a$}(0,2);
\end{tz}
\end{calign} 
\noindent
The two free-floating vertices on the left can move around each other, whereas on the right they are confined to move on a line and do not commute.\footnote{The two equations in~\eqref{eq:basicidea} express nothing more than the fact that the algebra $\mathbb{C}$ is commutative, while the algebra $\End(H)$ of endomorphisms of a finite-dimensional Hilbert space $H$ is not.} In other words, adding a wire to a diagram can turn a commutative situation into a noncommutative one.

Taking this idea seriously leads to the following sketch of a programme. By Gelfand duality, a combinatorial theory, such as the theory of finite sets and functions or finite graphs and graph homomorphisms, can be expressed in terms of finite-dimensional commutative algebra and therefore represented in the graphical calculus. To quantise the objects of the theory --- in our case, sets or graphs --- one simply passes from commutative to noncommutative algebras. The novelty of our approach lies in our treatment of morphisms: having formulated the theory in term of string diagrams, we quantise morphisms by adding a wire through the corresponding vertices. We illustrate this idea with an example:%
\begin{calign} \label{eq:quantisationexample}
\begin{tz}[zx, master, every to/.style={out=up, in=down},xscale=-1]
\draw (0,0) to (0,2) to [out=135] (-0.75,3);
\draw (0,2) to [out=45] (0.75, 3);
\node[zxnode=\zxwhite] at (0,1) {$f$};
\node[zxvertex=\zxwhite, zxup] at (0,2) {};
\end{tz}
~~~=~~~
\begin{tz}[zx, every to/.style={out=up, in=down},xscale=-1]
\draw (0,0) to (0,0.75) to [out=135] (-0.75,1.75) to (-0.75,3);
\draw (0,0.75) to [out=45] (0.75, 1.75) to +(0,1.25);
\node[zxnode=\zxwhite] at (-0.75,2) {$f$};
\node[zxnode=\zxwhite] at (0.75,2) {$f$};
\node[zxvertex=\zxwhite, zxup] at (0,0.75) {};
\end{tz}
\hspace{1cm}\rightsquigarrow\hspace{1cm}
\begin{tz}[zx,every to/.style={out=up, in=down},xscale=-1]
\draw (0,0) to (0,2) to [out=45] (0.75,3);
\draw (0,2) to [out=135] (-0.75,3);
\draw[arrow data={0.2}{>}, arrow data={0.8}{>}] (1.75,0) to [looseness=0.9] node[zxnode=\zxwhite, pos=0.5] {$f$} (-1.75,2.5) to (-1.75,3);
\node[zxvertex=\zxwhite, zxup] at (0,2){};
\end{tz}
~~~=~~~
\begin{tz}[zx,every to/.style={out=up, in=down},xscale=-1]
\draw (0,0) to (0,0.75) to [out=45] (0.75,1.75) to (0.75,3);
\draw (0,0.75) to [out=135] (-0.75,1.75) to (-0.75,3);
\draw[arrow data={0.2}{>}, arrow data={0.9}{>}] (1.75,0) to (1.75,0.75) to  [looseness=1.1, in looseness=0.9] node[zxnode=\zxwhite, pos=0.36] {$f$} node[zxnode=\zxwhite, pos=0.64] {$f$}(-1.75,3);
\node[zxvertex=\zxwhite, zxup] at (0,0.75){};
\end{tz}
\end{calign}
\noindent 
The equation on the left is one of the properties of any function between two sets, while the equation on the right is its `quantisation'. 
Further examples of classical morphisms and their quantisations are displayed in Figure~\ref{fig:classicalquantum}.
%
\ignore{
Thus defined\JV{You haven't said how to define them.}, quantum morphisms between classical sets admit concrete operational interpretations\JV{It is unclear to me why this would be, or why this is significant.}. For example, a quantum element of a finite set $X$ is a projective measurement with outcomes in $X$, while a quantum function $X\to Y$ between finite sets is a family of projective measurements controlled by elements in $X$ with outcomes in $Y$ (see Corollary \ref{cor:quantumelementmeasurement}).\looseness=-2}

\paragraph{The compositional structure of quantum functions.}

Due to the presence of the additional Hilbert space wire introduced in our quantisation procedure, we are naturally led to consider maps on this Hilbert space, which interact in a particular way with quantum functions. \ignore{With these additional maps, quantum functions from $A$ to $B$ form, not a set, but a \emph{category} $\QSet(A,B)$ of which they are the objects.\JV{This sentence is quite convoluted, but the content is critical. I suggest you rephrase it to make it absolutely crystal clear.}
}
With these additional maps as morphisms, quantum functions between quantum sets $A$ and $B$ are the objects of a category $\QSet(A,B)$. This category of quantum functions between two quantum sets can be understood as a quantisation of the set of classical functions between two classical sets. In other words, our approach to quantisation leads to categorification.
In particular, the category of quantum bijections on a set $X$ quantises the symmetric group $S_X$; as stated above, this is the category of quantum elements of Wang's quantum permutation group.

A shortcoming of the compact quantum group approach is that it is restricted to automorphisms and cannot easily be applied to study quantum functions or graph homomorphisms between nonisomorphic sets or graphs, as is for example necessary for quantum pseudo-telepathy. 
In contrast, our approach allows us to consider all quantum functions and quantum graph homomorphisms, including those between nonisomorphic sets and graphs. We show that these form 2-categories $\QSet$ and $\QGraph$, which incorporate all the categories of quantum functions and homomorphisms between different sets and graphs into a single mathematical structure. These $2$-categories unify, generalise, and most importantly expose the connection between work on quantum permutations and quantum graph automorphisms in noncommutative topology, and on quantum graph homomorphisms and isomorphisms in quantum information theory.

\subsection{Outlook}
We hope that our results lead to the development of further connections between the work on compact quantum groups and quantum information theory. In particular, we expect that the structural understanding of the automorphism categories $\QBij(B,B)$ and $\QIso(G,G)$ can lead to new insights into pseudo-telepathy. First considerations along these lines will appear in a companion paper~\cite{paper1b}. \ignore{There, we use properties of the category $\QIso(G,G)$ to classify quantum graph isomorphisms $G\to H$ out of this graph\ignore{, where $H$ is a possibly non-classically isomorphic graph}. This allows us to classify quantum strategies for the graph isomorphism game of Atserias et al~\cite{Atserias2016} and to devise a scheme for producing pairs of graphs exhibiting non-classical behaviour. }

 We now suggest a number of other possible applications of our categorical \mbox{framework}:

\begin{itemize}
\item Quantum functions are closely related to quantum communication. Indeed, in Remark~\ref{rem:teleportation}, we note that quantum bijections between a matrix algebra $\mathrm{Mat}_n$ and an $n^2$-element classical set can be understood as generalised teleportation protocols. More generally, it has been observed that quantum homomorphisms between quantum graphs admit an interpretation in terms of zero-error source-channel coding~\cite{Stahlke2016}.

\item Our classification in~\cite{paper1b} is based on the study of Frobenius monads in $\QGraph$. As symmetric monoidal $2$-categories (see Remark~\ref{rem:symmmon2cat}), $\QSet$ and $\QGraph$  admit several other as yet unexplored algebraic structures.

\item Our quantisation framework may be extended to other combinatorial theories besides finite set theory and graph theory. In particular, in Section~\ref{app:quantumrelation}, we show how Kuperberg and Weaver's finite-dimensional quantum relations~\cite{Kuperberg2012} fit into our framework; this suggests, for example, a $2$-category of quantum posets and quantum monotone functions.\looseness=-2
\end{itemize}

\subsection{Outline of the paper}

In Section~\ref{sec:background}, we recall the diagrammatic calculus for Hilbert spaces and linear maps as well as the correspondence between finite-dimensional $C^*$-algebras and certain Frobenius algebras. We express Gelfand duality between finite sets and finite-dimensional commutative $C^*$-algebras in this setting.

In Section~\ref{sec:quantumset}, we introduce quantum sets, quantum elements and quantum functions, and define the 2-category $\QSet$ which encodes their compositional structure. We  justify our definitions by a universal property. 

In Section~\ref{sec:quantumbijection}, we quantise bijections and prove that the resulting quantum bijections are precisely dagger-dualisable 1-morphisms in $\QSet$. We show that quantum bijections between classical sets correspond to projective permutation matrices as defined in the theory of nonlocal games~\cite{Atserias2016}, and to finite-dimensional representations of Wang's quantum symmetry group algebras~\cite{Wang1998}.

In Section~\ref{sec:qgt}, we quantise finite graphs and their homomorphisms and define the \mbox{$2$-category} $\QGraph$ which encodes their compositional structure. We show that our definitions capture the quantum graph homomorphisms and isomorphisms of Man{\v{c}}inska and Roberson~\cite{Mancinska2016} and Atserias et al~\cite{Atserias2016}, as well as the finite-dimensional representation theory of Banica and Bichon's quantum automorphism group algebras~\cite{Banica2005,Bichon2003}. We also show that quantum graph isomorphisms are precisely dagger-dualisable 1-morphisms in $\QGraph$. 

In Section~\ref{sec:semisimple}, we show that the categories of quantum functions and quantum graph homomorphisms are semisimple. We discuss the operational interpretation of the direct sum of quantum functions and show how this can be used to distinguish between classical and quantum structures. 
\ignore{Finally, we show how semisimplicity brings various results from finite noncommutative geometry into our 2-categorical framework. }

In Section~\ref{app:quantumrelation}, we show that our reflexive quantum graphs are precisely symmetric and reflexive quantum relations in the sense of Kuperberg and Weaver~\cite{Kuperberg2012,Weaver2010,Weaver2015} and fully capture the noncommutative graphs appearing in zero-error communication~\cite{Duan2013}.

\subsection{Related work}\label{sec:relatedwork}
A compressed summary of this section can be found in Figure~\ref{fig:relatedwork}.
\begin{figure}[]
\vspace{-10pt}
\begin{tabular}{l l l}
\scalebox{.6}{\begin{tikzpicture}
	\begin{pgfonlayer}{nodelayer}
		\node [style=none] (0) at (0, 4) {};
		\node [style=none] (1) at (4, -0) {};
		\node [style=none] (2) at (8, 4) {};
		\node [style=none] (3) at (4, 8) {};
		\node [style=none] (4) at (4, 1) {{\Huge \bf QSet}};
		\node [style=none, draw, circle, thick, black, minimum size=.7cm] (5) at (4, 4.4) {C};
		\node [style=none] (6) at (5.25, 6.25) {1.};
		\node [style=none] (7) at (2.5, 3) {};
		\node [style=none] (8) at (5.5, 3) {};
		\node [style=none] (9) at (4.25, 3.25) {2.};
		\node [style=none, draw, circle, thick, black, minimum size=.7cm] (10) at (2, 3) {};
		\node [style=none, draw, circle, thick, black, minimum size=.7cm] (11) at (6, 3) {};
		\node [style=none] (12) at (4.5, 4.75) {};
		\node [style=none] (13) at (3.5, 4.75) {};
	\end{pgfonlayer}
	\begin{pgfonlayer}{edgelayer}
		\draw [style=simple, opacity=.5, bend left=45, looseness=1.00] (0.center) to (3.center);
		\draw [style=simple, opacity=.5, bend left=45, looseness=1.00] (3.center) to (2.center);
		\draw [style=simple, opacity=.5, bend right=45, looseness=1.00] (1.center) to (2.center);
		\draw [style=simple, opacity=.5, bend left=45, looseness=1.00] (1.center) to (0.center);
		\draw [style=arrow] (7.center) to (8.center);
		\draw [style=tick, in=45, out=135, looseness=8.50] (13.center) to (12.center);
	\end{pgfonlayer}
\end{tikzpicture}} & \scalebox{.6}{\begin{tikzpicture}
	\begin{pgfonlayer}{nodelayer}
		\node [style=none] (0) at (4, 1) {{\Huge \bf QGraph}};
		\node [style=none] (1) at (8, 4) {};
		\node [style=none] (2) at (4, 8) {};
		\node [style=none] (3) at (4, -0) {};
		\node [style=none] (4) at (0, 4) {};
		\node [style=none, draw, circle, thick, black, minimum size=.7cm] (5) at (2, 2.75) {};
		\node [style=none, draw, circle, thick, black, minimum size=.7cm] (6) at (6, 2.75) {};
		\node [style=none] (7) at (2.5, 2.75) {};
		\node [style=none] (8) at (5.5, 2.75) {};
		\node [style=none] (9) at (4, 6.5) {4.};
		\node [style=none] (10) at (2, 4.75) {};
		\node [style=none] (11) at (0.75, 4) {3.};
		\node [style=none] (12) at (5.25, 4.5) {};
		\node [style=none, draw, circle, thick, black, minimum size=.7cm] (13) at (5.5, 5) {C};
		\node [style=none] (14) at (3, 4.5) {};
		\node [style=none] (15) at (2, 5.25) {};
		\node [style=none, draw, circle, thick, black, minimum size=.7cm] (16) at (2.5, 5) {C};
		\node [style=none] (17) at (3, 5.5) {};
		\node [style=none] (18) at (5.25, 5.5) {};
		\node [style=none] (19) at (4, 4.5) {5.};
		\node [style=none] (20) at (4, 2.25) {6.};
		\node [style=none] (21) at (6.5, 3) {};
		\node [style=none] (22) at (6.5, 3.25) {7.};
	\end{pgfonlayer}
	\begin{pgfonlayer}{edgelayer}
		\draw [style=simple, opacity=.5, bend left=45, looseness=1.00] (4.center) to (2.center);
		\draw [style=simple, opacity=.5, bend left=45, looseness=1.00] (2.center) to (1.center);
		\draw [style=simple, opacity=.5, bend right=45, looseness=1.00] (3.center) to (1.center);
		\draw [style=simple, opacity=.5, bend left=45, looseness=1.00] (3.center) to (4.center);
		\draw [style=arrow] (7.center) to (8.center);
		\draw [style=tick, bend left=135, looseness=13.75] (10.center) to (15.center);
		\draw [style=tick, bend left=45, looseness=1.00] (17.center) to (18.center);
		\draw [style=arrow, bend right=45, looseness=0.75] (14.center) to (12.center);
	\end{pgfonlayer}
\end{tikzpicture}} &\raisebox{1.2cm}{\scalebox{.8}{\begin{tikzpicture}
	\begin{pgfonlayer}{nodelayer}
		\node [style=none] (0) at (1, 2.5) {};
		\node [style=none] (1) at (4.25, 2) {\begin{tabular}{l} 1-morphisms \\[.37cm] Dagger-dualisable 1-morphisms \\[.37cm] Objects \\[.37cm] Classical objects \end{tabular}};
		\node [minimum size=.5cm, black, thick, circle, draw, style=none] (2) at (0.5, .7) {C};
		\node [style=none] (3) at (0, 2.5) {};
		\node [style=none, draw, circle, thick, black, minimum size=.5cm] (4) at (0.5, 1.65) {};
		\node [style=none] (5) at (0, 3.25) {};
		\node [style=none] (6) at (1, 3.25) {};
	\end{pgfonlayer}
	\begin{pgfonlayer}{edgelayer}
		\draw [style=tick] (3.center) to (0.center);
		\draw [style=arrow] (5.center) to (6.center);
	\end{pgfonlayer}
\end{tikzpicture}}}
\end{tabular}\vspace{.2cm} {\footnotesize\ 1. Finite-dimensional (f.d.) representations of Wang's quantum permutation group algebra~\cite{Wang1998}. \\ \hphantom{1.}\, Projective permutation matrices of Atserias et al.~\cite{Atserias2016}.\\2. F.d. representations of So{\l}tan's quantum space of all maps between finite quantum spaces~\cite{Soltan2009}. \\ 3. F.d. representations of Banica's quantum automorphism group algebra~\cite{Banica2005}. \\ 4. Quantum graph isomorphisms of Atserias et al.~\cite{Atserias2016}. \\ 5. Quantum graph homomorphisms of Man\v{c}inska and Roberson~\cite{Mancinska2016}. \\ 6. `Pure' entanglement-assisted graph homomorphisms of Stahlke~\cite{Stahlke2016}. \\ 7. Quantum graphs of Weaver~\cite{Weaver2015}, generalising noncommutative graphs of Duan et al~\cite{Duan2013}.}
\caption{A summary of related work.}\label{fig:relatedwork}
\vspace{-5pt}
\end{figure}
\paragraph{Quantum symmetry groups and noncommutative topology.}

The study of quantum permutation groups --- quantum variants of the symmetric groups $S_n$ in noncommutative topology ---  was suggested by Connes, and carried out by Wang, Banica, Bichon and others~\cite{Wang1998, Banica2005,Banica2007, Banica2007_2, Bichon2003, Banica2009, Banica2007_3}.
These quantum permutation groups are compact quantum groups in the sense of Woronowicz~\cite{Woronowicz1998}, obtained from a universal construction~\cite{Wang1998}. Quantum automorphism groups of finite graphs are defined similarly~\cite{Bichon2003,Banica2005}. A more general universal construction of quantum spaces of maps is given by So{\l}tan~\cite{Soltan2009} (see Remark~\ref{rem:soltan}).
In Section~\ref{sec:universal}, we show that our categories of quantum functions $\QSet(A,B)$ can be obtained from an analogous universal construction, as the category of finite-dimensional representations of an internal hom $[A,B]$ in the opposite of the category of $C^*$-algebras.
Likewise, our categories $\QBij([n],[n])$ of quantum bijections on an $n$-element set are the categories of finite-dimensional representations of the Hopf $C^*$-algebra $A(n)$ corresponding to Wang's quantum permutation group~\cite{Wang1998} (Proposition \ref{prop:Wang}), and our categories $\QIso(G,G)$ of quantum automorphisms of a graph $G$ are the categories of finite-dimensional representations of the Hopf $C^*$-algebra $A(G)$ corresponding to Banica's quantum automorphism group of the graph~\cite{Banica2007} \mbox{(Proposition \ref{prop:Banica}).}

Our framework therefore captures the finite-dimensional representation theory of the Hopf $C^*$-algebras considered in noncommutative topology. However, we note that a Tannakian correspondence exists only between these algebras and their categories of corepresentations (or, in the language of quantum group theory, with the representations of their associated compact quantum groups~\cite{Neshveyev2013}. Also see Remark~\ref{rem:problemTannaka} and Remark~\ref{rem:Wangnotdiscrete}). 

Based on the theory of quantum relations developed by Kuperberg and Weaver~\cite{Kuperberg2012,Weaver2010,Weaver2015}, Kornell~\cite{Kornell2011,Kornell2018} defines (possibly infinite) quantum sets and quantum functions between them. His notion of finite quantum set coincides with ours; his quantum functions between finite quantum sets are *-homomorphisms and thus one-dimensional quantum functions in our sense.

\paragraph{Quantum information theory.} 
Quantum graph homomorphisms were defined by Man\v{c}inska and Roberson~\cite{Mancinska2016} as generalisations of quantum graph colourings~\cite{Cameron2007} in the context of nonlocal games (see Remark~\ref{rem:quantumhomMancinska}) and have been the subject of intensive study in their various forms~\cite{Mancinska2016,Scarpa2012,Avis2006,Cameron2007,Paulsen2015,Paulsen2016,Roberson2016}. Quantum graph isomorphisms are originally due to Atserias et al~\cite{Atserias2016} (see Remark~\ref{rem:Atseriasquantumiso}).

A related $C^*$-algebraic approach to quantum homomorphisms between classical graphs was recently discovered by Ortiz and Paulsen~\cite{Ortiz2016}. They define a $C^*$-algebra which is essentially an internal hom, analogous to the algebra of quantum functions in Remark~\ref{rem:quantumfunctionalgebra}.

Our notion of reflexive quantum graph coincides with that of Kuperberg and Weaver~\cite{Kuperberg2012, Weaver2015,Weaver2010} (see Remark~\ref{rem:literatureqgraph} and Theorem~\ref{thm:relationadjacency}) and in particular generalises Duan, Severini and Winter's noncommutative graphs~\cite{Duan2013} (see Proposition~\ref{prop:quantumgraphDuanthesame}). Quantum graph homomorphisms between noncommutative graphs can be understood as pure versions of Stahlke's entanglement-assisted morphisms~\cite{Stahlke2016} (see Remark~\ref{rem:stahlkegraphhoms}).

\paragraph{Categorical quantum mechanics (CQM).}
Our work emerges from the CQM research programme, initiated by Abramsky and Coecke~\cite{Abramsky2008} and developed by them and others~\cite{Coecke2008,Coecke2007,Coecke2009,Vicary2010,Coecke2014}, which uses the graphical calculus of monoidal categories to provide a high-level syntax for quantum information flow. In particular, Vicary's reformulation of the theory of finite-dimensional $C^*$-algebras in terms of Frobenius algebras~\cite{Vicary2010} and Coecke, Pavlovi{\'c} and Vicary's proof of Gelfand duality in this setting~\cite{Coecke2009} are important starting points and guide posts for our work (see Section~\ref{sec:background}). Many of our proofs and constructions can be understood as simple quantisations, in the sense of equation~\eqref{eq:quantisationexample}, of constructions from categorical quantum mechanics.

A related 2-categorical framework for quantum theory based on the 2-category $\mathrm{2Hilb}$ of categorified Hilbert spaces was studied by Vicary and the second author~\cite{Vicary2012hq, Vicary2012_2, Reutter2016}; we remark that there is a locally faithful 2-functor $\QSet \to \mathrm{2Hilb}$ allowing us to translate most of our results into this setting.

\paragraph{Monads.}
The 2-category of quantum sets $\QSet$ is an instance of Street's 2-category of monads~\cite{Street1972} in the $2$-category $\BHilb$, the delooping of the monoidal category\! $\Hilb$ (see Remark~\ref{rem:street}). Hinze and Marsden~\cite{Marsden2014,Hinze2016} give an analogous graphical treatment of the 2-category $\mathrm{Mnd}(\mathrm{Cat})$ of monads in the 2-category $\mathrm{Cat}$ of categories, functors and natural transformations.

Recent work by Abramsky and others~\cite{Abramsky2017} uses monads to study binary constraint systems and quantum graph homomorphisms. There are many similarities between their work and ours; for example, composition in their Kleisli category corresponds to the composition of $1$-morphisms in $\QSet$ (see Remark~\ref{rem:ppmcomposition}).

\subsection{Definitions and conventions}
We assume some basic familiarity with monoidal category theory~\cite{Selinger2010} and $2$-category theory~\cite[Chapter 7]{Borceux1994}. Dagger categories are defined in~\cite{Selinger2010}; strict dagger $2$-categories\footnote{Weak dagger 2-categories are the obvious generalisation, with unitary associators and unitors.} and dagger $2$-functors are defined in~\cite{Heunen2016}. 
When we refer to local properties of a 2-functor (such as local faithfulness), we mean that these properties are true of the induced functors on hom-categories.
We use the words \emph{module} and \emph{representation} interchangeably. A projector on a Hilbert space $H$ is an endomorphism $P: H \to H$ which is idempotent and self-adjoint $P = P^{\dagger} = P^2$. All sets appearing in this work are finite and, except where clearly specified, all vector spaces and all algebras are finite-dimensional; a notable exception is Section~\ref{sec:universal}. Consequently, we use the labels \Set\ and \Hilb\ for the categories of finite sets and finite-dimensional Hilbert spaces respectively. We also take all $C^*$-algebras to be unital. We denote the $n$-element set by $[n]$.

\subsection*{Acknowledgments}
We are especially grateful to Jamie Vicary for many useful conversations and comments on an earlier draft of this paper. We thank Andre Kornell for a useful conversation about our definitions of quantum functions. We also thank Samson Abramsky, and Rui Soares Barbosa for discussing the relationship between our categorical frameworks; Giulio Chiribella for introducing us to existing notions of noncommutative graphs; and Matty Hoban for informative conversations about nonlocal games. We appreciate the helpful comments of an anonymous referee regarding the presentation of these results.

\newpage
\def\wl{0.4}
\def\wr{0.5}
\def\wtext{1.5cm}
\def\bnd{-1cm}
\def\bndr{-1cm}
\def\bndm{0.15cm}
\begin{figure}[h!]
\vspace{-8pt}
\[
\hspace{-2.3cm}
\setlength\arraycolsep{1pt}
\begin{array}{|c|c||c|c|}
\hline
\rule{0pt}{3ex} 
\begin{minipage}{1.7cm}\center
\text{set}
\end{minipage}&
\text{commutative algebra} &\text{non-commutative algebra} & \text{quantum set}
\\[-9pt] \rule{0pt}{0ex}&&&\\\hline
\text{element}& 
\begin{minipage}{\wl\textwidth}
\begin{calign}\nonumber
\begin{tz}[zx, master, every to/.style={out=up, in=down},xscale=-1]
\draw (0,1) to (0,2) to [out=135] (-0.75,3);
\draw (0,2) to [out=45] (0.75, 3);
\node[zxnode=\zxwhite] at (0,1) {$\psi$};
\node[zxvertex=\zxwhite, zxup] at (0,2) {};
\end{tz}
=~~~
\begin{tz}[zx,slave, every to/.style={out=up, in=down},xscale=-1]
\draw (-0.75,2) to (-0.75,3);
\draw (0.75, 2) to +(0,1.);
\node[zxnode=\zxwhite] at (-0.75,2) {$\psi$};
\node[zxnode=\zxwhite] at (0.75,2) {$\psi$};
\end{tz}
\hspace{1cm}
\begin{tz}[zx, every to/.style={out=up, in=down},xscale=-1]
\draw (0,1) to (0,2) ;
\node[zxnode=\zxwhite] at (0,1) {$\psi$};
\node[zxvertex=\zxwhite, zxup] at (0,2) {};
\end{tz}
~~=~~
\emptydiagram
\\\nonumber
\begin{tz}[zx,slave, every to/.style={out=up, in=down},xscale=-1]
\draw (0,0) to (0,1.5);
\node[zxnode=\zxwhite] at (0,1.5) {$\psi^\dagger$};
\end{tz}
=~~~
\begin{tz}[zx,slave,every to/.style={out=up, in=down},xscale=-1]
\draw (0,1.5) to (0,2) to [in=left] node[pos=1] (r){} (0.5,2.5) to [out=right, in=up] (1,2)  to (1,0);
\node[zxnode=\zxwhite] at (0,1.5) {$\psi$};
\node[zxvertex=\zxwhite,zxdown] at (r.center){};
\end{tz}
\end{calign}
\end{minipage}
\hspace{\bnd}
&
\hspace{\bndm}
\begin{minipage}{\wr\textwidth}
\begin{calign}\nonumber
\begin{tz}[zx,every to/.style={out=up, in=down},xscale=-1]
\draw (0,0) to (0,2) to [out=45] (0.75,3);
\draw (0,2) to [out=135] (-0.75,3);
\draw[arrow data={0.2}{>}, arrow data={0.8}{>}] (1.75,0) to [looseness=0.9] node[zxnode=\zxwhite, pos=0.5](Q) {$Q$} (-1.75,2.5) to (-1.75,3);
\node[zxvertex=\zxwhite, zxup] at (0,2){};
\draw[white,double] (0,0) to (Q.center);
\end{tz}
=
\begin{tz}[zx,every to/.style={out=up, in=down},xscale=-1]
\draw (0.75,1.75) to (0.75,3);
\draw  (-0.75,1.75) to (-0.75,3);
\draw[arrow data={0.2}{>}, arrow data={0.9}{>}] (1.75,0) to (1.75,0.75) to  [looseness=1.1, in looseness=0.9] node[zxnode=\zxwhite, pos=0.36] {$Q$} node[zxnode=\zxwhite, pos=0.64] {$Q$}(-1.75,3);
\end{tz}
\hspace{1cm}
\begin{tz}[zx,every to/.style={out=up, in=down},xscale=-1]
\draw (0,0) to (0,2.25);
\draw[arrow data={0.2}{>}, arrow data={0.8}{>}] (1,0) to [looseness=0.9] node[zxnode=\zxwhite, pos=0.5](Q) {$Q$} (-1,2.5) to (-1,3);
\node[zxvertex=\zxwhite] at (0,2.25){};
\draw[white,double] (0,0) to (Q.center);
\end{tz}
=
\begin{tz}[zx, every to/.style={out=up, in=down},xscale=-1]
\draw[arrow data={0.2}{>}, arrow data={0.9}{>}] (1.,0) to [looseness=0.9]  (-1,2.5) to (-1,3);
\end{tz}
\\[-3pt]\nonumber
\begin{tz}[zx,every to/.style={out=up, in=down},xscale=-0.8]
\draw [arrow data={0.2}{>},arrow data={0.8}{>}]  (0,0) to (2.25,3);
\draw (2.25,0) to [in=-45] node[zxnode=\zxwhite, pos=1] (Q) {$Q^\dagger$} (1.125,1.5);
\node at (1,-0.5){};
\node at (1,3.5){};
\end{tz}
~=~~
\begin{tz}[zx,xscale=-0.6,yscale=-1]
\draw[arrow data={0.5}{<}] (0.25,-0.5) to (0.25,0) to [out=up, in=-135] (1,1);
\draw[arrow data={0.5}{>}] (1.75,2.5) to (1.75,2) to [out=down, in=45] (1,1);
\draw (1,1) to [out= -45, in= left] node[zxvertex=\zxwhite, pos=1] {} (2.3,0.3) to [out=right, in=down] (3.25,1) to (3.25,2.5);
\node [zxnode=\zxwhite] at (1,1) {$Q$};
\node at (1,-1){};
\node at (1,3){};
\end{tz}
\end{calign}
\end{minipage}
\hspace{\bndr}
&
\begin{minipage}{\wtext}\center quantum \\ element \end{minipage}\\
\hline
\text{function} &
\begin{minipage}{\wl\textwidth}
\begin{calign}\nonumber
\begin{tz}[zx, master, every to/.style={out=up, in=down},xscale=-1]
\draw (0,0) to (0,2) to [out=135] (-0.75,3);
\draw (0,2) to [out=45] (0.75, 3);
\node[zxnode=\zxwhite] at (0,1) {$f$};
\node[zxvertex=\zxwhite, zxup] at (0,2) {};
\end{tz}
=
\begin{tz}[zx, every to/.style={out=up, in=down},xscale=-1]
\draw (0,0) to (0,0.75) to [out=135] (-0.75,1.75) to (-0.75,3);
\draw (0,0.75) to [out=45] (0.75, 1.75) to +(0,1.25);
\node[zxnode=\zxwhite] at (-0.75,2) {$f$};
\node[zxnode=\zxwhite] at (0.75,2) {$f$};
\node[zxvertex=\zxwhite, zxup] at (0,0.75) {};
\end{tz}
\hspace{1cm}
\begin{tz}[zx,slave, every to/.style={out=up, in=down},xscale=-1]
\draw (0,0) to (0,2) ;
\node[zxnode=\zxwhite] at (0,1) {$f$};
\node[zxvertex=\zxwhite, zxup] at (0,2) {};
\end{tz}
=
\begin{tz}[zx,slave, every to/.style={out=up, in=down},xscale=-1]
\draw (0,0) to (0,0.75) ;
\node[zxvertex=\zxwhite, zxup] at (0,0.75) {};
\end{tz}
\\\nonumber
\begin{tz}[zx,slave, every to/.style={out=up, in=down},xscale=-1]
\draw (0,0) to (0,3);
\node[zxnode=\zxwhite] at (0,1.5) {$f^\dagger$};
\end{tz}
=~~
\begin{tz}[zx,slave,every to/.style={out=up, in=down},xscale=-1]
\draw (0,1.5) to (0,2) to [in=left] node[pos=1] (r){} (0.5,2.5) to [out=right, in=up] (1,2)  to (1,0);
\draw (-1,3) to [out=down,in=up] (-1,1) to [out=down, in=left] node[pos=1] (l){} (-0.5,0.5) to [out=right, in=down] (0,1) to (0,1.5);
\node[zxnode=\zxwhite] at (0,1.5) {$f$};
\node[zxvertex=\zxwhite,zxup] at (l.center){};
\node[zxvertex=\zxwhite,zxdown] at (r.center){};
\end{tz}
\end{calign}\\[-15pt]
\end{minipage}
\hspace{\bnd}
&
\hspace{\bndm}
\begin{minipage}{\wr\textwidth}
\begin{calign}\nonumber
\begin{tz}[zx,every to/.style={out=up, in=down},xscale=-1]
\draw (0,0) to (0,2) to [out=45] (0.75,3);
\draw (0,2) to [out=135] (-0.75,3);
\draw[arrow data={0.2}{>}, arrow data={0.8}{>}] (1.75,0) to [looseness=0.9] node[zxnode=\zxwhite, pos=0.5] {$P$} (-1.75,2.5) to (-1.75,3);
\node[zxvertex=\zxwhite, zxup] at (0,2){};
\end{tz}
=
\begin{tz}[zx,every to/.style={out=up, in=down},xscale=-1]
\draw (0,0) to (0,0.75) to [out=45] (0.75,1.75) to (0.75,3);
\draw (0,0.75) to [out=135] (-0.75,1.75) to (-0.75,3);
\draw[arrow data={0.2}{>}, arrow data={0.9}{>}] (1.75,0) to (1.75,0.75) to  [looseness=1.1, in looseness=0.9] node[zxnode=\zxwhite, pos=0.36] {$P$} node[zxnode=\zxwhite, pos=0.64] {$P$}(-1.75,3);
\node[zxvertex=\zxwhite, zxup] at (0,0.75){};
\end{tz}
\hspace{1cm}
\begin{tz}[zx,every to/.style={out=up, in=down},xscale=-1]
\draw (0,0) to (0,2.25);
\draw[arrow data={0.2}{>}, arrow data={0.8}{>}] (1,0) to [looseness=0.9] node[zxnode=\zxwhite, pos=0.5] {$P$} (-1,2.5) to (-1,3);
\node[zxvertex=\zxwhite] at (0,2.25){};
\end{tz}
=
\begin{tz}[zx,every to/.style={out=up, in=down},xscale=-1]
\draw (0,0) to (0,0.75);
\draw[arrow data={0.2}{>}, arrow data={0.9}{>}] (1.,0) to (1.,0.75) to   (-1,3);
\node[zxvertex=\zxwhite, zxup] at (0,0.75){};
\end{tz}
\\[2pt]\nonumber
\begin{tz}[zx,every to/.style={out=up, in=down},xscale=-0.8]
\draw [arrow data={0.2}{>},arrow data={0.8}{>}]  (0,0) to (2.25,3);
\draw (2.25,0) to node[zxnode=\zxwhite, pos=0.5] {$P^\dagger$} (0,3);
\end{tz}
=~~
\begin{tz}[zx,xscale=-0.6,yscale=-1]
\draw[arrow data={0.5}{<}] (0.25,-0.5) to (0.25,0) to [out=up, in=-135] (1,1);
\draw (1,1) to [out=135, in=right] node[zxvertex=\zxwhite, pos=1]{} (-0.3, 1.7) to [out=left, in=up] (-1.25,1) to (-1.25,-0.5);
\draw[arrow data={0.5}{>}] (1.75,2.5) to (1.75,2) to [out=down, in=45] (1,1);
\draw (1,1) to [out= -45, in= left] node[zxvertex=\zxwhite, pos=1] {} (2.3,0.3) to [out=right, in=down] (3.25,1) to (3.25,2.5);
\node [zxnode=\zxwhite] at (1,1) {$P$};
\end{tz} 
\end{calign}\\[-15pt]
\end{minipage}
\hspace{\bndr}
&
\begin{minipage}{\wtext}\center quantum \\ function \end{minipage} \\ \hline
\text{bijection} &
\begin{minipage}{\wl\textwidth}
\vspace{0.2cm}
\hspace{0.1cm}function $+$
\vspace{-0.1cm}
\begin{calign}\nonumber
\begin{tz}[zx, master, every to/.style={out=up, in=down},yscale=-1,xscale=-1]
\draw (0,0) to (0,2) to [out=135] (-0.75,3);
\draw (0,2) to [out=45] (0.75, 3);
\node[zxnode=\zxwhite] at (0,1) {$f$};
\node[zxvertex=\zxwhite, zxdown] at (0,2) {};
\end{tz}
=
\begin{tz}[zx, every to/.style={out=up, in=down},yscale=-1,xscale=-1]
\draw (0,0) to (0,0.75) to [out=135] (-0.75,1.75) to (-0.75,3);
\draw (0,0.75) to [out=45] (0.75, 1.75) to +(0,1.25);
\node[zxnode=\zxwhite] at (-0.75,2) {$f$};
\node[zxnode=\zxwhite] at (0.75,2) {$f$};
\node[zxvertex=\zxwhite, zxdown] at (0,0.75) {};
\end{tz}
\hspace{1cm}
\begin{tz}[zx,slave, every to/.style={out=up, in=down},yscale=-1,xscale=-1]
\draw (0,0) to (0,2) ;
\node[zxnode=\zxwhite] at (0,1) {$f$};
\node[zxvertex=\zxwhite, zxup] at (0,2) {};
\end{tz}
=
\begin{tz}[zx,slave, every to/.style={out=up, in=down},yscale=-1,xscale=-1]
\draw (0,0) to (0,0.75) ;
\node[zxvertex=\zxwhite, zxup] at (0,0.75) {};
\end{tz}
\end{calign}\\[-15pt]
\end{minipage}
\hspace{\bnd}
&
\hspace{\bndm}
\begin{minipage}{\wr\textwidth}
\vspace{0.2cm}
quantum function $+$
\vspace{-0.1cm}
\begin{calign}\nonumber
\begin{tz}[zx,every to/.style={out=up, in=down},scale=-1,xscale=-1]
\draw (0,0) to (0,2) to [out=45] (0.75,3);
\draw (0,2) to [out=135] (-0.75,3);
\draw[arrow data={0.2}{<}, arrow data={0.8}{<}] (1.75,0) to [looseness=0.9] node[zxnode=\zxwhite, pos=0.5] {$P$} (-1.75,2.5) to (-1.75,3);
\node[zxvertex=\zxwhite, zxdown] at (0,2){};
\end{tz}
=
\begin{tz}[zx,every to/.style={out=up, in=down},scale=-1,xscale=-1]
\draw (0,0) to (0,0.75) to [out=45] (0.75,1.75) to (0.75,3);
\draw (0,0.75) to [out=135] (-0.75,1.75) to (-0.75,3);
\draw[arrow data={0.2}{<}, arrow data={0.9}{<}] (1.75,0) to (1.75,0.75) to  [looseness=1.1, in looseness=0.9] node[zxnode=\zxwhite, pos=0.36] {$P$} node[zxnode=\zxwhite, pos=0.64] {$P$}(-1.75,3);
\node[zxvertex=\zxwhite, zxdown] at (0,0.75){};
\end{tz}
\hspace{1cm}
\begin{tz}[zx,every to/.style={out=up, in=down},scale=-1,xscale=-1]
\draw (0,0) to (0,2.25);
\draw[arrow data={0.2}{<}, arrow data={0.8}{<}] (1,0) to [looseness=0.9] node[zxnode=\zxwhite, pos=0.5] {$P$} (-1,2.5) to (-1,3);
\node[zxvertex=\zxwhite] at (0,2.25){};
\end{tz}
=
\begin{tz}[zx,every to/.style={out=up, in=down},scale=-1,xscale=-1]
\draw (0,0) to (0,0.75);
\draw[arrow data={0.2}{<}, arrow data={0.9}{<}] (1.,0) to (1.,0.75) to   (-1,3);
\node[zxvertex=\zxwhite] at (0,0.75){};
\end{tz}
\end{calign}\\[-15pt]
\end{minipage}
\hspace{\bndr}
&
\begin{minipage}{\wtext}\center quantum \\ bijection \end{minipage}   \\
\hline \nonumber
\begin{minipage}{1.5cm}\center graph \\homom.\end{minipage}&
\begin{minipage}{\wl\textwidth}
\vspace{0.2cm}
\hspace{0.01cm}function $+$
\vspace{-0.2cm}
\def\d{2}
\begin{calign}\nonumber
\begin{tz}[zx,yscale=0.8,xscale=-1]
\draw[string] (0,-0.5) to node[front,zxnode=\zxwhite,pos=0.5] {$f$} (0,5);\draw[string] (\d,-0.5) to node[front,zxnode=\zxwhite, pos=0.5] {$f$} (\d,5);
\draw[string] (0, 0.5) to (\d,0.5);
\node[zxvertex=\zxwhite] at (0,0.5){};
\node[zxvertex=\zxwhite] at (\d,0.5){};
\node[zxnode=\zxwhite] at (\d/2, 0.5 ) {$G$};
\end{tz}
~~=~~
\begin{tz}[zx,yscale=0.8,xscale=-1]
\draw[string] (0,-0.5) to node[front,zxnode=\zxwhite,pos=0.5] {$f$} (0,5);\draw[string] (\d,-0.5) to node[front,zxnode=\zxwhite, pos=0.5] {$f$} (\d,5);
\draw[string] (0, 0.5) to (\d,0.5);
\node[zxvertex=\zxwhite] at (0,0.5){};
\node[zxvertex=\zxwhite] at (\d,0.5){};
\node[zxnode=\zxwhite] at (\d/2, 0.5 ) {$G$};
\draw[string] (0, 4) to (\d,4);
\node[zxvertex=\zxwhite] at (0,4){};
\node[zxvertex=\zxwhite] at (\d,4){};
\node[zxnode=\zxwhite] at (\d/2, 4 ) {$H$};
\end{tz}
\end{calign}\\[-15pt]
\end{minipage}
\hspace{\bnd}
&
\hspace{\bndm}
\begin{minipage}{\wr\textwidth}
\vspace{0.2cm}
\hspace{-0.36cm}quantum function $+$
\vspace{-0.2cm}
\def\d{2}
\begin{calign}\nonumber
\begin{tz}[zx,yscale=0.8,xscale=-1]
\draw[string] (0,-0.5) to node[front,zxnode=\zxwhite,pos=0.57] {$P$} (0,5);\draw[string] (\d,-0.5) to node[front,zxnode=\zxwhite, pos=0.43] {$P$} (\d,5);
\draw[string] (0, 0.5) to (\d,0.5);
\node[zxvertex=\zxwhite] at (0,0.5){};
\node[zxvertex=\zxwhite] at (\d,0.5){};
\node[zxnode=\zxwhite] at (\d/2, 0.5 ) {$G$};
\draw[string, arrow data={0.08}{>},arrow data={0.5}{>}, arrow data ={0.95}{>}] (3, -0.5)  to (3,0.5) to [out=up, in=down] (-1.,4) to (-1.,5);
\end{tz}
~=~
\begin{tz}[zx,yscale=0.8,xscale=-1]
\draw[string] (0,-0.5) to node[front,zxnode=\zxwhite,pos=0.57] {$P$} (0,5);\draw[string] (\d,-0.5) to node[front,zxnode=\zxwhite, pos=0.43] {$P$} (\d,5);
\draw[string] (0, 0.5) to (\d,0.5);
\node[zxvertex=\zxwhite] at (0,0.5){};
\node[zxvertex=\zxwhite] at (\d,0.5){};
\node[zxnode=\zxwhite] at (\d/2, 0.5 ) {$G$};
\draw[string] (0, 4) to (\d,4);
\node[zxvertex=\zxwhite] at (0,4){};
\node[zxvertex=\zxwhite] at (\d,4){};
\node[zxnode=\zxwhite] at (\d/2, 4 ) {$H$};
\draw[string, arrow data={0.08}{>}, arrow data ={0.5}{>},arrow data ={0.95}{>}] (3, -0.5)  to (3,0.5) to [out=up, in=down] (-1.,4) to (-1.,5);
\end{tz}
\end{calign}\\[-15pt]
\end{minipage}
\hspace{\bndr}
&
\begin{minipage}{\wtext}\center quantum \\ graph\\homom. \end{minipage}\\
\hline\nonumber
\begin{minipage}{1.5cm}\center graph \\iso.\end{minipage} &
\begin{minipage}{\wl\textwidth}
\vspace{0.2cm}
\hspace{-0.0cm}bijection $+$
\vspace{-0.1cm}
\begin{calign}\nonumber
\begin{tz}[zx,every to/.style={out=up, in=down},scale=1,xscale=-1]
\draw (0,0) to (0,3);
\path (1.75,0) to (1.75,0.75) to  [looseness=1.1, in looseness=0.9]  node[zxnode=\zxwhite, pos=0.5] {$P$}(-1.75,3);
\node[zxnode=\zxwhite] at (0,0.9){$G$};
\end{tz}
=
\begin{tz}[zx,every to/.style={out=up, in=down},scale=1,xscale=-1]
\draw (0,0) to (0,3);
\path (1.75,0) to (1.75,0.75) to  [looseness=1.1, in looseness=0.9]  node[zxnode=\zxwhite, pos=0.5] {$H$}(-1.75,3);
\node[zxnode=\zxwhite] at (0,0.9){$P$};
\end{tz}
\end{calign}\\[-25pt]
\end{minipage}
\hspace{\bnd}
&
\hspace{\bndm}
\begin{minipage}{\wr\textwidth}
\vspace{0.2cm}
\hspace{-0.35cm}quantum bijection $+$
\vspace{-0.1cm}
\begin{calign}\nonumber
\begin{tz}[zx,every to/.style={out=up, in=down},scale=1,xscale=-1]
\draw (0,0) to (0,3);
\draw[arrow data={0.2}{>}, arrow data={0.9}{>}] (1.75,0) to (1.75,0.75) to  [looseness=1.1, in looseness=0.9]  node[zxnode=\zxwhite, pos=0.5] {$P$}(-1.75,3);
\node[zxnode=\zxwhite] at (0,0.9){$G$};
\end{tz}
\quad =\quad 
\begin{tz}[zx,every to/.style={out=up, in=down},scale=1,xscale=-1]
\draw (0,0) to (0,3);
\draw[arrow data={0.2}{>}, arrow data={0.8}{>}] (1.75,0) to [looseness=0.9] node[zxnode=\zxwhite, pos=0.5] {$P$} (-1.75,2.5) to (-1.75,3);
\node[zxnode=\zxwhite] at (0,2.35) {$H$};
\end{tz}
\end{calign}\\[-25pt]
\end{minipage}
\hspace{\bndr}
&
\begin{minipage}{\wtext}\begin{center} quantum \\ graph\\iso.\end{center} \end{minipage}\\ \nonumber&\nonumber&&\\ \hline\end{array}
\]
\vspace{-10pt}
\caption{Some classical concepts and their quantum analogues.}\label{fig:classicalquantum}
\end{figure}

\section{Background}
\label{sec:background}

\subsection{Gelfand duality}

Our approach to defining a theory of quantum sets and quantum functions is based on \emph{Gelfand duality}.\begin{theorem}[Gelfand duality]\label{thm:gelfandduality} The category of commutative $C^*$-algebras and \mbox{$*$-homo}-morphisms is equivalent to the opposite of the category of compact Hausdorff spaces and continuous functions.
\end{theorem}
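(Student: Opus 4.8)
The plan is to exhibit the equivalence explicitly, via the two standard functors and natural isomorphisms comparing their composites with the identities. In one direction, send a compact Hausdorff space $X$ to the commutative unital $C^*$-algebra $C(X)$ of continuous functions $X \to \mathbb{C}$ with supremum norm and pointwise operations, and a continuous map $g\colon X \to Y$ to the unital $*$-homomorphism $g^*\colon C(Y)\to C(X)$, $h\mapsto h\circ g$; this is manifestly functorial and reverses direction. In the other direction, send a commutative unital $C^*$-algebra $A$ to its Gelfand spectrum $\mathrm{Spec}(A)$, the set of characters (nonzero $*$-homomorphisms $A\to\mathbb{C}$, equivalently maximal ideals) with the weak-$*$ topology, and a $*$-homomorphism $\phi\colon A\to B$ to $\mathrm{Spec}(\phi)\colon \mathrm{Spec}(B)\to\mathrm{Spec}(A)$, $\chi\mapsto\chi\circ\phi$. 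Banach--Alaoglu makes the unit ball of $A^*$ weak-$*$ compact; one checks the character space is a weak-$*$ closed subset of it containing no functional of norm $<1$ apart from $0$, so $\mathrm{Spec}(A)$ is compact Hausdorff, and $\mathrm{Spec}(\phi)$ is visibly continuous.

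The analytic heart is the Gelfand--Naimark theorem: for commutative unital $A$ the Gelfand transform $\Gamma_A\colon A\to C(\mathrm{Spec}(A))$, $\Gamma_A(a)(\chi)=\chi(a)$, is an isometric $*$-isomorphism. I would prove this in three steps. First, $\Gamma_A$ is a unital $*$-homomorphism: multiplicativity and unitality are immediate, and compatibility with $*$ reduces to each character being real on self-adjoint elements, which holds because a self-adjoint element of a $C^*$-algebra has real spectrum and $\chi(a)\in\sigma(a)$. Second, $\Gamma_A$ is isometric: for self-adjoint $a$ the $C^*$-identity gives $\|a^{2^n}\|=\|a\|^{2^n}$, so the spectral radius of $a$ equals $\|a\|$, while in a commutative unital Banach algebra the spectral radius equals $\sup_\chi|\chi(a)|=\|\Gamma_A(a)\|_\infty$; the general case follows from $\|a\|^2=\|a^*a\|=\|\Gamma_A(a^*a)\|_\infty=\|\Gamma_A(a)\|_\infty^2$. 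Third, $\Gamma_A$ is surjective: its image is a unital $*$-subalgebra of $C(\mathrm{Spec}(A))$ separating points of $\mathrm{Spec}(A)$ by construction, hence dense by Stone--Weierstrass, and closed because $\Gamma_A$ is isometric and $A$ complete; so it is everything. Naturality of $\Gamma$ in $A$ is a routine diagram chase from the definitions of $\mathrm{Spec}(\phi)$ and $\phi^*$.

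For the other composite I would check that for compact Hausdorff $X$ the evaluation map $\eta_X\colon X\to\mathrm{Spec}(C(X))$, $x\mapsto\mathrm{ev}_x$, is a homeomorphism: it is continuous for the weak-$*$ topology since $x\mapsto\mathrm{ev}_x(f)=f(x)$ is continuous for each fixed $f$; injective by Urysohn's lemma, which separates distinct points by a continuous function; and surjective because the kernel of a character is a maximal ideal, and using compactness one shows every maximal ideal of $C(X)$ has the form $\{f : f(x_0)=0\}$ (if the ideal had no common zero, finitely many $|f_i|^2$ would sum to an invertible element of the ideal). A continuous bijection from a compact space to a Hausdorff space is a homeomorphism, and naturality of $\eta$ in $X$ is again a direct check. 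With natural isomorphisms $\Gamma\colon \mathrm{id}\Rightarrow C(-)\circ\mathrm{Spec}$ and $\eta\colon\mathrm{id}\Rightarrow\mathrm{Spec}\circ C(-)$ (reading the variance appropriately), the two functors are quasi-inverse, yielding the stated duality.

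The main obstacle is the Gelfand--Naimark isomorphism theorem, specifically the isometry and surjectivity of $\Gamma_A$: the isometry rests on the $C^*$-identity together with the spectral-radius formula for commutative Banach algebras, and surjectivity invokes Stone--Weierstrass; the rest is bookkeeping. I note that only the finite-dimensional case is needed here, where everything collapses: a finite-dimensional commutative unital $C^*$-algebra is $*$-isomorphic to $\mathbb{C}^n$, its spectrum is the $n$-point discrete (hence compact Hausdorff) space, and unital $*$-homomorphisms $\mathbb{C}^n\to\mathbb{C}^m$ correspond bijectively and functorially to functions $[m]\to[n]$ (a unital $*$-homomorphism is determined by where it sends the minimal projections, which must go to sums of minimal projections, and positivity forces each target to be a single minimal projection), so the equivalence can be verified directly.
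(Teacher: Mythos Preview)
Your proof is correct and follows the standard route to Gelfand duality. However, there is nothing to compare against: the paper states this theorem as classical background and does not prove it. The only elaboration the paper offers is the single sentence ``The equivalence takes a space to the algebra of continuous complex-valued functions on that space; in the other direction, a $C^*$-algebra is taken to its spectrum,'' which your proposal expands in full detail. Your closing remark about the finite-dimensional case is well taken, since that is all the paper actually uses (Corollary~\ref{cor:finiteGelfandduality}).
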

\noindent The equivalence takes a space to the algebra of continuous complex-valued functions on that space; in the other direction, a $C^*$-algebra is taken to its spectrum. In some sense, therefore, one may consider the theory of compact Hausdorff spaces --- classical topology --- to be the theory of commutative $C^*$-algebras.

The idea of noncommutative geometry is to consider \emph{noncommutative} \mbox{$C^*$-algebras} in light of Gelfand duality, thus enabling us to study the `noncommutative' or `quantum' spaces to which they would be dual. In this paper, we consider only finite discrete spaces and finite-dimensional $C^*$-algebras. In this case, Theorem \ref{thm:gelfandduality} reduces to the following statement.
\begin{corollary}[Finite Gelfand duality]\label{cor:finiteGelfandduality} The category of finite-dimensional commutative $C^*$-algebras and $*$-homomorphisms is equivalent to the opposite of the category of finite sets and functions.
\end{corollary}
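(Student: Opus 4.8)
The plan is to deduce Corollary~\ref{cor:finiteGelfandduality} from Theorem~\ref{thm:gelfandduality} by restricting the Gelfand equivalence to appropriate full subcategories on each side, and then checking that these subcategories are exactly the ones named in the statement. So the entire content is in identifying the relevant essential images.

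First I would identify $\Set$ with a full subcategory of compact Hausdorff spaces. Equipping each finite set with the discrete topology yields a compact Hausdorff space, and every function between finite discrete spaces is continuous, so $X \mapsto (X,\text{discrete})$ is a fully faithful functor $\Set \rightarrow \mathbf{CHaus}$. Conversely, any Hausdorff topology on a finite set is automatically discrete: singletons are closed, hence every subset, being a finite union of singletons, is closed. Thus this functor is essentially surjective onto the full subcategory of \emph{finite} compact Hausdorff spaces, and $\Set$ is equivalent — in fact isomorphic — to that subcategory.

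Next I would trace these spaces through Gelfand duality. Under the equivalence of Theorem~\ref{thm:gelfandduality}, a finite discrete space $X$ maps to the algebra $\mathbb{C}^{X}$ of continuous functions on $X$, which is commutative of dimension $|X| < \infty$. Hence composing the embedding $\Set^{\op} \rightarrow \mathbf{CHaus}^{\op}$ with the Gelfand functor gives a fully faithful functor from $\Set^{\op}$ into the category of finite-dimensional commutative $C^*$-algebras. It remains to show that this is essentially surjective, i.e.\ that every finite-dimensional commutative $C^*$-algebra $A$ has the form $C(X)$ with $X$ finite. By Theorem~\ref{thm:gelfandduality}, $A \cong C(\widehat A)$ for the (compact Hausdorff) spectrum $\widehat A$, so it suffices to show that $\widehat A$ is finite whenever $C(\widehat A)$ is finite-dimensional. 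This is the step requiring genuine work: if $\widehat A$ had $n+1$ distinct points, then by complete regularity (Urysohn's lemma) there are continuous functions $f_0,\dots,f_n$ with $f_i$ taking the value $1$ at the $i$-th point and $0$ at the others, and the $f_i$ are linearly independent; letting $n$ grow contradicts $\dim C(\widehat A) < \infty$. So $\widehat A$ is finite, hence discrete, and lies in the essential image.

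Putting this together, the Gelfand equivalence restricts to an equivalence between the category of finite-dimensional commutative $C^*$-algebras and $\Set^{\op}$, with the two functors acting as $X \mapsto \mathbb{C}^{X}$ and $A \mapsto \widehat A$. I expect the essential-surjectivity step — excluding infinite compact Hausdorff spaces whose function algebra is finite-dimensional — to be the main obstacle. A more algebraic alternative sidesteps point-set topology entirely: a finite-dimensional commutative $C^*$-algebra has no nonzero nilpotents — for nilpotent $a$, commutativity makes $a^{*}a$ a nilpotent self-adjoint element, which by the $C^*$-identity $\lVert a^{*}a\rVert = \lVert a\rVert^{2}$ must vanish, whence $a = 0$ — so $A$ is semisimple, hence by Artin--Wedderburn a finite product of full matrix algebras over $\mathbb{C}$, and commutativity collapses each factor to $\mathbb{C}$, giving $A \cong \mathbb{C}^{n}$ directly; one then matches the morphisms by a short computation.
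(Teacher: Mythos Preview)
Your proposal is correct and follows the paper's approach: the paper simply presents Corollary~\ref{cor:finiteGelfandduality} as the finite-dimensional case of Theorem~\ref{thm:gelfandduality} without further argument, so your restriction-of-the-equivalence argument is exactly the intended deduction, just with the details spelled out. The alternative Artin--Wedderburn route you sketch is also fine and in fact closer in spirit to the Frobenius-algebraic treatment the paper adopts later in Section~\ref{sec:gelfanddiagram}.
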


\subsection{The string diagram calculus}

In order to investigate finite Gelfand duality and its noncommutative generalisation, we make use of the string diagram calculus for monoidal categories. This calculus is well established and has been treated in detail elsewhere~\cite{Joyal1991a,Joyal1991b,Selinger2010,Coecke2010}; here we only provide a brief and informal introduction. We remark that this calculus is quite general. Although we only consider the graphical calculus of \Hilb , the category of finite-dimensional Hilbert spaces and linear maps, most of what we prove holds in the general setting of dagger compact categories~\cite{Kelly1972,Kelly1980,Abramsky2004}.

In the string diagram calculus\ignore{ for the monoidal category \Hilb }, wires correspond to finite-dimensional Hilbert spaces and boxes correspond to linear maps.\ignore{The horizontal direction in the diagram corresponds to monoidal product and the vertical direction to composition.} We read diagrams from bottom to top.\ignore{ For example, for $f: V_1 \to V_2$ and  $g:V_2 \to V_3$, the  composition $g \circ f: V_1 \to V_3$ and monoidal product $f \otimes g: V_1 \otimes  V_2 \to V_2 \otimes V_3$ are depicted as follows:}
Composition and tensor product are depicted as follows:
\begin{calign}
\begin{tz}[zx]
\draw (0,0) to (0,3.5);
\node[zxnode=\zxwhite] at (0,1.05) {$f$};
\node[zxnode=\zxwhite] at (0,2.45) {$g$};
\node[dimension, right] at (0,0){$V_1$};
\node[dimension, right] at (0, 1.75) {$V_2$};
\node[dimension, right] at (0, 3.5) {$V_3$};
\end{tz}
&
\begin{tz}[zx]
\draw (0,0) to (0,3.5);
\draw (1.5,0) to (1.5,3.5);
\node[zxnode=\zxwhite] at (0,1.75) {$f$};
\node[zxnode=\zxwhite] at (1.5,1.75) {$g$};
\node[dimension, right] at (0,0){$V_1$};
\node[dimension, right] at (1.5, 0) {$V_3$};
\node[dimension, right] at (0, 3.5) {$V_2$};
\node[dimension, right] at (1.5, 3.5) {$V_4$};
\end{tz}\\\nonumber
gf:V_1\to V_3
& 
f\otimes g: V_1\otimes V_3\to V_2\otimes V_4
\end{calign}
\ignore{
\begin{center}
\resizebox{!}{2.2cm}{
\input{diagcalcmonoidalcats.tex}}
\end{center}}
\ignore{The one-dimensional Hilbert space $\mathbb{C}$ is not drawn, since $\mathbb{C} \otimes V \simeq V$ for all Hilbert spaces $V$.}
\ignore{
There are three special families of linear maps which we depict topologically. Firstly, there is the swap map $\sigma: V \otimes W \to W \otimes V$, where $\sigma(v,w) = (w,v)$. We depict this as a crossing of wires:
\begin{equation}
\begin{tz}[zx,every to/.style={out=up, in=down}]
\draw (0,0) to (2,3);
\draw (2,0) to (0,3);
\node[dimension, right] at (2,0) {$W$};
\node[dimension,left] at (0,0) {$V$};
\node[dimension, right] at (2,3) {$V$};
\node[dimension, left] at (0,3) {$W$};
\end{tz}
\end{equation}%
\ignore{
\begin{center}
\resizebox{!}{2.2cm}{
\input{swapdiag.tikz}}
\end{center}}}
All finite-dimensional Hilbert spaces $V$ have dual spaces $V^*=\Hom(V,\mathbb{C})$, represented in the graphical calculus as an oriented wire with the opposite orientation as $V$. Duality is characterized by the following linear maps, here called \textit{cups and caps}:
\def\pv{\vphantom{V^*}}
\begin{calign}\label{eq:cupscapsHilb}
\begin{tz}[zx]
\draw[arrow data ={0.15}{<}, arrow data={0.89}{<}] (0,0) to [out=up, in=up, looseness=2.5] (2,0) ;
\node[dimension, right] at (2.05,0) {$\pv V$};
\node[dimension, left] at (0,0) {$V^*$};
\end{tz}
&
\begin{tz}[zx,scale=-1]
\draw[arrow data ={0.15}{>}, arrow data={0.89}{>}] (0,0) to [out=up, in=up, looseness=2.5] (2,0) ;
\node[dimension, left] at (2.0,0) {$\pv V$};
\node[dimension, right] at (0,0) {$V^*$};
\end{tz}
&
\begin{tz}[zx,xscale=-1]
\draw[arrow data ={0.15}{<}, arrow data={0.89}{<}] (0,0) to [out=up, in=up, looseness=2.5] (2,0) ;
\node[dimension, left] at (2.0,0) {$\pv V$};
\node[dimension, right] at (0,0) {$V^*$};
\end{tz}
&
\begin{tz}[zx,yscale=-1]
\draw[arrow data ={0.15}{>}, arrow data={0.89}{>}] (0,0) to [out=up, in=up, looseness=2.5] (2,0) ;
\node[dimension, right] at (2.05,0) {$\pv V$};
\node[dimension, left] at (0,0) {$V^*$};
\end{tz}\\\nonumber
f\otimes v \mapsto f(v) 
& 
~1\mapsto \mathbbm{1}_V
&
v\otimes f\mapsto f(v)
&
~~1\mapsto \mathbbm{1}_V
\end{calign}
To define the second and fourth map, we have identified $V\otimes V^* \cong V^*\otimes V \cong\End(V)$. It may be verified that these maps fulfill the following \textit{snake equations}:
\begin{calign}\label{eq:snake}
\begin{tz}[zx]
\draw[arrow data={0.15}{>}, arrow data={0.5}{>}, arrow data={0.9}{>}] (0,0) to (0,1) to [out=up, in=up, looseness=2] (1,1) to [out=down, in=down, looseness=2] (2,1) to (2,2);
\end{tz}
~~~=~~~
\begin{tz}[zx]
\draw[arrow data={0.5}{>}] (0,0) to (0,2);
\end{tz}
~~~= ~~~
\begin{tz}[zx,xscale=-1]
\draw[arrow data={0.15}{>}, arrow data={0.5}{>}, arrow data={0.9}{>}] (0,0) to (0,1) to [out=up, in=up, looseness=2] (1,1) to [out=down, in=down, looseness=2] (2,1) to (2,2);
\end{tz}
&
\begin{tz}[zx]
\draw[arrow data={0.15}{<}, arrow data={0.5}{<}, arrow data={0.9}{<}] (0,0) to (0,1) to [out=up, in=up, looseness=2] (1,1) to [out=down, in=down, looseness=2] (2,1) to (2,2);
\end{tz}
~~~ =~~~
\begin{tz}[zx]
\draw[arrow data={0.5}{<}] (0,0) to (0,2);
\end{tz}
~~~= ~~~
\begin{tz}[zx,xscale=-1]
\draw[arrow data={0.15}{<}, arrow data={0.5}{<}, arrow data={0.9}{<}] (0,0) to (0,1) to [out=up, in=up, looseness=2] (1,1) to [out=down, in=down, looseness=2] (2,1) to (2,2);
\end{tz}
\end{calign}
Together with the swap map $\sigma_{V,W}:\ignore{V\otimes W\to W\otimes V$,~$}v\otimes w\mapsto w\otimes v$, depicted as a crossing of wires, this leads to an extremely flexible topological calculus, allowing us to untangle arbitrary diagrams and straighten out any twists:
\begin{calign}
\begin{tz}[zx,scale=1]
	\begin{pgfonlayer}{nodelayer}
		\node [style=none] (0) at (0, -0) {};
		\node [style=none] (1) at (0, 3) {};
		\node [style=none] (2) at (1, -0) {};
		\node [style=none] (3) at (1, 3) {};
		\node [style=none] (4) at (2, -0) {};
		\node [style=none] (5) at (2, 3) {};
		\node [style=none] (6) at (3, -0) {};
		\node [style=none] (7) at (3, 3) {};
		\node [style=none] (8) at (2.25, 0.75) {};
		\node [style=none] (9) at (1.25, 1.25) {};
		\node [style=none] (10) at (3, 2) {};
		\node [style=none] (11) at (1, 2) {};
		\node [style=none] (12) at (1.75, 1.25) {};
		\node [style=none] (13) at (0.75, 0.5) {};
		\node [style=none] (14) at (0, 2) {};
		\node [style=none] (15) at (2, -0) {};
		\node [style=none] (16) at (2.25, 2) {};
		\node [style=none] (17) at (4, 1.5) {$=$};
		\node [style=none] (18) at (6, -0) {};
		\node [style=none] (19) at (7, -0) {};
		\node [style=none] (20) at (8, -0) {};
		\node [style=none] (21) at (5, -0) {};
		\node [style=none] (22) at (7, -0) {};
		\node [style=none] (23) at (7, 3) {};
		\node [style=none] (24) at (8, 3) {};
		\node [style=none] (25) at (6, 3) {};
		\node [style=none] (26) at (5, 3) {};
	\end{pgfonlayer}
	\begin{pgfonlayer}{edgelayer}
		\draw [thick, in=-90, out=27, looseness=0.75] (0.center) to (8.center);
		\draw [thick, in=-45, out=105, looseness=0.75] (8.center) to (9.center);
		\draw [thick, in=-90, out=105, looseness=0.75] (9.center) to (10.center);
		\draw [thick, in=-90, out=105, looseness=0.50] (10.center) to (1.center);
		\draw [thick, bend right=15, looseness=1.00] (3.center) to (11.center);
		\draw [thick, in=90, out=-90, looseness=0.75] (11.center) to (12.center);
		\draw [thick, in=-90, out=90, looseness=0.50] (13.center) to (12.center);
		\draw [thick, bend right, looseness=1.00] (13.center) to (2.center);
		\draw [thick, in=90, out=-153, looseness=0.75] (5.center) to (14.center);
		\draw [thick, in=135, out=-90, looseness=1.50] (14.center) to (4.center);
		\draw [thick, in=-127, out=90, looseness=1.50] (16.center) to (7.center);
		\draw [thick, in=111, out=-90, looseness=1.25] (16.center) to (6.center);
		\draw [style=simple] (21.center) to (26.center);
		\draw [style=simple] (18.center) to (25.center);
		\draw [style=simple] (19.center) to (23.center);
		\draw [style=simple] (20.center) to (24.center);
	\end{pgfonlayer}
\end{tz}
&
\begin{tz}[zx]
\draw[arrow data ={0.075}{>}, arrow data ={0.5}{>}, arrow data={0.925}{>}] (0,0) to [out=up, in=right] (-1.25, 2.25) to [out=left, in=left, looseness=1] (-1.25,0.75) to [out=right, in=down] (0,3);
\end{tz}
~~~=~~~\begin{tz}[zx]
\draw[arrow data={0.5}{>}] (0,0) to (0,3);
\end{tz}
~~~=~~~
\begin{tz}[zx,xscale=-1]
\draw[arrow data ={0.075}{>}, arrow data ={0.5}{>}, arrow data={0.925}{>}] (0,0) to [out=up, in=right] (-1.25, 2.25) to [out=left, in=left, looseness=1] (-1.25,0.75) to [out=right, in=down] (0,3);
\end{tz}%
\ignore{
\input{unwindtwists.tikz}}
\end{calign}
\ignore{We will refer to linear maps of the form $\C \to V$ and $V \to\C$ for a Hilbert space $V$ as \emph{states} and \emph{effects}, where the more traditional definition of  a state as a vector $\ket{\psi} \in V$ is retrieved as $\ket{\psi} = \psi(1)$.}
Given a linear map $f:V\to W$ between Hilbert spaces, we can express its adjoint \mbox{$f^\dagger:W \to V$} as a reflection of the corresponding diagram across a horizontal axis. This is justified, since the following holds:
\begin{calign}
\left(~~\begin{tz}[zx]
\draw[arrow data ={0.15}{<}, arrow data={0.89}{<}] (0,0) to [out=up, in=up, looseness=2.5] (2,0) ;
\end{tz}~~\right)^\dagger ~~~=~~~
\begin{tz}[zx,yscale=-1]
\draw[arrow data ={0.15}{>}, arrow data={0.89}{>}] (0,0) to [out=up, in=up, looseness=2.5] (2,0) ;
\end{tz}
&
\left(~~\begin{tz}[zx,yscale=-1]
\draw[arrow data ={0.15}{>}, arrow data={0.89}{>}] (0,0) to [out=up, in=up, looseness=2.5] (2,0) ;
\end{tz}~~\right)^\dagger
~~~=~~~
\begin{tz}[zx,xscale=1]
\draw[arrow data ={0.15}{<}, arrow data={0.89}{<}] (0,0) to [out=up, in=up, looseness=2.5] (2,0) ;
\end{tz}
\end{calign}
The following generalisation will be important in what follows.
\begin{definition}\label{def:daggerduality}Let $V$ and $W$ be Hilbert spaces. A \emph{dagger duality} between $V$ and $W$ is given by linear maps $\epsilon: W\otimes V \to \mathbb{C}$ and $\eta: \mathbb{C} \to V\otimes W$ fulfilling the snake equations~\eqref{eq:snake} and such that the following holds:
\begin{equation}\label{eq:daggerduality}
\begin{tz}[zx]
\draw (0,0) to (0,1.25);
\draw (1.5,0) to (1.5,1.25);
\draw (-0.2,0) rectangle (1.7, -0.8);
\node[scale=0.8] at (0.75,-0.4) {$\epsilon^\dagger$};
\node[dimension, left] at (0,1.25){$W$};
\node[dimension, right] at (1.5,1.25) {$V$};
\end{tz}
~~~=~~~
\begin{tz}[zx]
\draw (0,0) to [out=up, in=down] (1.5,1.25);
\draw (1.5,0) to [out=up, in=down](0,1.25);
\draw (-0.2,0) rectangle (1.7, -0.8);
\node[scale=0.8] at (0.75,-0.4) {$\eta$};
\node[dimension, left] at (0,1.25){$W$};
\node[dimension, right] at (1.5,1.25) {$V$};
\end{tz}
\end{equation}
\end{definition}
\noindent
The cups and caps defined in \eqref{eq:cupscapsHilb} are dagger duals. Dagger dualities are unique up to a unique unitary map~\cite[Section 7]{Selinger2007}, meaning that if $V$ and $W$ are dagger dual, then there is a unitary map $U: V^*\to W$ such that the following holds:
\begin{calign}
\begin{tz}[zx]
\clip (-0.6, 1.5) rectangle (2.1, -1.1);
\draw (0,0) to (0,1.25);
\draw (1.5,0) to (1.5,1.25);
\draw (-0.2,0) rectangle (1.7, -0.8);
\node[scale=0.8] at (0.75,-0.4) {$\eta$};
\node[dimension, left] at (0,1.25){$V$};
\node[dimension, right] at (1.5,1.25) {$W$};
\end{tz}
~~~=~~~
\begin{tz}[zx]
\clip (-0.6, 1.5) rectangle (2.1, -1.1);
\draw[arrow data ={0.2}{<}, arrow data={0.7}{<}] (0,1.25) to (0,0.25) to [out=down, in=down, looseness=2.5] (1.5,0.25) to (1.5,1.25);
\node[zxnode=\zxwhite] at (1.5, 0.5) {$U$};
\node[dimension, left] at (0,1.25){$V$};
\node[dimension, right] at (1.5,1.25) {$W$};
\end{tz}
&
\begin{tz}[zx,yscale=-1]
\clip (-0.6, 1.5) rectangle (2.1, -1.1);
\draw (0,0) to (0,1.25);
\draw (1.5,0) to (1.5,1.25);
\draw (-0.2,0) rectangle (1.7, -0.8);
\node[scale=0.8] at (0.75,-0.4) {$\epsilon$};
\node[dimension, left] at (0,1.25){$W$};
\node[dimension, right] at (1.5,1.25) {$V$};
\end{tz}
~~~=~~~
\begin{tz}[zx,yscale=-1]
\clip (-0.6, 1.5) rectangle (2.1, -1.1);
\draw[arrow data ={0.35}{<}, arrow data={0.8}{<}] (0,1.25) to (0,0.25) to [out=down, in=down, looseness=2.5] (1.5,0.25) to (1.5,1.25);
\node[zxnode=\zxwhite] at (0, 0.5) {$U^\dagger$};
\node[dimension, left] at (0,1.25){$W$};
\node[dimension, right] at (1.5,1.25) {$V$};
\end{tz}
\end{calign}

\ignore{
The final element of the graphical calculus for \Hilb\ is a depiction of the adjoint (conjugate transpose) of a linear map. In the diagrammatic notation, the adjoint is given by reflection of the diagram in a horizontal axis , since $(g \circ f)^{\dagger} = f^{\dagger} \circ g^{\dagger}$ and the following equalities hold:
\begin{center}
\resizebox{!}{1.5cm}{\input{adjointequalities.tikz}}
\end{center}}

\subsection{Diagrams for $C^*$-algebras}\label{sec:diagramsforC*}
Following Vicary~\cite{Vicary2010}, we define $C^*$-algebras in a categorical manner, as dagger Frobenius algebras in the category $\Hilb$.\ignore{ In particular, in all our definitions we refer to Hilbert spaces rather than vector spaces.} Of course, algebras can be defined on any vector space, but in order to discuss $C^*$-algebras we require the inner product. 

We will refrain from drawing an orientation on the wire corresponding to the Hilbert space on which the algebra is defined, for reasons which will soon become apparent.
\ignore{
On the wire corresponding to the Hilbert space on which the algebra is defined we will refrain from drawing arrows, for reasons which will soon become apparent. \ignore{The invisible arrow should always be taken as pointing upwards.}}

\begin{definition}\label{def:algebra}An \textit{algebra} is a Hilbert space $H$ with a multiplication and a unit map, depicted as follows:%
\begin{calign}
\begin{tz}[zx,master]
\coordinate (A) at (0,0);
\draw (0.75,1) to (0.75,2);
\mult{A}{1.5}{1}
\end{tz}
&
\begin{tz}[zx,slave]
\coordinate (A) at (0.75,2);
\unit{A}{1}
\end{tz}
\\[0pt]\nonumber
m:H\otimes H \to H& u: \mathbb{C} \to H 
\end{calign}\hspace{-0.2cm}
These maps  satisfy the following associativity and unitality equations:
\begin{calign}\label{eq:assocandunitality}
\begin{tz}[zx]
\coordinate(A) at (0.25,0);
\draw (1,1) to [out=up, in=-135] (1.75,2);
\draw (1.75,2) to [out=-45, in=up] (3.25,0);
\draw (1.75,2) to (1.75,3);
\mult{A}{1.5}{1}
\node[zxvertex=\zxwhite,zxdown] at (1.75,2){};
\end{tz}
\quad = \quad
\begin{tz}[zx,xscale=-1]
\coordinate(A) at (0.25,0);
\draw (1,1) to [out=up, in=-135] (1.75,2);
\draw (1.75,2) to [out=-45, in=up] (3.25,0);
\draw (1.75,2) to (1.75,3);
\mult{A}{1.5}{1}
\node[zxvertex=\zxwhite,zxdown] at (1.75,2){};
\end{tz}
&
\begin{tz}[zx]
\coordinate (A) at (0,0);
\draw (0,-0.25) to (0,0);
\draw (0.75,1) to (0.75,2);
\mult{A}{1.5}{1}
\node[zxvertex=\zxwhite,zxdown] at (1.5,0){};
\end{tz}
\quad =\quad
\begin{tz}[zx]
\draw (0,0) to (0,2);
\end{tz}
\quad= \quad
\begin{tz}[zx,xscale=-1]
\coordinate (A) at (0,0);
\draw (0,-0.25) to (0,0);
\draw (0.75,1) to (0.75,2);
\mult{A}{1.5}{1}
\node[zxvertex=\zxwhite,zxdown] at (1.5,0){};
\end{tz}
\end{calign}
Analogously, a \textit{coalgebra} is a Hilbert space $H$ with a coassociative comultiplication $\delta: H \to H\otimes H$ and a counit $\epsilon:H\to \mathbb{C}$. The adjoint of an algebra is a coalgebra.
\end{definition}
\noindent
Note that for the multiplication and unit maps of an algebra we simply draw white nodes rather than labelled boxes, for concision. Likewise, we draw the comultiplication and counit maps of the adjoint coalgebra as white nodes. Despite having the same label in the diagram, they can be easily distinguished by their type.

\begin{definition} A \textit{dagger Frobenius algebra} is an algebra where the algebra and adjoint coalgebra structures are related by the following Frobenius equation:
\begin{equation}\label{eq:Frobenius}
\begin{tz}[zx]
\draw (0,0) to [out=up, in=-135] (0.75,2) to (0.75,3);
\draw (0.75,2) to [out=-45, in=135] (2.25,1);
\draw (2.25,0) to (2.25,1) to [out=45, in=down] (3,3);
\node[zxvertex=\zxwhite,zxup] at (2.25,1){};
\node[zxvertex=\zxwhite,zxdown] at (0.75,2){};
\end{tz}
\quad = \quad
\begin{tz}[zx]
\coordinate (A) at (0,0);
\coordinate (B) at (0,3);
\draw (0.75,1) to (0.75,2);
\mult{A}{1.5}{1}
\comult{B}{1.5}{1}
\end{tz}
\quad = \quad 
\begin{tz}[zx]
\draw (0,0) to [out=up, in=-135] (0.75,2) to (0.75,3);
\draw (0.75,2) to [out=-45, in=135] (2.25,1);
\draw (2.25,0) to (2.25,1) to [out=45, in=down] (3,3);
\node[zxvertex=\zxwhite,zxup] at (2.25,1){};
\node[zxvertex=\zxwhite,zxdown] at (0.75,2){};
\end{tz}
\end{equation}
A Frobenius algebra is \textit{special}, \textit{symmetric} or \textit{commutative} if one of the following additional equations holds:
\begin{calign}\label{eq:special}
\begin{tz}[zx,every to/.style={out=up, in=down}]\draw (0,0) to (0,1) to [out=135] (-0.75,2) to [in=-135] (0,3) to (0,4);
\draw (0,1) to [out=45] (0.75,2) to [in=-45] (0,3);
\node[zxvertex=\zxwhite, zxup] at (0,1){};
\node[zxvertex=\zxwhite,zxdown] at (0,3){};\end{tz}
\quad = \quad
\begin{tz}[zx]
\draw (0,0) to +(0,4);
\end{tz}
&
\begin{tz}[zx,every to/.style={out=up, in=down}]
\draw (0.25,-1.5) to (1.75,0) to [in=-45] (1,1) to (1,1.5);
\draw (1.75,-1.5) to (0.25,0) to [in=-135] (1,1);
\node[zxvertex=\zxwhite, zxdown] at (1,1){};
\node[zxvertex=\zxwhite] at (1,1.5){};
\end{tz}
\quad = \quad
\begin{tz}[zx,every to/.style={out=up, in=down}]
\draw (1.75,-1.5) to (1.75,0) to [in=-45] (1,1) to (1,1.5);
\draw (0.25,-1.5) to (0.25,0) to [in=-135] (1,1);
\node[zxvertex=\zxwhite, zxdown] at (1,1){};
\node[zxvertex=\zxwhite] at (1,1.5){};
\end{tz}
&
\begin{tz}[zx,every to/.style={out=up, in=down}]
\draw (0.25,-1.5) to (1.75,0) to [in=-45] (1,1) to (1,2);
\draw (1.75,-1.5) to (0.25,0) to [in=-135] (1,1);
\node[zxvertex=\zxwhite, zxdown] at (1,1){};
\end{tz}
\quad = \quad
\begin{tz}[zx,every to/.style={out=up, in=down}]
\draw (1.75,-1.5) to (1.75,0) to [in=-45] (1,1) to (1,2);
\draw (0.25,-1.5) to (0.25,0) to [in=-135] (1,1);
\node[zxvertex=\zxwhite, zxdown] at (1,1){};
\end{tz}\\\nonumber
\text{a) special}&\text{b) symmetric}&\text{c) commutative}
\end{calign} 
\end{definition}
\noindent
In this work, we will focus on special symmetric and special commutative dagger Frobenius algebras, which we abbreviate as \F s and \CF s, respectively. 
\ignore{\begin{remark} Special symmetric dagger Frobenius algebras are also known as Q-systems \DRcomm{expand on this with some refs}
\end{remark}}%

Frobenius algebras are closely related to dualities. In particular, it is a direct consequence of~\eqref{eq:assocandunitality} and~\eqref{eq:Frobenius} that the following cups and caps fulfill the snake equations~\eqref{eq:snake}:
\begin{calign}\label{eq:cupcapfrob}
\begin{tz}[zx]
\clip (-0.1,0) rectangle (2.1,2.);
\draw (0,0) to [out=up, in=up, looseness=2] node[zxvertex=\zxwhite, pos=0.5]{} (2,0);
\end{tz}
~:=~
\begin{tz}[zx]
\clip (-0.1,0) rectangle (2.1,2.2);
\draw (0,0) to [out=up, in=up, looseness=2] node[front,zxvertex=\zxwhite, pos=0.5](A){} (2,0);
\draw[string] (A.center) to (1,1.8);
\node[zxvertex=\zxwhite] at (1,1.8) {};
\end{tz}
&
\begin{tz}[zx,yscale=-1]
\clip (-0.1,0) rectangle (2.1,2.);
\draw (0,0) to [out=up, in=up, looseness=2] node[zxvertex=\zxwhite, pos=0.5]{} (2,0);
\end{tz}
~:=~
\begin{tz}[zx,yscale=-1]
\clip (-0.1,0) rectangle (2.1,2.2);
\draw (0,0) to [out=up, in=up, looseness=2] node[front,zxvertex=\zxwhite, pos=0.5](A){} (2,0);
\draw[string] (A.center) to (1,1.8);
\node[zxvertex=\zxwhite] at (1,1.8) {};
\end{tz}
\end{calign}
\ignore{%
\begin{proposition}
Let $(H, \mu, \eta, \delta, \epsilon)$ be an \F. Let the `cup' and `cap' be defined as follows:
\begin{equation}\label{eqn:cupcapfrob}
\resizebox{!}{1.5cm}{\input{cupcapfrob.tikz}}
\end{equation}
Then the following `snake' equations hold: 
\begin{equation}
\resizebox{!}{1.5cm}{\input{frobsnake.tikz}}
\end{equation}
\end{proposition}
\begin{proof}
Can easily be shown using the equations~\eqref{eq:assocandunitality} and  equations (\ref{eq:Frobenius}).
\end{proof}}%
It follows that every Frobenius algebra is canonically self-dual, $A^*\cong A$; we therefore do not need to draw an orientation on the corresponding wire.
\ignore{
\begin{remark}\label{rem:noorientationonfrobwire}
Categorically, this means the cup and the cap defined in (\ref{eqn:cupcapfrob}) witness a self-duality $H \simeq H^*$. For this reason, we do not draw the arrows on $H$, since we assume the categorical dual on $H$ to be defined using the Frobenius structure, which means that $H$ is its own dual.
\end{remark}}%

A major reason for defining these structures is the fact that $\F$s coincide with finite-dimensional $C^*$-algebras.
\begin{theorem}[{\cite[Theorem 4.6.]{Vicary2010}}] Every finite-dimensional $C^*$-algebra has an inner product making it into a special symmetric dagger Frobenius algebra. Conversely, every $\F$ $A$ admits a norm such that the canonical involution, defined by its action on vectors $\ket{a}\in A$ as the following antihomomorphism, endows it with the structure of a $C^*$-algebra:%
\begin{calign}
\begin{tz}[zx,master]
\draw (0.25,3) to (0.25,1);
\node[zxnode=\zxwhite] at (0.25,1) {$a$};
\end{tz}
\quad \mapsto \quad
\begin{tz}[zx,slave]
\draw (0.25,3) to  (0.25,2) to [out=down, in=135] (1,1) to [out=45, in=down] (1.75,2);
\draw (1,1) to (1,0.5);
\node[zxnode=\zxwhite] at (1.75,2) {$a^\dagger$};
\node[zxvertex=\zxwhite,zxup] at (1,1){};
\node[zxvertex=\zxwhite] at (1,0.5){};
\end{tz}\\[-5pt]\nonumber\end{calign} 
\end{theorem}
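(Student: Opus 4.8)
The statement is Vicary's theorem, and the plan follows his argument: prove the two implications separately, using the Artin--Wedderburn structure theory for the direction from a finite-dimensional $C^*$-algebra to an \F, and the left regular representation for the direction back.

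For the first direction, let $A$ be a finite-dimensional $C^*$-algebra; by Artin--Wedderburn it is $*$-isomorphic to $\bigoplus_{i=1}^k \Mat_{n_i}(\C)$. I would equip $A$ with the inner product $\langle a, b\rangle := \sum_i n_i \Tr(a_i^\dagger b_i)$, where $a_i$ is the $i$-th block; positive-definiteness is immediate, so this makes $A$ a Hilbert space on which the $C^*$-involution is the genuine adjoint of the $*$-structure. Taking $m$ to be the multiplication and $u$ the unit, finite-dimensionality gives an adjoint comultiplication $\delta := m^\dagger$ and counit $\epsilon := u^\dagger$, which one checks are coassociative and counital. It then remains to verify the axioms: associativity and unitality \eqref{eq:assocandunitality} hold by construction; the Frobenius law \eqref{eq:Frobenius} follows because the bilinear form $(a,b)\mapsto \epsilon(m(a\otimes b)) = \langle 1, ab\rangle$ is associative and nondegenerate; the symmetry axiom in \eqref{eq:special} holds because this form is symmetric, i.e. $\sum_i n_i\Tr_i$ is a trace; and speciality $m\circ\delta = \id$ follows from the particular weights $n_i$ (on a single block $\Mat_n(\C)$ with inner product $n\Tr(a^\dagger b)$ one computes $m\delta(e_{jk}) = e_{jk}$ directly, and the general case is a direct sum of these).

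For the converse, let $A=(H,m,u,\delta,\epsilon)$ be an \F\ and let $a\mapsto a^*$ denote the displayed map, which read off the picture is the composite $(\id_H\otimes a^\dagger)\circ\delta\circ u$. I would first check that $*$ is an antilinear, involutive, unital antihomomorphism: antilinearity is clear from $a\mapsto a^\dagger$; the identities $u^* = u$, $a^{**}=a$, and $(m(a\otimes b))^* = m(b^*\otimes a^*)$ are diagram chases using associativity, the Frobenius relation \eqref{eq:Frobenius}, speciality, symmetry, and the cup/cap identities \eqref{eq:cupcapfrob} and \eqref{eq:snake}. The decisive point is the \emph{key lemma}: the left regular representation $L\colon A\to\End(H)$, $L_a := m\circ(a\otimes\id_H)$, is a $*$-representation, i.e. $L_{a^*} = (L_a)^\dagger$, equivalently $\langle m(a\otimes b), c\rangle = \langle b, m(a^*\otimes c)\rangle$ for all $b,c\in H$; this is once more a graphical computation combining associativity, the Frobenius law, symmetry, and speciality. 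Granting this, $L$ is an algebra homomorphism by associativity and is injective because $L_a(u(1)) = a$ by unitality, so $A$ is isomorphic as a unital $*$-algebra to $L(A)\subseteq\End(H)$; since $\End(H)$ is a $C^*$-algebra and the finite-dimensional $*$-subalgebra $L(A)$ is automatically norm-closed, $L(A)$ is a $C^*$-algebra, and transporting the operator norm along $L$ gives the asserted norm on $A$, with the $C^*$-identity inherited.

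The main obstacle is the key lemma that the left regular representation is a $*$-representation: this is where all four axioms conspire, and the subtle part is arranging the symmetry axiom to cancel exactly the would-be Nakayama twist that otherwise obstructs $L_{a^*} = (L_a)^\dagger$. On the $C^*$-to-\F\ side the only non-formal input is speciality, which genuinely needs the block dimensions produced by Artin--Wedderburn rather than an arbitrary faithful trace; everything else in that direction is bookkeeping with nondegenerate associative forms.
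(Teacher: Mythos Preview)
The paper does not prove this theorem; it merely cites it as \cite[Theorem 4.6]{Vicary2010} and uses it as background. Your proposal is a faithful outline of Vicary's original argument---Artin--Wedderburn with the block-weighted trace inner product for the forward direction, and the left regular representation as a $*$-representation for the converse---so there is nothing in the present paper to compare against, and your reconstruction of the cited proof is correct.
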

\ignore{
\begin{theorem}\label{thm:frobalgsarecstaralgs}
The special symmetric dagger Frobenius algebras in \Hilb\ may all be endowed with up to two canonical $C^*$-algebra structures (each corresponding to a choice of involution), and each $C^*$-algebra may be endowed with a canonical Frobenius algebra structure, such that this correspondence induces an equality between the category of \F s with chosen involution and the category of finite-dimensional $C^*$-algebras.
\end{theorem}
\begin{proof}
See Theorem 4.7 in the paper of Vicary~\cite{Vicary2010}.  In particular, every finite-dimensional $C^*$-algebra has an inner product making it into a Hilbert space. In the other direction, let $A$ be a $\F$. Then one may define a $C^*$ algebra structure on $A$, where the antihomomorphism $*:A\to A$ of the $C^*$-algebra $A$ is the following function, defined by its action on vectors $\ket{a}\in A$.
\end{proof}}

\ignore{An advantage of using 
Since the \F s in $\Hilb$ are exactly the $C^*$-algebras, it is a matter of preference which starting point one takes. Here we start with \F s,}
\noindent
One advantage of explicitly using $\F$s over $C^*$-algebras is that $\F$s already contain `up-front' all emergent structures of finite-dimensional $C^*$-algebras, such as the comultiplication $\Delta = m^\dagger : H\to H \otimes H$; they are therefore more amenable to the purely compositional reasoning of the graphical calculus. Notions from the theory of \mbox{$C^*$-algebras}, such as $*$-homomorphisms, can be reformulated in the language of \F s.\looseness=-2

\begin{definition}\label{def:starhoms}A \textit{$*$-homomorphism} between \F s $A$ and $B$ is a linear map $f:A\to B$ satisfying the following equations:
\begin{calign}\label{eq:homo}
\begin{tz}[zx, master, every to/.style={out=up, in=down},yscale=-1]
\draw (0,0) to (0,2) to [out=135] (-0.75,3);
\draw (0,2) to [out=45] (0.75, 3);
\node[zxnode=\zxwhite] at (0,1) {$f$};
\node[zxvertex=\zxwhite, zxdown] at (0,2) {};
\end{tz}
=
\begin{tz}[zx, every to/.style={out=up, in=down},yscale=-1]
\draw (0,0) to (0,0.75) to [out=135] (-0.75,1.75) to (-0.75,3);
\draw (0,0.75) to [out=45] (0.75, 1.75) to +(0,1.25);
\node[zxnode=\zxwhite] at (-0.75,2) {$f$};
\node[zxnode=\zxwhite] at (0.75,2) {$f$};
\node[zxvertex=\zxwhite, zxdown] at (0,0.75) {};
\end{tz}
&
\begin{tz}[zx,slave, every to/.style={out=up, in=down},yscale=-1]
\draw (0,0) to (0,2) ;
\node[zxnode=\zxwhite] at (0,1) {$f$};
\node[zxvertex=\zxwhite, zxup] at (0,2) {};
\end{tz}
=
\begin{tz}[zx,slave, every to/.style={out=up, in=down},yscale=-1]
\draw (0,0) to (0,0.75) ;
\node[zxvertex=\zxwhite, zxup] at (0,0.75) {};
\end{tz}
&
\begin{tz}[zx,slave, every to/.style={out=up, in=down},scale=-1]
\draw (0,0) to (0,3);
\node[zxnode=\zxwhite] at (0,1.5) {$f^\dagger$};
\end{tz}
=~~
\begin{tz}[zx,slave,every to/.style={out=up, in=down},scale=-1,xscale=-1]
\draw (0,1.5) to (0,2) to [in=left] node[pos=1] (r){} (0.5,2.5) to [out=right, in=up] (1,2)  to [out=down, in=up] (1,0);
\draw (-1,3) to [out=down,in=up] (-1,1) to [out=down, in=left] node[pos=1] (l){} (-0.5,0.5) to [out=right, in=down] (0,1) to (0,1.5);
\node[zxnode=\zxwhite] at (0,1.5) {$f$};
\node[zxvertex=\zxwhite] at (l.center){};
\node[zxvertex=\zxwhite] at (r.center){};
\end{tz}
\end{calign}
A \textit{$*$-cohomomorphism} is a linear map $f:A\to B$ satisfying the following equations:
\begin{calign}\label{eq:cohomo}
\begin{tz}[zx, master, every to/.style={out=up, in=down}]
\draw (0,0) to (0,2) to [out=135] (-0.75,3);
\draw (0,2) to [out=45] (0.75, 3);
\node[zxnode=\zxwhite] at (0,1) {$f$};
\node[zxvertex=\zxwhite, zxup] at (0,2) {};
\end{tz}
=
\begin{tz}[zx, every to/.style={out=up, in=down}]
\draw (0,0) to (0,0.75) to [out=135] (-0.75,1.75) to (-0.75,3);
\draw (0,0.75) to [out=45] (0.75, 1.75) to +(0,1.25);
\node[zxnode=\zxwhite] at (-0.75,2) {$f$};
\node[zxnode=\zxwhite] at (0.75,2) {$f$};
\node[zxvertex=\zxwhite, zxup] at (0,0.75) {};
\end{tz}
&
\begin{tz}[zx,slave, every to/.style={out=up, in=down}]
\draw (0,0) to (0,2) ;
\node[zxnode=\zxwhite] at (0,1) {$f$};
\node[zxvertex=\zxwhite, zxup] at (0,2) {};
\end{tz}
=
\begin{tz}[zx,slave, every to/.style={out=up, in=down}]
\draw (0,0) to (0,0.75) ;
\node[zxvertex=\zxwhite, zxup] at (0,0.75) {};
\end{tz}
&
\begin{tz}[zx,slave, every to/.style={out=up, in=down}]
\draw (0,0) to (0,3);
\node[zxnode=\zxwhite] at (0,1.5) {$f^\dagger$};
\end{tz}
=~~
\begin{tz}[zx,slave,every to/.style={out=up, in=down},xscale=-1]
\draw (0,1.5) to (0,2) to [in=left] node[pos=1] (r){} (0.5,2.5) to [out=right, in=up] (1,2)  to [out=down, in=up] (1,0);
\draw (-1,3) to [out=down,in=up] (-1,1) to [out=down, in=left] node[pos=1] (l){} (-0.5,0.5) to [out=right, in=down] (0,1) to (0,1.5);
\node[zxnode=\zxwhite] at (0,1.5) {$f$};
\node[zxvertex=\zxwhite] at (l.center){};
\node[zxvertex=\zxwhite] at (r.center){};
\end{tz}
\end{calign}
A \emph{$*$-isomorphism} is a linear map $f:A\to B$ which is both a $*$-homomorphism and a $*$-cohomomorphism. 
\end{definition}
\noindent
Observe that the adjoint of a $*$-homomorphism is a $*$-cohomomorphism, that every {$*$-isomorphism} is unitary, and that every unitary $*$-homomorphism of \F s is a \mbox{$*$-isomorphism.} In particular, a $*$-isomorphism is precisely an invertible *-homomorphism (see Proposition~\ref{prop:Frobeniusinvertible} for the converse). 

\begin{proposition}[{\cite[Theorem 4.7]{Vicary2010}}] The notion of a $*$-homomorphism between $\F$s coincides with the notion of a  $*$-homomorphism between finite-dimensional \mbox{$C^*$-algebras.}
\end{proposition}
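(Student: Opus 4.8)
The plan is to reduce the claim, via the correspondence between $\F$s and finite-dimensional $C^*$-algebras established above, to three separate and easily-verified equivalences. Fix $\F$s $A$ and $B$, regarded at the same time as $C^*$-algebras with multiplications $m_A,m_B$, units $u_A,u_B$ and involutions $\iota_A,\iota_B$ (the antihomomorphisms from that theorem), and let $f\colon A\to B$ be any linear map. Read as equations of linear maps in $\Hilb$, the first two diagrams in~\eqref{eq:homo} assert exactly $f\circ m_A=m_B\circ(f\otimes f)$ and $f\circ u_A=u_B$; that is, $f$ is a unital algebra homomorphism in the ordinary algebraic sense, with no reference to the inner product. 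So all the content is carried by the third diagram, and the proposition follows once we show that this third equation holds if and only if $f$ intertwines the involutions, $f\circ\iota_A=\iota_B\circ f$, i.e.\ $f(a^{*})=f(a)^{*}$ for all $a\in A$.

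Granting that equivalence, a linear map satisfies all of~\eqref{eq:homo} precisely when it is a unital algebra homomorphism that preserves the $*$-operation, which is exactly the definition of a unital $*$-homomorphism of $C^*$-algebras; and conversely any $C^*$-algebraic $*$-homomorphism satisfies the three equations, since each of the three correspondences is stated as an ``if and only if''. This yields the asserted coincidence of the two notions.

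The technical heart is thus the third equivalence, which I would prove diagrammatically. The right-hand side of the third diagram in~\eqref{eq:homo} is $f$ with its input and output wires bent around using the Frobenius cups and caps of~\eqref{eq:cupcapfrob} on $A$ and on $B$; since those cups and caps realise the canonical self-dualities $A\cong A^{*}$ and $B\cong B^{*}$, this is the self-dual transpose of $f$, and combined with the dagger on the left-hand side the equation becomes the statement that the conjugate of $f$, transported through these self-dualities, equals $f$. Substituting the explicit formula for $\iota_A$ and $\iota_B$ from the theorem above --- which is built from the very same Frobenius cups and caps together with the dagger --- and repeatedly applying the snake equations~\eqref{eq:snake}, the associativity and unitality laws~\eqref{eq:assocandunitality}, and speciality and symmetry~\eqref{eq:special}, one rewrites the third equation as the intertwining condition $f\circ\iota_A=\iota_B\circ f$; concretely, compose both sides with a Frobenius cap, slide $f$ past a multiplication vertex using associativity, and use speciality and symmetry to identify the result, using finally $\iota_A^{2}=\id_A$ and $\iota_B^{2}=\id_B$.

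The main obstacle I expect is the bookkeeping in this last step: correctly tracking the antilinearity --- that is, the complex conjugation concealed inside the dagger --- making sure that the two a priori different ways of bending a wire agree (this is exactly where speciality and symmetry of the Frobenius algebra are used), and matching the diagram defining $\iota$ against the third diagram of~\eqref{eq:homo}. Once the diagrammatic identity ``third equation of~\eqref{eq:homo} $\Leftrightarrow$ $f\circ\iota_A=\iota_B\circ f$'' is pinned down, everything else is immediate, and the converse direction requires no additional work since every step above is an equivalence.
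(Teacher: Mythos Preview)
The paper does not supply its own proof of this proposition: it is stated with a citation to \cite[Theorem 4.7]{Vicary2010} and no argument is given. So there is nothing in the paper to compare against beyond the bare citation.

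Your proposed argument is the standard one and is correct in outline: the first two diagrams in~\eqref{eq:homo} are literally the unital-homomorphism axioms, and the third diagram is equivalent to $f\circ\iota_A=\iota_B\circ f$. Concretely, plugging a state $\ket{a}$ into both sides of the third equation and using the definition of $\iota$ displayed just above the proposition (together with the snake equations for the Frobenius self-duality~\eqref{eq:cupcapfrob}) yields exactly $f(a)^{*}=f(a^{*})$; conversely, since states span $A$, that identity for all $a$ forces the third diagram. One small remark: you will not actually need speciality for this step, and symmetry enters only insofar as it guarantees that $\iota$ is an honest involution ($\iota^{2}=\id$); the equivalence ``third diagram $\Leftrightarrow$ $f$ intertwines $\iota$'' itself uses only the Frobenius snake equations and antilinearity of the dagger.
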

\ignore{
We record the following characterisation of $*$-isomorphisms for later.
\begin{proposition}
Every $*$-isomorphism between \F s is unitary, and every unitary $*$-homomorphism of \F s is a $*$-isomorphism.
\end{proposition}
\begin{proof}
Follows from the fact that the coalgebra structure is the adjoint of the algebra structure. 
\end{proof}
}
\subsection{Finite-dimensional Gelfand duality in diagrams}\label{sec:gelfanddiagram}
Having established the graphical calculus and the correspondence between finite-\\dimensional $C^*$-algebras and $\F$s, we now recall the graphical version of finite-dimensional Gelfand duality in the framework established by Coecke, Pavlovi{\'c} and Vicary~\cite{Coecke2009}. We first observe that every orthonormal basis on a Hilbert space $H$ defines a special commutative dagger Frobenius algebra on $H$.

\begin{example} \label{exm:Frob}Let $\left\{\ket{i}\right\}_{1\leq i\leq n}$ be an orthonormal basis of a Hilbert space $H$. Then the following multiplication and unit maps, together with their adjoints, form a special commutative dagger Frobenius algebra on $H$:
\begin{calign}\label{eq:classicalcopy}\begin{tz}[zx,master, scale=1.3]
\coordinate (A) at (0,0);
\draw (0.75,1) to (0.75,2);
\mult{A}{1.5}{1}
\end{tz} := ~\sum_{i=1}^{n} ~~~
\begin{tz}[zx,slave,scale=1.3]
\draw (0,0) to (0,0.5);
\draw (1.5,0) to (1.5,0.5);
\draw (0.75,1.5) to (0.75,2);
\node[zxnode=\zxwhite] at (0,0.5) {$i^\dagger$};
\node[zxnode=\zxwhite] at (1.5,0.5) {$i^\dagger$};
\node[zxnode=\zxwhite] at (0.75,1.5) {$i$};
\end{tz}
&
\begin{tz}[zx,slave,scale=1.3]
\draw (0.75,1) to (0.75,2);
\node[zxvertex=\zxwhite] at (0.75,1){};
\end{tz} := ~\sum_{i=1}^{n} 
\begin{tz}[zx,slave,scale=1.3]
\draw (0.75,1.25) to (0.75,2);
\node[zxnode=\zxwhite] at (0.75,1.25) {$i$};
\end{tz}\\[5pt]\nonumber
m: \ket{i} \otimes \ket{j} \mapsto \delta_{i,j} \ket{i} 
&
u: 1 \mapsto \sum_{i=1}^{n} \ket{i}
\end{calign}
\end{example}
\noindent
Conversely, every special commutative dagger Frobenius algebra $A$ gives rise to an orthonormal basis of $A$; the basis vectors are given by the copyable element of $A$, defined as follows:

\begin{definition}\label{def:copyablestates} A \textit{copyable element} of a $\CF$ $A$ is a $*$-cohomomorphism $\psi: \mathbb{C} \to A$; that is, a vector $\ket{\psi} \in A$ such that the following hold\footnote{In \Hilb, the last equation on the right is redundant, as it follows from the other equations; we include it for completeness.}:
\begin{calign}\label{eq:ordinaryelement}\begin{tz}[zx, master, every to/.style={out=up, in=down}]
\draw (0,1) to (0,2) to [out=135] (-0.75,3);
\draw (0,2) to [out=45] (0.75, 3);
\node[zxnode=\zxwhite] at (0,1) {$\psi$};
\node[zxvertex=\zxwhite, zxup] at (0,2) {};
\end{tz}
=~~~
\begin{tz}[zx,slave, every to/.style={out=up, in=down}]
\draw (-0.75,2) to (-0.75,3);
\draw (0.75, 2) to +(0,1.);
\node[zxnode=\zxwhite] at (-0.75,2) {$\psi$};
\node[zxnode=\zxwhite] at (0.75,2) {$\psi$};
\end{tz}
&
\begin{tz}[zx, every to/.style={out=up, in=down}]
\draw (0,1) to (0,2) ;
\node[zxnode=\zxwhite] at (0,1) {$\psi$};
\node[zxvertex=\zxwhite, zxup] at (0,2) {};
\end{tz}
~~=~~
\emptydiagram
&
\begin{tz}[zx,slave, every to/.style={out=up, in=down}]
\draw (0,0) to (0,1.5);
\node[zxnode=\zxwhite] at (0,1.5) {$\psi^\dagger$};
\end{tz}
=~~~
\begin{tz}[zx,slave,every to/.style={out=up, in=down},xscale=-1]
\draw (0,1.5) to (0,2) to [in=left] node[pos=1] (r){} (0.5,2.5) to [out=right, in=up] (1,2)  to [out=down, in=up] (1,0);
\node[zxnode=\zxwhite] at (0,1.5) {$\psi$};
\node[zxvertex=\zxwhite,zxdown] at (r.center){};
\end{tz}\\\nonumber
\end{calign}
\end{definition}
\noindent 
These copyable elements indeed form an orthonormal basis of the Hilbert space underlying $A$.
\begin{theorem}[{\cite[Theorem 5.1.]{Coecke2009}}] \label{thm:classificationONB}The copyable elements of a special commutative dagger Frobenius algebra $A$ form an orthonormal basis of $A$ for which the algebra is of the form given in Example~\ref{exm:Frob}.
\end{theorem}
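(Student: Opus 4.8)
The plan is to identify the copyable elements of a $\CF$ $A$ with the minimal idempotents of its underlying algebra, to show that the dagger structure forces these idempotents to be orthonormal, and then to read off that, in this basis, $A$ is exactly the algebra of Example~\ref{exm:Frob}.

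\textbf{Reduction to $\mathbb{C}^n$.} Since $A$ is a commutative $\F$, the correspondence with $C^*$-algebras recalled above (together with finite Gelfand duality, Corollary~\ref{cor:finiteGelfandduality}) exhibits it as a commutative finite-dimensional $C^*$-algebra, hence isomorphic \emph{as an algebra} to $\mathbb{C}^n$ with pointwise multiplication, where $n=\dim A$. I would then fix the minimal idempotents $e_1,\dots,e_n$ of this algebra, characterised by $m(e_i\otimes e_j)=\delta_{ij}e_i$ and $\sum_i e_i=u(1)$. Being mutually orthogonal idempotents they are linearly independent, and since there are $\dim A$ of them they form a (linear) basis of $A$.

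\textbf{Copyable elements are the minimal idempotents.} If $\ket\psi$ is copyable, then specialness~\eqref{eq:special} ($m\circ\delta=\id$) together with $\delta\ket\psi=\ket\psi\otimes\ket\psi$ gives $\ket\psi\cdot\ket\psi=\ket\psi$, so $\ket\psi$ is an idempotent of $\mathbb{C}^n$, i.e.\ $\ket\psi=\sum_{i\in S}e_i$ for some $S$, and the counit condition $\epsilon\ket\psi=1$ forces $S\neq\emptyset$. Conversely, each $e_i$ is idempotent, so writing $\delta(e_i)=\sum_{j,k}c_{jk}\,e_j\otimes e_k$ and applying the two forms of the Frobenius relation~\eqref{eq:Frobenius} ($\delta\circ m=(m\otimes\id)\circ(\id\otimes\delta)=(\id\otimes m)\circ(\delta\otimes\id)$) to $e_i\otimes e_i$ forces $c_{jk}=0$ unless $j=k=i$; specialness then gives $c_{ii}=1$ and the counit law $(\epsilon\otimes\id)\circ\delta=\id$ gives $\epsilon(e_i)=1$, so $e_i$ is copyable and $\delta(e_i)=e_i\otimes e_i$. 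Feeding $\delta(e_i)=e_i\otimes e_i$ back into the expansion of $\delta$ on a copyable $\ket\psi=\sum_{i\in S}e_i$ yields $\sum_{i\in S}e_i\otimes e_i=\sum_{i,j\in S}e_i\otimes e_j$, which forces $|S|=1$. Hence the copyable elements are precisely $e_1,\dots,e_n$.

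\textbf{Orthonormality and the Frobenius data.} Using $\delta=m^\dagger$, for all $i,j$ we get $\langle e_j,e_i\rangle^2=\langle e_j\otimes e_j,\,e_i\otimes e_i\rangle=\langle e_j\otimes e_j,\,\delta(e_i)\rangle=\langle m(e_j\otimes e_j),\,e_i\rangle=\langle e_j,e_i\rangle$, so each $\langle e_j,e_i\rangle\in\{0,1\}$. Positive-definiteness and $e_i\neq 0$ give $\langle e_i,e_i\rangle=1$; and if $\langle e_i,e_j\rangle=1$ for some $i\neq j$ then $\|e_i-e_j\|^2=2-2\langle e_i,e_j\rangle=0$, contradicting $e_i\neq e_j$, so $\langle e_i,e_j\rangle=0$ for $i\neq j$. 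Thus the copyable elements form an orthonormal basis. With respect to it, $m(e_i\otimes e_j)=\delta_{ij}e_i$ and $u(1)=\sum_i e_i$ by construction, $\delta=m^\dagger$ sends $e_i\mapsto e_i\otimes e_i$, and $\epsilon=u^\dagger$ (which holds in any unital dagger Frobenius algebra: taking the adjoint of the counit law shows $\epsilon^\dagger(1)$ is a left unit, hence $\epsilon^\dagger(1)=u(1)$) sends $e_i\mapsto\langle u(1),e_i\rangle=1$ — exactly the structure of Example~\ref{exm:Frob}.

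The main obstacle is the orthonormality step: passing to $\mathbb{C}^n$ only fixes $A$ up to choice of inner product, and a commutative $C^*$-algebra carries many Hilbert-space structures, so one genuinely needs the interplay of specialness, the Frobenius law and positive-definiteness to pin the minimal idempotents down as an \emph{orthonormal} basis. Everything else is routine bookkeeping with the Frobenius axioms.
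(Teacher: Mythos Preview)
The paper does not give its own proof of this statement; it simply cites \cite[Theorem~5.1]{Coecke2009}. Your argument is correct and self-contained, so there is nothing in the paper to compare step-by-step against.

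Your route is somewhat different from the original Coecke--Pavlovi\'c--Vicary proof. They work directly with the Frobenius structure in $\Hilb$ (essentially a spectral argument extracting the copyable states without first classifying the algebra), whereas you lean on Vicary's $\F$/$C^*$ correspondence and Artin--Wedderburn to get the minimal idempotents $e_i$ up front, and then use the dagger-Frobenius axioms only to pin down the inner product. Your approach is shorter and perfectly legitimate given that the paper has already stated Vicary's theorem before this point; the CPV approach has the advantage of being more ``native'' to the diagrammatic setting and not importing $C^*$-classification.

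Two minor remarks. First, when you say you apply the Frobenius law ``to $e_i\otimes e_i$'' to force $c_{jk}=0$ for $(j,k)\neq(i,i)$, what you actually need is to apply it to $e_j\otimes e_i$ and $e_i\otimes e_k$ for $j,k\neq i$; the computation you have in mind goes through, but the phrasing is slightly off. Second, your aside deriving $\epsilon=u^\dagger$ is unnecessary here: in the paper's Definition of a dagger Frobenius algebra the coalgebra is \emph{by definition} the adjoint of the algebra, so $\epsilon=u^\dagger$ and $\delta=m^\dagger$ hold on the nose.
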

\noindent
In other words, every special dagger commutative Frobenius algebra is of the form~\eqref{eq:classicalcopy} for some orthonormal basis on a Hilbert space.

Given a $\CF$ $A$, we denote its set of copyable elements by $\widehat{A}$. For $\CF$s $A$ and $B$, it can easily be verified that every function $\widehat{A}\to \widehat{B}$ gives rise to a $*$-cohomomorphism between $A$ and $B$ and that conversely every $*$-cohomomorphism $A\to B$ comes from such a function $\widehat{A} \to \widehat{B}$. Therefore, Theorem~\ref{thm:classificationONB} gives rise to the following Frobenius algebraic version of finite Gelfand duality:

\begin{corollary}[{\cite[Corollary 7.2.]{Coecke2009}}]The category of commutative special dagger Frobenius algebras and $*$-cohomomorphisms\footnote{Here, we use $*$-cohomomorphisms instead of the more conventional $*$-homomorphisms of Corollary~\ref{cor:finiteGelfandduality} to obtain an equivalence with the category of finite sets and not with its opposite. This is a recurring theme throughout this work; we use copyable elements instead of the equivalent characters and comodules instead of modules.} is equivalent to the category of finite sets and functions.
\end{corollary}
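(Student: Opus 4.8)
The plan is to realise the equivalence through the functor $F$ that sends a commutative special dagger Frobenius algebra ($\CF$) $A$ to the finite set $\widehat A$ of its copyable elements (Definition~\ref{def:copyablestates}), and to check that $F$ is essentially surjective and fully faithful. Note that using $*$-cohomomorphisms rather than $*$-homomorphisms is precisely what makes $F$ land in $\Set$ and not $\Set^{\op}$. First I would verify that $F$ is a well-defined functor: if $\psi\colon\CC\to A$ is a copyable element, i.e. a $*$-cohomomorphism, and $f\colon A\to B$ is a $*$-cohomomorphism, then the composite $f\circ\psi\colon\CC\to B$ again satisfies the equations~\eqref{eq:cohomo} and so is a copyable element of $B$; this gives $F(f)\colon\widehat A\to\widehat B$, and functoriality is immediate. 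The set $\widehat A$ is finite because, by Theorem~\ref{thm:classificationONB}, it is an orthonormal basis of the finite-dimensional Hilbert space underlying $A$.

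For essential surjectivity, given a finite set $X$ equip $H=\CC^X$ with the $\CF$ structure of Example~\ref{exm:Frob} associated to the standard basis $\{\ket x\}_{x\in X}$. By Theorem~\ref{thm:classificationONB} the set $\widehat H$ is an orthonormal basis $\{\ket i\}$ for which the (fixed) multiplication has the form $\ket i\otimes\ket j\mapsto\delta_{ij}\ket i$; writing $\ket i=\sum_x c_{ix}\ket x$ and applying this multiplication forces $c_{ix}^2=c_{ix}$, hence each $c_{ix}\in\{0,1\}$, and normalisation leaves exactly one nonzero coefficient. Thus $\widehat H=\{\ket x\}_{x\in X}$ and $F(H)\cong X$. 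For full faithfulness, fix $\CF$s $A$ and $B$; by Theorem~\ref{thm:classificationONB} both are of the standard form~\eqref{eq:classicalcopy} with respect to the bases $\widehat A$, $\widehat B$. Faithfulness is then clear, since a $*$-cohomomorphism $f$ satisfies $f(\ket a)=\ket{F(f)(a)}$ for $a\in\widehat A$ and is linear, hence determined by $F(f)$. For fullness, given any function $g\colon\widehat A\to\widehat B$ define $f\colon A\to B$ as the linear extension of $\ket a\mapsto\ket{g(a)}$; using the explicit description of the comultiplications, counits and canonical involutions in terms of the bases, one checks on basis elements that $f$ obeys the three equations~\eqref{eq:cohomo}: comultiplication-preservation says $f$ carries $\ket a\mapsto\ket a\otimes\ket a$ to $\ket{g(a)}\mapsto\ket{g(a)}\otimes\ket{g(a)}$; counit-preservation says $\epsilon_B(\ket{g(a)})=1=\epsilon_A(\ket a)$; and the involution equation holds because the canonical involution fixes each basis vector. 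Since $F(f)=g$, the functor $F$ is full.

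A functor that is fully faithful and essentially surjective is an equivalence, which gives the corollary. The main obstacle is not conceptual — the real content is Theorem~\ref{thm:classificationONB}, already available — but the fiddly point is fullness: checking that an arbitrary set-function $\widehat A\to\widehat B$ extends to a linear map satisfying \emph{all} of~\eqref{eq:cohomo}, in particular the involution-compatibility equation. Once both algebras are put in standard form this reduces to elementary manipulations of Kronecker deltas, but it is the step that genuinely uses the commutativity and speciality hypotheses rather than just the classification of $\CF$s.
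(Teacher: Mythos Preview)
Your argument is correct and follows exactly the approach the paper sketches: send a \CF\ $A$ to its set of copyable elements $\widehat A$, invoke Theorem~\ref{thm:classificationONB} for essential surjectivity, and check directly that $*$-cohomomorphisms correspond bijectively to functions $\widehat A\to\widehat B$. The paper itself gives only the one-line description of the functor and defers to \cite{Coecke2009}; your proposal simply fills in the routine verifications.
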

\noindent
Explicitly, this equivalence maps a $\CF$ $A$ to its set of copyable elements $\widehat{A}$ and a set $X$ to the algebra associated to the orthonormal basis $\{ \ket{x}~|~x\in X\}$ of the Hilbert space $\mathbb{C}^{|X|}$.
%
%
%
Under this correspondence, we may therefore consider the category of finite sets as `contained within $\Hilb$' using the following identification.

\begin{center}
\begin{tabular}{l | l}
$\Set$ & $\mathrm{Hilb}$\\
\hline
set of cardinality $n$ & \CF\ of dimension $n$ \\
elements of the set & copyable states of the \CF \\
functions & $*$-cohomomorphisms\\
bijections & $*$-isomorphisms\\
the one element set $\{*\}$ & the one-dimensional \CF\ $\mathbb{C}$
\end{tabular}
\end{center}

\begin{terminology}\label{not:setalgebra}
Throughout this paper, we will take pairs of words in this table to be synonymous. In particular, we will denote a set and its corresponding commutative algebra by the same symbol. It will always be clear from context whether we refer to the set $X$ or the algebra $X$. 
\end{terminology}

\noindent
The fact that the category of finite sets and functions can be faithfully embedded into $\Hilb$ will be central to our quantisation procedure described in the introduction. We formalise this important concept by adopting terminology initially used by Freyd and developed by Ad{\'{a}}mek and others~\cite{Adamek1990,Freyd1970}.
\begin{definition}\label{def:Hilbcategory} A \emph{concrete dagger category} is a pair $(\mathcal{C}, F)$, where $\mathcal{C}$ is a dagger category and $F:\mathcal{C} \to \Hilb$ is a faithful dagger functor, which we refer to as the \emph{forgetful functor}. 
\end{definition}
\noindent
In other words, Gelfand duality allows us to treat the category of finite sets and functions as a concrete dagger category. 
\begin{philosophy}\label{phil:concretedagger}All categories in this paper are concrete dagger categories, with a forgetful functor specifying an underlying Hilbert space for every object and an underlying linear map for every morphism. The quantum mechanical interpretation of our categories depends on this forgetful functor. We will formulate all categorical concepts in terms of concrete dagger categories, and ensure that they are compatible with the forgetful functor to \Hilb.\end{philosophy}

\section{Quantum sets and quantum functions}\label{sec:quantumset}
The fundamental idea of noncommutative topology is to generalise the correspondence between spaces and commutative algebras by considering noncommutative algebras in light of Gelfand duality. We will now begin our exploration of the world of finite-dimensional noncommutative algebras, or `finite quantum sets'.

\begin{terminology}\ignore{In light of Gelfand duality, we refer to special commutative dagger Frobenius algebras as sets, and --- by analogy --- refer to special symmetric dagger Frobenius algebras as \textit{quantum sets}.
}
By analogy with Gelfand duality, we think of a special symmetric dagger Frobenius algebra as being associated to an imagined finite \emph{quantum set}, just as a commutative special dagger Frobenius algebra is associated to a finite set.
We follow Terminology~\ref{not:setalgebra} and Wang~\cite[page 3]{Wang1998} in denoting both the algebra and its associated imagined quantum set by the same symbol. \ignore{It will always be clear from context whether we refer to the algebra or the associated imagined geometric object.}
\end{terminology}

\subsection{Quantum elements}\label{sec:qelem}

A set is completely determined by its elements. What is the appropriate notion of a \emph{quantum element} of a quantum set? In particular, is there a notion of quantum element such that a quantum set is completely determined by its quantum elements? Copyable states cannot play this role, since it can easily be verified that matrix algebras have no copyable states whatsoever. Adopting the ideas outlined in the introduction, we make the following definition, `quantising' the notion of a copyable state~\eqref{eq:ordinaryelement}.

\begin{definition}\label{def:quantumelement}
A \textit{quantum element} of a quantum set $A$ is a pair ($H,Q$), where $H$ is a Hilbert space and $Q: H\to A \otimes H$ is a linear map satisfying the following equations:
\begin{calign}\label{eq:quantumelement}
\begin{tz}[zx,every to/.style={out=up, in=down},xscale=-1]
\draw (0,0) to (0,2) to [out=45] (0.75,3);
\draw (0,2) to [out=135] (-0.75,3);
\draw[arrow data={0.2}{>}, arrow data={0.8}{>}] (1.75,0) to [looseness=0.9] node[zxnode=\zxwhite, pos=0.5](Q) {$Q$} (-1.75,2.5) to (-1.75,3);
\node[zxvertex=\zxwhite, zxup] at (0,2){};
\draw[white,double] (0,0) to (Q.center);
\end{tz}
=
\begin{tz}[zx,every to/.style={out=up, in=down},xscale=-1]
\draw (0.75,1.75) to (0.75,3);
\draw  (-0.75,1.75) to (-0.75,3);
\draw[arrow data={0.2}{>}, arrow data={0.9}{>}] (1.75,0) to (1.75,0.75) to  [looseness=1.1, in looseness=0.9] node[zxnode=\zxwhite, pos=0.36] {$Q$} node[zxnode=\zxwhite, pos=0.64] {$Q$}(-1.75,3);
\end{tz}
&
\begin{tz}[zx,every to/.style={out=up, in=down},xscale=-1]
\draw (0,0) to (0,2.25);
\draw[arrow data={0.2}{>}, arrow data={0.8}{>}] (1,0) to [looseness=0.9] node[zxnode=\zxwhite, pos=0.5](Q) {$Q$} (-1,2.5) to (-1,3);
\node[zxvertex=\zxwhite] at (0,2.25){};
\draw[white,double] (0,0) to (Q.center);
\end{tz}
=
\begin{tz}[zx,every to/.style={out=up, in=down},xscale=-1]
\draw[arrow data={0.2}{>}, arrow data={0.9}{>}] (1.,0) to [looseness=0.9]  (-1,2.5) to (-1,3);
\end{tz}
&
\begin{tz}[zx,every to/.style={out=up, in=down},xscale=-0.8]
\draw [arrow data={0.2}{>},arrow data={0.8}{>}]  (0,0) to (2.25,3);
\draw (2.25,0) to [in=-45] node[zxnode=\zxwhite, pos=1] (Q) {$Q^\dagger$} (1.125,1.5);
\end{tz}
~=~~
\begin{tz}[zx,xscale=-0.6,yscale=-1]
\draw[arrow data={0.5}{<}] (0.25,-0.5) to (0.25,0) to [out=up, in=-135] (1,1);
\draw[arrow data={0.5}{>}] (1.75,2.5) to (1.75,2) to [out=down, in=45] (1,1);
\draw (1,1) to [out= -45, in= left] node[zxvertex=\zxwhite, pos=1] {} (2.3,0.3) to [out=right, in=down] (3.25,1) to (3.25,2.5);
\node [zxnode=\zxwhite] at (1,1) {$Q$};
\end{tz} 
\end{calign}
\end{definition}
\begin{philosophy}\label{phil:quantumprooffromclassical} The axioms defining a quantum element~\eqref{eq:quantumelement} look just like the axioms defining an ordinary element~\eqref{eq:ordinaryelement} with an additional oriented wire. This will be a guiding principle for the graphical calculus in this work. Many of the calculations presented here were derived in the `classical' setting with an additional oriented wire only added later. Here, and in what follows, we always draw this additional wire with an orientation, while we draw the original wire --- carrying a Frobenius algebra structure --- without orientation (see also the discussion after~\eqref{eq:cupcapfrob}).
\ignore{
 Because the original wire is a Frobenius algebra, we draw it without an orientation following Remark \ref{rem:noorientationonfrobwire}; whereas we draw the additional wire, which need not carry an algebra structure, with an orientation.}%
\end{philosophy}

\begin{remark} We have drawn the Hilbert space wires from the bottom left to the top right. This is just a convention, and we could equally have defined quantum elements --- and later quantum functions --- using the opposite convention.
\end{remark}
\noindent
We will show in Proposition~\ref{prop:reconstruct} that every quantum set is completely determined by its quantum elements, thereby justifying Definition~\ref{def:quantumelement}.

\begin{remark}\label{rem:operationalinterpretationofquantumels}
 If $X$ is an ordinary set (that is, a $\CF$), then it follows that a quantum element of $X$ is a projective measurement with outcomes in $X$ (see Corollary~\ref{cor:quantumelementmeasurement}). The first diagram of~\eqref{eq:quantumelement} corresponds to orthogonality and idempotency; the second to completeness; and the third to self-adjointness. This diagrammatic representation of projective measurements has been known at least since the work of Coecke and Pavlovi{\'c}~\cite{Coecke2007}.\! A direct operational intepretation of~\eqref{eq:quantumelement} has recently been \mbox{proposed~\cite{Coecke2017_2}.}
\end{remark}

\begin{remark}\label{rem:interpretationquantumelement}
A (non-probabilistic, discrete) classical observable can be thought of both as a process $x:\{*\} \to \{*\} \times X$ producing an element of a set $X$ from a trivial system --- that is, a process picking out one and only one element --- or simply as an element $x\in X$.\looseness=-2

Similarly, a quantum observable can either be thought of as projective measurement $P:H\to X\otimes H$ producing an element of $X$ from an underlying quantum mechanical system $H$ or simply as a \emph{quantum element}  $P\in_Q X$, shifting attention away from the underlying Hilbert space.
This is similar in spirit to random variables in probability theory which are defined as functions $x$ from an underlying probability space to a set $X$, but are usually thought of as random elements of this set $x\in_R X$. 

This perspective is reflected in our quantisation approach:
By appending a Hilbert space wire to a diagram defining some set-theoretic concept, we retain any operational interpretation, only now allowing the relevant processes to make use of an underlying quantum mechanical system. This is similar to quantisation in the sense of nonlocal games, where classical concepts are quantised by formulating them as strategies for multi-player games and allowing the use of an additional shared quantum resource~\cite{Atserias2016}. As we will see in Section~\ref{sec:quantumperm} and~\ref{sec:qgt}, our approach to quantisation indeed leads to the same concepts as those appearing in the study of such nonlocal games.

\end{remark}
\noindent
Due to the presence of the additional Hilbert space wire introduced in our quantisation procedure, we are led to consider maps on this Hilbert space, which interact in a particular way with quantum elements. 
\begin{definition} An \textit{intertwiner} of quantum elements $(H,Q) \to (H', Q')$ is a linear map $f:H\to H'$ such that the following holds:%
\begin{equation}
\begin{tz}[zx,every to/.style={out=up, in=down},xscale=-1]
\draw (1,2)to [out=45] (2,3);
\draw[arrow data={0.35}{>}] (2,0) to node[zxnode=\zxwhite, pos=0.9] {$f$} (2,1);
\draw[string,arrow data={0.9}{>},arrow data={0.26}{>}](2,1) to node[zxnode=\zxwhite, pos=0.5]{$Q'$}  (0,3);
\end{tz}
\quad = \quad
\begin{tz}[zx,every to/.style={out=up, in=down},scale=-1,xscale=-1]
\draw (0,0) to (0,1) to [in=-135] (1,2);
\draw[arrow data={0.35}{<}] (2,0) to node[zxnode=\zxwhite, pos=0.9] {$f$} (2,1);
\draw[string,arrow data={0.9}{<},arrow data={0.26}{<}](2,1) to node[zxnode=\zxwhite, pos=0.5]{$Q$}  (0,3);
\end{tz}
\end{equation}
\end{definition}
Note that, in the case of classical elements~\eqref{eq:ordinaryelement}, all linear maps $\C \to \C$ are scalars and commute trivially with the element, providing us with no information about its structure; intertwiners therefore only become relevant in the quantum setting. 

\begin{example}
Let $Q$ be a quantum element of a classical set $X$; in other words, a projective measurement with outcomes in $X$ (see Remark~\ref{rem:operationalinterpretationofquantumels}). In this case, examples of intertwiners $Q\to Q$ are given by projectors onto subspaces which are left undisturbed by the measurement. We will show in Section~\ref{sec:semisimple} that all intertwiners between quantum elements may be understood in this way.
\end{example}
\noindent
As we will see, much of the structural difference between the classical and the quantum setting can be understood in terms of the existence or non-existence of such intertwiners. While ordinary elements form a set, quantum elements should properly be organized into a category, to keep track of the intertwiners. 
\begin{definition}\label{def:intertwinerofelements}For a quantum set $A$, we define the category \QEl(A) of quantum elements of $A$:
\vspace{-0.1cm}
\begin{itemize} \item \textbf{objects} are quantum elements $(H,Q)$ of $A$;
\item \textbf{morphisms} $(H,Q) \to (H', Q')$ are intertwiners of quantum elements. 
\end{itemize}
Composition of intertwiners is ordinary composition of linear maps.
\end{definition}
\noindent Every category of quantum elements comes with a forgetful functor $F:\QEl(A) \to \Hilb$ mapping a quantum element to its underlying Hilbert space and an intertwiner to the underlying linear map. This underlying structure makes $(\QEl(A),F)$ a concrete dagger category as in Definition~\ref{def:Hilbcategory}.

We now justify our definition of quantum elements by showing that a quantum set is completely determined by its concrete dagger category of quantum elements.

\ignore{
\begin{definition}\label{def:intertwinerofelements}Given a quantum set $A$, we define the concrete dagger category $(\QEl(A),F)$ of quantum elements of $A$:
\vspace{-0.1cm}
\begin{itemize} \item \textbf{objects} are quantum elements $(H,Q)$ of $A$;
\item \textbf{morphisms} $(H,Q) \to (H', Q')$ are intertwiners of quantum elements. 
\end{itemize}
Composition of intertwiners is ordinary composition of linear maps. The forgetful functor $F:\QEl(A) \to \Hilb$ maps a quantum element to its underlying Hilbert space and an intertwiner to its underlying linear map.
\end{definition}
}

\ignore{
In other words, we can think of $\QEl(A)$ as the quantum version of the set of elements of a commutative special Frobenius algebra.
}%
\begin{proposition}\label{prop:reconstruct} Up to isomorphism, a quantum set $A$ can be reconstructed from its category of elements $\QEl(A)$ and the forgetful functor $F:\QEl(A)\to \Hilb$.
\end{proposition}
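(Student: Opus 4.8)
The plan is to pin down a canonical quantum element of $A$ and then read the Frobenius structure off it. First I would verify that $(A,\Delta)$, with $\Delta=m^\dagger\colon A\to A\otimes A$ the comultiplication of the Frobenius algebra, is itself a quantum element of $A$ --- the \emph{regular} quantum element. Substituting $H=A$ and $Q=\Delta$ into~\eqref{eq:quantumelement}, the first two axioms become coassociativity and the counit law $(\epsilon\otimes\id_A)\circ\Delta=\id_A$, both immediate, while the third (self-adjointness) is a short diagrammatic computation using the Frobenius equation together with \emph{symmetry} of the algebra --- this is the one place where symmetry, as opposed to mere daggerness, is needed.

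Next I would show that the regular quantum element corepresents $F$. The axioms~\eqref{eq:quantumelement} say precisely that a quantum element $(H,Q)$ is a unitary left comodule over the coalgebra $(A,\Delta,\epsilon)$ --- equivalently a unitary left $A$-module --- with intertwiners the comodule/module maps, and under this dictionary $(A,\Delta)$ corresponds to the regular left module $A$ (one uses the Frobenius identity $\sum\epsilon(x\,a_{(1)})\,a_{(2)}=x\,a$). Hence evaluation at the unit gives a natural isomorphism $\Hom_{\QEl(A)}\!\big((A,\Delta),(H,Q)\big)\xrightarrow{\ \sim\ }F(H,Q)$, so the regular quantum element corepresents the forgetful functor. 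By Yoneda the corepresenting object is unique up to isomorphism in $\QEl(A)$, so $(A,\Delta)$ --- and therefore its underlying Hilbert space $F(A,\Delta)$ and the coaction $Q=\Delta$ carried by this object --- is determined by the pair $(\QEl(A),F)$ alone.

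Finally I would assemble an $\F$ isomorphic to $A$ from this data: the corepresentation provides an isomorphism of Hilbert spaces $F(A,\Delta)\cong A$, along which the coaction of the regular quantum element transports to the comultiplication $\Delta\colon A\to A\otimes A$; then $m=\Delta^\dagger$, $u$ is the unique unit of $(A,m)$, and $\epsilon=u^\dagger$, so the entire special symmetric dagger Frobenius structure is recovered. (Alternatively one may recover $A$ as the $C^*$-algebra $\End_{\QEl(A)}\!\big((A,\Delta)\big)$ --- realised via right multiplications, with $\dagger$ inherited from $F$, this is $A^{\mathrm{op}}$ with its involution, hence $\cong A$ as a $C^*$-algebra --- and then invoke that an $\F$-structure is uniquely pinned down by its underlying $C^*$-algebra.) I expect the real work to lie not in the reconstruction step but in Steps~1--2: the graphical verification of the self-adjointness axiom for $(A,\Delta)$, and carefully tracking wire orientations and the $\dagger$-structure through the comodule--module translation, so that the corepresentation statement genuinely concerns the \emph{concrete dagger} category $(\QEl(A),F)$ rather than its underlying plain category.
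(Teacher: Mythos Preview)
Your proposal is correct and essentially unpacks what the paper leaves as a citation. The paper's own proof is a two-line appeal to existing Tannaka duality results: it observes that $\QEl(A)$ is (equivalent to) the category of modules of the semisimple algebra $A$, and then invokes the standard Tannaka reconstruction theorem for semisimple algebras from their module category together with a fibre functor.

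Your route is the hands-on version of that same reconstruction: you exhibit the regular quantum element $(A,\Delta)$ as an explicit corepresenting object for $F$, and then read the Frobenius data off it via Yoneda. This is precisely how one would \emph{prove} the Tannaka statement the paper cites, specialised to the finite-dimensional Frobenius setting. What your approach buys is self-containedness and a concrete formula (indeed the paper's equation~\eqref{eqn:quantumcopy}, stated immediately after the proof, is most naturally derived your way); what the paper's approach buys is brevity and a clear signal that nothing new is happening beyond well-known representation theory. Your alternative endpoint, recovering $A$ as $\End_{\QEl(A)}((A,\Delta))$, is also standard Tannaka and equally valid.
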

\begin{proof} By definition, $\QEl(A)$ is the category of left comodules of the special symmetric dagger Frobenius algebra $A$ and as such equivalent to the category of modules of $A$. The proposition therefore follows from existing results on Tannaka duality, which states that a semisimple algebra can be reconstructed from its category of modules and a forgetful (or fibre) functor (cf. \cite{Joyal1991}).\end{proof}

\noindent
Explicitly, we can reconstruct the algebra $A$ as follows:%
\begin{calign}\label{eqn:quantumcopy}
\begin{tz}[zx,scale=1.25]
\draw (0.25,-0.25) to (0.25,0) to [out=up, in=-135] (1,1) to (1,2.25);
\draw (1.75,-0.25) to (1.75,0) to [out=up, in=-45] (1,1);
\node[zxvertex=\zxwhite, zxdown] at (1,1) {};
\end{tz}
= ~
\sum_{i} \frac{1}{\dim(H_i)}~~\begin{tz}[zx,scale=1.25,xscale=-1]
\draw (0.25,-0.25) to node[zxnode=\zxwhite, pos=0.46] {$i^\dagger$}(0.25,1);
\draw[string] (1.75,-0.25) to node[zxnode=\zxwhite, pos=0.82] {$i^\dagger$} (1.75,1);
\draw[string] (1,1) to node[zxnode=\zxwhite, pos=0.35] {$i$} (1,2.25);
\draw[string, fill=white, arrow data={0.}{>},arrow data={0.33}{>}, arrow data={0.6}{>},arrow data={0.8}{>}] (-0.5,1) to [out=down, in=-135] (-0.05,0.05) to [in looseness=0.3,out=45, in=down] (2.25,1.) to [out looseness=0.3,out=up, in=-45] (-0.05,1.95) to [out=135, in=up] (-0.5,1);
\end{tz}
&
\begin{tz}[zx]
\draw (1,1) to (1,2.25);
\node[zxvertex=\zxwhite] at (1,1){};
\end{tz}
~=~
\sum_i \frac{1}{\dim(H_i)}~
\begin{tz}[zx,xscale=-1]
\draw (1,1) to node[zxnode=\zxwhite, pos=0] {$i$} (1,2.25);
\draw[string,arrow data={0.17}{>}, arrow data={0.87}{>},arrow data={0.5}{>}] (1,1) to [looseness=3.5, out=135, in=135] (0.25,0.25) to [out=-45, in=-45, looseness=3.5] (1,1);
\end{tz}
\end{calign}
Here the sum ranges over the \textit{simple quantum elements}, which we define in Section~\ref{sec:semisimple}. The equations~\eqref{eqn:quantumcopy} can be understood as a quantisation of equations~\eqref{eq:classicalcopy}.

\subsection{Quantum functions}\label{sec:funct}

Having defined quantum elements of quantum sets, we now consider the appropriate notion of quantum functions between them. We define these by quantisation of \mbox{equation~\eqref{eq:cohomo}.}
\begin{definition}\label{def:quantumfunction} A \textit{quantum function} between quantum sets $A$ and $B$ is a pair ($H,P$), where $H$ is a Hilbert space and $P$ is a linear map $H\otimes A\! \to\! B \otimes H$ satisfying the \mbox{following:}\looseness=-2
\begin{calign}\label{eq:quantumfunction} \begin{tz}[zx,every to/.style={out=up, in=down},xscale=-1]
\draw (0,0) to (0,2) to [out=45] (0.75,3);
\draw (0,2) to [out=135] (-0.75,3);
\draw[arrow data={0.2}{>}, arrow data={0.8}{>}] (1.75,0) to [looseness=0.9] node[zxnode=\zxwhite, pos=0.5] {$P$} (-1.75,2.5) to (-1.75,3);
\node[zxvertex=\zxwhite, zxup] at (0,2){};
\end{tz}
=
\begin{tz}[zx,every to/.style={out=up, in=down},xscale=-1]
\draw (0,0) to (0,0.75) to [out=45] (0.75,1.75) to (0.75,3);
\draw (0,0.75) to [out=135] (-0.75,1.75) to (-0.75,3);
\draw[arrow data={0.2}{>}, arrow data={0.9}{>}] (1.75,0) to (1.75,0.75) to  [looseness=1.1, in looseness=0.9] node[zxnode=\zxwhite, pos=0.36] {$P$} node[zxnode=\zxwhite, pos=0.64] {$P$}(-1.75,3);
\node[zxvertex=\zxwhite, zxup] at (0,0.75){};
\end{tz}
&
\begin{tz}[zx,every to/.style={out=up, in=down},xscale=-1]
\draw (0,0) to (0,2.25);
\draw[arrow data={0.2}{>}, arrow data={0.8}{>}] (1,0) to [looseness=0.9] node[zxnode=\zxwhite, pos=0.5] {$P$} (-1,2.5) to (-1,3);
\node[zxvertex=\zxwhite] at (0,2.25){};
\end{tz}
=
\begin{tz}[zx,every to/.style={out=up, in=down},xscale=-1]
\draw (0,0) to (0,0.75);
\draw[arrow data={0.2}{>}, arrow data={0.9}{>}] (1.,0) to (1.,0.75) to   (-1,3);
\node[zxvertex=\zxwhite, zxup] at (0,0.75){};
\end{tz}
&
\begin{tz}[zx,every to/.style={out=up, in=down},xscale=-0.8]
\draw [arrow data={0.2}{>},arrow data={0.8}{>}]  (0,0) to (2.25,3);
\draw (2.25,0) to node[zxnode=\zxwhite, pos=0.5] {$P^\dagger$} (0,3);
\end{tz}
=~~
\begin{tz}[zx,xscale=-0.6,yscale=-1]
\draw[arrow data={0.5}{<}] (0.25,-0.5) to (0.25,0) to [out=up, in=-135] (1,1);
\draw (1,1) to [out=135, in=right] node[zxvertex=\zxwhite, pos=1]{} (-0.3, 1.7) to [out=left, in=up] (-1.25,1) to (-1.25,-0.5);
\draw[arrow data={0.5}{>}] (1.75,2.5) to (1.75,2) to [out=down, in=45] (1,1);
\draw (1,1) to [out= -45, in= left] node[zxvertex=\zxwhite, pos=1] {} (2.3,0.3) to [out=right, in=down] (3.25,1) to (3.25,2.5);
\node [zxnode=\zxwhite] at (1,1) {$P$};
\end{tz} 
\end{calign}
\ignore{
\[\begin{tz}[zx,every to/.style={out=up, in=down}]
\draw (0,0) to (2.25, 3);
\draw (1.5,4) to [out=down, in=135] (2.25,3) to [out=45, in=down] (3,4);
\draw[arrow data={0.2}{>},arrow data={0.8}{>}] (3,0) to node[zxnode=\zxwhite, pos=0.545] {$Q$} (0,3) to (0,4);
\node[zxvertex=\zxwhite, zxup] at (2.25,3) {};
\end{tz}
=
\begin{tz}[zx,every to/.style={out=up, in=down}]
\draw (0,0) to (0,1) to [out=135, in=down] (-0.75,1.8) to (1.5,4);
\draw (0,1) to [out=45] (0.75,1.8) to [out looseness=0.7]  (3,4);
\draw[arrow data={0.2}{>},arrow data={0.8}{>}] (3,0) to (3,2) to node[zxnode=\zxwhite , pos=0.35] {$Q$} node[zxnode=\zxwhite, pos=0.65]{$Q$} (0,4);
\node[zxvertex=\zxwhite, zxup] at (0,1) {};
\end{tz}
\hspace{1cm}
\]
}%
\end{definition}

\ignore{\begin{remark} \label{rem:convention}\DV{Is this necessary any more? Do we use it?}We note that together with equation~\eqref{eq:quantumfunction} our drawing conventions --- drawing an orientation on the Hilbert space wires and no orientation on the quantum set wires --- uniquely determine the spatial orientation of a quantum function from its type. In particular, we can adopt the following shorthand notation, allowing for a very flexible graphical calculus:
\begin{calign}
\begin{tz}[zx,every to/.style={out=up, in=down},xscale=-0.8]
\draw [arrow data={0.2}{>},arrow data={0.8}{>}]  (0,0) to (2.25,3);
\draw (2.25,0) to node[zxvertex=\zxwhite, pos=0.5] {} (0,3);
\end{tz}
~:=~
\begin{tz}[zx,every to/.style={out=up, in=down},xscale=-0.8]
\draw [arrow data={0.2}{>},arrow data={0.8}{>}]  (0,0) to (2.25,3);
\draw (2.25,0) to node[zxnode=\zxwhite, pos=0.5] {$P$} (0,3);
\end{tz}
=~~
\begin{tz}[zx,xscale=0.6,yscale=1]
\draw[arrow data={0.5}{>}] (0.25,-0.5) to (0.25,0) to [out=up, in=-135] (1,1);
\draw (1,1) to [out=135, in=right] node[zxvertex=\zxwhite, pos=1]{} (-0.3, 1.7) to [out=left, in=up] (-1.25,1) to (-1.25,-0.5);
\draw[arrow data={0.5}{<}] (1.75,2.5) to (1.75,2) to [out=down, in=45] (1,1);
\draw (1,1) to [out= -45, in= left] node[zxvertex=\zxwhite, pos=1] {} (2.3,0.3) to [out=right, in=down] (3.25,1) to (3.25,2.5);
\node [zxnode=\zxwhite] at (1,1) {$P^\dagger$};
\end{tz} 
&
\begin{tz}[zx,every to/.style={out=up, in=down},xscale=0.8]
\draw [arrow data={0.2}{<},arrow data={0.8}{<}]  (0,0) to (2.25,3);
\draw (2.25,0) to node[zxvertex=\zxwhite, pos=0.5] {} (0,3);
\end{tz}
~:=~
\begin{tz}[zx,xscale=0.6,yscale=1]
\draw(0.25,-0.5) to (0.25,0) to [out=up, in=-135] (1,1);
\draw [arrow data={0.34}{>}]  (1,1) to [out=135, in=right]  (-0.3, 1.7) to [out=left, in=up] (-1.25,1) to (-1.25,-0.5);
\draw(1.75,2.5) to (1.75,2) to [out=down, in=45] (1,1);
\draw[arrow data={0.32}{<}] (1,1) to [out= -45, in= left]   (2.3,0.3) to [out=right, in=down] (3.25,1) to (3.25,2.5);
\node [zxnode=\zxwhite] at (1,1) {$P$};
\end{tz} 
~=~
\begin{tz}[zx,xscale=0.6,yscale=1]
\draw[arrow data={0.68}{>}]  (4.75,2.5) to (4.75, 0.5) to [out=down, in=right] (2.65,-0.6) to [out=left, in=down] (0.55,0.5);
\draw (0.55,0.5) to [out=up, in=-135] (1,1);
\draw  (1,1) to [out=135, in=right]  (-0.3, 1.7) to [out=left, in=up]  node[zxvertex=\zxwhite, pos=0] {}(-1.25,1) to (-1.25,-0.5);
\draw[string, arrow data={0.68}{<}] (-2.75, -0.5) to (-2.75,1.5) to [out=up, in=left] (-0.65, 2.6) to [out=right, in=up] (1.45,1.5);
\draw  (1.45,1.5) to [out=down, in=45] (1,1);
\draw (1,1) to [out= -45, in= left]  node[zxvertex=\zxwhite, pos=1] {}  (2.3,0.3) to [out=right, in=down] (3.25,1) to (3.25,2.5);
\node [zxnode=\zxwhite] at (1,1) {$P^\dagger$};
\end{tz} \\[3pt]
\begin{tz}[zx,every to/.style={out=up, in=down},xscale=0.8]
\draw [arrow data={0.2}{>},arrow data={0.8}{>}]  (0,0) to (2.25,3);
\draw (2.25,0) to node[zxvertex=\zxwhite, pos=0.5] {} (0,3);
\end{tz}
~:=~
\begin{tz}[zx,every to/.style={out=up, in=down},xscale=0.8]
\draw [arrow data={0.2}{>},arrow data={0.8}{>}]  (0,0) to (2.25,3);
\draw (2.25,0) to node[zxnode=\zxwhite, pos=0.5] {$P^\dagger$} (0,3);
\end{tz}
=~~
\begin{tz}[zx,xscale=0.6,xscale=-1]
\draw[arrow data={0.5}{>}] (0.25,-0.5) to (0.25,0) to [out=up, in=-135] (1,1);
\draw (1,1) to [out=135, in=right] node[zxvertex=\zxwhite, pos=1]{} (-0.3, 1.7) to [out=left, in=up] (-1.25,1) to (-1.25,-0.5);
\draw[arrow data={0.5}{<}] (1.75,2.5) to (1.75,2) to [out=down, in=45] (1,1);
\draw (1,1) to [out= -45, in= left] node[zxvertex=\zxwhite, pos=1] {} (2.3,0.3) to [out=right, in=down] (3.25,1) to (3.25,2.5);
\node [zxnode=\zxwhite] at (1,1) {$P$};
\end{tz} 
&
\begin{tz}[zx,every to/.style={out=up, in=down},xscale=-0.8]
\draw [arrow data={0.2}{<},arrow data={0.8}{<}]  (0,0) to (2.25,3);
\draw (2.25,0) to node[zxvertex=\zxwhite, pos=0.5] {} (0,3);
\end{tz}
~:=~
\begin{tz}[zx,xscale=-0.6,yscale=1]
\draw(0.25,-0.5) to (0.25,0) to [out=up, in=-135] (1,1);
\draw [arrow data={0.34}{>}]  (1,1) to [out=135, in=right]  (-0.3, 1.7) to [out=left, in=up] (-1.25,1) to (-1.25,-0.5);
\draw(1.75,2.5) to (1.75,2) to [out=down, in=45] (1,1);
\draw[arrow data={0.32}{<}] (1,1) to [out= -45, in= left]   (2.3,0.3) to [out=right, in=down] (3.25,1) to (3.25,2.5);
\node [zxnode=\zxwhite] at (1,1) {$P^\dagger$};
\end{tz} 
~=~
\begin{tz}[zx,xscale=-0.6,yscale=1]
\draw[arrow data={0.68}{>}]  (4.75,2.5) to (4.75, 0.5) to [out=down, in=right] (2.65,-0.6) to [out=left, in=down] (0.55,0.5);
\draw (0.55,0.5) to [out=up, in=-135] (1,1);
\draw  (1,1) to [out=135, in=right]  (-0.3, 1.7) to [out=left, in=up]  node[zxvertex=\zxwhite, pos=0] {}(-1.25,1) to (-1.25,-0.5);
\draw[string, arrow data={0.68}{<}] (-2.75, -0.5) to (-2.75,1.5) to [out=up, in=left] (-0.65, 2.6) to [out=right, in=up] (1.45,1.5);
\draw  (1.45,1.5) to [out=down, in=45] (1,1);
\draw (1,1) to [out= -45, in= left]  node[zxvertex=\zxwhite, pos=1] {}  (2.3,0.3) to [out=right, in=down] (3.25,1) to (3.25,2.5);
\node [zxnode=\zxwhite] at (1,1) {$P$};
\end{tz}
\end{calign}
\end{remark}}
\begin{remark}
The diagrammatic representation of quantum functions provides an interesting topological intuition: a quantum function behaves like a braiding or crossing between the directed and the undirected wire. From this perspective,~\eqref{eq:quantumfunction} allows us to pull the comultiplication and counit through the braiding. Note that we cannot yet pull the multiplication and unit through the braiding; adding these additional pull-throughs defines a quantum bijection, as will be seen in Section~\ref{sec:quantumbijection}. 
\end{remark}

\begin{remark}\label{rem:opintofquantumfunctions} We show in Corollary~\ref{cor:quantumelementmeasurement} that quantum functions $X\to Y$ between classical sets are families of projective measurements with outcomes in $Y$, controlled by the set $X$. We can think of this as a non-deterministic function $X \to Y$ which uses quantum measurements on an underlying Hilbert space to determine the output $y \in Y$ for a given input $x \in X$ (cf. Remark~\ref{rem:interpretationquantumelement}).
\end{remark}
\noindent
Having defined quantum functions, we make the following elementary observations.
\begin{proposition} A quantum element of a quantum set $A$ is a quantum function from the one-element {set~$\{*\}$} (or equivalently from the commutative special dagger Frobenius algebra $\mathbb{C}$) to the quantum set $A$.
\end{proposition}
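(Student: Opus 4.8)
The plan is to unfold the two definitions and observe that, once the domain quantum set is taken to be $\mathbb{C}$, every diagram occurring in Definition~\ref{def:quantumfunction} degenerates to the corresponding diagram in Definition~\ref{def:quantumelement}, so that the two notions literally have the same data and axioms.

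First I would recall that a quantum function from $\{*\}$ to $A$ means, by Definition~\ref{def:quantumfunction} with domain $\mathbb{C}$ and codomain $A$, a pair $(H,P)$ with $P\colon H\otimes\mathbb{C}\to A\otimes H$ satisfying the three equations of \eqref{eq:quantumfunction}. Using the canonical isomorphism $H\otimes\mathbb{C}\cong H$ I would reinterpret $P$ as a map $H\to A\otimes H$, which is precisely the type of the data $Q$ in Definition~\ref{def:quantumelement}. The crucial point is that the $\CF$ $\mathbb{C}$ is the monoidal unit of $\Hilb$, so its wire is suppressed in the string-diagram calculus; equivalently, under $\mathbb{C}\otimes\mathbb{C}\cong\mathbb{C}$ its multiplication, unit, comultiplication and counit are all the identity, and hence so are the induced cup and cap of \eqref{eq:cupcapfrob}.

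Next I would run through the three equations of \eqref{eq:quantumfunction} one at a time, erasing the (invisible) $\mathbb{C}$-wire. In the first equation the comultiplication applied to the $\mathbb{C}$-input on the right-hand side vanishes, while the left-hand side retains the comultiplication of $A$ applied after $P$; this is exactly the first equation of \eqref{eq:quantumelement}. In the second equation the counit of $\mathbb{C}$ on the right-hand side vanishes, leaving the identity on $H$, which is the second equation of \eqref{eq:quantumelement}. In the third equation the cup and cap carried by the $\mathbb{C}$-wire vanish, and the relation between $P^\dagger$ and $P$ collapses to the self-adjointness equation of \eqref{eq:quantumelement}. The reverse direction — promoting a quantum element to a quantum function $\mathbb{C}\to A$ — is identical, so the data and axioms match on the nose.

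I do not expect a real obstacle: the statement is essentially the bookkeeping fact that $\mathbb{C}$ is the monoidal unit and all of its Frobenius structure is trivial. The only places that require a little care are the third (self-adjointness) axiom, where one must confirm that the cups and caps on the collapsing wire are indeed the trivial ones of \eqref{eq:cupcapfrob} before deleting them, and — more pedantically — correctly identifying which of the two unoriented wires in the quantum-function diagrams is the domain wire being set to $\mathbb{C}$.
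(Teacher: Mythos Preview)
Your proposal is correct and takes essentially the same approach as the paper: the paper's proof is a single sentence observing that \eqref{eq:quantumelement} is the special case of \eqref{eq:quantumfunction} where the source is the trivial quantum set. You have simply unpacked that sentence in more detail by walking through each axiom and noting how the $\mathbb{C}$-wire and its Frobenius structure collapse.
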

\begin{proof} \eqref{eq:quantumelement} is a special case of \eqref{eq:quantumfunction} where the source is the trivial quantum set.
\end{proof}
\noindent
We also note that quantum functions map quantum elements to quantum elements:%
\begin{equation}\begin{tz}[zx,every to/.style={out=up, in=down},xscale=-1]
\draw (0,0) to (0,3.5);
\draw[arrow data={0.2}{>}, arrow data={0.8}{>}] (1.75,0) to [looseness=0.8] node[zxnode=\zxwhite, pos=0.46](Q) {$Q$} (-2.25,2.5) to (-2.25,3.5);
\draw[string,arrow data={0.35}{>}, arrow data={0.9}{>}] (3,0)  to (3,0.25)to [in looseness=0.7] node[zxnode=\zxwhite, pos=0.67](Q2) {$P$}  (-1,3.25) to (-1,3.5);
\draw[white,double] (0,0) to (Q.center);
\end{tz}
\end{equation}
In particular, a quantum function between quantum sets $A$ and $B$ induces a functor between their categories of quantum elements.

\begin{definition}\label{def:qfctdimension}We define the \textit{dimension} of a quantum function $(H,P)$ to be the dimension of the underlying Hilbert space $H$.
\end{definition}

\begin{remark}
A one-dimensional quantum function is an ordinary $*$-cohomomorphism of Frobenius algebras, as can be seen by comparing~\eqref{eq:cohomo} with~\eqref{eq:quantumfunction}. By Gelfand duality, a one-dimensional quantum function between classical sets is therefore just a function.
\end{remark}
\noindent
We can extend the notion of an intertwiner of quantum elements to all \mbox{quantum functions.}\looseness=-2

\begin{definition}\label{def:intertwiner} An \textit{intertwiner} of quantum functions $(H,P) \to (H', P')$ is a linear map $f:H\to H'$ such that the following holds:%
\begin{calign}
\begin{tz}[zx,every to/.style={out=up, in=down},xscale=-1]
\draw (0,0) to (0,1) to (2,3);
\draw[arrow data={0.35}{>}] (2,0) to node[zxnode=\zxwhite, pos=0.9] {$f$} (2,1);
\draw[string,arrow data={0.9}{>},arrow data={0.26}{>}](2,1) to node[zxnode=\zxwhite, pos=0.5]{$P'$}  (0,3);
\end{tz}
\quad = \quad
\begin{tz}[zx,every to/.style={out=up, in=down},scale=-1,xscale=-1]
\draw (0,0) to (0,1) to (2,3);
\draw[arrow data={0.35}{<}] (2,0) to node[zxnode=\zxwhite, pos=0.9] {$f$} (2,1);
\draw[string,arrow data={0.9}{<},arrow data={0.26}{<}](2,1) to node[zxnode=\zxwhite, pos=0.5]{$P$}  (0,3);
\end{tz}
\end{calign}
\end{definition}
\subsection{The 2-category $\QSet$}
We have seen in the discussion preceding Definition~\ref{def:intertwiner} that, while classical elements form a set, quantum elements should be organized into a \emph{category}, keeping track of the additional layer of structure introduced by the intertwiners. %
\ignore{We have seen that, while classical functions form a set, quantum functions should be organized into a \textit{category}, in order to keep track of the additional layer of structure introduced by the intertwiners. }We thus expect the quantum analogue of the category of sets and functions to be a \emph{2-category} of quantum sets and quantum functions, keeping track of the intertwiners between quantum functions. \ignore{We will show that in fact this is a concrete dagger 2-category.}

Overall, we observe that our approach to quantisation leads to \mbox{categorification.}

\begin{definition}\label{def:2catqset} The 2-category\footnote{If we use a strict version of the monoidal category $\Hilb$, then $\QSet$ is a strict 2-category.} $\QSet$ is built from the following structures:%
\begin{itemize}
\item \textbf{objects} are quantum sets $A,B$, ...;
\item \textbf{1-morphisms} $A\to B$ are quantum functions $(H,P): A\to B$;
\item \textbf{2-morphisms} $(H,P) \to (H', P')$ are intertwiners of quantum functions.
\end{itemize}
The composition of two quantum functions $(H,P):A\to B$ and $(H', Q): B \to C$ is a quantum function $(H' \otimes H, Q\circ P)$ defined as follows:
\begin{calign}
\begin{tz}[zx, every to/.style={out=up, in=down},xscale=-1] \label{eq:1composition}
\draw[arrow data={0.15}{>}, arrow data={0.8}{>}] (1.575,0) to (0.325,3.5);
\draw[string,arrow data={0.18}{>}, arrow data={0.85}{>}] (2.175,0) to (0.925,3.5);
\draw (0,0) to node[zxvertex=\zxwhite, pos=0.5] {$Q\circ P$} (2.5,3.5);
\node[dimension,right] at (4.4,0) {$H' \otimes H$};
\end{tz}
\! := \quad
\begin{tz}[zx, every to/.style={out=up, in=down},xscale=-1]
\draw (0,0) to (2.5,3.5);
\draw[arrow data={0.15}{>}, arrow data={0.8}{>}] (1.25,0) to node[zxnode=\zxwhite, pos=0.4] {$P$} (0,3.5);
\draw[string,arrow data={0.2}{>}, arrow data={0.9}{>}] (2.5,0) to node[zxnode=\zxwhite, pos=0.58] {$Q$} (1.25,3.5);
\node[dimension,right] at (1.25,0) {$H$};
\node[dimension, right] at (2.5,0) {$H'$};
\end{tz}
\end{calign}
Vertical and horizontal composition of 2-morphisms is defined as the ordinary composition and tensor product of linear maps, respectively.
\end{definition}
\noindent
As expected, we observe that $\QSet(*,A) = \QEl(A)$ where $*$ is the one-element set (cf. Terminology~\ref{not:setalgebra}).

\begin{remark}\label{rem:street} We denote by $\BHilb$ the \textit{delooping} of $\Hilb$; that is, the 2-category with a single object $*$ and endomorphism category $\Hom(*,*)=\Hilb$. We observe that $\QSet$ is a sub-2-category of Street's 2-category $\mathrm{CoMnd}(\BHilb)$ of comonads, comonad maps and comonad transformations in the 2-category $\BHilb$~\cite{Street1972}. Indeed, the first two equations of~\eqref{eq:quantumfunction} can be understood as defining a comonad map (an analogous graphical definition of the 2-category $\mathrm{Mnd}(\Cat)$ of monads in the 2-category of categories, functors and natural transformations is given in~\cite{Marsden2014,Hinze2016}).
\end{remark}
\begin{theorem}\label{thm:QSetdagger2category} $\QSet$ is a dagger 2-category. 
\end{theorem}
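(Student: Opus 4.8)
The plan is to lift the dagger structure of $\Hilb$ to $\QSet$ one hom-category at a time, and then check compatibility with horizontal composition. For quantum sets $A$ and $B$, I equip $\QSet(A,B)$ with the dagger that is the identity on objects (quantum functions) and sends a $2$-morphism $f\colon(H,P)\to(H',P')$ to the adjoint linear map $f^\dagger\colon H'\to H$. Granting that $f^\dagger$ is again an intertwiner (see below), this is an involutive, identity-on-objects, composition-reversing assignment on $\QSet(A,B)$, since $(-)^\dagger$ has these properties on $\Hilb$; moreover the forgetful functor $\QSet(A,B)\to\Hilb$ strictly intertwines the two daggers, so $\QSet(A,B)$ is even a concrete dagger category in the sense of Definition~\ref{def:Hilbcategory}, as demanded by Philosophy~\ref{phil:concretedagger}. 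Compatibility with horizontal composition is inherited in the same way: by Definition~\ref{def:2catqset}, horizontal composition of $2$-morphisms is the tensor product of the underlying linear maps, and $(g\otimes f)^\dagger = g^\dagger\otimes f^\dagger$ in $\Hilb$; and if one works with a non-strict $\Hilb$, the associators and unitors of $\QSet$ are assembled from those of $\Hilb$ and are therefore unitary. I expect all of this to be routine.

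The substantive point --- and the step I expect to be the only real obstacle --- is that $f^\dagger$ is again an intertwiner, now of type $(H',P')\to(H,P)$. This cannot follow from the first two axioms of~\eqref{eq:quantumfunction} alone: those make $P$ a comonad map in $\BHilb$, and the adjoint of a comonad map is a monad map, not a comonad map. The self-adjointness axiom --- the third equation of~\eqref{eq:quantumfunction} --- is exactly what repairs this. Concretely, I would write the intertwiner condition of Definition~\ref{def:intertwiner} as $(\id_B\otimes f)\circ P = P'\circ(f\otimes\id_A)$, take adjoints of both sides to get
\[
P^\dagger\circ(\id_B\otimes f^\dagger) \;=\; (f^\dagger\otimes\id_A)\circ (P')^\dagger ,
\]
and then invoke the third axiom of~\eqref{eq:quantumfunction} to re-express $P^\dagger$ and $(P')^\dagger$ in terms of $P$ and $P'$ by bending the $A$- and $B$-legs around with the Frobenius cups and caps of~\eqref{eq:cupcapfrob}. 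That rewriting manipulates only the $A$-, $B$- and Hilbert-space wires and never inserts a map acting on $H$, so it slides freely past $\id_B\otimes f^\dagger$ and $f^\dagger\otimes\id_A$; straightening the $A$- and $B$-legs again by the snake equations~\eqref{eq:snake} then converts the displayed identity into $(\id_B\otimes f^\dagger)\circ P' = P\circ(f^\dagger\otimes\id_A)$, which is the intertwiner condition for $f^\dagger$. This is precisely the diagrammatic shadow of the classical fact that a map commuting with a $*$-representation has adjoint commuting with it, carried out with the extra oriented wire of Philosophy~\ref{phil:quantumprooffromclassical} threaded through; the same argument handles intertwiners of quantum elements, using~\eqref{eq:quantumelement} in place of~\eqref{eq:quantumfunction}.

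In summary, the theorem reduces to one genuine verification --- closure of intertwiners under adjoints, resting on the self-adjointness axiom and the snake equations --- with everything else (involutivity, the identity-on-objects and contravariance properties, unitarity of the coherence data, and the exchange law $(g\otimes f)^\dagger=g^\dagger\otimes f^\dagger$) being an immediate transfer of the dagger structure of $\Hilb$.
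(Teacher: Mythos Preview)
Your proposal is correct and follows essentially the same approach as the paper: the only substantive step is that the adjoint of an intertwiner is again an intertwiner, and both you and the paper establish this by taking the Hilbert space adjoint of the intertwiner equation and then invoking the third axiom of~\eqref{eq:quantumfunction} to convert $P^\dagger$ and $(P')^\dagger$ back into $P$ and $P'$. The paper records this as the three-step diagrammatic chain~\eqref{eq:qsetdagger}, which is exactly your ``take adjoints, bend the $A$- and $B$-legs using the self-adjointness axiom, and straighten with the snake equations''; the remaining checks (involutivity, identity-on-objects, compatibility with horizontal composition via $(g\otimes f)^\dagger=g^\dagger\otimes f^\dagger$) are indeed routine transfers from $\Hilb$, and the paper cites Remark~\ref{rem:street} for the underlying $2$-category structure rather than spelling it out.
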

\begin{proof} Following Remark~\ref{rem:street}, $\QSet$ is a sub-2-category of $\mathrm{CoMnd}(\BHilb)$, which is well known to be a 2-category (in the strict case the proof goes back at least to Street~\cite{Street1972}). The dagger of a 2-morphism is defined to be the ordinary Hilbert space adjoint of the underlying linear map. This is well defined since the adjoint of an intertwiner $f:(H,P) \to (H', Q)$ is an intertwiner $f^\dagger: (H', Q) \to (H,P)$: 
\begin{equation}\label{eq:qsetdagger}\begin{tz}[zx,every to/.style={out=up, in=down},xscale=-1]
\draw (0,0) to (0,1) to (2,3);
\draw[string, arrow data={0.35}{>}] (2,0) to node[zxnode=\zxwhite, pos=0.9] {$f$} (2,1);
\draw[string,arrow data={0.9}{>},arrow data={0.26}{>}](2,1) to node[zxnode=\zxwhite, pos=0.5]{$Q$}  (0,3);
\end{tz}
~=~
\begin{tz}[zx,every to/.style={out=up, in=down},scale=-1,xscale=-1]
\draw (0,0) to (0,1) to (2,3);
\draw[arrow data={0.35}{<}] (2,0) to node[zxnode=\zxwhite, pos=0.9] {$f$} (2,1);
\draw[string,arrow data={0.9}{<},arrow data={0.26}{<}](2,1) to node[zxnode=\zxwhite, pos=0.5]{$P$}  (0,3);
\end{tz}
\quad
\stackrel{~(-)^\dagger}{\Leftrightarrow}
\quad
\begin{tz}[zx,every to/.style={out=up, in=down},yscale=-1,xscale=-1]
\draw (0,0) to (0,1) to (2,3);
\draw[arrow data={0.35}{<}] (2,0) to node[zxnode=\zxwhite, pos=0.9] {$f^\dagger$} (2,1);
\draw[string,arrow data={0.9}{<},arrow data={0.26}{<}](2,1) to node[zxnode=\zxwhite, pos=0.5]{$Q^\dagger$}  (0,3);
\end{tz}
=
\begin{tz}[zx,every to/.style={out=up, in=down},xscale=1]
\draw (0,0) to (0,1) to (2,3);
\draw[arrow data={0.35}{>}] (2,0) to node[zxnode=\zxwhite, pos=0.9] {$f^\dagger$} (2,1);
\draw[string,arrow data={0.9}{>},arrow data={0.26}{>}](2,1) to node[zxnode=\zxwhite, pos=0.5]{$P^\dagger$}  (0,3);
\end{tz}
\quad
\super{\eqref{eq:quantumfunction}}{\Leftrightarrow}
\quad
\begin{tz}[zx,every to/.style={out=up, in=down},scale=-1,xscale=-1]
\draw (0,0) to (0,1) to (2,3);
\draw[arrow data={0.35}{<}] (2,0) to node[zxnode=\zxwhite, pos=0.9] {$f^\dagger$} (2,1);
\draw[string,arrow data={0.9}{<},arrow data={0.26}{<}](2,1) to node[zxnode=\zxwhite, pos=0.5]{$Q$}  (0,3);
\end{tz}
~=~
\begin{tz}[zx,every to/.style={out=up, in=down},xscale=-1]
\draw (0,0) to (0,1) to (2,3);
\draw[arrow data={0.35}{>}] (2,0) to node[zxnode=\zxwhite, pos=0.9] {$f^\dagger$} (2,1);
\draw[string,arrow data={0.9}{>},arrow data={0.26}{>}](2,1) to node[zxnode=\zxwhite, pos=0.5]{$P$}  (0,3);
\end{tz}
\end{equation}
\end{proof}
The underlying Hilbert space of the composite of two quantum functions is the tensor product of the underlying Hilbert spaces of the quantum functions. Similarly, vertical and horizontal composition of 2-morphisms coincides with composition and tensor product of linear maps. We therefore have a forgetful 2-functor \begin{equation}\label{eq:fibrefunctor}F: \QSet \to \BHilb\end{equation}  from $\QSet$ to the delooping of $\Hilb$ (see Remark \ref{rem:street}). As outlined in Philosophy~\ref{phil:concretedagger}, the quantum mechanical interpretation of \QSet\ depends on this 2-functor; in particular, it makes $\QSet$ into a concrete dagger 2-category.

\begin{definition}\label{def:Hilb2category} A \emph{concrete dagger 2-category} is a pair $(\mathbb{B}, F)$, where $\mathbb{B}$ is a dagger 2-category and $F:\mathbb{B} \to \BHilb$ is a locally faithful dagger 2-functor (see Section~\ref{sec:background}). 
\end{definition}
\begin{remark}
In particular, a \emph{concrete dagger monoidal category} is a pair $(\mathcal{C}, F)$ where $\mathcal{C}$ is a dagger monoidal category and $F:\mathcal{C} \to \Hilb$ is a faithful dagger monoidal functor.
\end{remark}
\noindent In other words, a concrete dagger 2-category is a $2$-category for which every $\Hom$-category is a concrete dagger category in a compatible way. The 2-functor $F:\QSet\to \BHilb$ has in fact already appeared in Proposition~\ref{prop:reconstruct}.

\begin{remark}\label{rem:setsubset}
By Gelfand duality, every set is itself a quantum set and every function is a quantum function. We may therefore think of $\Set$ as contained in $\QSet$; there is a faithful inclusion $2$-functor $\Set\hookrightarrow \QSet$, where we think of $\Set$ as a $2$-category with only identity $2$-morphisms.

Given two functions $f$ and $g$ between sets, $\QSet(f,g) = \delta_{f,g} \mathbb{C}$.
\end{remark}

\begin{remark}\label{rem:symmmon2cat} Using the cartesian product of sets, the  category of sets and functions is a symmetric monoidal category. Similarly, it can be shown that \QSet, using the tensor product of the underlying algebras, becomes a symmetric monoidal 2-category.\ignore{\footnote{Working with a strict version of $\Hilb$} quasistrict symmetric Gray monoid.}
\end{remark}

\subsection{A universal property}\label{sec:universal}
In this section, we show that the categories of quantum functions $A \to B$ between quantum sets may be obtained via a universal construction, as categories of\ignore{ finite} quantum elements of the internal hom $[A,B]$ in the category of \emph{quantum spaces} --- the opposite of the category of $C^*$-algebras. Our definition of the quantum space of quantum functions between finite quantum sets is analogous to~\cite[Definition 3.1.]{Soltan2009}, and generalises the construction used to define quantum permutation groups~\cite{Wang1998}.

To present these results, we work with infinite-dimensional $C^*$-algebras. Various parts of our presentation in Sections~\ref{sec:background} and~\ref{sec:quantumset} do not apply here (for instance, there are no infinite-dimensional Frobenius algebras), and we will therefore need to modify some definitions. Any such modifications will apply only in this section.

It is an observation of Wang~\cite{Wang1998} that quantum functions between classical spaces should form a quantum set rather than a classical set. Let $\CAlg$ be the category of (possibly infinite-dimensional) $C^*$-algebras and $*$-homomorphisms. By analogy with Gelfand duality, we might treat the category $\CAlg^\op$ as the category of `quantum spaces'. However, given two finite sets $A$ and $B$, understood as commutative $C^*$-algebras, then $\CAlg^\op (A,B) =\Set(A,B)$ is just a set; from this perspective there are only classical functions between $A$ and $B$. This inspires the following definition, analogous to constructions in~\cite{Wang1998} and~\cite{Soltan2009}.     
\begin{definition}\label{def:internalhom} \ignore{Let $A$ and $B$ be finite-dimensional $C^*$-algebras (not necessarily commutative). }The \textit{quantum space of quantum functions} between two finite quantum sets $A$ and $B$ is the internal hom $[A,B]$ in $\CAlg^\op$; that is, the universal $C^*$-algebra such that there are bijections\footnote{All finite-dimensional $C^*$-algebras $A$ are \emph{nuclear}; that is for all $C^*$-algebras $C$, there is a unique norm on the algebraic tensor product; the $C^*$-tensor product $C\otimes A$ is defined as its completion.   }
\begin{equation}\label{eq:internalhom}\CAlg^\op (C, [A,B]) \cong \CAlg^\op (C\otimes A, B)\end{equation}
which are natural in $C\in \CAlg^\op$.
\end{definition}
\begin{remark}\label{rem:soltan}
So\l{}tan's \emph{quantum space of all maps} between two quantum spaces~\cite[Definition 3.1.]{Soltan2009} is precisely the internal hom in the category $\CAlg^\op$. It follows from~\cite[Theorem 3.3]{Soltan2009} that the quantum space of quantum functions between finite quantum sets always exists. 
\end{remark}

\noindent
The internal hom $[A,B]$ is in general an infinite-dimensional noncommutative $C^*$-algebra. In the finite-dimensional case, we showed that a quantum set could be recovered from its category of quantum elements. In the infinite-dimensional setting we cannot define the category of quantum elements as the category of comodules, due to the lack of a comultiplication on $[A,B]$. We can nevertheless define the category of quantum elements of $[A,B]$ as the \textit{opposite} of the category of modules of $[A,B]$; this coincides with the usual definition in the finite-dimensional case.\ignore{\footnote{Takesaki showed~\cite{Takesaki1967} that a general separable $C*$-algebra can be recovered from its category of (possibly infinite-dimensional) representations.} Since we here only treat finite dimensional Hilbert spaces, we make the following definition.}

\ignore{ In the finite-dimensional case, taking the adjoint induces an equivalence between the category of comodules and the opposite of the category of modules; we use this to define the category of quantum elements for an infinite-dimensional $C^*$-algebra. }
\begin{definition} The \textit{category of finite quantum elements} $\QEl(A)$ of a quantum space $A$ is the opposite of the category $\mathrm{Rep}_{\text{fd}}(A)$ of finite-dimensional $C^*$-representations\footnote{A $C^*$-representation of $A$ is a $*$-homomorphism $A\to \mathcal{B}(H)$, where $\mathcal{B}(H)$ is the $C^*$-algebra of bounded operators on a Hilbert space $H$.} of the $C^*$-algebra $A$.
\end{definition}
\noindent
We now show that our category of quantum functions between finite quantum sets $A$ and $B$ coincides with the category of quantum elements of the internal hom $[A,B]$.
\begin{theorem}\label{thm:universalprop} The category of finite quantum elements of $[A,B]$ is equivalent to the category of quantum functions $\QSet(A,B)$:
\begin{equation}\QEl([A,B]) \cong \QSet(A,B)\end{equation}
\end{theorem}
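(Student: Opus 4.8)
The plan is to unpack both sides of the claimed equivalence into concrete data and exhibit a matching. An object of $\QEl([A,B])$ is, by definition, a finite-dimensional $C^{*}$-representation $\pi\colon [A,B]\to\End(H)$. Viewing $\End(H)$ as an object of $\CAlg^{\op}$ and instantiating the universal property~\eqref{eq:internalhom} at $C=\End(H)$, such a $\pi$ is exactly the same datum as a morphism $\End(H)\otimes A\to B$ in $\CAlg^{\op}$, that is, a $*$-homomorphism $\phi\colon B\to\End(H)\otimes A$ (automatically unital, since all our $C^{*}$-algebras and homomorphisms are). So the first step is to record that, up to this translation, the objects of $\QEl([A,B])$ are precisely the pairs $(H,\phi)$ with $H$ finite-dimensional and $\phi$ a unital $*$-homomorphism $B\to\End(H)\otimes A$.

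Second, I would set up the graphical dictionary turning such a $\phi$ into a quantum function. All of $A$, $B$ and $\End(H)$ are finite-dimensional, hence carry $\F$/$\CF$ structures; writing $\End(H)\cong H\otimes H^{*}$ and using the duality of $H$ together with the self-dualities of the Frobenius algebras $A$ and $B$, one bends the $H^{*}$-, $A$- and $B$-legs of $\phi$ to obtain a linear map $P\colon H\otimes A\to B\otimes H$, and this bending is a linear bijection between the two hom-spaces. The crux is then the claim: \emph{$\phi$ is a unital $*$-homomorphism if and only if $(H,P)$ is a quantum function in the sense of Definition~\ref{def:quantumfunction}}. One checks this by rewriting the three conditions defining a $*$-homomorphism~\eqref{eq:homo} as string diagrams and pulling them through the bends, using $m_{A}^{\dagger}$, $m_{B}^{\dagger}$ for the comultiplications: multiplicativity of $\phi$ in $B$ becomes the first axiom of~\eqref{eq:quantumfunction} (the two-copies-of-$P$ law with the comultiplication of $B$), unitality of $\phi$ becomes the second (counit) axiom, and $*$-preservation of $\phi$ becomes the third axiom. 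It is reassuring to verify the classical case directly: for $A=\mathbb{C}^{a}$, $B=\mathbb{C}^{b}$ a $*$-homomorphism $\phi\colon B\to\End(H)\otimes A$ sends the $j$-th minimal projection of $B$ to a tuple $(P_{ij})_{i}$ of operators on $H$, and the three conditions say precisely that each family $\{P_{ij}\}_{j}$ is a projective measurement, which is the component form of $P$ being a quantum function $\mathbb{C}^{a}\to\mathbb{C}^{b}$ (this also identifies $[A,B]$ with the universal $C^{*}$-algebra on such generators appearing in the nonlocal-games literature). Note that a $*$-\emph{homo}morphism $\phi$ is converted into the \emph{co}-type axioms of~\eqref{eq:quantumfunction} exactly because bending a wire interchanges algebra and coalgebra structures; this is why quantum functions quantise $*$-cohomomorphisms~\eqref{eq:cohomo} rather than~\eqref{eq:homo}.

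Third, I would match morphisms and assemble the equivalence. A morphism of representations $(H,\pi)\to(H',\pi')$ is an intertwiner $T\colon H\to H'$; transposing it along the adjunction underlying~\eqref{eq:internalhom} (again a bend-and-adjoint manipulation) produces a map of underlying Hilbert spaces satisfying exactly the intertwiner condition of Definition~\ref{def:intertwiner}, with the direction of the arrow reversed exactly as dictated by the $\op$ in $\QEl([A,B])=\mathrm{Rep}_{\mathrm{fd}}([A,B])^{\op}$ — which is simply the standard passage between comodules and modules, realised in $\Hilb$ via the dagger. Naturality of~\eqref{eq:internalhom} in $C$ makes these bijections compatible with composition, so the assignment $(H,P)\mapsto\pi_{P}$ is a well-defined functor $\QSet(A,B)\to\QEl([A,B])$ which is fully faithful and essentially surjective, hence an equivalence. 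I expect the real work to be bookkeeping rather than conceptual: arranging the daggers, the duality $\End(H)\cong H\otimes H^{*}$, and the $\op$ so that the three abstract $*$-homomorphism conditions land on the nose on the three graphical axioms of~\eqref{eq:quantumfunction} (and likewise for intertwiners). A point to keep in mind throughout is that, although $[A,B]$ is typically infinite-dimensional and carries no Frobenius structure, the argument only ever uses the finite-dimensional Frobenius structures on $A$, $B$ and $\End(H)$ together with the universal property of $[A,B]$, so nothing that fails in infinite dimensions is invoked.
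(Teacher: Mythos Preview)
Your proposal is correct and follows essentially the same route as the paper: use the internal-hom universal property to convert a finite-dimensional representation of $[A,B]$ into a $*$-homomorphism $B\to\End(H)\otimes A$, then use the Frobenius/duality structure on the finite-dimensional pieces to bend this into a map $H\otimes A\to B\otimes H$ and verify it satisfies~\eqref{eq:quantumfunction}, with intertwiners handled by the dagger and the $\op$. The only cosmetic difference is that the paper first takes the adjoint (obtaining a $*$-cohomomorphism $\End(H)\otimes A\to B$) and then bends, whereas you bend directly from the $*$-homomorphism and observe that bending dualises the algebra structure; these are the same manipulation in a different order.
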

\begin{proof} We need to show that $\mathrm{Rep}_{\text{fd}}([A,B])^\op \cong \QSet(A,B)$. A representation of $[A,B]$ is a $*$-homomorphism $[A,B]\to \End(H)$ for some finite-dimensional Hilbert space. We have the following series of bijections: 
\begin{equation}\label{eq:universalsequence}\CAlg([A,B], \End(H)) = \CAlg^\op(\End(H),[A,B]) \end{equation}\[\hspace{1cm}\cong \CAlg^\op(\End(H)\otimes A, B) = \CAlg(B, \End(H)\otimes A) \]
The representation therefore corresponds to a $*$-homomorphism $B\to \End(H) \otimes A$. Since all algebras $A,B$ and $\End(H)$ are finite-dimensional, we can now use the Frobenius algebra formalism established in Section~\ref{sec:background}. Taking the Hilbert space adjoint of the homomorphism $B\to \End(H)\otimes A$ yields a $*$-cohomomorphism (in the sense of equation~\eqref{eq:cohomo}) $P:\End(H)\otimes A\to B$. Explicitly, this means:
\def\d{0.25}
\def\h{0.4}
\begin{calign}\nonumber
\begin{tz}[zx,master,every to/.style={out=up, in=down},scale=1.2]
\draw (0,0) to (0,0.5) to [out=135] (-0.75, 1.5) to (-0.75,1.7);
\draw (0, 0.5) to [out=45] (0.75,1.5) to (0.75,1.7);
\draw[arrow data={0.2}{>}] (-2*\d, 0)to   (-2*\d, \h) to(-0.75-2*\d, 1.7);
\draw[arrow data={0.12}{<}] (-\d, 0) to (-\d,\h) to  (0.75-\d, 1.7) to (0.75-\d, 3);
\draw (-0.75-\d, 1.8) to (-0.75-\d, 3);
\draw[arrow data={0.55}{>}] (-0.75-\d, 1.7) to [out=down, in=down,looseness=1.6] (0.75-2*\d, 1.7) ;
\node[zxnode=\zxwhite] at (0.75-\d, 1.8) {$P$};
\node[zxnode=\zxwhite] at (-0.75-\d, 1.8) {$P$};
\node[zxvertex=\zxwhite,zxup] at (0,0.5){};
\end{tz}
\quad =
\begin{tz}[zx,slave,every to/.style={out=up, in=down},scale=1.2]
\draw[arrow data={0.4}{>}] (-\d,0)  to (-\d,1);
\draw[arrow data={0.4}{<}] (0,0)  to (0,1);
\draw (\d,0)  to (\d,1);
\draw (0,1) to (0,2) to [out=135] (-0.75, 3);
\draw (0,2) to [out=45] (0.75,3);
\node[zxnode=\zxwhite] at (0,1) {$P$};
\node[zxvertex=\zxwhite,zxup] at (0,2){};
\end{tz}
&
\begin{tz}[zx,slave,every to/.style={out=up, in=down},scale=1.2]
\draw[arrow data={0.4}{>}] (-\d,0)  to (-\d,1);
\draw[arrow data={0.4}{<}] (0,0)  to (0,1);
\draw (\d,0)  to (\d,1);
\draw (0,1) to (0,2);
\node[zxnode=\zxwhite] at (0,1) {$P$};
\node[zxvertex=\zxwhite] at (0,2){};
\end{tz}
\quad=~
\begin{tz}[zx,slave,every to/.style={out=up, in=down},scale=1.2]
\draw[arrow data={0.2}{>},arrow data={0.9}{>}] (-\d,0)  to (-\d,0.5) to [out=up, in=up] (0,0.5) to (0,0);
\draw (\d,0)  to (\d,1);
\node[zxvertex=\zxwhite] at (\d,1){};
\end{tz}
&
\begin{tz}[zx,every to/.style={out=up, in=down},scale=1.2,yscale=-1]
\clip (-0.5, 0) rectangle (0.5,3);
\draw[arrow data={0.4}{<}] (-\d,0)  to (-\d,1.5);
\draw[arrow data={0.4}{>}] (0,0)  to (0,1.5);
\draw (\d,0)  to (\d,1.5);
\draw (0,1.5) to (0,3);
\node[zxnode=\zxwhite] at (0,1.5) {$P^\dagger$};
\end{tz}
~=~
\def\loose{3}
\begin{tz}[zx,every to/.style={out=up, in=down},scale=1.2]
\clip (-1.5, 0) rectangle (1.5,3);
\draw (0,1.5) to [out=up, in=up,looseness=\loose] node[zxvertex=\zxwhite, pos=0.5]{}(1,1.5) to (1,0);
\draw[string,arrow data={0.76}{<}] (-\d,1.5) to [out=down, in=down,looseness=\loose] (-1, 1.5) to (-1,3);
\draw[string,arrow data={0.8}{>}] (0,1.5) to [out=down, in=down,looseness=2.5] (-1-\d, 1.5) to (-1-\d,3);
\draw[string] (\d,1.5) to [out=down, in=down,looseness=4]node[zxvertex=\zxwhite, pos=0.5] {} (-1+\d, 1.5) to (-1+\d,3);
\node[zxnode=\zxwhite] at (0,1.5) {$P$};
\end{tz}
\end{calign}
But these are exactly the axioms of a quantum function \eqref{eq:quantumfunction} for the following morphism $H\otimes A \to B\otimes H$:
\[
\begin{tz}[zx,every to/.style={out=up, in=down},scale=1.2]
\draw (0,1.5) to (-0.75, 3);
\draw (0.75,0) to (\d-0.05, 1.5);
\draw[arrow data={0.35}{>}] (-0.75,0) to (-\d+0.1,1.5); 
\draw[arrow data={0.2}{>},arrow data={0.85}{>}] (0,1.5) to[out=down, in=down, looseness=5] (0.7, 1.5) to[out=up, in=down] (0.75, 3);
\node[zxnode=\zxwhite] at (0,1.5) {$P$};
\end{tz}
\]
Analogously, an intertwiner $f$ of $[A,B]$-actions is mapped to the intertwiner $f^\dagger$ of the corresponding quantum functions.
This defines the desired equivalence.
\end{proof}

\begin{remark} From this perspective, $\QSet$ provides a high-level approach to the representation theory of the algebras of quantum functions, in particular clarifying their compositional behaviour.
\end{remark}

\begin{remark}\label{rem:quantumfunctionalgebra} It is not hard to show that for classical sets $X$ and $Y$, the $C^*$-algebra $[X,Y]$ is the universal $C^*$-algebra with generators $p_{x,y}$ for $x\in X, y \in Y$ and the following relations for $x\in X, y \in Y$ (also see Theorem~\ref{thm:quantumfunctionsinprojectors} and Proposition~\ref{prop:Wang}):
\begin{calign}p_{x,y}^* = p_{x,y} = p_{x,y}^2 & p_{x,y} p_{x,y'} = \delta_{y,y'} p_{x,y} &\sum_{y\in Y}  p_{x,y}=1\end{calign}
For general finite-dimensional $C^*$-algebras $A$ and $B$,\! there is a similar \mbox{presentation of $[\!A,B]$.}\looseness=-2
\end{remark}
\begin{remark}\label{rem:problemTannaka}
We remark that the categories $\QSet(A,B)\cong \mathrm{Rep}_{\mathrm{fd}}([A,B])^{op}$ only keep track of the \emph{finite-dimensional} representation theory of the internal hom $[A,B]$. 

It is tempting to try to reconstruct the algebra $[A,B]$ from the abstract $C^*$-category $\QSet(A,B)$ and the fibre functor $\QSet([A,B])\to \Hilb$. However, if neither $A$ nor $B$ are the one-element set --- and if therefore $[A,B]$ is infinite-dimensional --- we will see in Section~\ref{sec:semisimple} that the set of isomorphism classes of simple objects of $\QSet(A,B)$ in uncountable and carries a natural non-discrete topology. This cannot be captured with the inherently discrete tools of (ordinary) category theory; we cannot reconstruct $[A,B]$ from the abstract category $\QSet(A,B) \to \Hilb$. This is analogous to the inability to recover a discrete group from its (naive) representation category.
\end{remark}

\section{Quantum bijections}\label{sec:quantumbijection}

\ignore{We now begin to explore the structure of the concrete dagger 2-category \QSet .}
The quantum graph isomorphisms of Atserias et al.~\cite{Atserias2016}, the quantum permutation groups of Wang~\cite{Wang1998}, and the quantum automorphism groups of Banica~\cite{Banica2005} are all intended as quantisations of classical bijections. In this section, we define within our $2$-categorical framework a notion of quantum bijection which we show to be equivalent to the quantum bijections of these authors. 

In particular, we show that quantum bijections can be exactly characterised as the \textit{dagger-dualisable} $1$-morphisms in the $2$-category $\QSet$. We emphasise that quantum bijections should not be thought of as invertible but merely as dualisable. In fact, categorical equivalences in $\QSet$ are just classical bijections, or more generally {$*$-isomorphisms}.

\ignore{
We show that categorical equivalences in $\QSet$ are classical bijections, or more generally $*$-isomorphisms; we then define quantum bijections as quantisations of these isomorphisms. We show that quantum bijections can be exactly characterised as the \textit{dagger-dualisable} $1$-morphisms in the $2$-category $\QSet$. In particular, we emphasise that quantum bijections should not be thought of as invertible but merely as dualisable. 
\ignore{We show that equivalences in $\QSet$ are classical bijections, or more generally ordinary $*$-isomorphisms between Frobenius algebras. We then define quantum bijections as a quantisation of these $*$-isomorphisms, prove that they are the dagger-dualisable, and relate them to Wang's quantum symmetry groups. Finally, we consider quantum bijections between classical sets and relate them to Atserias' projective permutation matrices.\DR{flag}}
}

\subsection{Equivalences in $\QSet$}
We first prove that equivalences in the sense of 2-category theory correspond to ordinary $*$-isomorphisms of Frobenius algebras; this shows that categorical equivalence is too strong a notion to characterise quantum bijections in $\QSet$.

Ordinary bijections --- and more generally ordinary $*$-isomorphisms between Frobenius algebras --- can be characterised by the following algebraic condition:%
\begin{proposition} \label{prop:Frobeniusinvertible}A $*$-cohomomorphism between symmetric special Frobenius algebras is invertible if and only if it is also a $*$-homomorphism.\footnote{
It is straightforward to show directly that a $*$-homomorphism which is also a $*$-cohomomorphism is invertible. However, it is harder to show that an invertible $*$-cohomomorphism is a $*$-homomorphism; this may be a new result.}%
\begin{calign}\label{eq:Frobeniusmonoidlemma}
\begin{tz}[zx, master, every to/.style={out=up, in=down},yscale=-1]
\draw (0,0) to (0,2) to [out=135] (-0.75,3);
\draw (0,2) to [out=45] (0.75, 3);
\node[zxnode=\zxwhite] at (0,1) {$f$};
\node[zxvertex=\zxwhite, zxdown] at (0,2) {};
\end{tz}
=
\begin{tz}[zx, every to/.style={out=up, in=down},yscale=-1]
\draw (0,0) to (0,0.75) to [out=135] (-0.75,1.75) to (-0.75,3);
\draw (0,0.75) to [out=45] (0.75, 1.75) to +(0,1.25);
\node[zxnode=\zxwhite] at (-0.75,2) {$f$};
\node[zxnode=\zxwhite] at (0.75,2) {$f$};
\node[zxvertex=\zxwhite, zxdown] at (0,0.75) {};
\end{tz}
&
\begin{tz}[zx,slave, every to/.style={out=up, in=down},yscale=-1]
\draw (0,0) to (0,2) ;
\node[zxnode=\zxwhite] at (0,1) {$f$};
\node[zxvertex=\zxwhite, zxup] at (0,2) {};
\end{tz}
=
\begin{tz}[zx,slave, every to/.style={out=up, in=down},yscale=-1]
\draw (0,0) to (0,0.75) ;
\node[zxvertex=\zxwhite, zxup] at (0,0.75) {};
\end{tz}
\end{calign}
\end{proposition}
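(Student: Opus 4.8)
The plan is to handle the two implications separately; the forward direction is bookkeeping, the converse is the real content.

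\textbf{The easy direction.} If $f\colon A\to B$ is both a $*$-homomorphism and a $*$-cohomomorphism, then it is a $*$-isomorphism in the sense of Definition~\ref{def:starhoms}, and it was already noted there that every $*$-isomorphism is unitary; in particular $f$ is invertible, with $f^{-1}=f^\dagger$. (If one wants a self-contained check: using the standard formula $\langle x\mid y\rangle = (\epsilon\circ m)(\iota(x)\otimes y)$ recovering the inner product of an \F\ from its Frobenius data and involution $\iota$, together with $f\circ m_A = m_B\circ(f\otimes f)$, $\epsilon_B\circ f=\epsilon_A$ and $f\circ\iota_A=\iota_B\circ f$, one sees $f$ is an isometry; applying the same to $f^\dagger$ forces $\dim A=\dim B$, so $f$ is a bijective isometry, hence unitary.)

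\textbf{The hard direction.} Suppose $f\colon A\to B$ is an invertible $*$-cohomomorphism. Taking adjoints turns this into a statement about $*$-homomorphisms: $f$ is a $*$-cohomomorphism exactly when $g:=f^\dagger\colon B\to A$ is a $*$-homomorphism, $f$ is invertible exactly when $g$ is, and $f$ is a $*$-homomorphism exactly when $g$ is a $*$-cohomomorphism, i.e.\ (by the easy direction applied to $g$) exactly when $g$ is unitary. So it suffices to prove that \emph{every bijective $*$-homomorphism $g\colon B\to A$ between \F s is unitary}. The key input is the \emph{rigidity} of the special symmetric dagger Frobenius structure on a finite-dimensional $C^*$-algebra: it is determined by the underlying algebra. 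Indeed, by Vicary's theorem an \F\ is a $C^*$-algebra $\bigoplus_i M_{n_i}(\mathbb{C})$; any Frobenius form makes distinct Wedderburn blocks mutually orthogonal (since $\langle p_i a, p_j b\rangle = \langle a, p_ip_j b\rangle = 0$ for the central idempotents $p_i$ with $i\ne j$), hence restricts on each block to $\lambda_i\,\mathrm{Tr}(xy)$; and a short computation shows the specialness axiom~\eqref{eq:special} forces $\lambda_i=n_i$, so the form — and with it the comultiplication $m^\dagger$, the counit $u^\dagger$ and the inner product — is pinned down.

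Now let $g\colon B\to A$ be a bijective $*$-homomorphism. As a unit-preserving algebra isomorphism it carries the (unique) special symmetric Frobenius structure of $B$ to such a structure on the algebra of $A$, which by rigidity is the structure of $A$; hence, beyond $g\circ m_B = m_A\circ(g\otimes g)$ and $g\circ u_B = u_A$, we also get $\epsilon_A\circ g=\epsilon_B$ and $(g\otimes g)\circ\delta_B=\delta_A\circ g$. Since $g$ is a $*$-homomorphism it intertwines the involutions, $g\circ\iota_B=\iota_A\circ g$. The inner-product formula then gives $\langle g(x)\mid g(y)\rangle_A = (\epsilon_A\circ m_A)(\iota_A g(x)\otimes g(y)) = \epsilon_A\!\left(g\,m_B(\iota_B(x)\otimes y)\right) = (\epsilon_B\circ m_B)(\iota_B(x)\otimes y) = \langle x\mid y\rangle_B$, so $g$ is an isometry, hence unitary. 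Thus $g=f^\dagger$ is a bijective $*$-homomorphism with $f = g^\dagger = g^{-1}$, and since the inverse of a bijective $*$-homomorphism is again a $*$-homomorphism (write $a=g(x)$, $b=g(y)$ and use $g^{-1}(ab)=xy=g^{-1}(a)g^{-1}(b)$, and similarly for unit and involution), $f=g^{-1}$ is a $*$-homomorphism, as required.

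\textbf{Main obstacle.} Everything except the rigidity lemma is formal manipulation; the substantive point is that a bijective $*$-homomorphism is automatically unitary, equivalently that the special symmetric Frobenius form on a finite-dimensional $C^*$-algebra is canonical, and the little normalisation computation ($\lambda_i=n_i$ from specialness) is where the genuine work sits. Alternatively one can bypass the lemma via $C^*$ structure theory: a bijective $*$-homomorphism is a $*$-isomorphism, hence by Artin–Wedderburn and Skolem–Noether a permutation of matrix blocks followed by inner automorphisms $x\mapsto uxu^\ast$, each of which visibly preserves $\mathrm{Tr}(x^\ast y)$ and therefore the inner product.
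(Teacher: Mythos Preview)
Your proof is correct, but it takes a genuinely different route from the paper's. The paper does not give a standalone proof of Proposition~\ref{prop:Frobeniusinvertible}; instead it obtains the statement as the one-dimensional specialisation of Theorem~\ref{thm:dual2} (see Remark~\ref{rem:lemma42fromlemma46}). In that theorem the relevant implication $2.\Rightarrow 4.$ (``bi-invertible quantum function $\Rightarrow$ quantum bijection'', which in dimension~$1$ collapses to ``invertible $*$-cohomomorphism $\Rightarrow$ $*$-homomorphism'') is proved by a purely diagrammatic manipulation using only the Frobenius axioms, specialness and symmetry --- no Wedderburn decomposition, no trace computation, no appeal to the classification of finite-dimensional $C^*$-algebras. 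By contrast, your argument goes through the structure theory of finite-dimensional $C^*$-algebras: you reduce to showing that a bijective $*$-homomorphism is unitary, and establish this via a rigidity lemma (the special symmetric dagger Frobenius inner product on $\bigoplus_i M_{n_i}$ is forced to be $\sum_i n_i\,\mathrm{Tr}_i(x^*y)$), or alternatively via Artin--Wedderburn and Skolem--Noether.

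What each buys: the paper's diagrammatic argument is the one that \emph{quantises} --- it is literally the proof of Theorem~\ref{thm:dual2} with the extra Hilbert-space wire deleted, and so it generalises directly to arbitrary dagger-dualisable quantum functions; it also works verbatim in any dagger compact category, not just $\Hilb$. Your argument is more conceptual from a $C^*$-algebraic standpoint and explains \emph{why} the result holds in $\Hilb$ (uniqueness of the canonical trace), but it is specific to finite-dimensional Hilbert spaces and does not suggest the higher-dimensional generalisation. One small caution: in your rigidity sketch, the step $\langle p_i a, p_j b\rangle = \langle a, p_i p_j b\rangle$ presupposes that left multiplication by a central idempotent is self-adjoint for the given inner product, which is exactly what you are trying to establish; this is easily patched (use that the derived involution on an \F\ fixes central idempotents, or fall back on your Skolem--Noether alternative), but as written the argument is slightly circular at that point.
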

\noindent
\begin{proof}
The proposition follows as a corollary from Theorem~\ref{thm:dual2}, as shown in Remark~\ref{rem:lemma42fromlemma46}.\looseness=-2
\end{proof}
\noindent
Now we show that equivalences in $\QSet$ are ordinary $*$-isomorphisms between Frobenius algebras.%
\begin{proposition} \label{prop:equivalence}Equivalences between quantum sets in $\QSet$ are $*$-isomorphisms of Frobenius algebras.
\end{proposition}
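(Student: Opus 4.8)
The plan is to reduce an equivalence in $\QSet$ to an ordinary $*$-isomorphism of Frobenius algebras by analysing what the $2$-categorical structure forces. Suppose $(H,P):A\to B$ is part of an equivalence in $\QSet$, with pseudo-inverse $(K,R):B\to A$ and invertible $2$-morphisms $(H,P)\circ(K,R)\cong (\C,\id_B)$ and $(K,R)\circ(H,P)\cong(\C,\id_A)$ (here the identity $1$-morphism on a quantum set is the canonical one-dimensional quantum function, i.e.\ the comultiplication/counit, cf.\ the discussion after Definition~\ref{def:quantumfunction}). Since composition of $1$-morphisms tensors the underlying Hilbert spaces, the invertible $2$-morphism $K\otimes H \cong \C$ (and $H\otimes K\cong\C$) forces $\dim(H)\cdot\dim(K)=1$, hence $\dim(H)=\dim(K)=1$. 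So $(H,P)$ is a one-dimensional quantum function, and by the remark following Definition~\ref{def:qfctdimension} it is just an ordinary $*$-cohomomorphism $f:A\to B$ of Frobenius algebras (after discarding the trivial Hilbert space factor).

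The next step is to show this $*$-cohomomorphism $f$ is invertible. Because $\QSet$ restricted to one-dimensional quantum functions is just the category of Frobenius algebras and $*$-cohomomorphisms (with only identity $2$-morphisms between distinct such maps, by the computation in Remark~\ref{rem:setsubset} adapted to the noncommutative case: a $2$-morphism $\C\to\C$ is a scalar, and the intertwiner condition pins it down), an equivalence here collapses to an honest isomorphism: the pseudo-inverse $(K,R)=(\C,g)$ for a $*$-cohomomorphism $g:B\to A$, and the invertible $2$-morphisms $gf\cong\id_A$, $fg\cong\id_B$ are (up to a nonzero scalar, which can be normalised away) genuine equalities $gf=\id_A$, $fg=\id_B$. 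Hence $f$ is an invertible $*$-cohomomorphism.

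Finally, by Proposition~\ref{prop:Frobeniusinvertible}, an invertible $*$-cohomomorphism between symmetric special dagger Frobenius algebras is also a $*$-homomorphism, hence a $*$-isomorphism (Definition~\ref{def:starhoms}); conversely any $*$-isomorphism is unitary and is evidently an equivalence in $\QSet$ (it is invertible already as a $1$-morphism), giving the "only if" direction is all that needs proof and the converse is immediate. The main obstacle I anticipate is the bookkeeping in the middle step: justifying carefully that the invertible $2$-morphisms witnessing the equivalence can be taken to be identities (dealing with the scalar ambiguity and confirming the hom-categories between one-dimensional quantum functions are discrete in the relevant sense), and then invoking Proposition~\ref{prop:Frobeniusinvertible} — which the paper itself flags as the genuinely nontrivial input, proved later via Theorem~\ref{thm:dual2}.
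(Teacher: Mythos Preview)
Your proposal is correct and follows essentially the same approach as the paper: both argue that the underlying Hilbert space of an equivalence must be one-dimensional (the paper phrases this via the forgetful $2$-functor $F:\QSet\to\BHilb$, you argue directly with $\dim(H)\cdot\dim(K)=1$), identify the resulting quantum function with an invertible $*$-cohomomorphism, and then invoke Proposition~\ref{prop:Frobeniusinvertible}. Your careful middle step about collapsing the invertible $2$-morphisms to equalities is sound but more detailed than the paper, which simply asserts the cohomomorphism is invertible.
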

\begin{proof} Equivalences are preserved by the 2-functor $F:\QSet\to \BHilb$ of equation~\eqref{eq:fibrefunctor}. In particular, the underlying Hilbert space $H$ of such an equivalence is invertible; there is another Hilbert space $H'$ such that $H\otimes H' \cong \mathbb{C}$. Therefore, $H\cong \mathbb{C}$ and the equivalence is a quantum function with one-dimensional Hilbert space and thus an invertible $*$-cohomomorphism. It follows from Proposition~\ref{prop:Frobeniusinvertible} that invertible $*$-cohomomorphisms are $*$-isomorphisms. 
\ignore{In particular, given equivalent quantum sets $A$ and $B$, we have quantum functions $(P,H):A\to B$ and $(P',H'):B\to A$ and invertible intertwiner $\alpha:(P',H')\circ (P,H)\to (\id_A,\C)$. The image of $\alpha$ under $F$ gives us an isomorphism $H\otimes H' \cong \mathbb{C}$.
Therefore, $\dim(H)=\dim(H')=1$ and so $P$ and $P'$ are functions with $P\circ P'$ and $P'\circ P$ isomorphic to the identity. Thus the equivalence is an ordinary classical isomorphism of Frobenius algebras (or bijection of sets). }
\end{proof}
\noindent

\subsection{Two characterisations of quantum bijections}
\label{sec:twocharacterisations}
Having established that equivalences in $\QSet$ are ordinary $*$-isomorphisms, we must look elsewhere for an appropriate definition of quantum bijection. 
There are two possible approaches. The first is to quantise the algebraic characterisation of $*$-isomorphism from Proposition~\ref{prop:Frobeniusinvertible}, following the procedure outlined in Section~\ref{sec:quantumset}. The other more structural approach is to define quantum bijections in terms of a genuinely categorical notion of weakened equivalence --- the notion of \textit{dagger adjunction} or \textit{dagger duality}. 
We will show that both these approaches lead to the same definition of quantum bijection.

We first quantise the algebraic characterisation~\eqref{eq:Frobeniusmonoidlemma}.

\begin{definition}\label{def:quantumbijection}
A \textit{quantum bijection} between quantum sets $A$ and $B$ is a quantum function $(H,P)$ between $A$ and $B$ fulfilling the following additional equations:%
\begin{calign}\label{eq:monadmap}
\begin{tz}[zx,every to/.style={out=up, in=down},scale=-1,xscale=-1]
\draw (0,0) to (0,2) to [out=45] (0.75,3);
\draw (0,2) to [out=135] (-0.75,3);
\draw[arrow data={0.2}{<}, arrow data={0.8}{<}] (1.75,0) to [looseness=0.9] node[zxnode=\zxwhite, pos=0.5] {$P$} (-1.75,2.5) to (-1.75,3);
\node[zxvertex=\zxwhite, zxdown] at (0,2){};
\end{tz}
=
\begin{tz}[zx,every to/.style={out=up, in=down},scale=-1,xscale=-1]
\draw (0,0) to (0,0.75) to [out=45] (0.75,1.75) to (0.75,3);
\draw (0,0.75) to [out=135] (-0.75,1.75) to (-0.75,3);
\draw[arrow data={0.2}{<}, arrow data={0.9}{<}] (1.75,0) to (1.75,0.75) to  [looseness=1.1, in looseness=0.9] node[zxnode=\zxwhite, pos=0.36] {$P$} node[zxnode=\zxwhite, pos=0.64] {$P$}(-1.75,3);
\node[zxvertex=\zxwhite, zxdown] at (0,0.75){};
\end{tz}
&
\begin{tz}[zx,every to/.style={out=up, in=down},scale=-1,xscale=-1]
\draw (0,0) to (0,2.25);
\draw[arrow data={0.2}{<}, arrow data={0.8}{<}] (1,0) to [looseness=0.9] node[zxnode=\zxwhite, pos=0.5] {$P$} (-1,2.5) to (-1,3);
\node[zxvertex=\zxwhite] at (0,2.25){};
\end{tz}
=
\begin{tz}[zx,every to/.style={out=up, in=down},scale=-1,xscale=-1]
\draw (0,0) to (0,0.75);
\draw[arrow data={0.2}{<}, arrow data={0.9}{<}] (1.,0) to (1.,0.75) to   (-1,3);
\node[zxvertex=\zxwhite] at (0,0.75){};
\end{tz}
\end{calign}
\end{definition}
\noindent
Alternatively, quantum bijections can be characterised as duals in $\QSet$.
\def\*F{{^*\! F}}%
\def\F*{F^*}%
\begin{definition}\label{def:duals} A $1$-morphism $F:A\to B$ in a 2-category has a \textit{right dual} $\F*:B\to A$ if there are 2-morphisms $\epsilon_R:F{\circ}\F*\to 1_{B}$ (counit) and $\eta_R:1_A\to \F*{\circ} F$ (unit) fulfilling the triangle equations\footnote{The triangle equations given here are strict but can straightforwardly be weakened to apply to weak 2-categories.}:
\begin{calign}\left(\epsilon_R \circ 1_{F}\right)\left(1_{F}\circ \eta_R\right) = 1_{F}
&
\left(1_{\F*}\circ \epsilon_R \right)\left(\eta_R\circ 1_{\F*}\right) = 1_{\F*}
\end{calign}
A $1$-morphism has a \textit{left dual} $\*F:B\to A$ if there are 2-morphisms $\epsilon_L:\*F {\circ}F\to 1_{A}$ and $\eta_L:1_B\to F {\circ}\*F$ fulfilling the triangle equations:
\begin{calign}\left(\epsilon_L \circ 1_{\*F}\right)\left(1_{\*F}\circ \eta_L\right) = 1_{\*F}
&
\left(1_F\circ \epsilon_L \right)\left(\eta_L\circ 1_{F}\right) = 1_{F}
\end{calign}
In a dagger 2-category, a $1$-morphism has a \textit{dual} $\overline{F}:B\to A$ if $\overline{F}$ is both a left and a right dual and the corresponding 2-morphisms are related as follows:
\begin{calign} \epsilon_L^\dagger = \eta_R & \eta_L^\dagger = \epsilon_R
\end{calign}
\end{definition}
\begin{remark}\label{rem:daggerdual}
In a dagger $2$-category, every left (or right) dual is automatically a dual in the sense of Definition~\ref{def:duals}. We refer to a $1$-morphism that has a dual as \textit{dualisable}.\end{remark}
\noindent
As discussed in Philosophy~\ref{phil:concretedagger}, $\QSet$ should be thought of not only as a 2-category but as a concrete dagger 2-category (see Definition~\ref{def:Hilb2category}), equipped with its forgetful $2$-functor $\QSet \to \BHilb$. In particular, the notion of a dual in $\QSet$ should be compatible with the underlying Hilbert space structure. This leads to the following definition.
\begin{definition} \label{def:daggerdualisable}In a concrete dagger 2-category $(\mathbb{B}, F)$, a $1$-morphism $S$ is \emph{dagger dualisable} if it has a dual $\conj{S}$ such that the underlying duality between $F(S)$ and $F(\conj{S})$ in $\Hilb$ is a dagger duality in the sense of Definition~\ref{def:daggerduality}.

We will refer to such a $1$-morphism $\conj{S}$ as a \emph{dagger dual} of $S$.
\end{definition}

\noindent
To characterise dagger-dualisable quantum functions, we adopt terminology from \cite{Vicary2012hq, Reutter2016} and make the following definition.
\begin{definition}\label{def:biinvertible}Let $A,B$ and $H$ be finite-dimensional Hilbert spaces. A linear map \\\mbox{$P:H\otimes A \to B \otimes H$} is \textit{bi-invertible} if it is invertible and if
\begin{equation}\label{eq:biinv}\begin{tz}[zx,xscale=0.6,yscale=1,scale=1]
\draw(0.25,-0.5) to (0.25,0) to [out=up, in=-135] (1,1);
\draw [arrow data={0.34}{>}]  (1,1) to [out=135, in=right]  (-0.3, 1.7) to [out=left, in=up] (-1.25,1) to (-1.25,-0.5);
\draw(1.75,2.5) to (1.75,2) to [out=down, in=45] (1,1);
\draw[arrow data={0.32}{<}] (1,1) to [out= -45, in= left]   (2.3,0.3) to [out=right, in=down] (3.25,1) to (3.25,2.5);
\node [zxnode=\zxwhite] at (1,1) {$P^{-1}$};
\end{tz}
~~\text{ is inverse to }~~
\begin{tz}[zx,xscale=0.6,yscale=-1,scale=1]
\draw(0.25,-0.5) to (0.25,0) to [out=up, in=-135] (1,1);
\draw [arrow data={0.34}{<}]  (1,1) to [out=135, in=right]  (-0.3, 1.7) to [out=left, in=up] (-1.25,1) to (-1.25,-0.5);
\draw(1.75,2.5) to (1.75,2) to [out=down, in=45] (1,1);
\draw[arrow data={0.32}{>}] (1,1) to [out= -45, in= left]   (2.3,0.3) to [out=right, in=down] (3.25,1) to (3.25,2.5);
\node [zxnode=\zxwhite] at (1,1) {$P$};
\end{tz}
\end{equation}
\end{definition}
\noindent
\ignore{Using the graphical calculus, it can be shown that \eqref{eq:biinv} can equivalently be expressed as follows:
\begin{equation}\label{eq:biinv2}\begin{tz}[zx,xscale=0.6,yscale=1]
\draw[arrow data={0.5}{>}] (0.25,-0.5) to (0.25,0) to [out=up, in=-135] (1,1);
\draw (1,1) to [out=135, in=right] node[zxvertex=\zxwhite, pos=1]{} (-0.3, 1.7) to [out=left, in=up] (-1.25,1) to (-1.25,-0.5);
\draw[arrow data={0.5}{<}] (1.75,2.5) to (1.75,2) to [out=down, in=45] (1,1);
\draw (1,1) to [out= -45, in= left] node[zxvertex=\zxwhite, pos=1] {} (2.3,0.3) to [out=right, in=down] (3.25,1) to (3.25,2.5);
\node [zxnode=\zxwhite] at (1,1) {$P^{-1}$};
\end{tz} 
~~\text{ is inverse to }~~
\begin{tz}[zx,xscale=0.6,yscale=-1]
\draw[arrow data={0.5}{<}] (0.25,-0.5) to (0.25,0) to [out=up, in=-135] (1,1);
\draw (1,1) to [out=135, in=right] node[zxvertex=\zxwhite, pos=1]{} (-0.3, 1.7) to [out=left, in=up] (-1.25,1) to (-1.25,-0.5);
\draw[arrow data={0.5}{>}] (1.75,2.5) to (1.75,2) to [out=down, in=45] (1,1);
\draw (1,1) to [out= -45, in= left] node[zxvertex=\zxwhite, pos=1] {} (2.3,0.3) to [out=right, in=down] (3.25,1) to (3.25,2.5);
\node [zxnode=\zxwhite] at (1,1) {$P$};
\end{tz} 
\end{equation}}

\noindent
We now show that quantum bijections as in Definition~\ref{def:quantumbijection} are precisely dagger-dualisable quantum functions.
\begin{theorem}\label{thm:dual2}For a quantum function $(H,P)$ between quantum sets $A$ and $B$, the following are equivalent:
\begin{enumerate}
\item   $(H,P)$ is dagger-dualisable in $\QSet$.
\item The underlying linear map $P:H\otimes A \to B \otimes H$ is bi-invertible.
\item The underlying linear map $P:H\otimes A \to B\otimes H$ is unitary.
\item  $(H,P)$ is a quantum bijection as in Definition~\ref{def:quantumbijection}.
\end{enumerate}
If these properties hold, then the quantum function $(H^*,\overline{P}): B\to A$ whose underlying linear map $\overline{P}:H^* \otimes B \to A \otimes H^*$ is defined as follows, is a dagger dual of $(H,P)$:
\begin{equation}\label{eq:daggerdual}
\begin{tz}[zx,every to/.style={out=up, in=down},xscale=0.8]
\draw [arrow data={0.2}{<},arrow data={0.8}{<}]  (0,0) to (2.25,3);
\draw (2.25,0) to node[zxnode=\zxwhite, pos=0.5] {$\overline{P}$} (0,3);
\end{tz}
~~~:=~~~
\begin{tz}[zx,xscale=0.6,xscale=1]
\draw(0.25,-0.5) to (0.25,0) to [out=up, in=-135] (1,1);
\draw [arrow data={0.34}{>}]  (1,1) to [out=135, in=right]  (-0.3, 1.7) to [out=left, in=up] (-1.25,1) to (-1.25,-0.5);
\draw(1.75,2.5) to (1.75,2) to [out=down, in=45] (1,1);
\draw[arrow data={0.32}{<}] (1,1) to [out= -45, in= left]   (2.3,0.3) to [out=right, in=down] (3.25,1) to (3.25,2.5);
\node [zxnode=\zxwhite] at (1,1) {$P^\dagger$};
\end{tz} 
~~~=~~~
\begin{tz}[zx,xscale=0.6,xscale=1]
\draw[arrow data={0.68}{>}]  (4.75,2.5) to (4.75, 0.5) to [out=down, in=right] (2.65,-0.6) to [out=left, in=down] (0.55,0.5);
\draw (0.55,0.5) to [out=up, in=-135] (1,1);
\draw  (1,1) to [out=135, in=right]  (-0.3, 1.7) to [out=left, in=up]  node[zxvertex=\zxwhite, pos=0] {}(-1.25,1) to (-1.25,-0.5);
\draw[string, arrow data={0.68}{<}] (-2.75, -0.5) to (-2.75,1.5) to [out=up, in=left] (-0.65, 2.6) to [out=right, in=up] (1.45,1.5);
\draw  (1.45,1.5) to [out=down, in=45] (1,1);
\draw (1,1) to [out= -45, in= left]  node[zxvertex=\zxwhite, pos=1] {}  (2.3,0.3) to [out=right, in=down] (3.25,1) to (3.25,2.5);
\node [zxnode=\zxwhite] at (1,1) {$P$};
\end{tz} 
\end{equation}
The units and counits witnessing the duality between $(H,P)$ and $(H^*, \conj{P})$ are given by the standard cups and caps in $\Hilb$~\eqref{eq:cupscapsHilb}. Moreover, $(H^*, \conj{P})$ is the unique dagger dual with these units and counits.

\end{theorem}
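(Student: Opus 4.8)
The plan is to prove the cycle of implications $(1)\Rightarrow(2)\Rightarrow(3)\Rightarrow(4)\Rightarrow(1)$, extracting the explicit formula \eqref{eq:daggerdual} and its uniqueness from the arguments for $(1)\Rightarrow(2)$ and $(4)\Rightarrow(1)$. Throughout I use that, under the forgetful $2$-functor $F$ of \eqref{eq:fibrefunctor}, composition of $1$-morphisms in $\QSet$ is tensor product of underlying Hilbert spaces and $1_A$ has underlying space $\mathbb{C}$; and that, by uniqueness of dagger dualities in $\Hilb$ up to unitary (cited after Definition~\ref{def:daggerduality}), whenever $(H,P)$ has a dagger dual we may assume its underlying space is $H^*$ with the standard cups and caps \eqref{eq:cupscapsHilb} as underlying duality data.

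For $(1)\Rightarrow(2)$: write the dagger dual as $(H^*,\conj P)$ with standard cups/caps. The $2$-morphisms $\eta_R\colon 1_A\to\conj P\circ P$ and $\epsilon_R\colon P\circ\conj P\to 1_B$ then have underlying maps the standard unit $\mathbb{C}\to H^*\otimes H$ and counit $H\otimes H^*\to\mathbb{C}$. Unfolding the intertwiner conditions for $\eta_R$ and $\epsilon_R$, together with the dagger relations $\epsilon_L^\dagger=\eta_R$, $\eta_L^\dagger=\epsilon_R$ and the fact that $\conj P$ is itself a quantum function (so satisfies the pull-throughs \eqref{eq:quantumfunction}), pins $\conj P$ down to be the rotation of $P^\dagger$ in \eqref{eq:daggerdual} and shows that the wire-bent maps of \eqref{eq:biinv} are mutually inverse; hence $P$ is bi-invertible. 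This argument also yields the uniqueness claim: given the standard units/counits, the triangle identities determine $\conj P$ from $P$.

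For $(2)\Rightarrow(3)$: bi-invertibility gives $P$ invertible, and combining \eqref{eq:biinv} — which says a wire-bent copy of $P^{-1}$ inverts a wire-bent copy of $P$ — with the third quantum function axiom in \eqref{eq:quantumfunction} — which expresses $P^\dagger$ as the corresponding wire-bent copy of $P$ — and with the snake equations \eqref{eq:snake}, one deduces $P^{-1}=P^\dagger$, i.e.\ $P$ is unitary. For $(3)\Rightarrow(4)$: substitute $P^\dagger=P^{-1}$ into the two extra axioms \eqref{eq:monadmap}; using the comultiplication/counit pull-throughs \eqref{eq:quantumfunction} that $P$ already satisfies together with the snake equations, each reduces to an identity provable from unitarity. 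Finally $(4)\Rightarrow(1)$: take $\conj P$ to be the map \eqref{eq:daggerdual} and $\eta_R,\epsilon_R,\eta_L,\epsilon_L$ the standard cups and caps on the $H$-wire. One checks: that $(H^*,\conj P)$ is a quantum function $B\to A$, by rotating the five axioms \eqref{eq:quantumfunction}, \eqref{eq:monadmap} of $P$; that the cups/caps are genuine $2$-morphisms of the relevant composites, which is exactly the content of the quantum bijection axioms (equivalently of \eqref{eq:biinv}); that the triangle equations hold, since under $F$ they become the snake equations \eqref{eq:snake}; and that the dagger relations hold, since the standard cups/caps are dagger dual \eqref{eq:daggerduality}. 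This last step is essentially the ``quantisation'' of Proposition~\ref{prop:Frobeniusinvertible}, and specialises to it when $H=\mathbb{C}$ (cf.\ Remark~\ref{rem:lemma42fromlemma46}).

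The main obstacles are the two diagram chases: identifying $\conj P$ with \eqref{eq:daggerdual} and reading off bi-invertibility in $(1)\Rightarrow(2)$, and verifying in $(4)\Rightarrow(1)$ that the bare cups and caps on the $H$-wire intertwine the composite quantum functions. Both demand careful bookkeeping of which axiom is applied where, since one must continually trade the Frobenius self-duality of $A$ and $B$ \eqref{eq:cupcapfrob} against the ordinary duality of $H$ \eqref{eq:cupscapsHilb}; the multiplication and unit pull-throughs \eqref{eq:monadmap} are precisely what make these two kinds of wire-bending compatible, which is why they are equivalent to dagger-dualisability.
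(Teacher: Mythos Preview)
Your cycle $(1)\Rightarrow(2)\Rightarrow(3)\Rightarrow(4)\Rightarrow(1)$ differs from the paper's $(1)\Rightarrow(2)\Rightarrow(4)\Rightarrow(3)\Rightarrow(1)$, and the reordering is not innocent: your step $(2)\Rightarrow(3)$ has a gap.

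You argue that bi-invertibility \eqref{eq:biinv} says a wire-bent copy of $P^{-1}$ inverts a wire-bent copy of $P$, while the third quantum function axiom in \eqref{eq:quantumfunction} expresses $P^\dagger$ as ``the corresponding wire-bent copy of $P$'', and that combining these with the snake equations yields $P^{-1}=P^\dagger$. But these are \emph{different} wire-bendings. In \eqref{eq:biinv} one bends the $H$-wire using the Hilbert-space cups and caps \eqref{eq:cupscapsHilb}; in the third axiom of \eqref{eq:quantumfunction} one bends the $A$- and $B$-wires using the Frobenius cups and caps \eqref{eq:cupcapfrob}. There is no reason for the map produced by one rotation to coincide with, or be inverse to, the map produced by the other, and you supply no mechanism for passing between them. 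The snake equations alone do not bridge this: they live separately on the $H$-duality and on the Frobenius self-dualities of $A$ and $B$.

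This is precisely the obstacle the paper confronts by routing through $(4)$. Its $(2)\Rightarrow(4)$ argument is a genuine diagram chase: it first uses invertibility, the Frobenius law and specialness to derive the unit part of \eqref{eq:monadmap}, and then---in a separate manoeuvre that invokes the \emph{bi}-invertibility hypothesis at a specific point---deduces the multiplication part. Only once both monad-map axioms are in hand does the paper compute $P^\dagger P=1$ and $PP^\dagger=1$ in $(4)\Rightarrow(3)$, each calculation using one axiom from \eqref{eq:quantumfunction} and one from \eqref{eq:monadmap}. Your sketch skips all of this.

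Your other steps are broadly reasonable. In particular $(3)\Rightarrow(4)$ does work cleanly (taking adjoints of the comultiplication/counit axioms and using $P^\dagger=P^{-1}$), and your $(4)\Rightarrow(1)$ is close in spirit to the paper's $(3)\Rightarrow(1)$, though the paper exploits unitarity directly to verify that $\conj P$ is a quantum function rather than rotating the five axioms. If you want to keep your overall shape, you must either produce a genuine argument for $(2)\Rightarrow(3)$ that explicitly reconciles the $H$-wire and Frobenius-wire rotations, or reroute through $(4)$ as the paper does.
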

\begin{proof}
We prove the implications $1.\Rightarrow 2.\Rightarrow 4.\Rightarrow 3.\Rightarrow 1.$  

$1. \Rightarrow 2.$ Suppose that $P$ has a dagger dual $Q'$. Since a dagger duality in $\QSet$ induces a dagger duality of the underlying Hilbert spaces and since dagger dualities are unique up to unitary isomorphism, the underlying Hilbert space of $Q'$ is unitarily isomorphic to the dual space $H^*$. Conjugating $Q'$ by this unitary isomorphism leads to a quantum function $Q$, dagger dual to $P$, whose underlying Hilbert space is exactly $H^*$ and such that the underlying linear maps of the unit and counit of the adjunction are the `standard' cups and caps in $\Hilb$~\eqref{eq:cupscapsHilb}. In other words, a quantum function $P:H\otimes A \to B \otimes H$ has a dagger dual, if and only if there is a quantum function $Q:H^* \otimes B \to A \otimes H^*$ such that the following holds:
\def\scl{0.75}
\begin{calign}\label{eq:rightdual}
\begin{tz}[zx,scale=\scl,xscale=-1]
\clip (-2.05,-0.05) rectangle (4.05,4.05);
\draw[arrow data={0.1}{>},arrow data={0.499}{>}, arrow data={0.9}{>}] (0,0) to [out=up, in=up, looseness=6.] (2,0);
\draw (-2,0) to [out=up, in=down]  node[zxnode=\zxwhite, pos=0.4] {$P$} node[zxnode=\zxwhite, pos=0.6] {$Q$}(4,4);
\end{tz}
=
\begin{tz}[zx,scale=\scl,xscale=-1]
\clip (-2.05,-0.05) rectangle (4.05,4.05);
\draw[arrow data={0.2}{>}, arrow data={0.8}{>}] (0,0) to [out=up, in=up, looseness=2.] (2,0);
\draw (-2,0) to [out=up, in=down]  (4,4);
\end{tz}
&
\begin{tz}[zx,scale=-1,scale=\scl,xscale=-1]
\clip (-2.05,-0.05) rectangle (4.05,4.05);
\draw[arrow data={0.1}{<},arrow data={0.499}{<}, arrow data={0.9}{<}] (0,0) to [out=up, in=up, looseness=6.] (2,0);
\draw (-2,0) to [out=up, in=down]  node[zxnode=\zxwhite, pos=0.4] {$P$} node[zxnode=\zxwhite, pos=0.6] {$Q$}(4,4);
\end{tz}
=
\begin{tz}[zx,scale=-1,scale=\scl,xscale=-1]
\clip (-2.05,-0.05) rectangle (4.05,4.05);
\draw[arrow data={0.2}{<}, arrow data={0.8}{<}] (0,0) to [out=up, in=up, looseness=2.] (2,0);
\draw (-2,0) to [out=up, in=down]  (4,4);
\end{tz}
\end{calign}
\begin{calign}\label{eq:leftdual}
\begin{tz}[zx,master,scale=-1,scale=\scl,xscale=-1]
\clip (-2.05,-0.05) rectangle (4.05,4.05);
\draw[arrow data={0.1}{>},arrow data={0.499}{>}, arrow data={0.9}{>}] (0,0) to [out=up, in=up, looseness=6.] (2,0);
\draw (-2,0) to [out=up, in=down]  node[zxnode=\zxwhite, pos=0.4] {$Q$} node[zxnode=\zxwhite, pos=0.6] {$P$}(4,4);
\end{tz}
=
\begin{tz}[zx,scale=-1,scale=\scl,xscale=-1]
\clip (-2.05,-0.05) rectangle (4.05,4.05);
\draw[arrow data={0.2}{>}, arrow data={0.8}{>}] (0,0) to [out=up, in=up, looseness=2.] (2,0);
\draw (-2,0) to [out=up, in=down]  (4,4);
\end{tz}
&
\begin{tz}[zx,scale=1,scale=\scl,xscale=-1]
\clip (-2.05,-0.05) rectangle (4.05,4.05);
\draw[arrow data={0.1}{<},arrow data={0.499}{<}, arrow data={0.9}{<}] (0,0) to [out=up, in=up, looseness=6.] (2,0);
\draw (-2,0) to [out=up, in=down]  node[zxnode=\zxwhite, pos=0.4] {$Q$} node[zxnode=\zxwhite, pos=0.6] {$P$}(4,4);
\end{tz}
=
\begin{tz}[zx,scale=1,scale=\scl,xscale=-1]
\clip (-2.05,-0.05) rectangle (4.05,4.05);
\draw[arrow data={0.2}{<}, arrow data={0.8}{<}] (0,0) to [out=up, in=up, looseness=2.] (2,0);
\draw (-2,0) to [out=up, in=down]  (4,4);
\end{tz}
\end{calign}
Equations~\eqref{eq:rightdual} and~\eqref{eq:leftdual} can equivalently be expressed as stating the following:
\begin{calign}\nonumber 
\begin{tz}[zx,xscale=0.6,yscale=-1,scale=0.8]
\draw(0.25,-0.5) to (0.25,0) to [out=up, in=-135] (1,1);
\draw [arrow data={0.34}{>}]  (1,1) to [out=135, in=right]  (-0.3, 1.7) to [out=left, in=up] (-1.25,1) to (-1.25,-0.5);
\draw(1.75,2.5) to (1.75,2) to [out=down, in=45] (1,1);
\draw[arrow data={0.32}{<}] (1,1) to [out= -45, in= left]   (2.3,0.3) to [out=right, in=down] (3.25,1) to (3.25,2.5);
\node [zxnode=\zxwhite] at (1,1) {$Q$};
\end{tz}
\text{ is inverse to }
\begin{tz}[zx,every to/.style={out=up, in=down},xscale=0.8,scale=0.8]
\draw [arrow data={0.2}{>},arrow data={0.8}{>}]  (0,0) to (2.25,3);
\draw (2.25,0) to node[zxnode=\zxwhite, pos=0.5] {$P$} (0,3);
\end{tz}
&
\begin{tz}[zx,every to/.style={out=up, in=down},xscale=0.8,scale=0.8]
\draw [arrow data={0.2}{<},arrow data={0.8}{<}]  (0,0) to (2.25,3);
\draw (2.25,0) to node[zxnode=\zxwhite, pos=0.5] {$Q$} (0,3);
\end{tz}
\text{ is inverse to }
\begin{tz}[zx,xscale=0.6,yscale=-1,scale=0.8]
\draw(0.25,-0.5) to (0.25,0) to [out=up, in=-135] (1,1);
\draw [arrow data={0.34}{<}]  (1,1) to [out=135, in=right]  (-0.3, 1.7) to [out=left, in=up] (-1.25,1) to (-1.25,-0.5);
\draw(1.75,2.5) to (1.75,2) to [out=down, in=45] (1,1);
\draw[arrow data={0.32}{>}] (1,1) to [out= -45, in= left]   (2.3,0.3) to [out=right, in=down] (3.25,1) to (3.25,2.5);
\node [zxnode=\zxwhite] at (1,1) {$P$};
\end{tz}
\end{calign}
\noindent
We have therefore shown that $P$ is bi-invertible.

$2.\Rightarrow 4.$ Suppose that $P:H\otimes A \to B \otimes H$ is a bi-invertible quantum function. We now demonstrate that it fulfills the second equation in~\eqref{eq:monadmap}:

\[
\def\l{2.25}
\def\h{1}
\def\top{5.5}
\def\bottom{-2}
\def\bottomright{2}
\def\topleft{-0.5}
\begin{tz}[zx,every to/.style={out=up, in=down},xscale=-1]
\draw[string] (0.75,0.25) to node[zxnode=\zxwhite, pos=0.26] {$P$} (0.75,\top);
\node[zxvertex=\zxwhite] at (0.75,0.25){};
\draw[string,arrow data={0.2}{>}, arrow data={0.8}{>}] (\bottomright, \bottom) to (\topleft, \top);
\end{tz}
~=~~
\def\l{2.25}
\def\h{1}
\def\top{5.5}
\def\bottom{-2}
\def\bottomright{2}
\def\topleft{-0.5}
\begin{tz}[zx,every to/.style={out=up, in=down},xscale=-1]
\node at (0,0) (C){};
\draw (C.center) to [out=down, in=down,looseness=2] node[zxvertex=\zxwhite, pos=0.5] {} (-\l, 0) to [out=up, in=down] (-2.25,\h);
\draw[string] (C.center) to [out=45, in=down] ([xshift=0.75cm, yshift=1cm]C.center) to node[zxnode=\zxwhite, pos=0.16] {$P$} (0.75,\top);
\draw[string] (C.center) to [out=135, in=down]  ([xshift=-0.75cm, yshift=1cm]C.center) to (-0.75,\h) to [out=up, in=down] (-2.25, \h+1) to [out=up, in=up] node[zxvertex=\zxwhite, pos=0.5]{} (-0.75,\h+1) to [out=down, in=up] (-2.25,\h);
\node[zxvertex=\zxwhite,zxup] at (C.center){};
\draw[string,arrow data={0.2}{>}, arrow data={0.8}{>}] (\bottomright, \bottom) to (\topleft, \top);
\end{tz}
~~=~
\def\l{2.25}
\def\h{3.5}
\def\top{5.5}
\def\bottom{-2}
\def\bottomright{2}
\def\topleft{-0.2}
\def \middle{2.3}
\begin{tz}[zx,every to/.style={out=up, in=down},xscale=-1]
\node at (0,0) (C){};
\draw (C.center) to [out=down, in=down,looseness=2] node[zxvertex=\zxwhite, pos=0.5] {} (-\l, 0) to [out=up, in=down] (-2.25,\h);
\draw[string] (C.center) to [out=45, in=down] ([xshift=0.75cm, yshift=1cm]C.center) to node[zxnode=\zxwhite, pos=0.0] {$P$} (0.75,\top);
\draw[string] (C.center) to [out=135, in=down]  ([xshift=-0.75cm, yshift=1cm]C.center) to node[zxnode=\zxwhite, pos=0.18] {$P$}node[zxnode=\zxwhite, pos=0.85] {$P^{-1}$}(-0.75,\h) to [out=up, in=down] (-2.25, \h+1) to [out=up, in=up] node[zxvertex=\zxwhite, pos=0.5]{} (-0.75,\h+1) to [out=down, in=up] (-2.25,\h);
\node[zxvertex=\zxwhite,zxup] at (C.center){};
\draw[string,arrow data={0.15}{>}, arrow data={0.64}{>},arrow data={0.9}{>}] (\bottomright, \bottom) to (\bottomright, -0.1) to (-1.5, \middle) to  (\topleft, \h+0.5) to (\topleft, \top);
\end{tz}
~~=~
\def\l{2.25}
\def\h{1.5}
\def\top{5.5}
\def\bottom{-2}
\def\bottomright{2}
\def\topleft{-0.2}
\def \middle{0.2}
\begin{tz}[zx,every to/.style={out=up, in=down},xscale=-1]
\node at (0,0) (C){};
\draw (C.center) to [out=down, in=down,looseness=2] node[zxnode=\zxwhite, pos=0.15] {$P$} node[zxvertex=\zxwhite, pos=0.5] {} (-\l, 0) to [out=up, in=down] (-2.25,\h);
\draw[string] (C.center) to [out=45, in=down] ([xshift=0.75cm, yshift=1cm]C.center) to  (0.75,\top);
\draw[string] (C.center) to [out=135, in=down]  ([xshift=-0.75cm, yshift=1cm]C.center) to node[zxnode=\zxwhite, pos=-0.2] {$P^{-1}$}(-0.75,\h) to [out=up, in=down] (-2.25, \h+1) to [out=up, in=up] node[zxvertex=\zxwhite, pos=0.5]{} (-0.75,\h+1) to [out=down, in=up] (-2.25,\h);
\node[zxvertex=\zxwhite,zxup] at (C.center){};
\draw[string,arrow data={0.15}{>}, arrow data={0.4}{>},arrow data={0.9}{>}] (\bottomright, \bottom) to  (-1., \middle) to  (\topleft, \h+0.5) to (\topleft, \top);
\end{tz}
~~=~
\def\l{2.25}
\def\h{3.5}
\def\top{5.5}
\def\bottom{-2}
\def\bottomright{2}
\def\topleft{-0.2}
\def \middle{2.3}
\def\xgap{-0.9cm}
\def\loose{3}
\begin{tz}[zx,every to/.style={out=up, in=down},xscale=-1]
\node at (0,0) (C){};
\draw (C.center) to [out=down, in=down,looseness=2] node[zxvertex=\zxwhite, pos=0.5] {} (-\l, 0) to [out=up, in=down] (-2.25,\h);
\draw[string] (C.center) to [out=45, in=down] ([xshift=0.75cm, yshift=1cm]C.center) to (0.75,\top);
\draw[string] (C.center) to [out=135, in=down]  ([xshift=-0.75cm, yshift=1cm]C.center) to node[zxnode=\zxwhite, pos=0.1](b) {$P^{-1}$}node[zxnode=\zxwhite, pos=0.7](t) {$P$}(-0.75,\h) to [out=up, in=down] (-2.25, \h+1) to [out=up, in=up] node[zxvertex=\zxwhite, pos=0.5]{} (-0.75,\h+1) to [out=down, in=up] (-2.25,\h);
\draw[arrow data={0.5}{>}][string] (t.center) to [looseness=\loose,out=135, in=up]([xshift=\xgap]t.center) to ([xshift=\xgap]b.center) to [looseness=\loose,out=down, in=-135] (b.center);
\node[zxvertex=\zxwhite,zxup] at (C.center){};
\draw[string,arrow data={0.25}{>}] (\bottomright, \bottom) to [in=-45] (t.center);
\draw[string,arrow data = {0.7}{>}] (b.center) to [out=45]  (\topleft, \top);
\end{tz}
\]

\def\l{2.25}
\def\h{3}
\def\top{5.5}
\def\bottom{-2}
\def\bottomright{2}
\def\topleft{-0.5}
\def\loop{-0.5cm}
\def\looploose{2.5}
\[=~\begin{tz}[zx,every to/.style={out=up, in=down},xscale=-1]
\node at (0,0) (C){};
\draw (C.center) to [out=down, in=down,looseness=2] node[zxvertex=\zxwhite, pos=0.5] {} (-\l, 0) to [out=up, in=down] (-2.25,\h);
\draw[string] (C.center) to [out=45, in=down] ([xshift=0.75cm, yshift=1cm]C.center) to (0.75,\top);
\draw[string] (C.center) to [out=135, in=down]  ([xshift=-0.75cm, yshift=1cm]C.center) to (-0.75,\h) to [out=up, in=down] (-2.25, \h+1) to [out=up, in=up] node[zxvertex=\zxwhite, pos=0.5]{} (-0.75,\h+1) to [out=down, in=up] (-2.25,\h);
\node[zxvertex=\zxwhite,zxup] at (C.center){};
\coordinate (loopcenter) at (0.05,2);
\draw[string,arrow data={0.2}{>}, arrow data={0.8}{>}] (\bottomright, \bottom) to  [in=-45] (loopcenter) to [looseness=\looploose,out=135, in=up] ([xshift=\loop]loopcenter) to [looseness=\looploose,out=down, in=-135] (loopcenter) to [out=45] (\topleft, \top);
\end{tz}
~~=~
\def\l{2.25}
\def\h{1}
\def\top{5.5}
\def\bottom{-2}
\def\bottomright{2}
\def\topleft{-0.5}
\begin{tz}[zx,every to/.style={out=up, in=down},xscale=-1]
\draw[string] (0.75,4) to (0.75,\top);
\node[zxvertex=\zxwhite] at (0.75,4){};
\draw[string,arrow data={0.2}{>}, arrow data={0.8}{>}] (\bottomright, \bottom) to (\topleft, \top);
\end{tz}
\]
Here, the first equation follows from the axioms of symmetric special Frobenius algebras, the second equation is invertibility of $P$, the third equation uses the fact that $P$ is a quantum function, the fourth equation follows from the graphical calculus moving $P$ along the right wire onto the top of $P^{-1}$. Finally, in the first equation of the second line, we use bi-invertibility of $P$. 

We now show that if $P$ is invertible, then the second equation in \eqref{eq:monadmap} implies the first equation. 
\[
\def\scl{0.85}
\begin{tz}[zx,scale=-1,scale=\scl,xscale=-1]
\clip (-3.05,-0.2) rectangle (1.8,4.2);
\draw[arrow data={0.1}{<},arrow data={0.5}{<}, arrow data={0.95}{<}] (1.75,0) to [out=up,in=-45] (1,1) to (-1,3) to [out=135, in=down] (-1.75,4);
\draw (0.25,0) to [out=up, in=-135] (1,1) to [in looseness=2.5,out=45, in=65] node[pos=0.72](c){} (-1,3) to [out=-135, in=right] (-2,2) to [out=left, in=down] (-3,4) ;
\node[zxnode=\zxwhite] at (1,1) {$P$};
\node[zxnode=\zxwhite] at (-1,3) {$P$};
\node[zxvertex=\zxwhite] at (c.center){};
\node[zxvertex=\zxwhite] at (-2,2){};
\end{tz}
~~\superequals{\eqref{eq:quantumfunction}~\&~\eqref{eq:cupcapfrob}}~~~~
\begin{tz}[zx,scale=-1,scale=\scl,xscale=-1]
\draw  (1.75,0) to (1.75,1) to[out=up, in=-45] (1,2) to [out=-135, in=up] (0.25,1) to [looseness=1.5,out=down,  in=down] node[pos=0.5] (b){}(-1.25,1) to (-1.25,4);
\draw (1,2) to [looseness=0.4,out=up, in=down] (1.75,3.25);
\draw[arrow data={0.4}{<},arrow data={0.9}{<}] (3.5,0) to [looseness=0.5,out=up,in=down]node[zxnode=\zxwhite, pos=0.64] {$P$} (0.5,4);
\node[zxvertex=\zxwhite] at (b.center){};
\node[zxvertex=\zxwhite,zxup] at (1,2){};
\node[zxvertex=\zxwhite] at (1.75,3.25){};
\end{tz}
~~\superequalseq{eq:monadmap}~~
\begin{tz}[zx,scale=-1,scale=\scl,xscale=-1]
\draw (0,0) to (0, 2) to [looseness=1.5,out=up, in=up] node[zxvertex=\zxwhite,pos=0.5]{} (-1.5,2) to [looseness=1.5,out=down, in=down]node[zxvertex=\zxwhite,pos=0.5]{} (-3,2) to (-3,4);
\draw[string,arrow data={0.5}{<}] (1.5,0) to + (0,4);
\end{tz}
~~=~~
\begin{tz}[zx,xscale=1,scale=\scl]
\draw (0,0) to + (0,4);
\draw[string,arrow data={0.5}{>}] (1.5,0) to + (0,4);
\end{tz}
\hspace{0.5cm} \Rightarrow \hspace{0.5cm}
\begin{tz}[zx,every to/.style={out=up, in=down},xscale=0.8,scale=1]
\draw [arrow data={0.2}{>},arrow data={0.8}{>}]  (0,0) to (2.25,3);
\draw (2.25,0) to node[zxnode=\zxwhite, pos=0.5] {$P^{-1}$} (0,3);
\end{tz}
~~=~~
\begin{tz}[zx,xscale=-0.6,yscale=-1,scale=\scl]
\draw[arrow data={0.5}{<}] (0.25,-0.5) to (0.25,0) to [out=up, in=-135] (1,1);
\draw (1,1) to [out=135, in=right] node[zxvertex=\zxwhite, pos=1]{} (-0.3, 1.7) to [out=left, in=up] (-1.25,1) to (-1.25,-0.5);
\draw[arrow data={0.5}{>}] (1.75,2.5) to (1.75,2) to [out=down, in=45] (1,1);
\draw (1,1) to [out= -45, in= left] node[zxvertex=\zxwhite, pos=1] {} (2.3,0.3) to [out=right, in=down] (3.25,1) to (3.25,2.5);
\node [zxnode=\zxwhite] at (1,1) {$P$};
\end{tz} 
\]
Since $P^{-1}$ is also a left inverse of $P$, this implies the following:
\def\scl{0.7}
\begin{equation}\label{eq:proofbijection}\begin{tz}[zx,scale=\scl,xscale=1]
\clip (-2.05,-0.05) rectangle (4.05,4.05);
\draw(0,0) to [out=up, in=up, looseness=6.] node[zxvertex=\zxwhite,pos=0.5] {} (2,0);
\draw [string,arrow data={0.1}{>},arrow data={0.9}{>}] (-2,0) to [out=up, in=down]  node[zxnode=\zxwhite, pos=0.4] {$P$} node[zxnode=\zxwhite, pos=0.6] {$P$}(4,4);
\end{tz}
=
\begin{tz}[zx,scale=\scl,xscale=1]
\clip (-2.05,-0.05) rectangle (4.05,4.05);
\draw (0,0) to [out=up, in=up, looseness=2.] node[zxvertex=\zxwhite, pos=0.5] {}(2,0);
\draw[arrow data={0.2}{>}, arrow data={0.8}{>}] (-2,0) to [out=up, in=down]  (4,4);
\end{tz}
\end{equation}
We can then prove the first equation as follows:
\[
\begin{tz}[zx,every to/.style={out=up, in=down},scale=-1,xscale=-1]
\draw (0,0) to (0,2) to [out=45] (0.75,3);
\draw (0,2) to [out=135] (-0.75,3);
\draw[arrow data={0.2}{<}, arrow data={0.8}{<}] (1.75,0) to [looseness=0.9] node[zxnode=\zxwhite, pos=0.5] {$P$} (-1.75,2.5) to (-1.75,3);
\node[zxvertex=\zxwhite, zxdown] at (0,2){};
\end{tz}
~~\superequalseq{eq:Frobenius}~~
\begin{tz}[zx,every to/.style={out=up, in=down},scale=-1,xscale=-1]
\draw (0,0) to (0,1.7) to [in=-135] (0.5,2.3) to (0.5,3);
\draw (0.5,2.3) to [out=-45, in=up] (1,1.7) to [out=down, in=down, looseness=1.5] node[zxvertex=\zxwhite, pos=0.5] {}(1.75,1.7) to (1.75,3);
\draw[string,arrow data={0.2}{<}, arrow data={0.8}{<}] (1.75,0) to [looseness=0.9] node[zxnode=\zxwhite, pos=0.5] {$P$} (-1.75,2.5) to (-1.75,3);
\node[zxvertex=\zxwhite, zxup] at (0.5,2.3){};
\end{tz}
~~~~~\superequals{\eqref{eq:quantumfunction}~\&~\eqref{eq:proofbijection}}~~
\begin{tz}[zx,every to/.style={out=up, in=down},scale=-1,xscale=-1]
\draw (-1,0) to (-1,0.7) to [in=-135] node[pos=1] (a) {}(-0.5,1.3) to (-0.5,3);
\draw[string] (-0.5,1.3) to [out=-45, in=up] (0,0.7) to [out=down, in=down, looseness=1.5] node[zxvertex=\zxwhite, pos=0.5] {}(0.75,0.7) to (0.75,3);
\draw[string,arrow data={0.2}{<}, arrow data={0.9}{<}] (1.75,0) to (1.75,0.75) to  [looseness=1.1, in looseness=0.9] node[zxnode=\zxwhite, pos=0.34] {$P$} node[zxnode=\zxwhite, pos=0.6] {$P$}(-1.75,3);
\node[zxvertex=\zxwhite,zxup] at (a.center){};
\end{tz}
~~\superequalseq{eq:Frobenius}~~
\begin{tz}[zx,every to/.style={out=up, in=down},scale=-1,xscale=-1]
\draw (0,0) to (0,0.75) to [out=45] (0.75,1.75) to (0.75,3);
\draw (0,0.75) to [out=135] (-0.75,1.75) to (-0.75,3);
\draw[arrow data={0.2}{<}, arrow data={0.9}{<}] (1.75,0) to (1.75,0.75) to  [looseness=1.1, in looseness=0.9] node[zxnode=\zxwhite, pos=0.36] {$P$} node[zxnode=\zxwhite, pos=0.64] {$P$}(-1.75,3);
\node[zxvertex=\zxwhite, zxdown] at (0,0.75){};
\end{tz}
\]

$4.\Rightarrow 3.$ \[
\begin{tz}[zx,every to/.style={out=up, in=down},scale=1,xscale=-1]
\draw (0.25,0) to [out=up, in=-135] (1,1) to [looseness=1.5,out=45, in=-45] (1,3) to [out=135, in=down] (0.25,4);
\draw[arrow data={0.1}{>}, arrow data={0.5}{>}, arrow data={0.95}{>}] (1.75,0) to [out=up, in=-45] (1,1) to [looseness=1.5, out=135, in=-135] (1,3) to [out=45, in=down] (1.75,4);
\node[zxnode=\zxwhite] at (1,1) {$P$};
\node[zxnode=\zxwhite] at (1,3) {$P^\dagger$};
\end{tz}
~~\superequalseq{eq:quantumfunction}~~
\begin{tz}[zx,xscale=-1]
\clip (-3.05,-0.2) rectangle (1.8,4.2);
\draw[arrow data={0.1}{>},arrow data={0.5}{>}, arrow data={0.95}{>}] (1.75,0) to [out=up,in=-45] (1,1) to (-1,3) to [out=135, in=down] (-1.75,4);
\draw (0.25,0) to [out=up, in=-135] (1,1) to [in looseness=2.5,out=45, in=65] node[pos=0.72](c){} (-1,3) to [out=-135, in=right] (-2,2) to [out=left, in=down] (-3,4) ;
\node[zxnode=\zxwhite] at (1,1) {$P$};
\node[zxnode=\zxwhite] at (-1,3) {$P$};
\node[zxvertex=\zxwhite] at (c.center){};
\node[zxvertex=\zxwhite] at (-2,2){};
\end{tz}
~~\superequalseq{eq:monadmap}~~
\begin{tz}[zx,xscale=-1]
\draw  (1.75,0) to (1.75,1) to[out=up, in=-45] (1,2) to [out=-135, in=up] (0.25,1) to [looseness=1.5,out=down,  in=down] node[pos=0.5] (b){}(-1.25,1) to (-1.25,4);
\draw (1,2) to [looseness=0.4,out=up, in=down] (1.75,3.25);
\draw[arrow data={0.4}{>},arrow data={0.9}{>}] (3.5,0) to [looseness=0.5,out=up,in=down]node[zxnode=\zxwhite, pos=0.64] {$P$} (0.5,4);
\node[zxvertex=\zxwhite] at (b.center){};
\node[zxvertex=\zxwhite,zxdown] at (1,2){};
\node[zxvertex=\zxwhite] at (1.75,3.25){};
\end{tz}
~~\superequalseq{eq:quantumfunction}~~
\begin{tz}[zx,xscale=-1]
\draw (0,0) to (0, 2) to [looseness=1.5,out=up, in=up] node[zxvertex=\zxwhite,pos=0.5]{} (-1.5,2) to [looseness=1.5,out=down, in=down]node[zxvertex=\zxwhite,pos=0.5]{} (-3,2) to (-3,4);
\draw[string,arrow data={0.5}{>}] (1.5,0) to + (0,4);
\end{tz}
~~=~~
\begin{tz}[zx,xscale=-1]
\draw (0,0) to + (0,4);
\draw[string,arrow data={0.5}{>}] (1.5,0) to + (0,4);
\end{tz}
\]
\[
\begin{tz}[zx,every to/.style={out=up, in=down},xscale=1]
\draw (0.25,0) to [out=up, in=-135] (1,1) to [looseness=1.5,out=45, in=-45] (1,3) to [out=135, in=down] (0.25,4);
\draw[arrow data={0.1}{>}, arrow data={0.5}{>}, arrow data={0.95}{>}] (1.75,0) to [out=up, in=-45] (1,1) to [looseness=1.5, out=135, in=-135] (1,3) to [out=45, in=down] (1.75,4);
\node[zxnode=\zxwhite] at (1,1) {$P^\dagger$};
\node[zxnode=\zxwhite] at (1,3) {$P$};
\end{tz}
~~\superequalseq{eq:quantumfunction}~~
\begin{tz}[zx,scale=-1,xscale=-1]
\clip (-3.05,-0.2) rectangle (1.8,4.2);
\draw[arrow data={0.1}{<},arrow data={0.5}{<}, arrow data={0.95}{<}] (1.75,0) to [out=up,in=-45] (1,1) to (-1,3) to [out=135, in=down] (-1.75,4);
\draw (0.25,0) to [out=up, in=-135] (1,1) to [in looseness=2.5,out=45, in=65] node[pos=0.72](c){} (-1,3) to [out=-135, in=right] (-2,2) to [out=left, in=down] (-3,4) ;
\node[zxnode=\zxwhite] at (1,1) {$P$};
\node[zxnode=\zxwhite] at (-1,3) {$P$};
\node[zxvertex=\zxwhite] at (c.center){};
\node[zxvertex=\zxwhite] at (-2,2){};
\end{tz}
~~\superequalseq{eq:quantumfunction}~~
\begin{tz}[zx,scale=-1,xscale=-1]
\draw  (1.75,0) to (1.75,1) to[out=up, in=-45] (1,2) to [out=-135, in=up] (0.25,1) to [looseness=1.5,out=down,  in=down] node[pos=0.5] (b){}(-1.25,1) to (-1.25,4);
\draw (1,2) to [looseness=0.4,out=up, in=down] (1.75,3.25);
\draw[arrow data={0.4}{<},arrow data={0.9}{<}] (3.5,0) to [looseness=0.5,out=up,in=down]node[zxnode=\zxwhite, pos=0.64] {$P$} (0.5,4);
\node[zxvertex=\zxwhite] at (b.center){};
\node[zxvertex=\zxwhite,zxup] at (1,2){};
\node[zxvertex=\zxwhite] at (1.75,3.25){};
\end{tz}
~~\superequalseq{eq:monadmap}~~
\begin{tz}[zx,scale=-1,xscale=-1]
\draw (0,0) to (0, 2) to [looseness=1.5,out=up, in=up] node[zxvertex=\zxwhite,pos=0.5]{} (-1.5,2) to [looseness=1.5,out=down, in=down]node[zxvertex=\zxwhite,pos=0.5]{} (-3,2) to (-3,4);
\draw[string,arrow data={0.5}{<}] (1.5,0) to + (0,4);
\end{tz}
~~=~~
\begin{tz}[zx]
\draw (0,0) to + (0,4);
\draw[string,arrow data={0.5}{>}] (1.5,0) to + (0,4);
\end{tz}
\]

$3. \Rightarrow 1.$ Suppose that $P:H \otimes A \to B \otimes H$ is unitary. Then, the following linear map is a quantum function:
\[
\begin{tz}[zx,every to/.style={out=up, in=down},xscale=0.8,scale=0.8]
\draw [arrow data={0.2}{<},arrow data={0.8}{<}]  (0,0) to (2.25,3);
\draw (2.25,0) to node[zxnode=\zxwhite, pos=0.5] {$Q$} (0,3);
\end{tz}
~:=~
\begin{tz}[zx,xscale=0.6,yscale=1,scale=0.8]
\draw(0.25,-0.5) to (0.25,0)     to [out=up, in=-135] (1,1);
\draw [arrow data={0.34}{>}]  (1,1) to [out=135, in=right]  (-0.3, 1.7) to [out=left, in=up] (-1.25,1) to (-1.25,-0.5);
\draw(1.75,2.5) to (1.75,2) to [out=down, in=45] (1,1);
\draw[arrow data={0.32}{<}] (1,1) to [out= -45, in= left]   (2.3,0.3) to [out=right, in=down] (3.25,1) to (3.25,2.5);
\node [zxnode=\zxwhite] at (1,1) {$P^{\dagger}$};
\end{tz}
~=~
\begin{tz}[zx,xscale=0.6,xscale=1,scale=0.8]
\draw[arrow data={0.68}{>}]  (4.75,2.5) to (4.75, 0.5) to [out=down, in=right] (2.65,-0.6) to [out=left, in=down] (0.55,0.5);
\draw (0.55,0.5) to [out=up, in=-135] (1,1);
\draw  (1,1) to [out=135, in=right]  (-0.3, 1.7) to [out=left, in=up]  node[zxvertex=\zxwhite, pos=0] {}(-1.25,1) to (-1.25,-0.5);
\draw[string, arrow data={0.68}{<}] (-2.75, -0.5) to (-2.75,1.5) to [out=up, in=left] (-0.65, 2.6) to [out=right, in=up] (1.45,1.5);
\draw  (1.45,1.5) to [out=down, in=45] (1,1);
\draw (1,1) to [out= -45, in= left]  node[zxvertex=\zxwhite, pos=1] {}  (2.3,0.3) to [out=right, in=down] (3.25,1) to (3.25,2.5);
\node [zxnode=\zxwhite] at (1,1) {$P$};
\end{tz} \]
For example, the counit condition of $Q$ follows from precomposing the counit condition of the quantum function $P$ with $P^\dagger$:
\[\begin{tz}[zx,every to/.style={out=up, in=down},xscale=-1]
\draw (0,0) to (0,2.25);
\draw[arrow data={0.2}{>}, arrow data={0.8}{>}] (1,0) to [looseness=0.9] node[zxnode=\zxwhite, pos=0.5] {$P$} (-1,2.5) to (-1,3);
\node[zxvertex=\zxwhite] at (0,2.25){};
\end{tz}
~~\superequalseq{eq:quantumfunction}~~
\begin{tz}[zx,every to/.style={out=up, in=down},xscale=-1]
\draw (0,0) to (0,0.75);
\draw[arrow data={0.2}{>}, arrow data={0.9}{>}] (1.,0) to (1.,0.75) to   (-1,3);
\node[zxvertex=\zxwhite, zxup] at (0,0.75){};
\end{tz}
\hspace{0.75cm}\super{\text{$P$ unitary}}{\Leftrightarrow} \hspace{0.75cm}
\begin{tz}[zx,every to/.style={out=up, in=down},xscale=1]
\draw (0,0) to (0,0.75);
\draw[arrow data={0.2}{>}, arrow data={0.9}{>}] (1.,0) to (1.,3) ;
\node[zxvertex=\zxwhite, zxup] at (0,0.75){};
\end{tz}
~~=~~
\begin{tz}[zx,every to/.style={out=up, in=down},xscale=-1]
\draw[arrow data ={0.15}{>}, arrow data={0.8}{>}] (0,0) to (1,2) to (1,3);
\draw (1,0) to (0,2);
\node[zxnode=\zxwhite] at (0.5,1) {$P^\dagger$};
\node[zxvertex=\zxwhite] at (0,2){};
\end{tz}
\hspace{0.75cm}\Leftrightarrow \hspace{0.75cm}
\begin{tz}[zx,every to/.style={out=up, in=down},xscale=-1]
\draw (0,0) to (0,0.75);
\draw[arrow data={0.2}{<}, arrow data={0.9}{<}] (1.,0) to (1.,0.75) to   (-1,3);
\node[zxvertex=\zxwhite, zxup] at (0,0.75){};
\end{tz}
~~=~~
\begin{tz}[zx,every to/.style={out=up, in=down},xscale=-1]
\draw (0,0) to (0,2.25);
\draw[arrow data={0.2}{<}, arrow data={0.8}{<}] (1,0) to [looseness=0.9] node[zxnode=\zxwhite, pos=0.5] {$Q$} (-1,2.5) to (-1,3);
\node[zxvertex=\zxwhite] at (0,2.25){};
\end{tz}
\]
The comultiplication condition of $Q$ can be proven similarly. The last equation in~\eqref{eq:quantumfunction} follows from a direct graphical argument, proving that $Q$ is indeed a quantum function. 

Moreover, unitarity of $P$ implies that $P$ and $Q$ fulfill equations~\eqref{eq:rightdual}. We note that equation~\eqref{eq:leftdual} is the dagger of equation~\eqref{eq:rightdual} and therefore redundant. This follows directly from the fact that $P$ and $Q$ are quantum functions and $\QSet$ is a dagger 2-category (see equation~\eqref{eq:qsetdagger}). Therefore, $Q$ is a dagger dual of $P$ in $\QSet$.\qedhere
\end{proof}

\begin{remark}\label{rem:lemma42fromlemma46} Proposition~\ref{prop:Frobeniusinvertible} follows as the one-dimensional case of Theorem~\ref{thm:dual2}. In fact, following our philosophy outlined in Section~\ref{sec:quantumset}, several of the steps of the proof of Theorem~\ref{thm:dual2} were first devised for the simpler case of Proposition~\ref{prop:Frobeniusinvertible}. The oriented wire was added in later, in an essentially unique way.
\end{remark}

\begin{remark} It is a direct consequence of Theorem~\ref{thm:dual2} that a quantum function $(H,P)$ is dagger-dualisable (or equivalently bi-invertible) if and only if it is biunitary as defined in~\cite{Reutter2016}. 
\end{remark}

\noindent
We denote the 2-category of quantum sets, quantum bijections and intertwiners by $\QBij$.\looseness=-2

\begin{remark}\label{rem:monoidalsubcats}
Composition of quantum functions makes the category $\QBij(B,B)$ of \emph{quantum permutations} on a quantum set $B$ into a monoidal dagger category with dualisable objects\ignore{\footnote{We will show in Section~\ref{sec:semisimple} that these categories are moreover \emph{semisimple} with a possibly infinite number of simple objects. For a finite number of simple objects such categories are known as \emph{unitary fusion category}.}} (see Theorem~\ref{thm:dual2}). In particular, the categories of quantum permutations $\QBij([n], [n])$ on an $n$-element set $[n]$ can be thought of as a quantum version of the symmetric group $S_n$. 
\end{remark}

\subsection{Quantum bijections in noncommutative topology}
\label{sec:Wang}
The monoidal categories $\QBij(B,B)$ of quantum bijections on a quantum set $B$ have been considered in finite noncommutative topology.

In~\cite{Wang1998}, Wang introduced `quantum symmetry groups of finite quantum spaces' as non-commutative variants of the symmetric groups $S_n$. We now show that our category of quantum permutations on a quantum set is precisely the category of finite-dimensional representations of the Hopf $C^*$-algebra corresponding to Wang's quantum symmetry group. In other words, our quantum permutations are quantum elements of this quantum group (cf. Theorem~\ref{thm:universalprop}).

\begin{prop}\label{prop:Wang} For a quantum set $B$, $\QBij(B,B)$ is the category of finite-dimensional representations of Wang's `quantum symmetry group' algebra~$A_{aut}(B)$.
\end{prop}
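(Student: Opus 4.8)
The plan is to exhibit a functorial bijection between quantum bijections $(H,P) \in \QBij(B,B)$ and finite-dimensional $*$-representations of Wang's algebra $A_{aut}(B)$, compatible with intertwiners, and then argue that this bijection respects the monoidal structures. The key is to recognise that $\QBij(B,B)$ is a full subcategory of $\QSet(B,B)$, so Theorem~\ref{thm:universalprop} already identifies $\QSet(B,B)$ with $\QEl([B,B]) = \mathrm{Rep}_{\mathrm{fd}}([B,B])^{\op}$. Thus the whole proposition reduces to identifying the quotient of $[B,B]$ that cuts $\QSet(B,B)$ down to $\QBij(B,B)$, and checking this quotient is exactly $A_{aut}(B)$.

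First I would recall Wang's universal construction: $A_{aut}(B)$ is the universal $C^*$-algebra equipped with a $*$-homomorphism (a "coaction") $\beta: B \to B \otimes A_{aut}(B)$ which is a comodule algebra structure — equivalently, Wang imposes that $\beta$ be a $*$-homomorphism, that the counit and coassociativity-type conditions hold, and crucially that $\beta$ be compatible with the multiplication on $B$ as well as the comultiplication. Under the Frobenius-algebraic translation of Section~\ref{sec:background} (and exactly as in the proof of Theorem~\ref{thm:universalprop}), a representation of $A_{aut}(B)$ on a Hilbert space $H$ corresponds, via the sequence of bijections in~\eqref{eq:universalsequence}, to a $*$-homomorphism $B \to \End(H) \otimes B$, whose adjoint is a map $P: H \otimes B \to B \otimes H$. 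The three defining equations of a quantum function~\eqref{eq:quantumfunction} are precisely the conditions making the adjoint a $*$-cohomomorphism in the appropriate sense — these are exactly what one extracts from $\beta$ being a comultiplication-compatible coaction. The two additional equations~\eqref{eq:monadmap} defining a quantum bijection are the translations of $\beta$ being additionally compatible with the multiplication and unit of $B$ — i.e. $\beta$ being a $*$-homomorphism of algebras, not merely of coalgebras. So the extra relations in $A_{aut}(B)$ relative to $[B,B]$ are exactly the relations~\eqref{eq:monadmap}.

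Concretely, the steps are: (i) spell out Wang's definition of $A_{aut}(B)$ as a universal $C^*$-algebra with a coaction $\beta: B \to B \otimes A_{aut}(B)$; (ii) apply the bijection chain of~\eqref{eq:universalsequence} to turn a representation $A_{aut}(B) \to \End(H)$ into a linear map $P: H \otimes B \to B \otimes H$; (iii) verify that the universal conditions defining $\beta$ translate, under this correspondence and the graphical calculus, precisely into the quantum-function axioms~\eqref{eq:quantumfunction} together with the quantum-bijection axioms~\eqref{eq:monadmap} — here one also invokes the equivalence $2.\Leftrightarrow 4.$ of Theorem~\ref{thm:dual2} if Wang phrases invertibility directly, or just checks the algebra/coalgebra homomorphism conditions directly; (iv) check that an intertwiner of representations maps to an intertwiner of quantum bijections (this is formally identical to the last paragraph of the proof of Theorem~\ref{thm:universalprop}, taking $f \mapsto f^\dagger$); (v) observe that composition of quantum bijections (tensor product of underlying Hilbert spaces, as in~\eqref{eq:1composition}) corresponds to the tensor product of representations induced by the comultiplication of the Hopf algebra $A_{aut}(B)$, so the equivalence is monoidal.

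The main obstacle I expect is step (iii): carefully matching Wang's relations — which in the literature are usually written in terms of generators $u_{ij}$ and a "magic unitary"/bi-unitarity condition for the matrix algebra case, and more abstractly in terms of $\beta$ being an algebra homomorphism for general $B$ — with the diagrammatic equations~\eqref{eq:monadmap}. One must be careful about the $\op$ appearing in $\QEl = \mathrm{Rep}^{\op}$, about the dagger/adjoint passing between homomorphism and cohomomorphism, and about the direction conventions for coactions; these are exactly the "recurring theme" bookkeeping issues flagged in the footnote to the Frobenius Gelfand-duality corollary. Once the dictionary is fixed, the Hopf-algebra structure on $A_{aut}(B)$ (antipode, comultiplication) matches the dualisability and composition structure on $\QBij(B,B)$ automatically, so the monoidality in step (v) should be routine.
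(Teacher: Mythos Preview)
Your proposal is correct, but it takes a genuinely different route from the paper. The paper's proof is much more elementary and concrete: it restricts to the classical case $B=[n]$, writes down Wang's presentation of $A_{aut}([n])$ by generators $a_{i,j}$ and the relations~\eqref{eq:Wangcondition}, observes that projectors summing to the identity are automatically mutually orthogonal, and then simply compares these relations with the projector characterisation of quantum bijections in Theorem~\ref{thm:quantumfunctionsinprojectors} (equations~\eqref{eq:PPM1} and~\eqref{eq:PPM2}). The general quantum-set case is dismissed as ``completely analogous, only involving more indices.''

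Your approach, by contrast, works uniformly for all quantum sets $B$: you leverage Theorem~\ref{thm:universalprop} to identify $\QSet(B,B)$ with $\mathrm{Rep}_{\mathrm{fd}}([B,B])^{\op}$, and then argue that the passage from $[B,B]$ to $A_{aut}(B)$ is exactly the quotient imposing the additional algebra-homomorphism conditions on the coaction $\beta$, which under the dictionary of~\eqref{eq:universalsequence} become the quantum-bijection equations~\eqref{eq:monadmap}. This is more structural and explains \emph{why} the result holds (the extra relations in Wang's algebra are precisely the monad-map axioms), and it also makes the monoidal compatibility in step (v) transparent via the Hopf structure. The paper's proof buys brevity and avoids any need to unpack Wang's abstract universal construction for non-classical $B$; your proof buys uniformity and conceptual clarity, at the cost of the bookkeeping you correctly flag in step (iii). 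Both are valid; your version is closer in spirit to the paragraph following Proposition~\ref{prop:Wang} in the paper, which sketches exactly the internal-hom explanation you give.
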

\begin{proof} We only prove the proposition for classical sets $[n]$; the proof for general quantum sets is completely analogous, only involving more indices.

In~\cite[Theorem 3.1]{Wang1998}, Wang defines quantum permutation groups of classical finite sets $[n]$ as $C^*$-algebras $A_{aut}([n])$ with generators $a_{i,j}$ $(i,j=1,\ldots n)$ and relations:
\begin{calign}\label{eq:Wangcondition}
a_{i,j}^2=a_{i,j} = a_{i,j}^* 
&
\sum_{i=1}^n a_{i,j} = 1, ~~\forall 1\leq j\leq n
&
\sum_{j=1}^n a_{i,j} =1, ~~\forall 1\leq i\leq n
\end{calign}
Noting that projectors summing to the identity are mutually orthogonal and comparing the relations~\eqref{eq:Wangcondition} with Theorem~\ref{thm:quantumfunctionsinprojectors} shows that our categories $\QBij([n], [n])$ are the categories of finite-dimensional representations of $A_{aut}([n])$.
\end{proof}

Proposition~\ref{prop:Wang} is not surprising: In Section~\ref{sec:universal}, we showed that $\QSet(B,B)$ is the category of finite-dimensional representations of the internal hom $[B,B]$ and in Section~\ref{sec:quantumbijection} we prove that $\QBij(B,B)$ is the full subcategory of dagger-dualisable objects in $\QSet(B,B)$. We therefore expect $\QBij(B,B)$ to arise as the category of finite-dimensional representations of some internal `automorphism Hopf $C^*$-algebra', which is exactly how Wang defines his compact quantum group algebra $A_{aut}(B)$. 
\ignore{
That this is Wang's algebra is unsurprising, given the results of Section \ref{sec:universal}. Indeed, we showed that $\QSet(A,A)$ is the category of finite-dimensional representations of the internal Hom $[A,A]$. $\QBij(A,A)$ is the subcategory of dagger-dualisable objects in this category; one would therefore expect $\QBij(A,A)$ to be obtained as the category of representations of some universal `dualisable' quotient of $[A,A]$, which is Wang's compact quantum group.}

\begin{remark}\label{rem:Wangnotdiscrete} As in Remark~\ref{rem:problemTannaka}, we cannot recover $A_{aut}(B)$ from the fibre functor $\QBij(B,B) \to \Hilb$. Indeed, from the perspective of locally compact quantum groups (see e.g.~\cite{Maes1998,Neshveyev2013}), we can think of $\QBij(B,B)$ as the category of finite-dimensional representations\ignore{corepresentations of the underlying Hopf $C^*$-algebra} of the discrete quantum group dual to Wang's compact quantum symmetry group.
Similarly to the case of ordinary discrete groups this discrete quantum group cannot generally be recovered from its (naive) representation category.
\end{remark}

\subsection{Quantum bijections between classical sets}\label{sec:quantumperm}

We now focus on quantum functions and quantum bijections between classical sets and show that we recover the \textit{magic unitaries} of Banica et al.~\cite{Banica2007} and the \textit{projective permutation matrices} of Atserias et al.~\cite{Atserias2016}. We first show that we can express quantum functions between classical sets as families of projectors satisfying certain \mbox{orthogonality conditions.}\looseness=-2
\begin{theorem}\label{thm:quantumfunctionsinprojectors} 
A quantum function $X\to Y$ between classical sets $X$ and $Y$ is exactly a family of projectors $\{P_{x,y}\}_{x\in X, y \in Y}$ on a Hilbert space $H$ such that the following holds, for all $x\in X$ and $y_1,y_2\in Y$:
\begin{calign}\label{eq:PPM1}P_{x,y_1} P_{x,y_2} = \delta_{y_1,y_2} P_{x,y_1} 
& 
\sum_{y\in Y} P_{x,y} = \mathbbm{1}_H
\end{calign}
A quantum bijection $X\to Y$ between classical sets $X$ and $Y$ is exactly a family of projectors $\{P_{x,y}\}_{x\in X, y \in Y}$ satisfying \eqref{eq:PPM1} and the following additional conditions, for all $x_1,x_2\in X$ and $y\in Y$:
\begin{calign}\label{eq:PPM2} 
P_{x_1,y} P_{x_2,y} = \delta_{x_1,x_2} P_{x_1,y} 
& 
\sum_{x\in X} P_{x,y} = \mathbbm{1}_H
\end{calign}
\end{theorem}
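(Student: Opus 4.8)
The plan is to make the statement concrete by passing to the copyable bases supplied by Gelfand duality. Since $X$ and $Y$ are classical, i.e.\ commutative special dagger Frobenius algebras, Theorem~\ref{thm:classificationONB} tells us their multiplication, unit, comultiplication and counit are the ``classical copy'' maps of Example~\ref{exm:Frob} for orthonormal bases $\{\ket{x}\}_{x\in X}$ and $\{\ket{y}\}_{y\in Y}$ of copyable states. Given a quantum function $(H,P)$ with $P\colon H\otimes X\to Y\otimes H$, I would set
\[
P_{x,y}\;:=\;(\bra{y}\otimes\id_H)\circ P\circ(\id_H\otimes\ket{x})\colon H\to H,
\]
so that $P(\xi\otimes\ket{x})=\sum_{y\in Y}\ket{y}\otimes P_{x,y}\xi$. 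Since $\{\ket{x}\}$ and $\{\ket{y}\}$ are orthonormal bases, $P\mapsto\{P_{x,y}\}$ is a bijection between linear maps $H\otimes X\to Y\otimes H$ and families of linear maps $H\to H$ indexed by $X\times Y$, so the whole theorem amounts to matching up the defining equations on the two sides of this bijection.

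The core of the argument is to feed these basis vectors and effects into the three equations~\eqref{eq:quantumfunction} and read off component relations. Composing the first equation of~\eqref{eq:quantumfunction} with $\ket{x}$ on the undirected input and with $\bra{y_1}\otimes\bra{y_2}$ on the two undirected outputs, and using that comultiplication on $Y$ (resp.\ $X$) is $\ket{y}\mapsto\ket{y}\otimes\ket{y}$ (resp.\ $\ket{x}\mapsto\ket{x}\otimes\ket{x}$), yields $P_{x,y_1}P_{x,y_2}=\delta_{y_1,y_2}P_{x,y_1}$; the case $y_1=y_2$ gives idempotency. The second equation, with counit $\ket{y}\mapsto 1$, collapses to $\sum_{y}P_{x,y}=\id_H$. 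The third equation, with the Frobenius cup/cap being the ``copy--delete'' maps of Example~\ref{exm:Frob} and the canonical involution acting trivially on the real orthonormal basis, reduces to $P_{x,y}^\dagger=P_{x,y}$. Idempotency together with self-adjointness is exactly the assertion that each $P_{x,y}$ is a projector, and the residue is the pair~\eqref{eq:PPM1}; the converse is the same computation run backwards. For the second half, a quantum bijection additionally satisfies~\eqref{eq:monadmap}, which are the mirror images of the first two equations of~\eqref{eq:quantumfunction} with multiplication/unit replacing comultiplication/counit and with $X$ and $Y$ interchanged; the identical computation, now inserting $\bra{y}$ on a single undirected output and $\ket{x_1}\otimes\ket{x_2}$ on the two undirected inputs, gives $P_{x_1,y}P_{x_2,y}=\delta_{x_1,x_2}P_{x_1,y}$ and $\sum_x P_{x,y}=\id_H$, i.e.~\eqref{eq:PPM2}.

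I expect no conceptual obstacle; the content is bookkeeping. The one point that requires care is the third equation of~\eqref{eq:quantumfunction}: one must correctly identify the Frobenius cup/cap on a classical algebra and the canonical antihomomorphic involution, and verify that on a real orthonormal basis the involution is the identity, so that the equation genuinely forces self-adjointness of each $P_{x,y}$ rather than something weaker. This step is essential — a pair of complementary, non-orthogonal idempotents satisfies~\eqref{eq:PPM1} but is not a pair of projectors, so the first two equations alone are not enough. A secondary point of care is tracking the tensor-factor orderings imposed by the bottom-left-to-top-right convention for the Hilbert-space wire, since the component equations come out transposed if one is careless. One could instead try to deduce the first half from Theorem~\ref{thm:universalprop} plus a presentation of the internal hom $[X,Y]$ by generators $p_{x,y}$; I would avoid that route here, as the natural proof of such a presentation itself passes through the present statement.
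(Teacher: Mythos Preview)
Your proposal is correct and follows essentially the same approach as the paper: both pass to the copyable basis, define $P_{x,y}$ as the matrix components of $P$, and translate the diagrammatic equations~\eqref{eq:quantumfunction} and~\eqref{eq:monadmap} into the component relations~\eqref{eq:PPM1} and~\eqref{eq:PPM2}. You are in fact more explicit than the paper about the role of the third equation of~\eqref{eq:quantumfunction} in forcing self-adjointness (the paper leaves this under ``all other expressions are similar translations''), and your warning about tensor-factor orderings is apt but not a point of divergence.
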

\begin{proof}
Every classical set $X$ corresponds to a commutative Frobenius algebra defined in Example~\ref{exm:Frob}, the elements of $X$ form a basis of copyable elements of  this algebra. The linear map $P:H\otimes X \to Y \otimes H$ may therefore be expressed in this basis:
:%
\[
\begin{tz}[zx]
\draw[arrow data={0.2}{>}, arrow data={0.8}{>}] (0,0) to (0,3);
\node[zxnode=\zxwhite] at (0,1.5) {$P_{x,y}$};
\end{tz}
~~:=
\begin{tz}[zx,every to/.style={out=up, in=down},xscale=0.8]
\draw [arrow data={0.2}{>},arrow data={0.8}{>}]  (0,0) to (2.25,3);
\draw (1.75,0.5) to node[zxnode=\zxwhite, pos=0.5] {$P$} (0.5,2.5);
\node[zxnode=\zxwhite] at (1.75,0.5){$x$};
\node[zxnode=\zxwhite] at (0.5,2.5){$y$};
\end{tz}\]
As an example, the first equation of \eqref{eq:quantumfunction}, expanded in the classical basis, becomes \mbox{$\delta_{y,y'}P_{x,y} = P_{x,y} P_{x,y'}$:}
\[
\delta_{y,y'}
\begin{tz}[zx,every to/.style={out=up, in=down},xscale=-1]
\draw (0,0) to (0,2.5);
\draw[arrow data={0.2}{>}, arrow data={0.8}{>}] (1.75,0) to [looseness=0.9] node[zxnode=\zxwhite, pos=0.5] {$P$} (-1.75,2.5) to (-1.75,3);
\node[zxnode=\zxwhite] at (0,2.5){$y$};
\node[zxnode=\zxwhite] at (0,0) {$x$};
\end{tz}
=
\begin{tz}[zx,every to/.style={out=up, in=down},xscale=-1]
\draw (0,0) to (0,2) to [out=45] (0.75,3);
\draw (0,2) to [out=135] (-0.75,3);
\draw[arrow data={0.2}{>}, arrow data={0.8}{>}] (1.75,0) to [looseness=0.9] node[zxnode=\zxwhite, pos=0.5] {$P$} (-1.75,2.5) to (-1.75,3);
\node[zxvertex=\zxwhite, zxup] at (0,2){};
\node[zxnode=\zxwhite] at (-0.75,3) {$y$};
\node[zxnode=\zxwhite] at (0.75,3) {$y'$};
\node[zxnode=\zxwhite] at (0,0) {$x$};
\end{tz}
\superequalseq{eq:quantumfunction}
\begin{tz}[zx,every to/.style={out=up, in=down},xscale=-1]
\draw (0,0) to (0,0.75) to [out=45] (0.75,1.75) to (0.75,3);
\draw (0,0.75) to [out=135] (-0.75,1.75) to (-0.75,3);
\draw[arrow data={0.2}{>}, arrow data={0.9}{>}] (1.75,0) to (1.75,0.75) to  [looseness=1.1, in looseness=0.9] node[zxnode=\zxwhite, pos=0.36] {$P$} node[zxnode=\zxwhite, pos=0.64] {$P$}(-1.75,3);
\node[zxvertex=\zxwhite, zxup] at (0,0.75){};
\node[zxnode=\zxwhite] at (-0.75,3) {$y$};
\node[zxnode=\zxwhite] at (0.75,3) {$y'$};
\node[zxnode=\zxwhite] at (0,0) {$x$};
\end{tz}
=
\begin{tz}[zx,every to/.style={out=up, in=down},xscale=-1]
\draw  (0.75,0.85) to (0.75,3);
\draw (-0.75,0.85) to (-0.75,3);
\draw[arrow data={0.2}{>}, arrow data={0.9}{>}] (1.75,0) to (1.75,0.75) to  [looseness=1.1, in looseness=0.9] node[zxnode=\zxwhite, pos=0.36] {$P$} node[zxnode=\zxwhite, pos=0.64] {$P$}(-1.75,3);
\node[zxnode=\zxwhite] at (-0.75,3) {$y$};
\node[zxnode=\zxwhite] at (0.75,3) {$y'$};
\node[zxnode=\zxwhite] at (0.75,0.85) {$x$};
\node[zxnode=\zxwhite] at (-0.75,0.85) {$x$};
\end{tz}
\]
All other expressions are similar translations of \eqref{eq:quantumfunction} and \eqref{eq:monadmap}.
\end{proof}
It is natural to arrange these projectors in an $|X| \times |Y|$ matrix. From this perspective, equation \eqref{eq:PPM1} states that the projectors along each row form a complete orthogonal family, while equation \eqref{eq:PPM2} requires this for each column.
In the work of Banica et al, matrices of projectors obeying both the row and the column equations are called \textit{magic unitaries}~\cite{Banica2007}, while in the work of Atserias et al they are called \textit{projective permutation matrices}~\cite{Atserias2016}. In this paper, we adopt the latter terminology. 

Theorem~\ref{thm:quantumfunctionsinprojectors} has the following immediate corollary.
\begin{corollary}\label{cor:quantumelementmeasurement}A quantum function between sets $X\to Y$ is a family of projective measurements with outcomes in $Y$, controlled by the set $X$. In particular, a quantum element of a set $X$ is a projective measurement with outcomes in $X$.
\end{corollary}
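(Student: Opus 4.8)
The statement to prove is Corollary~\ref{cor:quantumelementmeasurement}: a quantum function $X \to Y$ between classical sets is a family of projective measurements with outcomes in $Y$, controlled by $X$; and a quantum element of $X$ is a single projective measurement with outcomes in $X$.

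\textbf{Plan.} The corollary is an essentially immediate consequence of Theorem~\ref{thm:quantumfunctionsinprojectors}, so the proof amounts to translating the projector conditions~\eqref{eq:PPM1} into the language of projective measurements. First I would recall the standard definition: a projective measurement on a Hilbert space $H$ with outcome set $Y$ is a family $\{P_y\}_{y \in Y}$ of projectors on $H$ that are mutually orthogonal ($P_{y_1} P_{y_2} = \delta_{y_1, y_2} P_{y_1}$) and complete ($\sum_{y \in Y} P_y = \mathbbm{1}_H$). Now apply Theorem~\ref{thm:quantumfunctionsinprojectors}: a quantum function $X \to Y$ is exactly a family $\{P_{x,y}\}_{x \in X, y \in Y}$ on some $H$ satisfying~\eqref{eq:PPM1}, namely $P_{x,y_1} P_{x,y_2} = \delta_{y_1,y_2} P_{x,y_1}$ and $\sum_{y \in Y} P_{x,y} = \mathbbm{1}_H$ for each $x$. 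For each fixed $x \in X$, the subfamily $\{P_{x,y}\}_{y \in Y}$ is then precisely a projective measurement with outcomes in $Y$; letting $x$ vary, we obtain the claimed $X$-indexed family of projective measurements. Conversely, any $X$-indexed family of projective measurements on a common $H$ assembles into a family of projectors satisfying~\eqref{eq:PPM1}, hence a quantum function by Theorem~\ref{thm:quantumfunctionsinprojectors}.

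For the second claim, I would invoke the earlier observation (the proposition just before Definition~\ref{def:qfctdimension}) that a quantum element of $A$ is the same as a quantum function $\{*\} \to A$ from the one-element set. Specialising the first part of the corollary to $X = \{*\}$: a quantum function $\{*\} \to X$ is a family $\{P_{*,x}\}_{x \in X}$ satisfying~\eqref{eq:PPM1}, which is a single projective measurement with outcomes in $X$ (the controlling set $\{*\}$ is trivial, so there is only one measurement in the family). This also matches the direct description of quantum elements of a classical set in Remark~\ref{rem:operationalinterpretationofquantumels}, which can be cited as confirmation.

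\textbf{Main obstacle.} There is essentially no obstacle: the content is entirely carried by Theorem~\ref{thm:quantumfunctionsinprojectors}, and the corollary is a relabelling. The only thing to be slightly careful about is the direction of the correspondence and making sure the word ``controlled by $X$'' is given a precise meaning --- namely that we have one projective measurement $\{P_{x,-}\}$ for each input $x \in X$, all on the same underlying Hilbert space $H$. I would state this explicitly to avoid ambiguity, and note (as in Remark~\ref{rem:opintofquantumfunctions}) the operational reading as a nondeterministic function $X \to Y$ implemented by quantum measurement.
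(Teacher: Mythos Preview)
Your proposal is correct and matches the paper's approach: the paper states this as an ``immediate corollary'' of Theorem~\ref{thm:quantumfunctionsinprojectors} without giving any further proof, and your unpacking of~\eqref{eq:PPM1} as the defining conditions for an $X$-indexed family of projective measurements is exactly the intended reading.
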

\begin{remark}\label{rem:ppmcomposition}
At the level of matrices of projectors, the composition of quantum functions $P:X\to Y$ and $Q:Y \to Z$ (see equation~\eqref{eq:1composition}) takes the following form:
\begin{calign}(Q\circ P)_{x,z} =\sum_{y \in Y} Q_{y,z}\otimes P_{x,y}\end{calign}
This coincides with composition of quantum functions in the Kleisli category of Abramsky et al~\cite{Abramsky2017}.
\end{remark}
\noindent
Examples of projective permutation matrices are provided by the first author's \textit{quantum Latin squares}~\cite{Musto2015,Musto2016}; more examples are given in Section~\ref{sec:semisimple}. 

We now show that quantum bijections only exist between classical sets of the \mbox{same cardinality.}
\begin{proposition} If there is a quantum bijection from a classical set $X$ to a classical set $Y$, then $|X| = |Y|$. In other words, every projective permutation matrix is square.
\end{proposition}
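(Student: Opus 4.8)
The plan is to use the projective permutation matrix description from Theorem~\ref{thm:quantumfunctionsinprojectors} together with a trace argument. Suppose $(H,P)$ is a quantum bijection $X \to Y$, and write $P$ as a family of projectors $\{P_{x,y}\}_{x\in X, y\in Y}$ on $H$. By \eqref{eq:PPM1}, for each fixed $x$ the projectors $\{P_{x,y}\}_{y\in Y}$ are mutually orthogonal and sum to $\mathbbm{1}_H$; by \eqref{eq:PPM2}, for each fixed $y$ the projectors $\{P_{x,y}\}_{x\in X}$ are mutually orthogonal and sum to $\mathbbm{1}_H$. The idea is simply to compute $\sum_{x\in X}\sum_{y\in Y} \mathrm{Tr}(P_{x,y})$ in two ways.

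First I would fix $x$ and sum over $y$: since $\sum_{y\in Y} P_{x,y} = \mathbbm{1}_H$, we get $\sum_{y\in Y} \mathrm{Tr}(P_{x,y}) = \mathrm{Tr}(\mathbbm{1}_H) = \dim(H)$. Summing this over all $x\in X$ gives $\sum_{x,y}\mathrm{Tr}(P_{x,y}) = |X|\dim(H)$. Symmetrically, fixing $y$ and summing over $x$ using $\sum_{x\in X}P_{x,y} = \mathbbm{1}_H$ gives $\sum_{x\in X}\mathrm{Tr}(P_{x,y}) = \dim(H)$, and summing over $y\in Y$ gives $\sum_{x,y}\mathrm{Tr}(P_{x,y}) = |Y|\dim(H)$. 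Equating the two expressions yields $|X|\dim(H) = |Y|\dim(H)$, and since $H$ is a nonzero (in particular finite-dimensional, nonzero) Hilbert space — a quantum function has an underlying Hilbert space which must be nonzero for the unit/counit axioms to be consistent, as $\dim(H) = 0$ would force $\mathbbm{1}_H$ to be $0$ yet equal to $\sum_y P_{x,y}$, which is fine only vacuously but then there are no projectors; in any case one takes $H \neq 0$ by convention — we may divide by $\dim(H)$ to conclude $|X| = |Y|$.

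There is really no serious obstacle here; the only point requiring a word of care is ensuring $\dim(H) \neq 0$ so that cancellation is valid, which is immediate from the standing conventions (a quantum function is a genuine map on a Hilbert space, and the zero Hilbert space is excluded). The statement that every projective permutation matrix is square then follows verbatim, since a projective permutation matrix is exactly the data of a quantum bijection between classical sets by Theorem~\ref{thm:quantumfunctionsinprojectors}, and "square" means precisely $|X| = |Y|$. One could alternatively phrase the same computation without traces, by noting that the ranks of the $P_{x,y}$ along a row sum to $\dim H$ and likewise along a column, but the trace formulation is cleanest since $\mathrm{Tr}$ of a projector is its rank and $\mathrm{Tr}$ is additive on orthogonal projectors summing within $H$.
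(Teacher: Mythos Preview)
Your argument is correct. It is, however, a genuinely different route from the paper's. The paper invokes Theorem~\ref{thm:dual2} to conclude that the underlying linear map $P: H \otimes X \to Y \otimes H$ of a quantum bijection is unitary, and then simply reads off $|X|\dim(H) = \dim(H \otimes X) = \dim(Y \otimes H) = |Y|\dim(H)$ from the fact that unitaries preserve dimension. Your argument instead works entirely at the level of the projective permutation matrix from Theorem~\ref{thm:quantumfunctionsinprojectors}, double-counting $\sum_{x,y}\Tr(P_{x,y})$ via the row and column completeness relations. Your approach is more elementary in that it avoids the structural characterisation of quantum bijections in Theorem~\ref{thm:dual2}; the paper's approach is slicker and makes the result an immediate corollary of unitarity.

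One small remark: your discussion of the case $\dim(H)=0$ is a bit off. The paper does not exclude the zero Hilbert space by convention --- in fact the zero quantum function is explicitly the zero object in $\QSet(A,B)$ (see the proof of Theorem~\ref{thm:semisimple}). Both your proof and the paper's tacitly assume $H \neq 0$ when cancelling $\dim(H)$; the statement should be read as asserting that any \emph{nonzero} quantum bijection forces $|X|=|Y|$, since the zero quantum bijection exists between any two classical sets.
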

\begin{proof} By Theorem~\ref{thm:dual2}, a quantum bijection $P: H\otimes X \to Y \otimes H$ is unitary. In particular, $|X|\dim(H) = \dim(X\otimes H) =  \dim(H\otimes Y) =  |Y| \dim(H)$. 
\end{proof}

\begin{remark}\label{rem:teleportation}
We remark that there are quantum functions and quantum bijections between noncommutative algebras. We suggest that these structures might also play a role in quantum information theory.  For example, unitary error bases\footnote{A \textit{unitary error basis} is a basis of unitary operators on a finite-dimensional Hilbert space, orthogonal with respect to the trace inner product.} --- providing the basic data for quantum teleportation and dense coding schemes~\cite{Werner2001} --- give rise to quantum bijections from the matrix algebra $\mathrm{Mat}_n$ to the set $[n^2]$ with $n^2$ elements. Similar correspondences have been noted in the work of Stahlke~\cite{Stahlke2016}.
\end{remark}

\section{Quantum graph theory}\label{sec:qgt}
In the first sections of this paper, we have presented a quantisation of the theory of finite sets and functions. In the following, we continue this general approach and quantise the theory of finite undirected graphs, unifying and extending the work of various authors on nonlocal games~\cite{Cameron2007,Mancinska2016,Atserias2016}, noncommutative topology~\cite{Banica2005,Banica2009,Banica2007_2,Bichon2003},  operator algebras~\cite{Weaver2010,Kuperberg2012,Weaver2015} and quantum information~\cite{
Duan2013,Stahlke2016}.

By analogy with classical graphs, we define a quantum graph as a quantum set of vertices together with a quantum adjacency matrix.\footnote{In Section~\ref{app:quantumrelation}, we show that these quantum graphs also arise from quantising the edge relation on the vertex set of a graph. This relational approach is closer to the work of previous authors, whereas the adjacency matrix approach has a stronger resemblance to classical graph theory.} We define a quantum graph homomorphism as a quantum function preserving the quantum adjacency matrix, and show that quantum graphs and quantum graph homomorphisms naturally form a concrete dagger $2$-category $\QGraph$. We then define quantum graph isomorphisms, and show that these are precisely the dagger-dualisable $1$-morphisms in $\QGraph$. \ignore{This fact may be used to classify graph pairs exhibiting quantum advantage in the quantum graph isomorphism game of Atserias et al~\cite{Atserias2016}, as will be shown in future work~\cite{paper1b}.}
\ignore{Here we content ourselves with recovering basic definitions, leaving a deeper study of quantum graph theory for future work.\looseness=-2}

For a classical graph $G$, we denote its set of vertices by $V_G$; for vertices $v,w\in V_G$ we write $v\sim_G w$ if $v$ and $w$ are connected in $G$. We consider only undirected graphs.

\subsection{Quantum graphs via adjacency matrices}
\label{sec:qgraphsbyadjmats}
A classical graph $G$ can be described by its vertex set $V_G$ and its \emph{adjacency matrix}, which is a linear map $G:V_G\to V_G$ (see Terminology~\ref{not:setalgebra}) with coefficients $G_{v,w}=1$ if $v\sim_G w$ and $G_{v,w} = 0$ otherwise.
 \ignore{is defined by its vertex set and \textit{adjacency matrix}, which is self-adjoint, symmetric, and has entries valued in the set $\{0,1\}$, where all entries are 1 along the diagonal.} %
 
This suggests the following definition of a quantum graph as a quantum set equipped with a quantum adjacency matrix. 
\ignore{with coefficients $G_{v,w}=1$ if $v\sim_G w$ and $G_{v,w} = 0$ otherwise.}
\begin{definition}\label{def:quantumgraphsbyadjmats}
A \emph{quantum graph} is a pair $(A,G)$ of a quantum set $A$ (the \emph{quantum set of vertices}) and a self-adjoint linear map $G:A \to A$ (the \emph{quantum adjacency matrix}) satisfying the following equations:%
\begin{calign}\label{eq:propadjacency}
\begin{tz}[zx]
\draw (0,0) to (0,0.5) to [out=135, in=-135,looseness=1.5] node[zxnode=\zxwhite, pos=0.5] {$G$} (0,2.5) to (0,3);
\draw[string] (0,0.5) to [out=45, in=-45,looseness=1.5] node[zxnode=\zxwhite, pos=0.5] {$G$} (0,2.5);
\node[zxvertex=\zxwhite,zxup] at (0,0.5){};
\node[zxvertex=\zxwhite,zxdown] at (0,2.5){};
\end{tz}\quad  = \quad 
\begin{tz}[zx]
\draw (0,0) to node[zxnode=\zxwhite, pos=0.5] {$G$} (0,3);
\end{tz}
&
\begin{tz}[zx]
\draw (1,0) to (1,1.5) to [out=up, in=up, looseness=2.5]node[zxvertex=\zxwhite, pos=0.5] {} (0,1.5) to [out=down, in=down, looseness=2.5]node[zxvertex=\zxwhite, pos=0.5] {} (-1,1.5) to (-1,3);
\node[zxnode=\zxwhite] at (0,1.5) {$G$};
\end{tz}
\quad = \quad 
\begin{tz}[zx]
\draw (0,0) to (0,3);
\node[zxnode=\zxwhite] at (0,1.5) {$G$};
\end{tz}%
\ignore{%
&
\begin{tz}[zx]
\draw (0,0) to (0,0.5) to [out=135, in=-135,looseness=1.5] node[zxnode=\zxwhite, pos=0.5] {$G$} (0,2.5) to (0,3);
\draw[string] (0,0.5) to [out=45, in=-45,looseness=1.5](0,2.5);
\node[zxvertex=\zxwhite,zxup] at (0,0.5){};
\node[zxvertex=\zxwhite,zxdown] at (0,2.5){};
\end{tz}\quad  = \quad 
\begin{tz}[zx]
\draw (0,0) to (0,3);
\end{tz}}
\end{calign}
\end{definition}
\noindent
For a classical set $A=V_G$, this reduces to the definition of an adjacency matrix $\{G_{v,w}\}_{v,w\in V_G}$; from left to right, the conditions state that $G_{v,w}^2 = G_{v,w}$, and that $G_{v,w} = G_{w,v}$.

A quantum graph is \emph{reflexive} or \emph{irreflexive} if one of the following additional equations holds:
\begin{equation*}
\includegraphics[trim={1cm 0 1cm .35cm},valign=c]{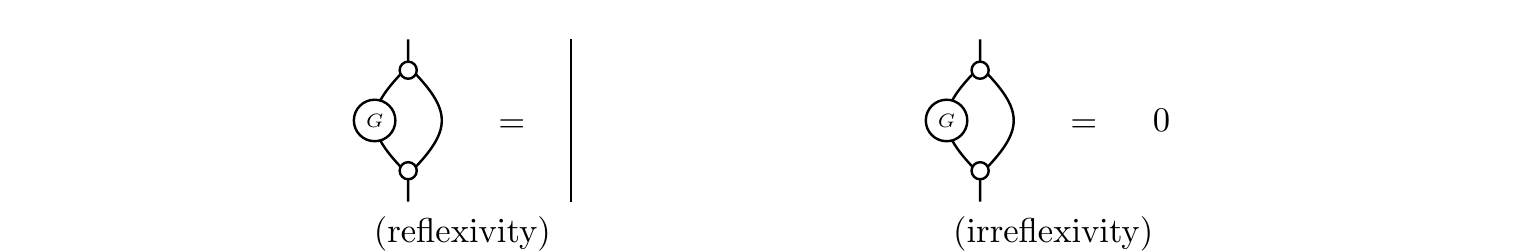}
\end{equation*}
\noindent 
For classical graphs this corresponds to $G_{v,v}=1$ or $G_{v,v}=0$, respectively. For a classical set, the definition of an irreflexive quantum graph therefore reduces to the standard definition of an adjacency matrix of a simple graph.

\begin{remark}\label{rem:literatureqgraph}There are many related definitions of noncommutative or quantum graphs in the literature~\cite{Kuperberg2012,Weaver2010,Weaver2015,Duan2013}, with applications in quantum error correction \cite{Weaver2015} and zero-error communication~\cite{Duan2013}. In Section~\ref{app:quantumrelation}, we comment on how our quantum graphs correspond to these previous definitions. In particular, we prove the following:\looseness=-2
\begin{itemize}
\item Our reflexive quantum graphs coincide with Weaver's finite-dimensional quantum graphs~\cite{Weaver2015}, defined in terms of symmetric and reflexive quantum relations~\cite{Kuperberg2012,Weaver2010}.
\item Our reflexive quantum graphs $(\mathrm{Mat}_n, G)$ on matrix algebras coincide with Duan, Severini and Winter's noncommutative graphs~\cite{Duan2013}.
\end{itemize}

\end{remark}

\subsection{Quantum graph homomorphisms}
\label{sec:quantumgraphhomandiso}
Having defined quantum graphs, we now quantise graph homomorphisms. As with quantum functions and quantum sets, we have classical homomorphisms between quantum graphs, quantum homomorphisms between classical graphs and quantum homomorphisms between quantum graphs. 
We show that our quantum homomorphisms and quantum isomorphisms between classical graphs coincide with the quantum graph homomorphisms of Man\v{c}inska et al~\cite{Mancinska2016} and the quantum graph isomorphisms of Atserias et al~\cite{Atserias2016}. We also show that our classical and quantum homomorphisms between quantum graphs are `pure' versions of Stahlke's quantum graph homomorphisms~\cite{Stahlke2016}. 

To quantise graph homomorphisms, we first express them via Gelfand duality in terms of string diagrams in the category of finite-dimensional Hilbert spaces and linear maps. For a quantum graph --- a quantum set $A$ with quantum adjacency matrix $G:A\to A$ --- we introduce the following notation:
\begin{equation}\label{eq:graphprojector1}\begin{tz}[zx]
\draw[string] (0, 0) to (0,3);
\draw (2.5,0) to (2.5, 3);
\draw (0, 1.5) to (2.5, 1.5) ;
\node[zxvertex=\zxwhite] at (0,1.5){};
\node[zxvertex=\zxwhite] at (2.5,1.5){};
\node[zxnode=\zxwhite] at (1.25, 1.5 ) {$G$};
\end{tz}
~~~:=~~~
\begin{tz}[zx,xscale=2.5/3]
\draw (-1.5,0) to (-1.5,1.5) to [out=up, in=-135] (-0.75, 2.5) to (-0.75,3);
\draw (-0.75, 2.5) to [out=-45, in=up] (0, 1.5) to [out=down, in=135] (0.75,0.5) to (0.75,0);
\draw (0.75,0.5) to [out=45, in=down] (1.5,1.5) to (1.5,3);
\node[zxnode=\zxwhite] at (0, 1.5) {$G$};
\node[zxvertex=\zxwhite, zxdown] at (-0.75, 2.5) {};
\node[zxvertex=\zxwhite, zxup] at (0.75, 0.5) {};
\end{tz}
~~~\superequalseq{eq:propadjacency}~~~
\begin{tz}[zx,xscale=-2.5/3]
\draw (-1.5,0) to (-1.5,1.5) to [out=up, in=-135] (-0.75, 2.5) to (-0.75,3);
\draw (-0.75, 2.5) to [out=-45, in=up] (0, 1.5) to [out=down, in=135] (0.75,0.5) to (0.75,0);
\draw (0.75,0.5) to [out=45, in=down] (1.5,1.5) to (1.5,3);
\node[zxnode=\zxwhite] at (0, 1.5) {$G$};
\node[zxvertex=\zxwhite, zxdown] at (-0.75, 2.5) {};
\node[zxvertex=\zxwhite, zxup] at (0.75, 0.5) {};
\end{tz}
\end{equation}
If $(V_G,G)$ is a classical graph, it can easily be verified that this map~\eqref{eq:graphprojector1} is a projector onto the following subset (or equivalently onto the commutative subalgebra corresponding to this subset):
\begin{equation}\label{eq:projectorsubset1}\left\{ (v,w) ~|~ v \sim_G w \right\} \subseteq V_G \times V_G\end{equation}
We can use these projectors to express the notion of a classical graph homomorphism diagrammatically:

\begin{proposition} Let $(V_G,G)$ and $(V_H,H)$ be classical graphs. Under Gelfand duality, a graph homomorphism $G\to H$ corresponds to a $*$-cohomomorphism $f:V_G\to V_H$ fulfilling the following equation:
\def\d{2}
\begin{equation}\label{eq:classicalgraphhomomorphism}\begin{tz}[zx,yscale=0.8]
\draw[string] (0,-0.5) to node[front,zxnode=\zxwhite,pos=0.5] {$f$} (0,5);\draw[string] (\d,-0.5) to node[front,zxnode=\zxwhite, pos=0.5] {$f$} (\d,5);
\draw[string] (0, 0.5) to (\d,0.5);
\node[zxvertex=\zxwhite] at (0,0.5){};
\node[zxvertex=\zxwhite] at (\d,0.5){};
\node[zxnode=\zxwhite] at (\d/2, 0.5 ) {$G$};
\end{tz}
~~~=~~~
\begin{tz}[zx,yscale=0.8]
\draw[string] (0,-0.5) to node[front,zxnode=\zxwhite,pos=0.5] {$f$} (0,5);\draw[string] (\d,-0.5) to node[front,zxnode=\zxwhite, pos=0.5] {$f$} (\d,5);
\draw[string] (0, 0.5) to (\d,0.5);
\node[zxvertex=\zxwhite] at (0,0.5){};
\node[zxvertex=\zxwhite] at (\d,0.5){};
\node[zxnode=\zxwhite] at (\d/2, 0.5 ) {$G$};
\draw[string] (0, 4) to (\d,4);
\node[zxvertex=\zxwhite] at (0,4){};
\node[zxvertex=\zxwhite] at (\d,4){};
\node[zxnode=\zxwhite] at (\d/2, 4 ) {$H$};
\end{tz}
~~
\begin{tz}[zx]
\end{tz}
\end{equation}
\end{proposition}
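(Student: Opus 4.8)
The plan is to translate the proposed diagrammatic equation~\eqref{eq:classicalgraphhomomorphism} back into the classical, basis-dependent language via Gelfand duality (Corollary~\ref{cor:finiteGelfandduality} and Theorem~\ref{thm:classificationONB}), and check that it is equivalent to the standard condition defining a graph homomorphism, namely that $v \sim_G w$ implies $f(v) \sim_H f(w)$. Since $f:V_G \to V_H$ is a $*$-cohomomorphism, by the Frobenius-algebraic version of finite Gelfand duality it corresponds to an honest function $\widehat f : \widehat{V_G} \to \widehat{V_H}$ between the sets of copyable elements (which we identify with $V_G$ and $V_H$). First I would recall from~\eqref{eq:graphprojector1} that the ``bent'' adjacency map appearing on the wires of~\eqref{eq:classicalgraphhomomorphism} is, for a classical graph, precisely the projector $\Pi_G$ onto the commutative subalgebra spanned by $\{\ket{v}\otimes\ket{w} : v \sim_G w\}\subseteq V_G\otimes V_G$, and similarly $\Pi_H$ on the right-hand side; this is asserted in the text right before the proposition and may be assumed.

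Next I would evaluate both sides of~\eqref{eq:classicalgraphhomomorphism} on an arbitrary basis element $\ket{v}\otimes\ket{w}$ of $V_G\otimes V_G$. The left-hand side is $(f\otimes f)\circ\Pi_G$; applied to $\ket{v}\otimes\ket{w}$ it gives $\ket{f(v)}\otimes\ket{f(w)}$ if $v\sim_G w$ and $0$ otherwise. The right-hand side is $\Pi_H\circ(f\otimes f)\circ\Pi_G$; applied to $\ket{v}\otimes\ket{w}$ it gives $\Pi_H(\ket{f(v)}\otimes\ket{f(w)}) = \ket{f(v)}\otimes\ket{f(w)}$ if $v\sim_G w$ \emph{and} $f(v)\sim_H f(w)$, and $0$ otherwise. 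Equating these for all $v,w$: the equation holds iff for every pair with $v\sim_G w$ we have $\ket{f(v)}\otimes\ket{f(w)} = \Pi_H(\ket{f(v)}\otimes\ket{f(w)})$, i.e. iff $f(v)\sim_H f(w)$. Since a basis element lies in the image of $\Pi_H$ exactly when its index pair is an edge of $H$, this is precisely the graph homomorphism condition. Conversely, given a graph homomorphism $f$, running the same computation backwards shows~\eqref{eq:classicalgraphhomomorphism} holds, so the correspondence is exact.

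I would also note the one small subtlety worth spelling out: the diagram~\eqref{eq:classicalgraphhomomorphism} as drawn has the ``$G$'' rung appearing \emph{below} the two $f$-boxes on the left and \emph{both} the ``$G$'' rung (below) and the ``$H$'' rung (above the $f$'s) on the right, so strictly the two sides are $\Pi_H^{(\text{above})}\circ (f\otimes f)\circ\Pi_G^{(\text{below})}$ versus $(f\otimes f)\circ\Pi_G^{(\text{below})}$; since $\Pi_G$ is a projector and appears on both sides, the content is entirely the comparison of $(f\otimes f)\Pi_G$ with $\Pi_H (f\otimes f)\Pi_G$, which is what the computation above addresses. No graphical manipulation beyond reading off~\eqref{eq:graphprojector1} is needed; the proof is just this basis-wise evaluation together with the identification of copyable elements with set elements.

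The main obstacle, such as it is, is purely bookkeeping: being careful about the orientation conventions (reading diagrams bottom-to-top, the precise subset~\eqref{eq:projectorsubset1} that $\Pi_G$ projects onto, and the identification of $\widehat{V_G}$ with $V_G$) so that ``$\ket{v}\otimes\ket{w}$ is fixed by $\Pi_H$'' genuinely corresponds to ``$(v,w)$ is an edge''. There is no hard mathematics here — the statement is essentially a definitional unpacking, and the real work of the section lies in the subsequent \emph{quantisation} of this equation (adding the oriented Hilbert-space wire), not in this classical proposition itself.
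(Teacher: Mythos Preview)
Your proposal is correct and follows essentially the same approach as the paper, which simply observes that since the bent adjacency map~\eqref{eq:graphprojector1} is the projector onto the edge subset~\eqref{eq:projectorsubset1}, equation~\eqref{eq:classicalgraphhomomorphism} expresses exactly that $v\sim_G w$ implies $f(v)\sim_H f(w)$. Your version spells out the basis-wise evaluation in more detail than the paper's one-line proof, but the underlying argument is identical.
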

\begin{proof} Since~\eqref{eq:graphprojector1} is a projector onto the subset~\eqref{eq:projectorsubset1}, equation~\eqref{eq:classicalgraphhomomorphism} is simply expressing that if $v\sim_G w$, then $f(v) \sim_H f(w)$.
\end{proof}
\noindent
We quantise this notion following our usual philosophy.
\begin{definition}\label{def:quantumgraphhom}
Let $(A,G)$ and $(B,H)$ be quantum graphs. A \textit{quantum homomorphism} $(H,P) : (A,G) \to (B,H)$ is a quantum function $(H,P): A \to B$ fulfilling the following additional equation:
\def\d{2}
\begin{equation}\label{eq:quantumgraphhomomorphism}\begin{tz}[zx,yscale=0.8,xscale=-1]
\draw[string] (0,-0.5) to node[front,zxnode=\zxwhite,pos=0.57] {$P$} (0,5);\draw[string] (\d,-0.5) to node[front,zxnode=\zxwhite, pos=0.43] {$P$} (\d,5);
\draw[string] (0, 0.5) to (\d,0.5);
\node[zxvertex=\zxwhite] at (0,0.5){};
\node[zxvertex=\zxwhite] at (\d,0.5){};
\node[zxnode=\zxwhite] at (\d/2, 0.5 ) {$G$};
\draw[string, arrow data={0.08}{>},arrow data={0.5}{>}, arrow data ={0.95}{>}] (3, -0.5)  to (3,0.5) to [out=up, in=down] (-1.,4) to (-1.,5);
\end{tz}
~=~
\begin{tz}[zx,yscale=0.8,xscale=-1]
\draw[string] (0,-0.5) to node[front,zxnode=\zxwhite,pos=0.57] {$P$} (0,5);\draw[string] (\d,-0.5) to node[front,zxnode=\zxwhite, pos=0.43] {$P$} (\d,5);
\draw[string] (0, 0.5) to (\d,0.5);
\node[zxvertex=\zxwhite] at (0,0.5){};
\node[zxvertex=\zxwhite] at (\d,0.5){};
\node[zxnode=\zxwhite] at (\d/2, 0.5 ) {$G$};
\draw[string] (0, 4) to (\d,4);
\node[zxvertex=\zxwhite] at (0,4){};
\node[zxvertex=\zxwhite] at (\d,4){};
\node[zxnode=\zxwhite] at (\d/2, 4 ) {$H$};
\draw[string, arrow data={0.08}{>}, arrow data ={0.5}{>},arrow data ={0.95}{>}] (3, -0.5)  to (3,0.5) to [out=up, in=down] (-1.,4) to (-1.,5);
\end{tz}
\end{equation}
\end{definition}
\vspace{5pt}
\begin{remark}
It is possible to have ordinary homomorphisms between quantum graphs (that is, {$*$-cohomomorphisms} obeying \eqref{eq:classicalgraphhomomorphism} or equivalently one-dimensional quantum homomorphisms), quantum homomorphisms between classical graphs and quantum homomorphisms between quantum graphs.
\end{remark}
\noindent
Expressing a quantum homomorphism between classical graphs in terms of its \mbox{underlying} family of projectors $\{P_{v,w}\!\}_{v\in V_G, w\in V_H}$, fulfilling~\eqref{eq:PPM1}, this condition becomes:\looseness=-2
\begin{equation}\label{eq:qgraphhomprojcond} v\sim_G v' \text{ and }w\not\sim_H w'\hspace{0.75cm} \Rightarrow \hspace{0.75cm} P_{v',w'} P_{v,w} = 0
\end{equation}%

\begin{remark}\label{rem:quantumhomMancinska}Man\v{c}inska and Roberson~\cite{Mancinska2016} define quantum graph homomorphisms between classical graphs $G$ and $H$ as perfect quantum strategies --- strategies involving measurements on a shared entangled resource --- for a certain bipartite nonlocal \emph{graph homomorphism game}. This work generalises the earlier quantum graph colouring game~\cite{Avis2006} and the corresponding notion of the quantum chromatic number of a graph~\cite{Cameron2007}.\footnote{The quantum chromatic number of a graph $G$ is the smallest $n$ for which there exists a quantum graph homomorphism $G$ to $K_n$, the complete graph on $n$ vertices.}

In~\cite{Mancinska2016}, the existence of such a perfect strategy is expressed in terms of a family of projectors $\{P_{v,w}\}_{v\in V_G, w\in V_H}$ which form a quantum homomorphism in the sense of Definition~\ref{def:quantumgraphhom}.

\begin{proposition}[{\cite[Lemma 2.3]{Mancinska2016}}]\label{prop:mancinskagraphhom} Given graphs $G$ and $H$, there is a perfect quantum strategy for the graph homomorphism game defined in~\cite{Mancinska2016}, if and only if there is a nonzero family of projectors $\{P_{v,w}\}_{v\in V_G, w\in V_H}$ fulfilling equation~\eqref{eq:PPM1} and~\eqref{eq:qgraphhomprojcond} --- or equivalently, if and only if there is a nonzero quantum homomorphism $ (H, P): (V_G,G)\to (V_H,H)$.
\end{proposition}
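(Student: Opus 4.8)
The proof of Proposition~\ref{prop:mancinskagraphhom} is essentially a translation exercise: the claim is that the notion of ``quantum homomorphism'' as packaged in Definition~\ref{def:quantumgraphhom} coincides, once unfolded into its underlying family of projectors, with the projector conditions appearing in~\cite[Lemma 2.3]{Mancinska2016}. The plan is therefore to reduce everything to projector data and cite the established equivalence from Man\v{c}inska and Roberson.

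\textbf{Plan of proof.} First I would invoke Theorem~\ref{thm:quantumfunctionsinprojectors}: a quantum function $(H,P)\colon V_G\to V_H$ is precisely a family of projectors $\{P_{v,w}\}_{v\in V_G, w\in V_H}$ on $H$ satisfying~\eqref{eq:PPM1}, i.e. $P_{v,w}P_{v,w'}=\delta_{w,w'}P_{v,w}$ and $\sum_{w} P_{v,w}=\mathbbm{1}_H$ for each $v$. Next I would expand the extra graph-homomorphism condition~\eqref{eq:quantumgraphhomomorphism} in the copyable-element basis of $V_G$ and $V_H$, exactly as in the proof of Theorem~\ref{thm:quantumfunctionsinprojectors}: since the map~\eqref{eq:graphprojector1} is the projector onto $\{(v,v')\mid v\sim_G v'\}\subseteq V_G\times V_G$ (and likewise for $H$), equation~\eqref{eq:quantumgraphhomomorphism} translates into the statement that $v\sim_G v'$ and $w\not\sim_H w'$ together force $P_{v',w'}P_{v,w}=0$; this is precisely~\eqref{eq:qgraphhomprojcond}, which is noted just before the remark. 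Thus a quantum homomorphism $(H,P)\colon(V_G,G)\to(V_H,H)$ is the same data as a family of projectors obeying~\eqref{eq:PPM1} and~\eqref{eq:qgraphhomprojcond}, and it is nonzero iff $H\neq 0$ iff the projector family is nonzero.

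It then remains to match this with the perfect-strategy condition. Here I would simply quote~\cite[Lemma 2.3]{Mancinska2016}, which states that a perfect quantum strategy for the graph homomorphism game exists iff there is a nonzero family of projectors $\{P_{v,w}\}$ satisfying completeness and orthogonality along each row (our~\eqref{eq:PPM1}) together with the ``no-bad-edge'' condition that $v\sim_G v'$, $w\not\sim_H w'$ imply $P_{v',w'}P_{v,w}=0$ (our~\eqref{eq:qgraphhomprojcond}). Combining this with the translation above yields the claimed equivalence, including the ``nonzero'' qualifier on both sides.

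\textbf{Main obstacle.} The only genuinely delicate point is bookkeeping about conventions: Man\v{c}inska--Roberson's game is stated with two players, and one must check that their projector relations (possibly stated with rows and columns swapped, or with a measurement indexed by vertices of $G$ producing outcomes in $V_H$) line up with our indexing $P_{v,w}$ with $v\in V_G$, $w\in V_H$, and in particular that \emph{only} the row conditions~\eqref{eq:PPM1} are required for a homomorphism (the column conditions~\eqref{eq:PPM2} would make it a quantum bijection, which is not wanted here). I would therefore spend the bulk of the write-up making the dictionary between~\eqref{eq:quantumgraphhomomorphism} expanded in the classical basis and the conditions in~\cite[Lemma 2.3]{Mancinska2016} explicit, and otherwise the result follows formally from Theorem~\ref{thm:quantumfunctionsinprojectors} and the cited lemma.
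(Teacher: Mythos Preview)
Your proposal is correct and matches the paper's approach exactly: the paper states this proposition as a direct citation of \cite[Lemma 2.3]{Mancinska2016} and relies on the translation of the quantum homomorphism condition into projector data already carried out in the text (Theorem~\ref{thm:quantumfunctionsinprojectors} and equation~\eqref{eq:qgraphhomprojcond}), without giving any further proof. Your bookkeeping concern about row versus column conditions is the right thing to check, but beyond that there is nothing more to do.
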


\end{remark}

\begin{remark} Quantum functions are themselves perfect quantum strategies for a \emph{function game}, whose deterministic classical strategies correspond to classical functions between finite sets. In this game, a verifier sends an element of a finite set $X$ to Alice and Bob, who in turn return an element of another set $Y$.  A strategy is perfect if Alice and Bob return the same element $y\in Y $ whenever they receive the same element $x\in X$. Of course, every such game admits a classical strategy.
\end{remark}

\ignore{
A normal form strategy is defined by the dimension of the shared state and the projective measurements performed by each party, which must obey the conditions given in the following proposition.
\begin{proposition}[\cite{Mancinska2016}, Lemma 2.3]\label{prop:mancinskagraphhom}
A quantum graph homomorphism between two graphs $G$ and $H$ exists if and only if there exists a system of projectors $P_{x,y}$ for $x \in V_G, y \in V_H$ satisfying the following conditions:
\begin{align*}
&\sum_{y \in V_H} P_{x,y} = \mathbbm{1} \\
&P_{x,y} P_{x,y'} = \delta_{y,y'} P_{x,y}\\
& x\sim_G x' \text{ \emph{and} }y\not\sim_H y'\hspace{0.1cm} \Rightarrow \hspace{0.1cm} P_{x',y'} P_{x,y} = 0
\end{align*}
\end{proposition}
We call systems of projectors satisfying the conditions of Proposition \ref{prop:mancinskagraphhom} \emph{normal form quantum graph homomorphisms}. They are perfect strategies for the graph homomorphism game when the parties share a maximally entangled state whose dimension is the dimension of the underlying Hilbert space of the projectors.
\begin{proposition}
Our quantum graph homomorphisms are precisely normal form quantum graph homomorphisms in the sense of the definition of Man\v{c}inska et al (Proposition \ref{prop:mancinskagraphhom}).
\end{proposition}
\begin{proof}
Clear; the first two equations of Proposition \ref{prop:mancinskagraphhom} define a quantum function between classical sets in its representation as a system of projectors \eqref{eq:PPM1}, and the last equation is precisely \eqref{eq:qgraphhomprojcond}.
\end{proof}
\noindent}

\begin{remark}\label{rem:stahlkegraphhoms}
Stahlke~\cite{Stahlke2016} defines homomorphisms~\cite[Definition 7]{Stahlke2016} and quantum homomorphisms~\cite[Definition 15]{Stahlke2016} (there called entanglement assisted homomorphisms) between quantum graphs on endomorphism algebras (cf. Definition~\ref{def:operatorsystem} and Proposition~\ref{prop:quantumgraphDuanthesame}), and relates them to zero-error strategies for quantum source-channel coding in various scenarios.

Stahlke's notions of homomorphism and quantum homomorphism are defined in terms of `mixed' completely positive trace preserving (CPTP) maps which, if restricted to the `pure' setting of $*$-homomorphisms, agree with our definitions.\footnote{More precisely, the pure $*$-homomorphism version of Stahlke's entanglement assisted homomorphisms agrees with our quantum homomorphisms if the entangled resource is a maximally entangled state; or, in other words, if the positive operator $\Lambda$ used to define entanglement assisted homomorphisms (see~\cite[Definition 15]{Stahlke2016}) is the identity on some finite-dimensional Hilbert space $V$.}

\ignore{
 between quantum graphs on matrix algebras 
There is another notion of quantum graph homomorphism due to Stahlke~\cite{Stahlke2016}, defined between the quantum graphs of Duan et al~\cite{Duan2013} (Definition \ref{def:duanquantumgraphs}) which we proved in Proposition \ref{prop:duandefissameasours} to be equivalent to our own. These quantum graph homomorphisms were shown to correspond to zero-error strategies for quantum source-channel coding in various scenarios~\cite{Stahlke2016}, and, in the case of classical graphs, the relationship between this definition and that of Man\v{c}inska et al has been studied by Ortiz and Paulsen~\cite{Ortiz2016}. Although Stahlke's notion is closely related to our own, a comparison is rendered somewhat complex by the fact that the graph homomorphisms of Stahlke are `mixed' CP maps, whereas we restrict to `pure' *-homomorphisms. Since we will not need Stahlke's notion of graph homomorphism for the immediate application of results in this paper to a classification of pseudotelepathic graphs, to appear in a separate paper~\cite{paper1b}, we will reserve consideration of the relationship between this important notion and our own for future work.}
\end{remark}

\subsection{Quantum graph isomorphisms}\label{sec:qisosofgraphs}
Recall that a classical \textit{graph isomorphism} is an invertible graph homomorphism whose inverse is also a graph homomorphism. This condition can be more concisely expressed as follows.

\begin{proposition} Let $G$ and $H$ be classical graphs. Under Gelfand duality, a graph isomorphism $G\to H$ corresponds to a $*$-isomorphism $f:V_G \to V_H$ fulfilling the \mbox{following equation:}
\begin{equation}\label{eq:classicalgraphiso}
\begin{tz}[zx,every to/.style={out=up, in=down},scale=1]
\draw (0,0) to (0,3);
\path (1.75,0) to (1.75,0.75) to  [looseness=1.1, in looseness=0.9]  node[zxnode=\zxwhite, pos=0.5] {$f$}(-1.75,3);
\node[zxnode=\zxwhite] at (0,0.9){$G$};
\end{tz}
~~~= ~~~
\begin{tz}[zx,every to/.style={out=up, in=down},scale=1]
\draw (0,0) to (0,3);
\path (1.75,0) to (1.75,0.75) to  [looseness=1.1, in looseness=0.9]  node[zxnode=\zxwhite, pos=0.5] {$H$}(-1.75,3);
\node[zxnode=\zxwhite] at (0,0.9){$f$};
\end{tz}
\end{equation}%
\end{proposition}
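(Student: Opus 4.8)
The plan is to unwind both sides of \eqref{eq:classicalgraphiso} in the classical basis of copyable elements and show that the displayed diagram equation is equivalent to the familiar statement ``$v \sim_G w \iff f(v) \sim_H f(w)$'' combined with the fact that $f$ is a bijection. Concretely, write $f$ in the basis of copyable elements so that $f$ corresponds to a function (also called $f$) on vertex sets, and recall from~\eqref{eq:graphprojector1} and the paragraph after it that the ``box'' $\begin{tz}[zx]\draw (0,0) to (0,1);\node[zxvertex=\zxwhite] at (0,0.5){};\end{tz}$-style $G$-decorated cap–cup expression $\left(\begin{tz}[zx,every to/.style={out=up, in=down},scale=0.5]\draw (0,0) to (0,1.5);\path (1,0) to (1,0.5) to [looseness=1.1, in looseness=0.9] node[zxnode=\zxwhite, pos=0.5] {$G$}(-1,1.5);\end{tz}\right)$ acts, in the classical basis, as the linear map sending a basis vector labelled by a vertex $v$ to the superposition $\sum_{w \,:\, v\sim_G w} \ket{w}$. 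In other words, this expression is essentially the adjacency matrix $G$ viewed through the Frobenius structure.

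First I would expand the left-hand side of~\eqref{eq:classicalgraphiso}: reading the diagram, it sends a vertex $v \in V_G$ to $\sum_{w' \,:\, v \sim_G w'} \ket{f(w')}$. Then I would expand the right-hand side: it sends $v$ to $\sum_{w \,:\, f(v) \sim_H w} \ket{w}$. Equality of these two vectors for every $v$ is precisely the condition that $\{f(w') : v \sim_G w'\} = \{w : f(v) \sim_H w\}$ as multisets (and since $f$ is injective, as sets). When $f$ is a bijection this is exactly the biconditional $v \sim_G w' \iff f(v) \sim_H f(w')$ for all $v, w'$, i.e. that $f$ and $f^{-1}$ are both graph homomorphisms — equivalently, $f$ is a graph isomorphism $G \to H$. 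Conversely, any graph isomorphism visibly satisfies the two expansions, so the equation holds. This establishes the ``if and only if''.

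It remains to check the easy direction that $f$ being a $*$-isomorphism (rather than merely a $*$-cohomomorphism) is the correct hypothesis: a $*$-cohomomorphism between the commutative Frobenius algebras $V_G$ and $V_H$ corresponds to an arbitrary function, while a $*$-isomorphism corresponds to a bijection (this is recorded in the table after Terminology~\ref{not:setalgebra} and follows from Proposition~\ref{prop:Frobeniusinvertible}); so phrasing the statement with ``$*$-isomorphism $f$'' builds in bijectivity, which is exactly what is needed for the multiset argument above to collapse to the set-level biconditional. One could alternatively absorb bijectivity as a \emph{consequence}: if a $*$-cohomomorphism $f$ satisfies~\eqref{eq:classicalgraphiso} then comparing dimensions (or applying the counit of the graph structure) forces $|V_G| = |V_H|$ and injectivity, hence bijectivity — but for the classical statement it is cleanest to take the $*$-isomorphism as a hypothesis and just verify the equivalence of the two diagrammatic sides with the set-theoretic isomorphism condition.

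\textbf{Main obstacle.} The only subtlety I anticipate is the careful bookkeeping of the Frobenius (comultiplication/multiplication) decorations when translating the cap–cup-with-$G$ picture into the basis: one must be sure that the orientation and the placement of the $G$-vertex yield the adjacency matrix rather than its transpose, and that the white dots on the legs of the cap/cup really implement ``copy'' and ``sum'' correctly. Since $G$ is self-adjoint and (for classical graphs) symmetric, transpose-versus-not does not actually change the answer, so even this potential pitfall is harmless; the proof is essentially a direct verification once~\eqref{eq:graphprojector1} and the classical-basis dictionary are in hand. I do not expect any genuinely hard step.
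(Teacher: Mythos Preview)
Your argument is correct: the diagram in~\eqref{eq:classicalgraphiso} is literally the equation $f\circ G = H\circ f$ of linear maps, and expanding both sides on basis vectors gives exactly the biconditional $v\sim_G w \Leftrightarrow f(v)\sim_H f(w)$ once $f$ is a bijection. Two small comments. First, your detour through the ``cap--cup'' expression of~\eqref{eq:graphprojector1} is unnecessary here: there is no cap or cup in~\eqref{eq:classicalgraphiso}, only a single wire carrying the adjacency matrix followed by $f$. Second, your parenthetical suggestion that bijectivity could be \emph{derived} from~\eqref{eq:classicalgraphiso} for a mere $*$-cohomomorphism is false (take $G$ and $H$ to be edgeless graphs on different vertex sets); fortunately your main line correctly treats the $*$-isomorphism hypothesis as given.

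The paper takes a different route: it does not prove the proposition directly at all, but records it as well known and observes that it drops out as the one-dimensional case of Theorem~\ref{thm:dualisablequantumgraph} (a quantum homomorphism is dagger-dualisable iff it is a quantum isomorphism). Your approach is the elementary basis computation any reader would do by hand; the paper's approach instead absorbs the classical statement into the general quantum machinery, so that the classical equivalence ``invertible graph homomorphism with homomorphism inverse $\Leftrightarrow$ bijection intertwining adjacency matrices'' becomes the $\dim H=1$ shadow of ``dagger-dualisable quantum homomorphism $\Leftrightarrow$ quantum bijection satisfying~\eqref{eq:quantumgraphisomorphism}''. Your proof is self-contained and transparent; the paper's buys uniformity with the quantum case at the cost of invoking heavier results.
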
%
\begin{proof} This is well-known, but will also follow as a corollary of Theorem~\ref{thm:dualisablequantumgraph}. \end{proof}

As in Section~\ref{sec:twocharacterisations}, we can define a quantum graph isomorphism either as a quantum bijection fulfilling a quantised version of~\eqref{eq:classicalgraphiso}, or as a dualisable quantum graph homomorphism in the 2-category \QGraph\ of quantum graphs, quantum graph homomomorphisms and intertwiners, which we will shortly define. In fact, we will show in Theorem~\ref{thm:dualisablequantumgraph} that, as with quantum bijections, both these approaches lead to the same notion of quantum graph isomorphism.

\begin{definition}\label{def:quantumgraphiso} Let $(A,G)$ and $(A',G')$ be quantum graphs. A \textit{quantum isomorphism} $(H,P): (A,G)\to (A',G')$ is a quantum bijection $(H,P): A \to A'$ fulfilling the following additional equation:%
\begin{calign}\label{eq:quantumgraphisomorphism}
\begin{tz}[zx,every to/.style={out=up, in=down},scale=1,xscale=-1]
\draw (0,0) to (0,3);
\draw[arrow data={0.2}{>}, arrow data={0.9}{>}] (1.75,0) to (1.75,0.75) to  [looseness=1.1, in looseness=0.9]  node[zxnode=\zxwhite, pos=0.5] {$P$}(-1.75,3);
\node[zxnode=\zxwhite] at (0,0.9){$G$};
\end{tz}
\quad =\quad 
\begin{tz}[zx,every to/.style={out=up, in=down},scale=1,xscale=-1]
\draw (0,0) to (0,3);
\draw[arrow data={0.2}{>}, arrow data={0.8}{>}] (1.75,0) to [looseness=0.9] node[zxnode=\zxwhite, pos=0.5] {$P$} (-1.75,2.5) to (-1.75,3);
\node[zxnode=\zxwhite] at (0,2.35) {$G'$};
\end{tz}
\end{calign}
\end{definition}
\noindent
Expressing quantum isomorphisms between classical graphs in terms of their underlying projective permutation matrix $\{P_{v,w}\}_{v\in V_G, w\in V_{G'}}$, which fulfils~\eqref{eq:PPM1} and~\eqref{eq:PPM2}, the quantum isomorphism condition \eqref{eq:quantumgraphisomorphism} becomes:
\begin{equation}\label{eq:ppmquantumgraphcondition1}
\forall~a\in V_{G},~b \in V_{G'}:\hspace{1cm}\sum_{i \in \mathrm{nbh}_{G}(a)} P_{i,b} = \sum_{j \in \mathrm{nbh}_{G'}(b)} P_{a,j}
\end{equation}
Here, $\mathrm{nbh}_{G}(a)$ and $\mathrm{nbh}_{G'}(b)$ denote the set of neighboring vertices of $a$ and $b$ in $G$ and $G'$, respectively. Equivalently, this condition can be expressed as follows: 
\begin{equation}\label{eq:ppmquantumgraphcondition2}\text{If }\left(v\sim_{G}v'\text{ and }w\not\sim_{G'}w'\right)\text{ or }\left(v\not\sim_{G} v'\text{ and }w\sim_{G'} w'\right)\hspace{0.5cm} \Rightarrow\hspace{0.5cm} P_{v',w'} P_{v,w} = 0
\end{equation}

\noindent
The equivalence between~\eqref{eq:ppmquantumgraphcondition1} and~\eqref{eq:ppmquantumgraphcondition2} was proven by Atserias et al.~\cite[Theorem 5.8]{Atserias2016}, but will also follow from Theorem~\ref{thm:dualisablequantumgraph} and the fact that a quantum isomorphism is a dualisable quantum homomorphism whose dual $\overline{P}$ is also a quantum homomorphism. 

\begin{remark}\label{rem:Atseriasquantumiso}
Atserias et al~\cite{Atserias2016} define quantum graph isomorphisms between classical graphs as perfect quantum strategies for the \emph{graph isomorphism game}, which is a symmetric version of the graph homomorphism game of Man\v{c}inska and Roberson (see Remark~\ref{rem:quantumhomMancinska}). 
\end{remark}
\noindent As with quantum graph homomorphisms, Atserias et al. show that such perfect quantum strategies exist if and only if there is a projective permutation matrix forming a quantum isomorphism in the sense of Definition~\ref{def:quantumgraphiso}.

\begin{proposition}[{\cite[Theorem 5.4]{Atserias2016}}]\label{prop:Atserias}Given classical graphs $G$ and $G'$, there is a perfect quantum strategy for the graph isomorphism game defined in~\cite{Atserias2016}, if and only if there is a nonzero family of projectors $\{P_{v,w}\}_{v\in V_G, w\in V_{G'}}$ fulfilling equations~\eqref{eq:PPM1},~\eqref{eq:PPM2} and~\eqref{eq:ppmquantumgraphcondition2} --- or equivalently, if and only if there is a nonzero quantum isomorphism $(H,P):(V_G,G) \to (V_{G'},G')$.\looseness=-2
\end{proposition}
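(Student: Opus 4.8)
The proof splits along the two equivalences that are asserted: the translation between perfect quantum strategies and families of projectors, and the identification of such families with quantum isomorphisms. The second is the one that lives most naturally in our framework, so I would dispatch it first. By Theorem~\ref{thm:quantumfunctionsinprojectors}, a quantum bijection $V_G \to V_{G'}$ of the classical vertex sets is precisely a family of projectors $\{P_{v,w}\}_{v\in V_G,\, w\in V_{G'}}$ on a Hilbert space $H$ satisfying \eqref{eq:PPM1} and \eqref{eq:PPM2}. Expanding the defining equation \eqref{eq:quantumgraphisomorphism} of a quantum isomorphism in the copyable bases of $V_G$ and $V_{G'}$ --- using that the graph projectors of \eqref{eq:graphprojector1} project onto the adjacency subsets \eqref{eq:projectorsubset1} --- yields exactly the neighbourhood identity \eqref{eq:ppmquantumgraphcondition1}. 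Given \eqref{eq:PPM1} and \eqref{eq:PPM2}, this identity is in turn equivalent to the vanishing condition \eqref{eq:ppmquantumgraphcondition2}, by a short argument exploiting mutual orthogonality of the rows and columns of the matrix of projectors (this is \cite[Theorem 5.8]{Atserias2016}, and also follows from Theorem~\ref{thm:dualisablequantumgraph} once a quantum isomorphism is recognised as a dualisable quantum homomorphism whose dual is again a quantum homomorphism). Nonzeroness of the family is preserved under all these reformulations, so a nonzero family of projectors obeying \eqref{eq:PPM1}, \eqref{eq:PPM2} and \eqref{eq:ppmquantumgraphcondition2} is the same datum as a nonzero quantum isomorphism $(H,P)\colon (V_G,G)\to(V_{G'},G')$.

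For the equivalence with perfect quantum strategies I would invoke \cite[Theorem 5.4]{Atserias2016} and recall its proof in outline, exactly as for the homomorphism game in Proposition~\ref{prop:mancinskagraphhom}. The graph isomorphism game of Atserias et al.\ is a synchronous two-player nonlocal game: the referee distributes one vertex of $V_G \sqcup V_{G'}$ to Alice and one to Bob, each answers with a vertex of the graph they did \emph{not} receive, and the pair wins iff the received and returned vertices are consistent with respect to equality and adjacency. A perfect quantum strategy for such a synchronous game can be put in normal form: the shared state may be taken maximally entangled and the two players' projective measurements to coincide, producing a single family $\{P_{v,w}\}$ indexed by a vertex $v$ of one graph and a response vertex $w$ of the other. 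The three classes of winning conditions then translate, respectively, into completeness and orthogonality along each row \eqref{eq:PPM1}, the same along each column \eqref{eq:PPM2}, and the adjacency-consistency condition \eqref{eq:ppmquantumgraphcondition2}; conversely any nonzero such family yields a perfect strategy on the corresponding maximally entangled resource. Chaining this with the first paragraph gives the statement.

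The main obstacle is the normal-form reduction in the second paragraph --- showing that an arbitrary perfect quantum strategy can be replaced by one with a maximally entangled state and coinciding measurements. This is the technical core of the synchronous-game machinery (in the spirit of \cite{Mancinska2016,Paulsen2016}) and is carried out in \cite{Atserias2016}; rather than reprove it I would cite it, so that our own contribution reduces to making precise the dictionary between the resulting projector families and the quantum isomorphisms of Definition~\ref{def:quantumgraphiso}. The only genuinely new content is the observation, via Theorem~\ref{thm:quantumfunctionsinprojectors} and the coordinate computation of \eqref{eq:quantumgraphisomorphism}, that these projector families are exactly the underlying data of $1$-morphisms in $\QGraph$.
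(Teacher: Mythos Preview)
Your proposal is correct and in fact goes further than the paper does. The paper gives no proof of this proposition at all: it is stated as a citation of \cite[Theorem 5.4]{Atserias2016}, and the ``or equivalently'' clause is justified by the discussion immediately preceding the statement, which (i) expands \eqref{eq:quantumgraphisomorphism} in the copyable bases to obtain \eqref{eq:ppmquantumgraphcondition1}, and (ii) notes that the equivalence of \eqref{eq:ppmquantumgraphcondition1} and \eqref{eq:ppmquantumgraphcondition2} is \cite[Theorem 5.8]{Atserias2016} and also follows from Theorem~\ref{thm:dualisablequantumgraph}. Your first paragraph reproduces exactly this reasoning; your second paragraph, sketching the normal-form reduction for synchronous games, is additional exposition that the paper simply delegates to \cite{Atserias2016}.
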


\subsection{The 2-category $\QGraph$}\label{sec:qgraph}
We now show that these new structures again form a 2-category under composition, which we call $\QGraph$.
\begin{definition}\label{def:2catqgraph} The 2-category $\QGraph$ is built from the following structures:
\begin{itemize}
\item \textbf{objects} are quantum graphs $(A,G),(A',G')$, ...;
\item \textbf{1-morphisms} $(A,G)\!\begin{tikzpicture}\node at (0,0){};\draw[->](0,0)--(0.25,0);\end{tikzpicture}\, (\!A',G')$ are quantum homomorphisms $(H,P)\!:\!\!(A,G)\!\begin{tikzpicture}\node at (0,0){};\draw[->](0,0)--(0.25,0);\end{tikzpicture}(\!A',G')$;\looseness=-5
\item \textbf{2-morphisms} $(H,P)\!\begin{tikzpicture}\node at (0,0){};\draw[->](0,0)--(0.25,0);\end{tikzpicture}\, (\!H', P')$ are intertwiners of the underlying quantum functions (see Definition~\ref{def:intertwiner}).
\end{itemize}
\end{definition}
\noindent
As with $\QBij \subset \QSet$, we may define a subcategory $\QIso \subset \QGraph $ of quantum graphs, quantum graph isomorphisms and their intertwiners.

As with $\QSet$, there is a forgetful dagger $2$-functor $F:\QGraph \to \BHilb$ mapping each quantum homomorphism to its underlying Hilbert space and each intertwiner to its corresponding linear map. This $2$-functor makes $\QGraph$ a concrete dagger $2$-category $(\QGraph, F)$ (see Definition~\ref{def:Hilb2category}).

\begin{remark}\label{rem:last} Every quantum homomorphism between quantum graphs $(A,G)$ and $(B,H)$ is a quantum function between the underlying quantum sets of vertices $A$ and $B$. Since an intertwiner of quantum homomorphisms is exactly an intertwiner of the underlying quantum functions, it follows that the category $\QGraph((A,G),(B,H))$ is a full subcategory of the category $\QSet(A,B)$. In other words, there is a forgetful $2$-functor $\QGraph \to \QSet$ which is locally fully faithful. \end{remark}
\subsection{Dualisable morphisms in \QGraph }
We now show that a quantum isomorphism is nothing other than a dagger-dualisable (see Definition~\ref{def:daggerdualisable}) quantum homomorphism.
\begin{theorem}\label{thm:dualisablequantumgraph} A quantum homomorphism is dagger-dualisable in $\QGraph$ if and only if it is a quantum isomorphism. 
\end{theorem}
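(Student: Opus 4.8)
The plan is to reduce Theorem~\ref{thm:dualisablequantumgraph} to the already-proven Theorem~\ref{thm:dual2}. Recall from Remark~\ref{rem:last} that a quantum homomorphism $(H,P):(A,G)\to(B,H)$ is in particular a quantum function $A\to B$, that $\QGraph((A,G),(B,H))$ is a full subcategory of $\QSet(A,B)$, and that there is a locally fully faithful forgetful $2$-functor $\QGraph\to\QSet$. So the question of dagger-dualisability in $\QGraph$ differs from dagger-dualisability in $\QSet$ only in that the candidate dual must itself be a quantum \emph{homomorphism} (satisfying~\eqref{eq:quantumgraphhomomorphism}) rather than merely a quantum function.

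First I would prove the forward direction: if $(H,P)$ is dagger-dualisable in $\QGraph$, then since the forgetful $2$-functor $\QGraph\to\QSet$ preserves dagger duals (a dagger dual is defined purely in terms of cups, caps, unit and counit $2$-morphisms, all of which map through the $2$-functor, and the underlying Hilbert-space duality condition is unchanged), $(H,P)$ is dagger-dualisable in $\QSet$. By Theorem~\ref{thm:dual2}, $(H,P)$ is then a quantum bijection and its underlying map $P$ is unitary, with dagger dual $(H^*,\overline P)$ given by~\eqref{eq:daggerdual}. It remains to check that $(H,P)$ satisfies the quantum-isomorphism equation~\eqref{eq:quantumgraphisomorphism}. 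The key point is that the dual of $(H,P)$ in $\QGraph$ must coincide (up to unitary) with $(H^*,\overline P)$ since dagger duals are essentially unique (Theorem~\ref{thm:dual2}, last sentence), and $(H^*,\overline P)$ being a quantum homomorphism means it satisfies~\eqref{eq:quantumgraphhomomorphism}. I would then manipulate the equation~\eqref{eq:quantumgraphhomomorphism} for $\overline P$ graphically — bending wires using the snake equations~\eqref{eq:snake} and the fact that $\overline P=P^\dagger$ up to cups and caps, and using that $G,H$ are self-adjoint — to extract exactly~\eqref{eq:quantumgraphisomorphism} for $P$. Concretely, the homomorphism condition on $\overline P$ is the homomorphism condition on $P$ "read upside down", which by self-adjointness of the quantum adjacency matrices and unitarity of $P$ is equivalent to the statement that $P$ intertwines $G$ and $G'$ in the sense of~\eqref{eq:quantumgraphisomorphism}.

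Conversely, if $(H,P)$ is a quantum isomorphism, then it is in particular a quantum bijection, so by Theorem~\ref{thm:dual2} it is dagger-dualisable in $\QSet$ with dagger dual $(H^*,\overline P)$. I must show two things: that $(H,P)$ is a quantum \emph{homomorphism} (so that it is actually a $1$-morphism in $\QGraph$ whose dualisability we can even ask about), and that $(H^*,\overline P)$ is also a quantum homomorphism. For the first, I would derive~\eqref{eq:quantumgraphhomomorphism} from~\eqref{eq:quantumgraphisomorphism}: sliding one copy of $P$ along the directed wire in~\eqref{eq:quantumgraphisomorphism}, using that $P$ is a quantum function and unitary, turns the "one $P$, adjacency matrix $G$ at the bottom" picture into the "two $P$'s with $G$ between" picture, i.e. exactly the left side of~\eqref{eq:quantumgraphhomomorphism}, and similarly on the right with $G'$; this is the same kind of graphical slide used in the $2\Rightarrow 4$ step of the proof of Theorem~\ref{thm:dual2}. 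For the second, I would apply the dagger to~\eqref{eq:quantumgraphisomorphism} and use self-adjointness of $G,G'$ together with the defining formula~\eqref{eq:daggerdual} for $\overline P$ to obtain the quantum-isomorphism equation for $\overline P:B\to A$ (with $G'$ and $G$ swapped), then run the previous argument to conclude $\overline P$ is a quantum homomorphism $(B,H')\to(A,G)$. Hence $(H,P)$ has a dagger dual inside $\QGraph$.

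The main obstacle I anticipate is the bookkeeping in these graphical slides: correctly tracking orientations on the directed wire, keeping the self-adjointness of the quantum adjacency matrices in the right place, and verifying that the dual constructed in $\QSet$ genuinely lands in the subcategory $\QGraph$ with the \emph{same} units and counits — rather than merely being abstractly equivalent to some dual. The cleanest route is probably to first establish the purely equational fact that, for a quantum bijection $(H,P)$, equation~\eqref{eq:quantumgraphisomorphism} for $P$ is equivalent to equation~\eqref{eq:quantumgraphhomomorphism} for both $P$ and $\overline P$ simultaneously; once that lemma is in hand, Theorem~\ref{thm:dualisablequantumgraph} follows by combining it with Theorem~\ref{thm:dual2} and the fact that $\QGraph\to\QSet$ is locally fully faithful and dagger-dual-preserving.
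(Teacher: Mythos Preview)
Your proposal is correct and follows essentially the same approach as the paper. Both arguments reduce to Theorem~\ref{thm:dual2} via the locally fully faithful forgetful $2$-functor $\QGraph\to\QSet$, then hinge on the equational fact that, for a quantum bijection, equation~\eqref{eq:quantumgraphisomorphism} is equivalent to~\eqref{eq:quantumgraphhomomorphism} holding for both $P$ and $\overline{P}$; the paper carries out exactly the graphical manipulations you anticipate (rotating the homomorphism condition on $\overline{P}$ by $180^\circ$, combining with~\eqref{eq:quantumgraphhomomorphism} for $P$, and contracting with unit and counit).
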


\begin{proof} Let $(A,G)$ and $(A',G')$ be quantum graphs and let $(H,P)$ be a dagger-dualisable quantum homomorphism between them. In particular, $(H,P)$ is a dagger-dualisable quantum function --- and therefore by Theorem~\ref{thm:dual2} a quantum bijection. Moreover, the dual $(H^*,\conj{P})$, defined in \eqref{eq:daggerdual}, is a quantum homomorphism from $(A',G')$ to $(A,G)$, i.e. it fulfills the following equation:
\def\d{2}
\begin{equation}\label{eq:dualisquantumhom}
\begin{tz}[zx,yscale=0.8,xscale=-1]
\draw[string] (0,-0.5) to node[front,zxnode=\zxwhite,pos=0.57] {$\conj{P}$} (0,5);
\draw[string] (\d,-0.5) to node[front,zxnode=\zxwhite, pos=0.43] {$\conj{P}$} (\d,5);
\draw[string] (0, 0.5) to (\d,0.5);
\node[zxvertex=\zxwhite] at (0,0.5){};
\node[zxvertex=\zxwhite] at (\d,0.5){};
\node[zxnode=\zxwhite] at (\d/2, 0.5 ) {$G'$};
\draw[string, arrow data={0.08}{<}, arrow data ={0.5}{<},arrow data ={0.95}{<}] (3, -0.5)  to (3,0.5) to [out=up, in=down] (-1.,4) to (-1.,5);
\end{tz}
~~~=~~~
\begin{tz}[zx,yscale=0.8,xscale=-1]
\draw[string] (0,-0.5) to node[front,zxnode=\zxwhite,pos=0.57] {$\conj{P}$} (0,5);
\draw[string] (\d,-0.5) to node[front,zxnode=\zxwhite, pos=0.43] {$\conj{P}$} (\d,5);
\draw[string] (0, 0.5) to (\d,0.5);
\node[zxvertex=\zxwhite] at (0,0.5){};
\node[zxvertex=\zxwhite] at (\d,0.5){};
\node[zxnode=\zxwhite] at (\d/2, 0.5 ) {$G'$};
\draw[string] (0, 4) to (\d,4);
\node[zxvertex=\zxwhite] at (0,4){};
\node[zxvertex=\zxwhite] at (\d,4){};
\node[zxnode=\zxwhite] at (\d/2, 4 ) {$G$};
\draw[string, arrow data={0.08}{<}, arrow data ={0.5}{<},arrow data ={0.95}{<}] (3, -0.5)  to (3,0.5) to [out=up, in=down] (-1.,4) to (-1.,5);
\end{tz}
\end{equation}
Recalling~\eqref{eq:daggerdual} and using \eqref{eq:propadjacency}, we rotate this equation by 180 degree to the left and obtain: 
\def\d{2}
\[
\begin{tz}[zx,yscale=0.8,xscale=-1]
\draw[string] (0,-0.5) to node[front,zxnode=\zxwhite,pos=0.57] {$P$} (0,5);\draw[string] (\d,-0.5) to node[front,zxnode=\zxwhite, pos=0.43] {$P$} (\d,5);
\draw[string] (0, 4) to (\d,4);
\node[zxvertex=\zxwhite] at (0,4){};
\node[zxvertex=\zxwhite] at (\d,4){};
\node[zxnode=\zxwhite] at (\d/2, 4 ) {$G'$};
\draw[string, arrow data={0.08}{>}, arrow data ={0.5}{>},arrow data ={0.95}{>}] (3, -0.5)  to (3,0.5) to [out=up, in=down] (-1.,4) to (-1.,5);
\end{tz}
~~~=~~~
\begin{tz}[zx,yscale=0.8,xscale=-1]
\draw[string] (0,-0.5) to node[front,zxnode=\zxwhite,pos=0.57] {$P$} (0,5);\draw[string] (\d,-0.5) to node[front,zxnode=\zxwhite, pos=0.43] {$P$} (\d,5);
\draw[string] (0, 0.5) to (\d,0.5);
\node[zxvertex=\zxwhite] at (0,0.5){};
\node[zxvertex=\zxwhite] at (\d,0.5){};
\node[zxnode=\zxwhite] at (\d/2, 0.5 ) {$G$};
\draw[string] (0, 4) to (\d,4);
\node[zxvertex=\zxwhite] at (0,4){};
\node[zxvertex=\zxwhite] at (\d,4){};
\node[zxnode=\zxwhite] at (\d/2, 4 ) {$G'$};
\draw[string, arrow data={0.08}{>}, arrow data ={0.5}{>},arrow data ={0.95}{>}] (3, -0.5)  to (3,0.5) to [out=up, in=down] (-1.,4) to (-1.,5);
\end{tz}
~~~\superequalseq{eq:quantumgraphhomomorphism}~~~
\begin{tz}[zx,yscale=0.8,xscale=-1]
\draw[string] (0,-0.5) to node[front,zxnode=\zxwhite,pos=0.57] {$P$} (0,5);\draw[string] (\d,-0.5) to node[front,zxnode=\zxwhite, pos=0.43] {$P$} (\d,5);
\draw[string] (0, 0.5) to (\d,0.5);
\node[zxvertex=\zxwhite] at (0,0.5){};
\node[zxvertex=\zxwhite] at (\d,0.5){};
\node[zxnode=\zxwhite] at (\d/2, 0.5 ) {$G$};
\draw[string, arrow data={0.08}{>},arrow data={0.5}{>}, arrow data ={0.95}{>}] (3, -0.5)  to (3,0.5) to [out=up, in=down] (-1.,4) to (-1.,5);
\end{tz}
\]
Contracting with the unit on the bottom left and the counit on the top right recovers condition~\eqref{eq:quantumgraphisomorphism}. 

For the converse, we note that a quantum isomorphism in the sense of Definition~\ref{def:quantumgraphiso} is in particular a quantum bijection and therefore commutes both with comultiplication~\eqref{eq:quantumfunction}, multiplication~\eqref{eq:monadmap} and the graphs~\eqref{eq:quantumgraphisomorphism}. Therefore, it is clearly a quantum homomorphism~\eqref{eq:quantumgraphhomomorphism} whose dual is also a quantum homomorphism~\eqref{eq:dualisquantumhom}.
\end{proof}

\begin{remark}
Following Remark~\ref{rem:monoidalsubcats}, we can think of the monoidal categories\\ $\QIso((A,G),(A,G))$ of quantum automorphisms of a quantum graph $(A,G)$ as a noncommutative or quantum version of the automorphism group of this quantum graph.\looseness=-2
\end{remark}

\begin{remark}\label{rem:Banicagraph}
Based on Wang's `quantum symmetry groups'~\cite{Wang1998} (see Section~\ref{sec:Wang}), Banica~\cite{Banica2005} introduced\footnote{An earlier, related but different definition is due to Bichon~\cite{Bichon2003}.} `quantum automorphism groups' of finite classical graphs as noncommutative variants of their automorphism groups. \ignore{These noncommutative algebras have been studied by Banica, Bichon and others~\cite{Banica2005,Banica2007_2,Banica2007_3,Bichon2003}.}
We now show that our quantum graph automorphism categories are the categories of finite-dimensional representations of the Hopf $C^*$-algebras corresponding to these quantum automorphism groups. This can again be understood as asserting that our quantum graph automorphisms are quantum elements of these quantum groups (cf. Section~\ref{sec:Wang} and Theorem~\ref{thm:universalprop}).
\end{remark}

\begin{prop}\label{prop:Banica} Let $(V_G,G)$ be a classical graph. The category $\QIso((V_G,G), (V_G,G))$ is the category of finite-dimensional representations of Banica's `quantum automorphism group' algebra $A_{aut}(G)$ of the graph $G$ (see e.g.~\cite[Definition 2.1]{Banica2007_2}).
\end{prop}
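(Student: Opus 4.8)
The plan is to parallel the proof of Proposition~\ref{prop:Wang} for Wang's quantum permutation group algebra, now adding the single extra relation coming from the adjacency matrix. By Proposition~\ref{prop:Wang}, $\QBij((V_G,G),(V_G,G))$ — that is, $\QBij([n],[n])$ for $n=|V_G|$ — is already identified with the category of finite-dimensional representations of $A_{aut}([n])$, with a quantum bijection of dimension $d$ corresponding to a $*$-representation on a $d$-dimensional Hilbert space $H$ sending the generators $a_{i,j}$ to the projectors $P_{i,j}$ of Theorem~\ref{thm:quantumfunctionsinprojectors}. Banica's algebra $A_{aut}(G)$ is, by definition (e.g.~\cite[Definition 2.1]{Banica2007_2}), the quotient of $A_{aut}([n])$ by the relations expressing that the matrix $(a_{i,j})$ commutes with the adjacency matrix $G$, i.e. $\sum_k G_{i,k} a_{k,j} = \sum_k a_{i,k} G_{k,j}$ for all $i,j$. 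So the statement to prove reduces to: a quantum bijection $(H,P)$ on $(V_G,\emptyset)$ is a quantum isomorphism $(H,P):(V_G,G)\to (V_G,G)$ in the sense of Definition~\ref{def:quantumgraphiso} if and only if its projectors $P_{i,j}$ satisfy exactly Banica's commutation relations; and that an intertwiner of quantum isomorphisms is the same thing as a morphism of the corresponding $A_{aut}(G)$-representations.

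First I would unwind the diagrammatic condition~\eqref{eq:quantumgraphisomorphism} in the classical basis, exactly as in the proof of Theorem~\ref{thm:quantumfunctionsinprojectors}: expanding $G$ in the copyable basis of $V_G$ (so that $G$ acts as the $\{0,1\}$-matrix $(G_{v,w})$) and $P$ as the family $\{P_{v,w}\}$, the left-hand side of~\eqref{eq:quantumgraphisomorphism} contracted against basis elements $a\otimes b$ becomes $\sum_{i\,:\,i\sim_G a} P_{i,b}$ and the right-hand side becomes $\sum_{j\,:\,j\sim_G b} P_{a,j}$. This is precisely equation~\eqref{eq:ppmquantumgraphcondition1}, which in matrix form is $\sum_k G_{a,k} P_{k,b} = \sum_k P_{a,k} G_{k,b}$ — i.e. the projective permutation matrix $(P_{i,j})$ commutes with $G$ — which is verbatim the defining relation of $A_{aut}(G)$ on top of the magic unitary relations~\eqref{eq:Wangcondition}. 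Combined with the observation (already used in the proof of Proposition~\ref{prop:Wang}) that the magic unitary relations are exactly those of Theorem~\ref{thm:quantumfunctionsinprojectors}, this shows that $*$-representations of $A_{aut}(G)$ on finite-dimensional Hilbert spaces are in natural bijection with quantum isomorphisms $(H,P):(V_G,G)\to (V_G,G)$.

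Next I would check that this bijection is functorial and compatible with morphisms. An intertwiner of quantum homomorphisms is by definition (Definition~\ref{def:intertwiner}, inherited in $\QGraph$ via Remark~\ref{rem:last}) an intertwiner of the underlying quantum functions, which by the correspondence of Proposition~\ref{prop:Wang} is precisely a linear map $f:H\to H'$ with $f P_{i,j} = P'_{i,j} f$ for all $i,j$ — i.e. a morphism of $A_{aut}(G)$-modules. So the assignment extends to a functor, faithful and full on morphisms, and essentially surjective by the bijection on objects; hence it is an equivalence of categories $\QIso((V_G,G),(V_G,G))\simeq \mathrm{Rep}_{\mathrm{fd}}(A_{aut}(G))$. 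As in Proposition~\ref{prop:Wang}, one could phrase the whole thing for general quantum graphs $(A,G)$ (with $A$ a noncommutative \F\ and $G$ a quantum adjacency matrix), at the cost only of more indices and of replacing the relations~\eqref{eq:Wangcondition} by their matrix-algebra analogues.

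The main obstacle is essentially bookkeeping rather than anything conceptual: one must make sure the comparison between Definition~\ref{def:quantumgraphiso} and Banica's relations really is an equality and not merely an implication. Concretely, the condition in Definition~\ref{def:quantumgraphiso} is that $(H,P)$ be a quantum \emph{bijection} (both~\eqref{eq:quantumfunction} and~\eqref{eq:monadmap}) \emph{and} satisfy~\eqref{eq:quantumgraphisomorphism}; one should confirm that for classical $(V_G,G)$ the quantum-bijection part translates exactly to~\eqref{eq:PPM1} and~\eqref{eq:PPM2} (done in Theorem~\ref{thm:quantumfunctionsinprojectors}) and that~\eqref{eq:quantumgraphisomorphism} translates exactly to~\eqref{eq:ppmquantumgraphcondition1}, with no hidden extra or missing relation — in particular that one does \emph{not} separately need to impose that $\overline P$ also commutes with $G$, since for a quantum bijection $\overline P = P^\dagger$ and commutation with the self-adjoint $G$ passes to adjoints automatically (this is the content of the converse direction of Theorem~\ref{thm:dualisablequantumgraph}, which I would cite). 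Once that is pinned down, the proof is a short translation argument.
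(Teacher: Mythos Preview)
Your proposal is correct and follows essentially the same approach as the paper: state Banica's presentation of $A_{aut}(G)$ as the magic unitary relations together with the commutation relation $\sum_k G_{i,k}a_{k,j}=\sum_k a_{i,k}G_{k,j}$, and then compare these relations with~\eqref{eq:PPM1},~\eqref{eq:PPM2} and~\eqref{eq:ppmquantumgraphcondition1} exactly as in the proof of Proposition~\ref{prop:Wang}. The paper's proof is considerably terser than yours --- it does not spell out the intertwiner/morphism side or the ``no hidden extra relation'' check --- but the argument is the same.
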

\begin{proof} The algebra $A_{aut}(G)$ is defined as the $C^*$-algebra with generators $a_{i,j}~(i,j=1,\ldots,n)$ and relations (cf. Theorem 2.1 and Example 2.2 in~\cite{Banica2005}):
\begin{calign}\nonumber
a_{i,j}^2=a_{i,j} = a_{i,j}^* 
&
\sum_{i=1}^n a_{i,j} = 1, ~~\forall 1\leq j\leq n
&
\sum_{j=1}^n a_{i,j} =1, ~~\forall 1\leq i\leq n
\end{calign}
\vspace{-0.35cm}
\begin{calign}\nonumber
\sum_{k=1}^n G_{i,k} ~a_{k,j} ~=~ \sum_{k=1}^n a_{i,k} ~G_{k,j}
\end{calign}
Here $n$ is the number of vertices of the graph $G$ and $\{G_{i,j}\}_{1\leq i,j \leq n}$ is the adjacency matrix of $G$.\looseness=-2

Similar to the proof of Proposition~\ref{prop:Wang}, comparing these relations with equations~\eqref{eq:PPM1},~\eqref{eq:PPM2} and~\eqref{eq:ppmquantumgraphcondition1} shows that $\QIso((V_G,G),(V_G,G))$ is the category of finite-dimensional representations of $A_{aut}(G)$. \ignore{These finite-dimensional representations are therefore all unitary, since their underlying quantum bijections are.}
\end{proof}

\section{The semisimple structure of quantum functions}\label{sec:semisimple}

\ignore{We have not yet explored the significance of the 2-morphisms (intertwiners) in \QSet\ and \QGraph .}
In this section, we investigate more closely the categories of quantum functions $\QSet(A,B)$ between quantum sets $A$ and $B$ and show that they are \emph{semisimple}. This sheds light on the role played by the intertwiners and is crucial to our understanding of the structure of quantum functions, particularly with a view towards distinguishing genuinely quantum functions from essentially classical ones.

Semisimplicity is a categorical structure which commonly arises in representation theory. The paradigmatic example of a semisimple category is the category of finite-dimensional representations of a finite group; in this category there are a number of \emph{irreducible} representations and every representation splits as a direct sum of these `simple objects'. As we will now show, categories of quantum functions are similar; every quantum function from $A$ to $B$ may be decomposed into a direct sum of irreducible quantum functions.

\ignore{In Section~\ref{sec:dirsumqfcts} we discuss what semisimplicity means in $\QSet$ in concrete terms. We state the main result of this section in Theorem~\ref{thm:semisimple} but for reasons of readability we defer the proof until Section~\ref{app:semisimplicity}. }

\ignore{n Section \ref{sec:dirsumqfcts} we show how semisimplicity gives us a full description of the intertwiners in \QSet\ and \QGraph , and allows us to distinguish quantum functions and graph homomorphisms which are `genuinely quantum' from those which are not. In Section \ref{sec:quantumautomorphismcategories}, we use semisimplicity to show how results derived in finite noncommutative geometry~ fit into our categorical setting.}

\subsection{The direct sum of quantum functions}
\label{sec:dirsumqfcts}

\begin{definition} The \textit{direct sum} of two quantum functions $(H,Q)$ and $(H',P)$ is defined as $(H\oplus H', Q\oplus P)$, where $Q\oplus P$ is the following direct sum of the underlying linear maps:\looseness=-2
\begin{calign}\label{eq:directsum}
\begin{tz}[zx,every to/.style={out=up, in=down},xscale=-1]
\draw  (0,1) to (2,3);
\draw[string,arrow data={0.9}{>},arrow data={0.2}{>}](2,1) to node[zxnode=\zxwhite, pos=0.5]{$Q\oplus P$}  (0,3);
\node[dimension, right] at (0,1) {$A$};
\node[dimension, right] at (0,3) {$H\oplus H'$};
\node[dimension, left] at (2,3) {$B$};
\node[dimension, left] at (2,1) {$H\oplus H'$};
\end{tz}
=
\begin{tz}[zx,every to/.style={out=up, in=down},xscale=-1]
\draw  (0,1) to (2,3);
\draw[string,arrow data={0.9}{>},arrow data={0.26}{>}](2,1) to node[zxnode=\zxwhite, pos=0.5]{$Q$}  (0,3);
\node[dimension, right] at (0,1) {$A$};
\node[dimension, right] at (0,3) {$H$};
\node[dimension, left] at (2,3) {$B$};
\node[dimension, left] at (2,1) {$H$};
\end{tz}
\oplus 
\begin{tz}[zx,every to/.style={out=up, in=down},xscale=-1]
\draw  (0,1) to (2,3);
\draw[string,arrow data={0.9}{>},arrow data={0.26}{>}](2,1) to node[zxnode=\zxwhite, pos=0.5]{$P$}  (0,3);
\node[dimension, right] at (0,1) {$A$};
\node[dimension,right] at (0,3) {$H'$};
\node[dimension,left] at (2,3) {$B$};
\node[dimension,left] at (2,1) {$H'$};
\end{tz}
\end{calign}
\end{definition}

\noindent
In particular, if $A$ and $B$ are sets, and $(H,P)$ and $(H',Q)$ are quantum functions from $A$ to $B$, then the direct sum of the corresponding matrices of projectors has underlying Hilbert space $H\oplus H'$ and projectors:
\begin{equation}\left(P \oplus Q\right)_{a,b} = P_{a,b} \oplus Q_{a,b}\end{equation}
We now remark on the operational interpretation of the direct sum of quantum \mbox{functions.}\looseness=-2
\begin{remark}\label{rem:opintofdirsum}
In Remark \ref{rem:opintofquantumfunctions}, we observed that a quantum function between classical sets is a projective measurement controlled by an input element, whose result determines an output element. Operationally, a quantum function is a direct sum precisely when the following procedures coincide:
\begin{itemize}
\item Perform the quantum measurement $(H \oplus H'\!, [P \oplus Q]_{a,-})$ depending on a received input $a$.\looseness=-2
\item Before receiving any input, perform a projective measurement onto $H$ and $H'$, and then, depending on the outcome, perform the quantum measurement $(H,P_{a,-})$ or $(H',Q_{a,-})$  upon receiving the input $a$.
\end{itemize}
In the setting of nonlocal games~\cite{Mancinska2016,Atserias2016}, this can be understood as corresponding to a probabilistic mixture of quantum strategies.
\end{remark}
\noindent On the other hand, a quantum function $P$ is simple if it cannot be decomposed in this way; or equivalently, if there are no non-trivial intertwiners $P \to P$. 
\begin{definition}\label{def:simpleobject} A quantum function $P$ is \textit{simple} if $\QSet(P,P) \cong \mathbb{C}$.
\end{definition}
\noindent
 For a quantum function $P$ between classical sets, simplicity translates into the following condition on the corresponding matrix of projectors: 
 \begin{calign}\label{eq:simplePPM}\forall X \in \End(H): \hspace{1cm}P_{a,b}X = XP_{a,b} ~~\forall a\in A, b\in B \hspace{0.2cm}\Rightarrow\hspace{0.2cm} X\propto \mathbbm{1}_{H}\end{calign}
\noindent
The category of quantum functions $\QSet(A,B)$ is completely determined by the simple quantum functions, as the following theorem shows.
\begin{theorem}\label{thm:semisimple} For quantum sets $A$ and $B$, $\QSet(A,B)$ is a semisimple dagger category.
\end{theorem}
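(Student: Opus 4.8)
The plan is to establish semisimplicity of $\QSet(A,B)$ by exhibiting this category as (equivalent to) the category of finite-dimensional representations of a $C^*$-algebra, namely the internal hom $[A,B]$ from Theorem~\ref{thm:universalprop}. By that theorem, $\QSet(A,B)\cong \mathrm{Rep}_{\mathrm{fd}}([A,B])^{\op}$. The category of finite-dimensional $*$-representations of any (unital) $C^*$-algebra is a $C^*$-category in which every object decomposes as a finite orthogonal direct sum of irreducibles, every idempotent splits, and hom-spaces are finite-dimensional; in short, it is a semisimple dagger category. The opposite of a semisimple dagger category is again semisimple (the dagger identifies a category with its opposite up to the contravariant dagger functor, and semisimplicity is self-dual: direct sums become direct sums, simple objects stay simple, and $\Hom$-spaces are swapped but remain finite-dimensional $C^*$-algebras). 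Hence $\QSet(A,B)$ is a semisimple dagger category. The dagger structure is the one from Theorem~\ref{thm:QSetdagger2category}, given on 2-morphisms by the Hilbert space adjoint, and one checks this matches the dagger transported across the equivalence of Theorem~\ref{thm:universalprop} (the equivalence sends an intertwiner $f$ to $f^\dagger$, so it is a dagger equivalence in the appropriate sense, interchanging a category and the opposite of its dagger).

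First I would recall precisely what ``semisimple dagger category'' means in this paper's conventions (finitely many — or here possibly infinitely many — simple objects up to iso, finite-dimensional hom-spaces, every object a finite direct sum of simples, idempotents split, a dagger compatible with the additive and $\mathbb{C}$-linear structure). Then I would invoke Theorem~\ref{thm:universalprop} to reduce to $\mathrm{Rep}_{\mathrm{fd}}([A,B])^{\op}$. Next I would argue that for \emph{any} unital $C^*$-algebra $R$, the category $\mathrm{Rep}_{\mathrm{fd}}(R)$ is semisimple: a finite-dimensional $*$-representation $\pi:R\to\End(H)$ factors through the finite-dimensional $C^*$-algebra $\pi(R)\cong\bigoplus_i \Mat_{n_i}$, which is semisimple in the classical sense, so $H$ decomposes as a direct sum of irreducibles; morphisms are intertwiners, a $*$-closed (hence finite-dimensional, by Schur) subspace of $\Hom(H,H')$; idempotents in $\End_R(H)$ are self-adjoint-completable and split $H$ orthogonally. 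Finally I would note that passing to the opposite category preserves all of these properties, and that the dagger on $\QSet(A,B)$ transports correctly under the equivalence.

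The main obstacle — and really the only nontrivial point — is verifying that the equivalence of Theorem~\ref{thm:universalprop} is genuinely a \emph{dagger} equivalence, i.e.\ that it respects the $\dagger$-structures on the two sides (where one side carries the opposite-category twist). Concretely, the equivalence sends a representation $[A,B]\to\End(H)$, viewed via the chain of bijections~\eqref{eq:universalsequence} and a Hilbert space adjoint, to the quantum function $(H,P)$, and sends an intertwiner $f$ to $f^\dagger$; so on 2-morphisms it is contravariant and implements exactly the adjoint. One must check that this matches the dagger of Theorem~\ref{thm:QSetdagger2category} (also adjoint on underlying linear maps) — which it does, since $(f^\dagger)^\dagger = f$ — so that ``semisimple dagger category'' is preserved, not merely ``semisimple $\mathbb{C}$-linear category''. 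The alternative, more self-contained route would be to prove semisimplicity directly inside $\QSet$: show that for any quantum function $(H,P)$ the algebra $\QSet(P,P)=\End_{\text{quantum fn}}(H)$ is a finite-dimensional $C^*$-algebra (it is a $*$-closed subalgebra of $\End(H)$, closed under $\dagger$ by~\eqref{eq:qsetdagger}), deduce it is a multi-matrix algebra, pick a minimal central projection, observe it is an intertwiner and hence — since $P$ is a comonad map and the projection commutes with $P$ — yields a subobject, and iterate to split $(H,P)$ into simples; this is essentially the content deferred to Section~\ref{app:semisimplicity}, and I would expect the authors to take that hands-on approach rather than the Tannakian shortcut.
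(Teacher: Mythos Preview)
Your proposal is correct, but it takes a genuinely different route from the paper. The paper proves semisimplicity directly by verifying the four axioms of a semisimple dagger category one by one: positivity (inherited from $\Hilb$ via the forgetful functor), existence of a zero object (the zero-dimensional quantum function), existence of direct sums (the explicit direct sum~\eqref{eq:directsum}), and splitting of dagger idempotents (split in $\Hilb$, then check the pieces are again quantum functions). Your approach instead transports semisimplicity across the equivalence $\QSet(A,B)\cong\mathrm{Rep}_{\mathrm{fd}}([A,B])^{\op}$ of Theorem~\ref{thm:universalprop}, using the standard fact that finite-dimensional $*$-representations of a $C^*$-algebra form a semisimple dagger category.

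Your route is conceptually clean and leverages existing representation theory, but it carries two costs the paper's argument avoids. First, it depends on the \emph{existence} of the internal hom $[A,B]$, which is imported from So{\l}tan's theorem (Remark~\ref{rem:soltan}) and involves infinite-dimensional $C^*$-algebras; the paper's direct proof stays entirely within finite-dimensional linear algebra. Second, the paper immediately re-runs ``essentially the same proof'' for $\QGraph$ (Corollary~\ref{cor:semisimplegraph}); your approach would require either a separate internal-hom object for quantum graph homomorphisms, or an additional argument that the full subcategory $\QGraph((A,G),(B,H))\subseteq\QSet(A,B)$ is closed under direct sums and subobjects --- true, but an extra step. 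You correctly anticipated in your final paragraph that the authors would take the hands-on route, and they do; your alternative self-contained sketch there (showing $\End(P)$ is a $*$-closed subalgebra of $\End(H)$, hence a finite-dimensional $C^*$-algebra, and splitting via projections) is in fact closer in spirit to what the paper does than your main proposal.
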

\begin{proof} We defer the proof of this theorem to Section~\ref{app:semisimplicity}. It is completely independent of the following Sections~\ref{sec:dirsumqfcts} and~\ref{sec:fusionstructure}.
\end{proof}
\noindent
The definition of semisimplicity is given in Section~\ref{app:semisimplicity}; it implies that every quantum function is a finite direct sum of simple quantum functions, which cannot be decomposed any further. \ignore{split in the manner of Remark~\ref{rem:opintofdirsum}.} In particular, every classical function is a simple quantum function (cf. Remark~\ref{rem:setsubset}).

\begin{remark} In contrast to many semisimple categories considered in the \mbox{literature}, most of the semisimple categories we consider have an infinite number of \mbox{simple objects.}\looseness=-2
\end{remark}
\ignore{Since quantum elements are quantum functions from the trivial algebra, the notion of simplicity applies to them also. We now consider simple quantum elements. }
\noindent
The only exceptions are the categories of quantum elements (and of course the categories of quantum functions to a one-element set which have a unique simple object).
\begin{proposition} Every quantum set has a finite number of simple quantum elements. 
\end{proposition}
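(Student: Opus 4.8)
The plan is to reduce the statement to a structural fact about the $*$-algebra $A$ itself, namely that a finite-dimensional $C^*$-algebra decomposes as a finite direct sum of matrix algebras and hence has only finitely many irreducible representations up to isomorphism. Recall from Proposition~\ref{prop:reconstruct} and its proof that $\QEl(A)$ is the category of left comodules of the special symmetric dagger Frobenius algebra $A$, which is equivalent to the category of modules (representations) of $A$. Under this equivalence, intertwiners of quantum elements correspond to module homomorphisms, and a simple quantum element corresponds to a simple (irreducible) module. So the proposition is equivalent to the assertion that a finite-dimensional $C^*$-algebra has finitely many isomorphism classes of simple modules.

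First I would invoke the standard structure theorem: by Vicary's theorem (quoted in the excerpt, \cite[Theorem 4.6]{Vicary2010}) $A$ is a finite-dimensional $C^*$-algebra, and by the Artin--Wedderburn theorem such an algebra is isomorphic to a finite direct sum $\bigoplus_{i=1}^k \mathrm{Mat}_{n_i}(\mathbb{C})$. The simple modules of such a direct sum are exactly the defining modules $\mathbb{C}^{n_i}$ of the individual matrix blocks, so there are precisely $k$ of them up to isomorphism; in particular finitely many. Translating back through the comodule--module equivalence, $\QEl(A)$ has finitely many simple objects, namely one for each simple summand of $A$ (these are the \emph{simple quantum elements} referred to in~\eqref{eqn:quantumcopy}). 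This also matches the informal remark after Proposition~\ref{prop:reconstruct}, where the reconstruction sum ranges over the simple quantum elements, which must therefore be finite for the formula to make sense.

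There is essentially no obstacle here beyond bookkeeping: the whole content is the finite-dimensionality of $A$ together with semisimplicity of $C^*$-algebras (equivalently, the fact proved in Section~\ref{app:semisimplicity} that $\QSet(A,B)$ is semisimple, specialised to $B=\mathbb{C}$, combined with the observation that the number of simple objects equals the number of matrix summands of $[\mathbb{C},A]\cong A$). The one point worth stating carefully is why finiteness holds for quantum elements but not for general quantum functions $\QSet(A,B)$: the internal hom $[A,B]$ is infinite-dimensional when neither $A$ nor $B$ is trivial (Remark~\ref{rem:problemTannaka}), whereas $[\mathbb{C},A]\cong A$ is finite-dimensional, so only in the quantum-element case does the Artin--Wedderburn argument apply. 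I would therefore phrase the proof as: $\QEl(A)\simeq \mathrm{Rep}(A)$ with $A$ a finite-dimensional $C^*$-algebra; Artin--Wedderburn gives $A\cong\bigoplus_{i=1}^k\mathrm{Mat}_{n_i}(\mathbb{C})$; hence there are exactly $k$ simple modules, so exactly $k$ simple quantum elements.
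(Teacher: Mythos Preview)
Your proof is correct and takes essentially the same approach as the paper: identify $\QEl(A)$ with the category of (co)modules of $A$, then use that a finite-dimensional $C^*$-algebra is semisimple and hence has only finitely many irreducible representations. The paper's version is terser (it just says ``symmetric special Frobenius algebras are semisimple in the algebraic sense''), whereas you spell out Artin--Wedderburn and the count $k$, but the substance is identical.
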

\begin{proof} The category $\QEl(A) = \QSet(*,A)$ is the category of comodules of $A$ and as such the opposite of the category of modules of $A$. Since every symmetric special Frobenius algebra is semisimple (in the algebraic sense) there is only a finite number of inequivalent irreducible representations --- these are the simple quantum elements.
\end{proof}
\begin{proposition}\label{prop:simpleelsofclasssetareordinaryels} \!\!Simple quantum elements of a classical set are ordinary elements.
\end{proposition}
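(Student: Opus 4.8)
The statement to prove is Proposition~\ref{prop:simpleelsofclasssetareordinaryels}: a simple quantum element of a classical set $X$ is an ordinary element of $X$. The plan is to use the two facts already available in the excerpt. First, by Corollary~\ref{cor:quantumelementmeasurement}, a quantum element of a classical set $X$ is precisely a projective measurement with outcomes in $X$; that is, a family of projectors $\{P_x\}_{x\in X}$ on a Hilbert space $H$ satisfying $P_x P_{x'} = \delta_{x,x'} P_x$ and $\sum_{x\in X} P_x = \mathbbm{1}_H$. Second, by Definition~\ref{def:simpleobject} (see also~\eqref{eq:simplePPM} specialised to the case $A = \{*\}$), simplicity means that any linear map $Y\in\End(H)$ commuting with every $P_x$ must be a scalar multiple of the identity.

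The key observation is that each $P_x$ is itself an endomorphism of $H$ commuting with all the $P_{x'}$ (since the $P_{x'}$ are mutually commuting projectors). Hence by simplicity each $P_x$ is a scalar, $P_x = \lambda_x \mathbbm{1}_H$; being a projector forces $\lambda_x \in \{0,1\}$, and being a complete orthogonal family ($\sum_x P_x = \mathbbm{1}_H$, $P_x P_{x'} = 0$ for $x\ne x'$) forces exactly one $\lambda_x$ to equal $1$ and the rest to be $0$. Writing $x_0$ for the unique element with $P_{x_0} = \mathbbm{1}_H$, the measurement is the deterministic one always returning $x_0$; translating back through the correspondence of Corollary~\ref{cor:quantumelementmeasurement}, the underlying linear map $Q: H \to X\otimes H$ is $\ket{x_0}\otimes(-)$, which is (a tensor multiple of) the copyable state $\ket{x_0}\in X$. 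More precisely, any $H$ gives rise to the same quantum element up to isomorphism, and the simple representatives are exactly the one-dimensional ones, i.e.\ the copyable states, i.e.\ the ordinary elements of $X$; one checks directly that the copyable state $\ket{x_0}$ (with $H = \mathbb{C}$) is a quantum element satisfying $\QEl(Q,Q)\cong\mathbb{C}$, and that every simple quantum element is isomorphic in $\QEl(X)$ to one of this form.

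I do not expect any serious obstacle here — this is a short consequence of the two cited results. The only point requiring mild care is the bookkeeping in the last step: one must say clearly what ``is an ordinary element'' means (namely, is isomorphic in $\QEl(X)$ to a copyable state, equivalently to a $*$-cohomomorphism $\mathbb{C}\to X$) and check that the deterministic measurement with $H$ of arbitrary dimension is indeed isomorphic to the one-dimensional copyable state, the isomorphism being any isometry $\mathbb{C}\to H$ followed by the obvious identification; simplicity guarantees this isometry is an intertwiner. Alternatively, one may simply invoke that in a semisimple category (Theorem~\ref{thm:semisimple}) the simple objects are determined up to isomorphism, combined with the fact (Proposition in the excerpt) that the classical functions $\{*\}\to X$ are already simple and exhaust all isomorphism classes because the argument above shows every simple quantum element's projector family is supported on a single outcome.
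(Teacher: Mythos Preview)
Your argument is correct and more elementary than the paper's, though the final step is phrased imprecisely. The paper's proof is a one-liner: $\QEl(X)$ is (the opposite of) the category of modules of the commutative $C^*$-algebra $X$, and irreducible representations of a commutative algebra are one-dimensional. Your route instead works directly with the projector description from Corollary~\ref{cor:quantumelementmeasurement} and the explicit form of simplicity~\eqref{eq:simplePPM}, avoiding any appeal to representation theory. This is a genuine alternative and arguably more in the spirit of the paper's graphical/hands-on philosophy.

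One point to tighten: your claim that ``any $H$ gives rise to the same quantum element up to isomorphism'' is false as stated --- the deterministic measurement on $H$ with outcome $x_0$ is the direct sum of $\dim(H)$ copies of the copyable state $\ket{x_0}$, not isomorphic to it unless $\dim(H)=1$. The clean way to finish is: once you know every $P_x$ is a scalar multiple of $\mathbbm{1}_H$, \emph{every} linear map $H\to H$ commutes with all the $P_x$ and is therefore an intertwiner, so $\End_{\QEl(X)}(Q)=\End(H)$; simplicity then gives $\End(H)\cong\mathbb{C}$, forcing $\dim(H)=1$ directly. This replaces your discussion of isometries and isomorphism classes with a single line and removes the need for the semisimplicity detour.
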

\begin{proof}A classical set is a commutative $C^*$-algebra. In particular, all its irreducible representations are one-dimensional, i.e. classical elements.
\end{proof}
\noindent
There is a close connection between semisimplicity and classicality of quantum functions, which we capture by the following definition. 

\begin{definition}\label{def:classical} A quantum function $Q:X\to Y$ between classical sets $X$ and $Y$ is \textit{essentially classical} if it is a direct sum of classical functions, i.e. of one-dimensional quantum functions. A quantum element of a classical set is \emph{essentially classical} if it is essentially classical as a quantum function from the one-element set.
\end{definition}
\noindent
In other words, an essentially classical quantum function is of the following form, where $\{\ket{i}\}$ is an orthonormal basis corresponding to the decomposition of the Hilbert space $H$ into one-dimensional subspaces $H\cong \bigoplus_{i} \mathbb{C} \ket{i}$ and $f_i:X\to Y$ are classical functions:%
\begin{equation}\label{eq:classicalquantumfunction}\begin{tz}[zx,every to/.style={out=up, in=down},xscale=-1]
\draw (0,0) to (2,3);
\draw[arrow data ={0.2}{>}, arrow data={0.8}{>}] (2,0) to (0,3);
\node[zxnode=\zxwhite] at (1,1.5) {$P$};
\node[dimension,left] at (0,0) {$X$};
\node[dimension,right] at (2,3) {$Y$};
\node[dimension,right] at (2,0) {$H$};
\node[dimension,left] at (0,3){$H$};
\end{tz}
~=~\sum_i~\begin{tz}[zx,every to/.style={out=up, in=down},xscale=-1]
\draw (0,0) to (2,3);
\draw[arrow data ={0.8}{>}] (0,2.) to (0,3);
\draw[arrow data ={0.4}{>}] (2,0) to (2,1);
\node[zxnode=\zxwhite] at (1,1.5) {$f_i$};
\node[zxnode=\zxwhite] at (0,2.) {$i$};
\node[zxnode=\zxwhite] at (2,1) {$i^\dagger$};
\end{tz}
\end{equation}
\noindent
The following proposition justifies Definition~\ref{def:classical}.
\begin{proposition} \label{prop:classicalcommute}A quantum function $P:X\to Y$ between classical sets $X$ and $Y$ is essentially classical if and only if all corresponding projectors $P_{xy}$ commute with each other.
\end{proposition}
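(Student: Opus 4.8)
The plan is to prove both directions of the equivalence by passing between the projector description of a quantum function between classical sets (Theorem~\ref{thm:quantumfunctionsinprojectors}) and the simultaneous diagonalisability of commuting families of self-adjoint operators.

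First, for the easy direction, suppose $P:X\to Y$ is essentially classical, i.e.\ it has the form~\eqref{eq:classicalquantumfunction} for an orthonormal basis $\{\ket{i}\}$ of $H$ and classical functions $f_i:X\to Y$. Expanding in the classical bases of $X$ and $Y$ as in the proof of Theorem~\ref{thm:quantumfunctionsinprojectors}, the projector $P_{x,y}$ is simply the diagonal operator $\sum_{i\,:\,f_i(x)=y}\ketbra{i}$, i.e.\ the orthogonal projection onto $\mathrm{span}\{\ket{i} : f_i(x)=y\}$. Any two such operators are simultaneously diagonal in the basis $\{\ket{i}\}$ and hence commute; this gives the ``only if'' direction immediately.

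For the converse, suppose all the projectors $\{P_{x,y}\}_{x\in X, y\in Y}$ commute with one another. They form a finite commuting family of self-adjoint operators on $H$, so they can be simultaneously diagonalised: there is an orthonormal basis $\{\ket{i}\}_{i}$ of $H$ such that each $P_{x,y}$ acts diagonally, $P_{x,y}\ket{i} = \lambda^{(x,y)}_i \ket{i}$ with $\lambda^{(x,y)}_i\in\{0,1\}$ since $P_{x,y}$ is a projector. Now fix a basis vector $\ket{i}$ and consider, for each $x\in X$, the set $S^i_x := \{y\in Y : \lambda^{(x,y)}_i = 1\}$. By the completeness relation $\sum_{y\in Y}P_{x,y}=\mathbbm{1}_H$ in~\eqref{eq:PPM1}, summing the eigenvalues on $\ket{i}$ gives $\sum_{y}\lambda^{(x,y)}_i = 1$, so $S^i_x$ has exactly one element; call it $f_i(x)$. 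This defines a function $f_i:X\to Y$ for each $i$, and by construction $P_{x,y}$ is exactly the projection onto $\mathrm{span}\{\ket{i} : f_i(x)=y\}$, which is precisely the statement that $P$ has the essentially classical form~\eqref{eq:classicalquantumfunction} with respect to the decomposition $H\cong\bigoplus_i\mathbb{C}\ket{i}$. One should also remark that the $\{0,1\}$-valued orthogonality relation $P_{x,y}P_{x,y'}=\delta_{y,y'}P_{x,y}$ from~\eqref{eq:PPM1} is automatically consistent with this picture, and conversely is what forces each $\lambda^{(x,y)}_i\in\{0,1\}$ once we know $P_{x,y}^2=P_{x,y}$.

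The only genuinely substantive input is the simultaneous diagonalisation of a commuting family of self-adjoint (indeed, projector-valued) operators on a finite-dimensional Hilbert space, which is standard linear algebra (the spectral theorem for commuting normal operators). The ``main obstacle'', such as it is, is purely bookkeeping: one must check that the direct-sum decomposition of $H$ into the joint eigenlines, together with the functions $f_i$ read off from the eigenvalues, reproduces exactly the normal form~\eqref{eq:classicalquantumfunction} — i.e.\ that we have genuinely exhibited $P$ as a direct sum of one-dimensional (classical) quantum functions, not merely block-diagonalised it. This is immediate from the explicit formula $P_{x,y}=\sum_{i\,:\,f_i(x)=y}\ketbra{i}$ derived above. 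I would also note in passing that this gives a constructive refinement of semisimplicity (Theorem~\ref{thm:semisimple}) in the commutative-projector case, and that for $X=\{*\}$ it recovers Proposition~\ref{prop:simpleelsofclasssetareordinaryels}.
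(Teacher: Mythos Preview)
Your proof is correct and takes essentially the same approach as the paper: simultaneous diagonalisation of commuting projectors for one direction, and the observation that diagonal-in-a-common-basis operators commute for the other. The paper's proof is a two-sentence sketch of exactly this argument; you have simply filled in the details (extracting the functions $f_i$ from the joint eigenvalues via the completeness relation), and your closing remarks about semisimplicity and Proposition~\ref{prop:simpleelsofclasssetareordinaryels} are accurate side observations.
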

\begin{proof} Commuting projectors are simultaneously diagonalizable; all the projectors therefore split as a direct sum over the shared eigenspaces. Conversely, every essentially classical quantum function decomposes into a direct sum of classical functions; all projectors $P_{xy}$ are therefore sums of projectors onto the orthonormal basis vectors corresponding to this decomposition.
\end{proof}

\begin{remark} In the terminology of Definition~\ref{def:classical}, every projective measurement --- thought of as a quantum function from the one-element set to its outcome set --- is essentially classical. 
Indeed, non-classical behaviour only becomes apparent if there are at least two projective measurements; that is, when the quantum function goes from a set with two or more elements to an outcome set. Only in these cases does it make sense to distinguish between essentially classical quantum functions, which correspond to measurements of commuting observables, and non-classical quantum functions.
From the nonlocal game perspective~\cite{Mancinska2016}, essentially classical quantum graph homomorphisms are mixtures of classical homomorphisms that only use the available quantum ressource as a generator of shared classical randomness.
\end{remark}

\ignore{These considerations provide an operational meaning to previous definitions of classical and quantum Latin squares. }
\begingroup
\renewcommand*{\arraystretch}{1.5}
\begin{example} \label{ex:latinsquaresexample}
A Latin square is a square $n{\times}n$-grid of numbers drawn from the set $\{0,\ldots,{n-1}\}$ such that each number appears exactly once in each row and each column. Every Latin square gives  a projective permutation matrix, such as the following:

\begin{equation}\label{eq:LS}\begin{pmatrix}\ketbra{0} & \ketbra{1} & \ketbra{2}\\ \ketbra{1}&\ketbra{2} & \ketbra{0}\\ \ketbra{2}&\ketbra{0}&\ketbra{1}\end{pmatrix}
\end{equation}
Under the decomposition $\mathbb{C}^3 = \mathbb{C} \ket{0} \oplus \mathbb{C} \ket{1} \oplus \mathbb{C} \ket{2}$, this is a sum of permutations: 
\begin{equation}\begin{pmatrix}\ketbra{0} & \ketbra{1} & \ketbra{2}\\ \ketbra{1}&\ketbra{2} & \ketbra{0}\\ \ketbra{2}&\ketbra{0}&\ketbra{1}\end{pmatrix}
 = \begin{pmatrix} 1& 0 & 0 \\0 &0 & 1\\ 0 & 1 & 0\end{pmatrix} \oplus \begin{pmatrix} 0& 1 & 0 \\1 &0 & 0\\ 0 & 0 & 1\end{pmatrix} \oplus \begin{pmatrix} 0& 0 & 1 \\0 &1 & 0\\ 1 & 0 & 0\end{pmatrix}  \end{equation}
\noindent
Indeed, every Latin square can always be decomposed into a sum of permutations. On the other hand, consider the following \textit{quantum Latin square}~\cite{Musto2015}:
\begin{equation}\label{eq:QLS}\hspace{-0.4cm}\begin{pmatrix}\ketbra{0} & \ketbra{1} & \ketbra{2} & \ketbra{3}\\ \ketbra{\psi_-}&\ketbra{\phi_+} & \ketbra{\phi_-} & \ketbra{\psi_+} \\ \ketbra{\psi_+} & \ketbra{\phi_-} & \ketbra{\phi_+} & \ketbra{\psi_-} \\ \ketbra{3} & \ketbra{2} & \ketbra{1} & \ketbra{0}
\end{pmatrix}\hspace{0.75cm}\text{ where }\hspace{0.75cm}\begin{array}{l}\ket{\psi_+} = \frac{1}{\sqrt{2}}\left(\ket{1} + \ket{2}\right) \\ \ket{\psi_-} =\frac{1}{\sqrt{2}}\left(\ket{1} - \ket{2}\right)\\ \ket{\phi_+} =\frac{1}{\sqrt{5}}\left( i \ket{0} + 2 \ket{3}\right) \\\ket{\phi_-} =\frac{1}{\sqrt{5}}\left( 2 \ket{0} + i \ket{3}\right)\end{array}\end{equation}
It can be verified using~\eqref{eq:simplePPM} that this is simple and cannot be written as a direct sum of other projective permutation matrices.

In other words, the concept of simplicity distinguishes truly quantum structures, like the quantum Latin square~\eqref{eq:QLS}, from `classical' structures modelled in Hilbert spaces, like the Latin square~\eqref{eq:LS}.  
\end{example}
\endgroup

\noindent
In this terminology, every Latin square and all quantum elements of a classical set are essentially classical, while the quantum Latin square in~\eqref{eq:QLS} and the quantum elements of certain quantum sets are not. In fact, we can prove the following simple lemma.
\begin{proposition} Quantum sets whose quantum elements are all essentially classical are classical sets.
\end{proposition}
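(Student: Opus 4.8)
The plan is to show that a quantum set $A$ all of whose quantum elements are essentially classical must be a commutative $\F$, i.e.\ a $\CF$, and hence a classical set. The key observation is that $A$ itself carries a canonical quantum element: the comultiplication $\Delta = m^\dagger : A \to A \otimes A$ is a left comodule structure on $A$, so $(A, \Delta)$ is a quantum element of $A$ (this is precisely the regular comodule, and the comodule axioms~\eqref{eq:quantumelement} follow from coassociativity, counitality and the Frobenius/symmetry structure). By Proposition~\ref{prop:reconstruct}, $A$ can be reconstructed from its category of quantum elements together with the forgetful functor; more concretely, equation~\eqref{eqn:quantumcopy} expresses the multiplication and unit of $A$ as a sum over the \emph{simple} quantum elements $H_i$ of $A$, weighted by $1/\dim(H_i)$.

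First I would invoke the hypothesis: every quantum element of $A$ is essentially classical, meaning a direct sum of one-dimensional quantum elements. In particular each simple quantum element $(H_i, Q_i)$ appearing in the sum~\eqref{eqn:quantumcopy} is itself a direct sum of one-dimensional quantum elements; but a direct sum can only be simple if it has a single summand, so every simple quantum element of $A$ is one-dimensional, $\dim(H_i) = 1$. A one-dimensional quantum element is exactly a $*$-cohomomorphism $\mathbb{C} \to A$ in the sense of~\eqref{eq:cohomo}, i.e.\ a copyable element of $A$ in the sense of Definition~\ref{def:copyablestates}. Hence the reconstruction formula~\eqref{eqn:quantumcopy} exhibits $m$ and $u$ of $A$ as sums over copyable elements with all weights equal to $1$; comparing with~\eqref{eq:classicalcopy}, this shows $A$ is of the form associated to the orthonormal basis of its copyable elements — in particular $A$ is a special \emph{commutative} dagger Frobenius algebra, and therefore a classical set by finite Gelfand duality.

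An alternative, perhaps cleaner route avoids the explicit formula~\eqref{eqn:quantumcopy}: use that $\QEl(A)$ is the category of comodules of $A$, equivalently (by taking adjoints) the opposite of the category of modules of $A$; since $A$ is a symmetric special dagger Frobenius algebra it is semisimple as an algebra, so $A \cong \bigoplus_i \mathrm{Mat}_{n_i}$ and the simple modules have dimensions $n_i$. The simple quantum elements are exactly these simple modules (read as comodules), so the hypothesis that all of them are one-dimensional forces every $n_i = 1$, i.e.\ $A \cong \bigoplus_i \mathbb{C}$ is commutative. One must check that the $\F$ structure on a commutative semisimple algebra is the standard one of Example~\ref{exm:Frob} — this follows because the Frobenius form is determined up to the choice of a positive central element, and speciality pins it down (or one simply cites Theorem~\ref{thm:classificationONB} once commutativity is known, since a commutative $\F$ is a $\CF$).

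The main obstacle is the bookkeeping connecting ``all quantum elements essentially classical'' to ``all simple quantum elements one-dimensional'': one must be slightly careful that simplicity is defined via $\QSet(*,A)(Q,Q)\cong\mathbb{C}$ and confirm that a nontrivial direct sum decomposition $Q \cong Q_1 \oplus Q_2$ with both summands nonzero produces a nontrivial idempotent intertwiner, contradicting simplicity — this is immediate from the dagger-category structure on $\QEl(A)$ and semisimplicity (Theorem~\ref{thm:semisimple}), but should be stated. After that, the identification of one-dimensional quantum elements with copyable elements (by direct comparison of~\eqref{eq:quantumelement} restricted to $H=\mathbb{C}$ with~\eqref{eq:cohomo}) and the appeal to Theorem~\ref{thm:classificationONB} are routine.
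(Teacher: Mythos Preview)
Your proposal is correct, and your second (``cleaner'') route is exactly the paper's argument: the hypothesis forces all simple quantum elements to be one-dimensional, hence all irreducible representations of the semisimple algebra $A$ are one-dimensional, so $A$ is commutative and therefore a classical set. The paper states this in a single sentence; your extra bookkeeping (why a nontrivial direct sum cannot be simple, why a commutative $\F$ is automatically a $\CF$) is accurate but more than the paper bothers to spell out, and your first route via the explicit reconstruction formula~\eqref{eqn:quantumcopy} is a valid but more circuitous path to the same conclusion.
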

\begin{proof} All simple quantum elements of such a quantum set are one-dimensional; it follows that the corresponding algebra is commutative, i.e. an ordinary set.
\end{proof}
\noindent
The categories of quantum graph homomorphisms $\QGraph((A,G),(A',G'))$ are also semisimple (see Corollary~\ref{cor:semisimplegraph}). The following proposition shows that a quantum graph homomorphism is simple precisely when its underlying quantum function is.

\begin{proposition}\label{prop:graphhomsimpleifffunctionsimple} A quantum graph homomorphism $Q:(A,G)\to (A',G')$ is simple in $\QGraph((A,G),(A',G'))$ if and only if its underlying quantum function $Q:A\to A'$ is simple in $\QSet(A,A')$. 
Moreover, suppose that $Q:(A,G)\to (A',G')$ is a quantum homomorphism and that $Q$ has a decomposition $Q\cong \bigoplus_i f_i$ into simple quantum functions $f_i$ in $\QSet(A,A')$. Then, each $f_i$ is a quantum graph homomorphism $(A,G)\to (A,G')$ and the decomposition is a decomposition in $\QGraph((A,G),(A',G'))$.
\end{proposition}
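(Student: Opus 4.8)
The key structural fact is Remark~\ref{rem:last}: the category $\QGraph((A,G),(A',G'))$ is a \emph{full} subcategory of $\QSet(A,A')$, closed under the relevant operations. This means that for a quantum graph homomorphism $Q$ we have $\QGraph(Q,Q) = \QSet(Q,Q)$ as hom-spaces, since an intertwiner of quantum graph homomorphisms is precisely an intertwiner of the underlying quantum functions. The first claim — that $Q$ is simple in $\QGraph((A,G),(A',G'))$ iff the underlying quantum function is simple in $\QSet(A,A')$ — is then immediate from Definition~\ref{def:simpleobject}: both conditions assert $\QSet(Q,Q)\cong\mathbb C$, which is the same statement.

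For the second claim, I would proceed as follows. Suppose $Q:(A,G)\to(A',G')$ is a quantum graph homomorphism with underlying quantum function admitting a decomposition $Q\cong\bigoplus_i f_i$ into simple quantum functions in $\QSet(A,A')$; by Theorem~\ref{thm:semisimple} such a decomposition exists. Pick the corresponding orthogonal idempotents $e_i\in\QSet(Q,Q)$ projecting onto the summands $f_i$; since $\QSet(Q,Q)=\QGraph(Q,Q)$, each $e_i$ is in particular an intertwiner of quantum functions. The point to check is that each summand $f_i$ — which a priori is just a quantum function — actually satisfies the quantum graph homomorphism equation~\eqref{eq:quantumgraphhomomorphism} with respect to $G$ and $G'$. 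The cleanest way is to unwind the equation at the level of underlying linear maps: equation~\eqref{eq:quantumgraphhomomorphism} for $Q$ says a certain linear map built from the graph projectors~\eqref{eq:graphprojector1} for $G$ and $G'$ commutes appropriately with $Q$; writing $Q = \bigoplus_i f_i$ and $H = \bigoplus_i H_i$ and using that the decomposition is by intertwiners (so the graph-projector maps, which act only on the $A$ and $A'$ wires, respect the block structure on $H$), the equation for $Q$ decomposes into a direct sum of the corresponding equations for each $f_i$. Since a direct sum of linear maps vanishes iff each summand does, each $f_i$ satisfies~\eqref{eq:quantumgraphhomomorphism} and is thus a quantum graph homomorphism. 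The inclusion maps $H_i\hookrightarrow H$ and projections $H\twoheadrightarrow H_i$ are then intertwiners in $\QGraph((A,G),(A',G'))$ exhibiting $Q\cong\bigoplus_i f_i$ in that category, and each $f_i$ is simple there by the first claim.

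The main obstacle — really the only place requiring care — is the block-decomposition argument for equation~\eqref{eq:quantumgraphhomomorphism}. One must verify that when $Q$ is written as $\bigoplus_i f_i$ via idempotents that are intertwiners of the \emph{underlying quantum functions}, the graph-homomorphism equation genuinely splits as a direct sum; this rests on the graph projectors~\eqref{eq:graphprojector1} acting trivially on the Hilbert space wire, so that conjugating~\eqref{eq:quantumgraphhomomorphism} by the idempotents $e_i\otimes e_i$ (acting on the two $H$-wires) leaves the graph data untouched while projecting $Q$ to $f_i$. I would present this either diagrammatically — inserting the idempotents on the Hilbert space wires of~\eqref{eq:quantumgraphhomomorphism} and using that they are intertwiners to slide them past the $P$'s — or, for classical graphs, via the projector description: the condition~\eqref{eq:qgraphhomprojcond} reads $v\sim_G v',\ w\not\sim_{G'}w'\Rightarrow Q_{v',w'}Q_{v,w}=0$, and if $Q_{v,w}=\bigoplus_i (f_i)_{v,w}$ block-diagonally then $Q_{v',w'}Q_{v,w}=\bigoplus_i (f_i)_{v',w'}(f_i)_{v,w}$, so the product vanishes iff each $(f_i)_{v',w'}(f_i)_{v,w}$ does. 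Everything else is a formal consequence of fullness of the subcategory and semisimplicity of $\QSet(A,A')$.
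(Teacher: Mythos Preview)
Your proposal is correct and rests on the same key observation as the paper's proof: the fullness of the inclusion $\QGraph((A,G),(A',G'))\subseteq\QSet(A,A')$ from Remark~\ref{rem:last}. The paper's own proof is a one-liner — ``full inclusion of semisimple categories'' — which implicitly uses that the idempotents $e_i$ lie in $\QGraph(Q,Q)$ by fullness and then split \emph{inside} $\QGraph$ by Corollary~\ref{cor:semisimplegraph}; uniqueness of idempotent splittings then identifies the resulting summands with the $f_i$.

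Your route to the second claim is slightly more concrete: rather than invoking semisimplicity of $\QGraph$, you verify directly that the graph-homomorphism equation~\eqref{eq:quantumgraphhomomorphism} passes to each block. This is fine, and in fact avoids forward-referencing Corollary~\ref{cor:semisimplegraph}. One small tightening: your ``insert $e_i$ and slide'' sketch is cleanest if you use the isometry/projection pair $\iota_i:H_i\hookrightarrow H$, $\pi_i=\iota_i^\dagger$ rather than the idempotent $e_i$ itself — pre- and post-composing the $H$-wire in~\eqref{eq:quantumgraphhomomorphism} by $\iota_i$ and $\pi_i$, and sliding via the intertwiner property, replaces each $P$ by $P_i$ on the nose and yields~\eqref{eq:quantumgraphhomomorphism} for $f_i$ directly. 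Your alternative projector-description argument for classical graphs is also correct.
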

\begin{proof}
The proposition follows from the fact that $\QGraph((A,G),(A',G')) \subseteq \QSet(A,A')$ is a full inclusion of semisimple categories (see Remark~\ref{rem:last}).
\end{proof}
\ignore{
\begin{corollary}\label{prop:graphhomsimpleifffunctionsimple}
Let  $Q:G\to G'$ be a quantum graph homomorphism in $\QGraph(G,G')$ such that $Q$ has a decomposition $Q\cong \bigoplus_i f_i$ into simple quantum functions $f_i$ in $\QSet(V_G,V_{G'})$. Then, each $f_i$ is a quantum graph homomorphism $G\to G'$ and the decomposition is also a decomposition in $\QGraph(G,G')$.
\end{corollary}}
\ignore{\begin{proof}
It follows straightforwardly from Definition \ref{def:simpleobject} that the simple objects of the subcategory will be the same as the simple objects of the full category.
\end{proof}}

\noindent
Therefore, the question of semisimplicity of a quantum graph homomorphism is simply a question of semisimplicity of the underlying quantum function and does not depend on the graphs. Despite this result, finding the simple objects in the category of quantum homomorphisms between two quantum graphs is quite involved.

\subsection{The fusion structure of quantum bijections}
\label{sec:fusionstructure}
The category $\QSet(A,A)$ of quantum functions from a quantum set $A$ to itself possesses various other structures which interact with the semisimple structure. In particular, under composition of quantum functions, it becomes a \emph{monoidal} semisimple dagger category. If we restrict further to quantum bijections $\QBij(A,A)$, then by the results of Section~\ref{sec:quantumbijection} we obtain a monoidal semisimple dagger category with dualisable objects\footnote{It is clear that the tensor unit of $\QBij(A,A)$, i.e. the quantum bijection $\mathrm{id}_A:A\otimes \mathbb{C} \to \mathbb{C} \otimes A$, is simple. Therefore, $\QBij(A,A)$ is a tensor $*$-category as defined in~\cite[Definition 1.33]{Mueger2007}.} (see Theorem~\ref{thm:dual2}). For a finite number of simple objects such a structure is known as a \textit{unitary fusion category}; such structures generalise finite groups, and have been the subject of considerable research in recent years~\cite{Etingof2015}.

As noted in Section~\ref{sec:Wang}, these quantum permutation categories have been considered in finite noncommutative topology. We restate some of the results of this work now.\looseness=-2 
\begin{proposition}[{\cite[Theorem 6.2.]{Banica2007}}]All quantum bijections on an $n$-element classical sets with $n\leq 3$ are essentially classical. 
\end{proposition}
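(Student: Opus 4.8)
The plan is to reduce the statement to an algebraic fact about the representation theory of Wang's quantum symmetry group algebras $A_{aut}([n])$ for $n \leq 3$, using the dictionary established earlier in the paper. By Proposition~\ref{prop:Wang}, $\QBij([n],[n])$ is the category of finite-dimensional representations of $A_{aut}([n])$, and by Theorem~\ref{thm:quantumfunctionsinprojectors} a quantum bijection $[n] \to [n]$ is exactly a projective permutation matrix $\{P_{i,j}\}$ on a Hilbert space $H$, i.e. a magic unitary. By Proposition~\ref{prop:classicalcommute}, such a quantum bijection is essentially classical if and only if all the projectors $P_{i,j}$ commute with one another. So the whole statement is equivalent to: \emph{for $n \leq 3$, every magic unitary of $n\times n$ projectors has mutually commuting entries}, equivalently, $A_{aut}([n])$ is commutative for $n \leq 3$ (this is exactly Wang's observation that $S_n^+ = S_n$ for $n \leq 3$).

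First I would dispatch the trivial cases $n = 1$ and $n = 2$. For $n = 1$ there is a single projector $P_{1,1}$ with $\sum_j P_{1,j} = P_{1,1} = \mathbbm{1}$, so it is the identity and there is nothing to check. For $n = 2$, the row relations give $P_{1,1} + P_{1,2} = \mathbbm{1}$ and $P_{2,1} + P_{2,2} = \mathbbm{1}$, while the column relations give $P_{1,1} + P_{2,1} = \mathbbm{1}$; combining these forces $P_{1,2} = P_{2,1}$ and $P_{1,1} = P_{2,2}$, so all four projectors are polynomials in the single projector $P_{1,1}$ and hence commute. The main case is $n = 3$. Here I would write the magic unitary as a $3\times3$ array $(P_{i,j})$ with each row and each column a complete family of mutually orthogonal projectors summing to $\mathbbm{1}$. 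The key computation is to show $P_{i,j} P_{k,l} = P_{k,l} P_{i,j}$ for all $i,j,k,l$. The cases where $i = k$ or $j = l$ are immediate from orthogonality within a row or column (the product is $\delta$-valued and manifestly symmetric). For the remaining case, say $i \neq k$ and $j \neq l$: because we are in a $3\times3$ grid, fixing a row index $i \neq k$ and a column index $j \neq l$ determines a unique "complementary" entry; one uses the completeness relations to express $P_{k,l}$ in terms of the entries sharing a row or column with $P_{i,j}$. Concretely, with $n = 3$, if $i,k$ are the two row indices other than the third, and $j,l$ similarly, then $P_{k,l} = \mathbbm{1} - P_{i,l} - P_{m,l}$ where $m$ is the third row index, and also $P_{k,l} = \mathbbm{1} - P_{k,j} - P_{k,n'}$ for the third column index; substituting and using $P_{i,j}P_{i,l} = 0$, $P_{i,j}P_{k,j} = 0$, one reduces $[P_{i,j}, P_{k,l}]$ to a sum of commutators involving the third row/column entries, and iterating (or exploiting the symmetry of the $3\times3$ grid under the relevant index permutations) collapses everything to zero. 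This is a short but slightly fiddly linear-algebra manipulation; it is essentially Wang's proof that $A_{aut}([3])$ is abelian.

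The step I expect to be the genuine obstacle is precisely this last commutation argument in the $n=3$ case: getting the bookkeeping of the completeness relations right so that the commutator $[P_{i,j}, P_{k,l}]$ for off-diagonal indices cancels. The cleanest route is probably to not do raw projector algebra but instead invoke the cited result directly: Banica's Theorem~6.2 in~\cite{Banica2007} states that $A_{aut}([n])$ is commutative (equivalently, equal to $\mathbb{C}[S_n]$) for $n \leq 3$, from which the claim follows immediately via Proposition~\ref{prop:Wang} and Proposition~\ref{prop:classicalcommute}. Thus the proof is: by Proposition~\ref{prop:Wang} the objects of $\QBij([n],[n])$ for $n \leq 3$ are the finite-dimensional representations of the commutative $C^*$-algebra $A_{aut}([n]) \cong \mathbb{C}[S_n]$; every representation of a commutative algebra decomposes as a direct sum of one-dimensional representations, which are exactly the classical bijections (the points of the spectrum, i.e. the elements of $S_n$); hence every quantum bijection on such a set is a direct sum of classical functions, i.e. essentially classical in the sense of Definition~\ref{def:classical}. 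I would present the proof in this packaged form, with the $n=1,2$ cases spelled out explicitly as a sanity check and the $n=3$ commutativity attributed to~\cite{Banica2007}.
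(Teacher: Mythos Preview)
Your proposal is correct and matches the paper's approach: reduce via Proposition~\ref{prop:classicalcommute} to showing that all entries of a projective permutation matrix commute, dispatch $n=1,2$ by observing the matrix is determined by a single projector, and defer the $n=3$ commutativity to the cited result of Banica. The extra representation-theoretic framing through Proposition~\ref{prop:Wang} is not needed but does no harm.
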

\begin{proof} By Proposition~\ref{prop:classicalcommute}, we need to show that all projectors in a projective permutation matrix of size $1,2$ and $3$ commute. For $n=1$, this is trivial. For $n=2$, this follows from the fact that every projective permutation matrix is of the form \[\left(\begin{array}{cc} p& 1-p\\ 1-p & p \end{array} \right)\] for some projector $p$. The proof for $n=3$ is rather involved and can be found in~\cite{Banica2007}.
\end{proof}
\noindent
Note that there are quantum bijections between 4-element classical sets that are not essentially classical; these include the quantum Latin square~\eqref{eq:QLS} of dimension 4 in Example \ref{ex:latinsquaresexample}, and the following projective permutation matrix where $p$ and $q$ are non-commuting projectors on a 2-dimensional Hilbert space:
\begin{equation}\begin{pmatrix}p&1-p&0&0\\1-p&p&0&0\\0&0&q&1-q\\0&0&1-q&q\end{pmatrix}\end{equation}
In fact, the general structure of the category of quantum bijections between 4-element sets has been investigated by Banica and Bichon~\cite{Banica2009}, leading in particular to the following surprising result regarding the dimension of quantum bijections (recall Definition~\ref{def:qfctdimension}).
\begin{proposition}[{\cite[Proposition 3.1.]{Banica2009}}]
Every simple quantum bijection between 4-element sets must have dimension 1,2 or 4.
\end{proposition}
\noindent
Similarly, the monoidal semisimple dagger categories of quantum graph automorphisms $\QIso(G,G)$ have been studied~\cite{Banica2005,Banica2007_2,Banica2007_3,Bichon2003}, and several graphs with only essentially classical quantum automorphisms have been identified.

\begin{remark} Essentially classical quantum bijections (in the sense of Definition~\ref{def:classical}) on the $n$-element set $[n]$ form a full fusion subcategory of $\QBij([n], [n])$ equivalent to the category of $S_n$-graded vector spaces~\cite{Etingof2015}, where $S_n$ is the symmetric group (also cf. Remark~\ref{rem:setsubset}). Similarly, essentially classical quantum graph automorphisms on a classical graph $G$ form a full fusion subcategory of $\QIso(G,G)$ equivalent to the category of $\Aut(G)$-graded vector spaces, where $\Aut(G)$ is the automorphism group of $G$.
\end{remark}
\subsection{Proof of semisimplicity}
\label{app:semisimplicity}
In the following section, we show that the hom-categories $\QSet(A,B)$ for quantum sets $A$ and $B$ are semisimple. Recall that a linear category (that is, a category enriched in vector spaces over some field) is semisimple~\cite{Mueger2003}, if it has direct sums, if all idempotents split and if there are objects $X_i$ (called \emph{simple objects}) labeled by an indexing set $I$ such that $\Hom(X_i,X_j) \cong \delta_{i,j} \mathbb{C}$ and such that every object $V$ is isomorphic to a finite direct sum of simple objects. For dagger categories such as those in this paper, this notion of semisimplicity is subsumed by more fundamental properties.

\begin{definition}A \emph{linear dagger category} is a dagger category enriched in finite-dimensional complex vector spaces such that for objects $X$ and $Y$ the induced function $\dagger: \Hom(X,Y) \to \Hom(Y,X)$ is antilinear. A linear dagger category is \emph{positive}, if for every morphism $A\to[r]B$:\begin{equation}r^\dagger r= 0 \hspace{1cm}\Rightarrow \hspace{1cm}r=0\end{equation}
\end{definition}

\begin{proposition} \label{thm:daggerpositive}For two quantum sets $A$ and $B$, the category $\QSet(A,B)$ is a positive linear dagger category.
\end{proposition}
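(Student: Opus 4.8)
The plan is to observe that $\QSet(A,B)$ is, essentially by construction, a dagger category sitting faithfully inside $\Hilb$: the local component of the forgetful $2$-functor~\eqref{eq:fibrefunctor}, which I will write $F\colon \QSet(A,B)\to\Hilb$ (cf.\ Philosophy~\ref{phil:concretedagger}), is a faithful dagger functor, and both defining properties of a positive linear dagger category transfer to $\QSet(A,B)$ along $F$ from $\Hilb$. So the proof is a short transfer-of-structure argument rather than a computation.

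First I would check that $\QSet(A,B)$ is enriched in finite-dimensional complex vector spaces. An object is a quantum function $(H,P)$ and a morphism $(H,P)\to(H',P')$ is an intertwiner $f\colon H\to H'$ (Definition~\ref{def:intertwiner}); since the intertwiner condition is a \emph{linear} equation in $f$, the hom-set $\QSet\big((H,P),(H',P')\big)$ is a linear subspace of the finite-dimensional space $\Hom_{\Hilb}(H,H')$, and composition is the bilinear composition of linear maps. Hence $\QSet(A,B)$ is a linear category and $F$ is linear and injective on each hom-space. Next, I would invoke Theorem~\ref{thm:QSetdagger2category}: the computation~\eqref{eq:qsetdagger} shows that the Hilbert-space adjoint $f^\dagger\colon H'\to H$ of an intertwiner $f\colon(H,P)\to(H',P')$ is again an intertwiner $(H',P')\to(H,P)$, so the $\Hilb$-adjoint restricts to an involutive, identity-on-objects, contravariant endofunctor of $\QSet(A,B)$; it is antilinear on hom-spaces because the Hilbert-space adjoint is. Thus $\QSet(A,B)$ is a linear dagger category and $F$ is a dagger functor.

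Finally, positivity. Suppose $r\colon(H,P)\to(H',P')$ is a $2$-morphism with $r^\dagger r=0$ in $\QSet(A,B)$. Applying the dagger functor $F$ gives $F(r)^\dagger F(r)=0$ in $\Hilb$; but $\Hilb$ is a positive linear dagger category, since for a linear map $s\colon H\to H'$ the identity $\|s\psi\|^2=\langle\psi,s^\dagger s\psi\rangle$ shows $s^\dagger s=0$ forces $s\psi=0$ for all $\psi$, hence $s=0$. Therefore $F(r)=0$, and faithfulness of $F$ gives $r=0$.

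There is no genuine obstacle here: every ingredient — the $\Hilb$-enrichment of the hom-categories, the dagger structure from Theorem~\ref{thm:QSetdagger2category}, and the faithfulness of the forgetful functor — is already available from Section~\ref{sec:quantumset}. The only point requiring a little care is to state explicitly that positivity is a property inherited by any linear dagger category admitting a faithful linear dagger functor to a positive one; once that is noted, the proposition follows immediately, and the same argument will apply verbatim to $\QGraph\big((A,G),(B,H)\big)$ using the locally fully faithful forgetful $2$-functor $\QGraph\to\QSet$.
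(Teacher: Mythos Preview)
Your proposal is correct and follows essentially the same approach as the paper: the paper's proof is a one-liner that invokes Theorem~\ref{thm:QSetdagger2category} for the dagger structure and states that linearity and positivity follow from the existence of the forgetful dagger functor $F\colon\QSet(A,B)\to\Hilb$. You have simply unpacked this argument in detail, explicitly verifying that the intertwiner condition is linear, that the dagger is antilinear, and that positivity transfers along a faithful dagger functor from $\Hilb$.
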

\begin{proof} By Theorem~\ref{thm:QSetdagger2category}, $\QSet(A,B)$ is a dagger category.\! Linearity and positivity follow from the existence of the forgetful dagger functor \mbox{$F:\QSet(A,B) \to \Hilb$.}
\end{proof}
\noindent
Following M\"uger, Roberts and Tuset, we define semisimplicity of such a dagger category as follows~\cite[Definition 2.2]{Mueger2004}.
\begin{definition} A \emph{semisimple dagger category} is a linear dagger category for which the following hold:
\begin{enumerate}
\item It is positive.
\item It has a zero object; that is an object such that every object has a unique morphism into and out of it.
\item It has binary direct sums (cf.~\cite{Vicary2011}, there called dagger biproducts): For objects $X_1$ and $X_2$, there is an object $X_1\oplus X_2$ and morphisms $s_i:X_i \to X_1\oplus X_2$ ($i=1,2$) such that the following hold:
\begin{calign} s_i^\dagger s_i = \mathrm{id}_{X_i}~\text{ for }i=1,2&
s_1s_1^\dagger+s_2s_2^\dagger=\mathrm{id}_{X_1\oplus X_2}\end{calign}
\item All dagger idempotents split (cf.~\cite{Selinger2008}): A dagger idempotent is an endomorphism $p\in \End(X)$ on some object $X$ such that $p^2=p =p^\dagger$. It splits if there is a morphism $i: A\to X$ out of some object $A$ such that the following hold:
\begin{calign} i^\dagger i = \mathrm{id}_A & i i^\dagger= p
\end{calign}
\end{enumerate}
\end{definition}
\noindent
An object $X$ in a semisimple dagger category is \emph{simple}, if $\Hom(X,X)\cong \mathbb{C}$. 

It is well known that semisimple dagger categories are semisimple in the usual sense (for a proof for monoidal semisimple dagger categories, see e.g.~\cite[Lemma 1.35]{Mueger2007}). For completeness, we sketch a version of this argument.

\begin{prop} In a semisimple dagger category, non-isomorphic simple objects $X$ and $Y$ are disjoint ($\Hom(X,Y) = \{0\}$) and every object is isomorphic to a finite direct sum of simple objects.
\end{prop}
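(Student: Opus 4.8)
The plan is to prove the two standard consequences of the axioms of a semisimple dagger category in turn, using only the four defining properties: positivity, a zero object, dagger biproducts, and splitting of dagger idempotents. The key extra input, available in a dagger category, is that every idempotent which arises can be replaced by a \emph{dagger} idempotent: if $f:X\to Y$ is any morphism, then $f^\dagger f$ is self-adjoint and positive, and the subobjects of $X$ and $Y$ it cuts out can be accessed through spectral-type arguments. In fact the cleaner route is: (i) show that in a positive linear dagger category every idempotent is similar to a dagger idempotent, so that property 4 gives splitting of \emph{all} idempotents; (ii) show $\End(X)$ is a finite-dimensional $C^*$-algebra for each object $X$, hence semisimple as an algebra; and then (iii) run the usual Wedderburn-style argument.

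First I would prove disjointness of non-isomorphic simple objects. Suppose $X$ and $Y$ are simple and $r:X\to Y$ is nonzero. Then $r^\dagger r\in\End(X)\cong\mathbb C$, so $r^\dagger r=\lambda\,\mathrm{id}_X$; positivity (applied to $r$ if $\lambda$ were zero) forces $\lambda\neq 0$, and rescaling we may take $r^\dagger r=\mathrm{id}_X$. Similarly $rr^\dagger\in\End(Y)\cong\mathbb C$ is a nonzero scalar times the identity; since $rr^\dagger$ is a self-adjoint idempotent-up-to-scalar with $(rr^\dagger)(rr^\dagger)=r(r^\dagger r)r^\dagger=rr^\dagger$, we get $rr^\dagger=\mathrm{id}_Y$. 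Hence $r$ is unitary and $X\cong Y$, contradiction. So $\Hom(X,Y)=\{0\}$.

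Next I would prove that every object decomposes as a finite direct sum of simple objects, by induction on a notion of ``length''. Given a nonzero object $V$, consider the $C^*$-algebra $\End(V)$ (it is a finite-dimensional complex algebra by linearity, with an antilinear involution $\dagger$, and the positivity axiom gives exactly the $C^*$-condition $r^\dagger r=0\Rightarrow r=0$, which for finite-dimensional $*$-algebras is equivalent to being a $C^*$-algebra). If $\End(V)\cong\mathbb C$, then $V$ is simple and we are done. Otherwise $\End(V)$ contains a nontrivial dagger idempotent $p$ (a minimal projection in one of its matrix blocks, which is self-adjoint); splitting $p$ and $\mathrm{id}_V-p$ via property 4 and assembling them with the dagger-biproduct structure of property 3 exhibits $V\cong V_1\oplus V_2$ with both summands nonzero. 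One checks $\dim\End(V_i)<\dim\End(V)$, so the induction terminates, and $V$ is a finite direct sum of simple objects. Finally, to see the decomposition is genuinely finite and well-behaved one notes a nonzero morphism $V\to V$ cannot be nilpotent of every order unless it is zero (again by positivity), so there is no infinite descending chain; this also pins down that the zero object is the empty direct sum.

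The main obstacle is step (i)--(ii): one must be slightly careful that the abstract categorical hypotheses really do force $\End(X)$ to be a $C^*$-algebra and that minimal projections therein can be taken self-adjoint, and that the chosen splittings are compatible with the dagger-biproduct so that the resulting isomorphism $V\cong\bigoplus V_i$ is unitary. All of this is routine finite-dimensional $C^*$-algebra theory once the identification is made, and it is precisely the content cited from \cite[Lemma 1.35]{Mueger2007} and \cite[Definition 2.2]{Mueger2004}; since the proposition is stated as a sketch for completeness, I would present the argument at the level of detail above and refer to those sources for the remaining verifications.
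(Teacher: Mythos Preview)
Your proof is correct and follows essentially the same route as the paper: disjointness via positivity (the paper just says ``straightforward consequence of positivity''; you spell out the Schur-type argument), and decomposition by observing that $\End(V)$ is a positive $*$-algebra, hence semisimple, so it contains a nontrivial self-adjoint idempotent which you split and induct. The only differences are cosmetic: your step (i) about replacing arbitrary idempotents by dagger idempotents is never actually used in your argument and can be dropped, and the closing remark about nilpotents is unnecessary since the induction on $\dim\End(V)$ already terminates.
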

\begin{proof}
It is a straightforward consequence of positivity that non-isomorphic simple objects are disjoint. Moreover, given an object $X$, then the dagger makes $\Hom(X,X)$ into a positive $*$-algebra. Since every positive $*$-algebra is semisimple, it follows that if $X$ is not simple, there is a non-trivial dagger idempotent in $\Hom(X,X)$ which we can split to obtain an isometry $i:A\to X$ and use this to decompose $X\cong A \oplus A^\bot$. Continuing inductively, every non-simple object can be decomposed into a finite sum of simple objects.
\end{proof}
\noindent We are now ready to restate and prove Theorem~\ref{thm:semisimple}.\vspace{10pt}

\noindent \textbf{Theorem~\ref{thm:semisimple}.}
\textit{For quantum sets $A$ and $B$, $\QSet(A,B)$ is a semisimple dagger category.}
\begin{proof}
1. Positivity of $\QSet(A,B)$ is proven in Proposition~\ref{thm:daggerpositive}.

\noindent
2. The zero object is the quantum function $(\mathbf{0}, 0)$, where $\mathbf{0}$ is the zero-dimensional Hilbert space and $0$ is the unique morphism $0:  \mathbf{0} \otimes A \to B \otimes \mathbf{0}$.

\noindent
3. The direct sum of two quantum functions is defined in \eqref{eq:directsum} and inherits all its structural properties from $\Hilb$.

\noindent
4. A dagger idempotent in $\QSet(A,B)$ is a self-adjoint idempotent endomorphism $r: H \to H$ on a quantum function $Q:H\otimes A \to B \otimes H$:%
\begin{calign}\label{eq:idempotentsplit}
\begin{tz}[zx, scale=1.7,every to/.style={out=up, in=down},xscale=-1]
\draw (0,0) to (2,2);
\draw[arrow data={0.95}{>},arrow data={0.07}{>}] (2,0) to node[zxnode=\zxwhite,pos=0.5] {$Q$} node[zxnode=\zxwhite, pos=0.2] {$r$} (0,2);
\end{tz}
\quad = \quad
\begin{tz}[zx, scale=1.7,every to/.style={out=up, in=down},xscale=-1]
\draw (0,0) to (2,2);
\draw[arrow data={0.07}{>},arrow data ={0.95}{>}] (2,0) to node[zxnode=\zxwhite,pos=0.5] {$Q$} node[zxnode=\zxwhite, pos=0.8] {$r$} (0,2);
\end{tz}
\end{calign}
In particular, we can split this idempotent in \Hilb, obtaining an isometry $i:V\to H$ such that $ii^\dagger = r$.
It then follows from \eqref{eq:idempotentsplit} that %
\begin{calign}\nonumber
\begin{tz}[zx, scale=1.7,every to/.style={out=up, in=down},xscale=-1]
\draw (0,0) to (2,2);
\draw[arrow data={0.95}{>},arrow data={0.085}{>}] (2,0) to node[zxnode=\zxwhite,pos=0.5] {$Q'$}  (0,2);
\node[dimension,right] at (2,0) {$V$};
\node[dimension,left] at (0,2) {$V$};
\end{tz}
\quad := \quad
\begin{tz}[zx, scale=1.7,every to/.style={out=up, in=down},xscale=-1]
\draw (0,0) to (2,2);
\draw[arrow data={0.95}{>},arrow data={0.085}{>}] (2,0) to node[zxnode=\zxwhite,pos=0.5] {$Q$} node[zxnode=\zxwhite, pos=0.2] {$i$}node[zxnode=\zxwhite, pos=0.8] {$i^\dagger$} (0,2);
\node[dimension,right] at (2,0) {$V$};
\node[dimension,left] at (0,2) {$V$};
\end{tz}
\end{calign}
is itself a quantum function and the isometry $i:V\to H$ is an intertwiner of quantum functions $ Q' \to Q$. Therefore, $r$ splits as an idempotent in $\QSet(A,B)$.
\end{proof}

\noindent
Essentially the same proof works for the 2-category $\QGraph$.
\begin{corollary} \label{cor:semisimplegraph}For quantum graphs $(A,G)$ and $(B,H)$, the category $\QGraph((A,G),(B,H))$ is semisimple.
\end{corollary}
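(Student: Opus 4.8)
The plan is to mirror the proof of Theorem~\ref{thm:semisimple} essentially verbatim, exploiting the fact (Remark~\ref{rem:last}) that $\QGraph((A,G),(B,H))$ is a \emph{full} subcategory of $\QSet(A,B)$, closed under the relevant structure. First I would observe that $\QGraph((A,G),(B,H))$ is a positive linear dagger category: it inherits the dagger from $\QGraph$ (just as $\QSet(A,B)$ does from $\QSet$), and linearity and positivity descend from the forgetful dagger functor to $\Hilb$, exactly as in Proposition~\ref{thm:daggerpositive}. So the four conditions in the definition of a semisimple dagger category need to be checked.

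Conditions 2 and 3 are immediate: the zero quantum function $(\mathbf{0},0)$ trivially satisfies the graph homomorphism equation~\eqref{eq:quantumgraphhomomorphism} (both sides vanish), so it is a zero object in $\QGraph((A,G),(B,H))$; and the direct sum of two quantum graph homomorphisms is again a quantum graph homomorphism, since equation~\eqref{eq:quantumgraphhomomorphism} is additive in the block-diagonal sense (concretely, $(P\oplus Q)_{v,w} = P_{v,w}\oplus Q_{v,w}$ satisfies~\eqref{eq:qgraphhomprojcond} iff both summands do), and the dagger-biproduct structure maps $s_i$ live in $\Hilb$ and are intertwiners of the underlying quantum functions, hence of the quantum homomorphisms by fullness. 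For condition 4, given a dagger idempotent $r$ on a quantum graph homomorphism $Q$, it is in particular a dagger idempotent on the underlying quantum function, so the proof of Theorem~\ref{thm:semisimple} splits it in $\QSet(A,B)$: there is an isometry $i:V\to H$ with $ii^\dagger = r$ and $Q' := (i^\dagger\otimes\id_B)\circ Q\circ(i\otimes\id_A)$ a quantum function with $i$ an intertwiner $Q'\to Q$. The one new point to verify is that $Q'$ again satisfies~\eqref{eq:quantumgraphhomomorphism}; this follows by inserting $r = ii^\dagger$ on the relevant wires in the graph-homomorphism equation for $Q$ and using~\eqref{eq:idempotentsplit} (i.e.\ that $r$ commutes with $Q$ in the appropriate sense) to slide the isometries into place, so that $Q'$ inherits the equation from $Q$. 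Then $i$ is an intertwiner of quantum graph homomorphisms, and $r$ splits in $\QGraph((A,G),(B,H))$.

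The main obstacle — really the only content beyond bookkeeping — is precisely that last verification: checking that the split-off quantum function $Q'$ still preserves the quantum adjacency matrices. I expect this to be a short diagrammatic argument of the same flavour as step 4 in the proof of Theorem~\ref{thm:semisimple}, using only equation~\eqref{eq:idempotentsplit} and the definition~\eqref{eq:quantumgraphhomomorphism}, with no reference to the specific structure of $G$ or $H$. Everything else is a direct transcription. I would therefore write the proof compactly, stating that the argument of Theorem~\ref{thm:semisimple} applies, and adding only the remark that the graph-homomorphism condition is preserved under direct sums and under the idempotent splitting, since these are the two places where one must check that the constructions stay inside the full subcategory $\QGraph((A,G),(B,H)) \subseteq \QSet(A,B)$.
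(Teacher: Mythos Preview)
Your proposal is correct and matches the paper's approach exactly: the paper simply states that ``essentially the same proof works for the 2-category $\QGraph$'' and gives no further details. Your expanded version, checking that the zero object, direct sums, and idempotent splittings all stay inside the full subcategory $\QGraph((A,G),(B,H)) \subseteq \QSet(A,B)$, is precisely what that sentence is gesturing at.
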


\section{Quantum graphs and quantum relations}
\label{app:quantumrelation}
In this final section, we discuss the relational approach to quantum graphs taken by Kuperberg, Weaver, Duan, Severini, Winter, and others~\cite{Kuperberg2012, Weaver2010,Weaver2015,Duan2013, Ortiz2016} and compare it to our Definition~\ref{def:quantumgraphsbyadjmats}. We show that our quantum graphs can indeed be understood as symmetric and reflexive quantum relations as defined by Kuperberg and Weaver~\cite{Kuperberg2012,Weaver2015} and therefore generalise the noncommutative graphs of Duan, Severini and Winter~\cite{Duan2013}. 

In this section, to fit with the work of other authors, all graphs and quantum graphs will be reflexive (see Definition~\ref{def:quantumgraphsbyadjmats}); wherever `graph' is written it should be taken to mean `reflexive graph'.

The relational approach starts from the observation that a classical graph $G$ may be described as a reflexive and symmetric relation on its vertex set $V_{G}$; that is, a subset $S\subseteq V_G \times V_{G}$ containing both the diagonal $\Delta=\{(x,x)~|~x\in V_G\}$ and such that $(x,y)\in S\Leftrightarrow (y,x) \in S$ for all $x,y\in V_{G}$.

Quantising this description leads to a notion of quantum graph equivalent to Definition~\ref{def:quantumgraphsbyadjmats} which is closer in spirit to previous definitions~\cite{Duan2013, Weaver2010, Weaver2015}. We first quantise relations, recovering the finite-dimensional quantum relations of Kuperberg and Weaver~\cite{Weaver2010,Kuperberg2012}. 

\begin{proposition} Under Gelfand duality, a binary relation between two finite sets $X$ and $Y$ can be expressed as a projector $\!P\!:X\! \otimes Y \!\!\to \!X \otimes Y$ satisfying the following equation:\looseness=-2
\begin{equation}\label{eq:bimodulerelation}
\def\scl{1.25}
\begin{tz}[zx,every to/.style={out=up, in=down},scale=\scl]
\draw (0,0) to (0,2);
\draw (1,0) to (1,2);
\draw (-0.5,0) to[in=-135] (0,0.6);
\draw (1.5,0) to [in=-45] (1,0.6);
\boxwidth{1.25}
\node[widebox] at (0.5,1.25) {$P$};
\node[zxvertex=\zxwhite] at (0,0.6){};
\node[zxvertex=\zxwhite] at (1,0.6){};
\node[dimension,left] at (0,-0.1) {$X$};
\node[dimension, left] at (-0.5,-0.1) {$X$};
\node[dimension,right] at (1.5,-0.1) {$Y$};
\node[dimension, right] at (1,-0.1) {$Y$};
\node[dimension, left] at (0., 2) {$X$};
\node[dimension, right] at (1, 2) {$Y$};
\end{tz} 
~~=~~
\begin{tz}[zx,every to/.style={out=up, in=down},scale=\scl]
\draw (0,0) to (0,2);
\draw (1,0) to (1,2);
\draw (-0.5,0) to[in=-150] (0,1.4);
\draw (1.5,0) to [in=-30] (1,1.4);
\boxwidth{1.25}
\node[widebox] at (0.5,0.75) {$P$};
\node[zxvertex=\zxwhite] at (0,1.4){};
\node[zxvertex=\zxwhite] at (1,1.4){};
\node[dimension,left] at (0,-0.1) {$X$};
\node[dimension, left] at (-0.5,-0.1) {$X$};
\node[dimension,right] at (1.5,-0.1) {$Y$};
\node[dimension, right] at (1,-0.1) {$Y$};
\node[dimension, left] at (0., 2) {$X$};
\node[dimension, right] at (1, 2) {$Y$};
\end{tz} 
\end{equation}
\end{proposition}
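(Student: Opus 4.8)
The plan is to establish a bijective correspondence between binary relations $S\subseteq X\times Y$ and projectors $P\colon X\otimes Y\to X\otimes Y$ satisfying \eqref{eq:bimodulerelation}. The starting point is the observation that $X\otimes Y$ is precisely the commutative special dagger Frobenius algebra associated under the Gelfand duality of Section~\ref{sec:gelfanddiagram} to the set $X\times Y$, with copyable basis $\{\ket{x}\otimes\ket{y}\mid x\in X,\ y\in Y\}$, and that equation~\eqref{eq:bimodulerelation} — after the harmless reordering that interleaves the two copies of $X$ and the two copies of $Y$ — says exactly that $P$ commutes with the multiplication of $X\otimes Y$ acting on itself, i.e.\ that $P$ is an endomorphism of the rank-one free module over $X\otimes Y$. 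The proof then has two directions.

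For the forward direction, given $S\subseteq X\times Y$ I set $P_S:=\sum_{(x,y)\in S}\ketbra{xy}$, the orthogonal projector onto the span of the corresponding copyable basis vectors (equivalently, onto the subalgebra they generate). Since the copyable basis is orthonormal by Theorem~\ref{thm:classificationONB}, $P_S$ is a dagger idempotent; and it satisfies \eqref{eq:bimodulerelation} because it commutes with multiplication by each copyable basis vector $\ket{x'}\otimes\ket{y'}$ — in that basis this multiplication is just the projector $\ketbra{x'y'}$, which commutes with $P_S$ — and hence, by linearity, with multiplication by every element. For the converse, let $P$ be a projector obeying \eqref{eq:bimodulerelation}. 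Plugging the unit $u_X\otimes u_Y$ into the two \emph{inner} input legs of \eqref{eq:bimodulerelation} and using unitality \eqref{eq:assocandunitality}, the two multiplications on the left-hand side collapse to identities, leaving $P$ applied to the outer legs; on the right-hand side $P$ is first applied to $u_X\otimes u_Y$, yielding a state $e:=P(u_X\otimes u_Y)$, which is then multiplied into the outer legs. Hence $P$ equals left multiplication $L_e$ by $e$. Expanding $e=\sum_{(x,y)}\lambda_{x,y}\,\ket{x}\otimes\ket{y}$ in the copyable basis and using that distinct copyable basis vectors are orthogonal idempotents gives $P=L_e=\sum_{(x,y)}\lambda_{x,y}\ketbra{xy}$; then $P=P^2$ forces $\lambda_{x,y}^2=\lambda_{x,y}$ and $P=P^\dagger$ forces $\lambda_{x,y}\in\mathbb{R}$, so $\lambda_{x,y}\in\{0,1\}$ and $P=P_S$ with $S:=\{(x,y)\mid\lambda_{x,y}=1\}$. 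The assignments $S\mapsto P_S$ and $P\mapsto\{(x,y)\mid P\ket{xy}=\ket{xy}\}$ are evidently mutually inverse, which completes the argument.

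The only genuinely delicate point is the diagrammatic bookkeeping in the converse: in the drawing convention of \eqref{eq:bimodulerelation} the two copies of $X$ are grouped on the left and the two copies of $Y$ on the right, so one has to correctly identify which legs carry the "module element" and which carry the "algebra element" before invoking unitality on the right legs. Once this is set up, the remainder is a routine consequence of Theorem~\ref{thm:classificationONB} together with the elementary fact that a module endomorphism of a commutative unital algebra acting on itself is left multiplication by the image of the unit. The same reasoning, carried out for arbitrary finite-dimensional $\CF$s in place of the classical $X$ and $Y$, also yields the quantum version used later in this section.
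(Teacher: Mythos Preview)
Your proof is correct and essentially matches the paper's approach, which is stated as a one-liner: projectors on $X\otimes Y$ encode subspaces, and the bimodule condition~\eqref{eq:bimodulerelation} guarantees that the subspace is the span of a \emph{subset} of $X\times Y$. Your argument spells out the details the paper omits, and your key step in the converse --- plugging units into the two inner legs to obtain $P=L_e$ with $e=P(u_X\otimes u_Y)$ --- is exactly the trick the paper alludes to later in Remark~\ref{rem:quantumrelationprojection} (``plugging units into the second and third slot'').
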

\begin{proof}
Projectors on $X\otimes Y$ encode subspaces of $X\otimes Y$; the bimodule condition~\eqref{eq:bimodulerelation} guarantees that the corresponding subspace is the linear span of a \emph{subset} of $X\times Y$.
\ignore{
Under Gelfand duality, a subset $S \subseteq X \times Y$ is a subspace of $X \otimes Y$ spanned by the product basis states $\ket{x} \otimes \ket{y}$, where $x \sim_S y$. The equation \eqref{eq:bimodulemap} precisely enforces that the subspace in the image of the projector is spanned by product basis states.}
\end{proof}
\noindent We define quantum relations analogously.
\begin{definition}\label{def:quantumrelations}
A \emph{quantum relation} between quantum sets $A$ and $B$ is a projector\\ ${P: A \otimes B \to A \otimes B}$ satisfying equation \eqref{eq:bimodulerelation}. A \emph{quantum relation on a quantum set $A$} is  a quantum relation between $A$ and itself.
\end{definition}

\begin{remark} \label{rem:quantumrelationprojection}Let $A$ and $B$ be finite-dimensional $C^*$-algebras. Given a projector\\ $P\in \End(A\otimes B)$ fulfilling equation~\eqref{eq:bimodulerelation}, we obtain a projection\footnote{A projection $p$ in a $C^*$-algebra $A$ is an element $p\in A$ such that $p^*=p=p^2$.} $p:=P(e_A\otimes e_B)\in A^{op}\otimes B$; conversely, any projection $p\in A^{op}\otimes B$ gives rise to a projector $P:=L_p \in \End(A\otimes B)$ fulfilling~\eqref{eq:bimodulerelation}. (Here, $A^{op}$ denotes the opposite algebra of $A$, $e_A\in A,~e_B\in B$ denote the units of $A$ and $B$, and $L_p$ denotes left multiplication with $p$ in the algebra $A^{op}\otimes B$.) A quantum relation between $A$ and $B$ may therefore be equivalently defined as a projection in the algebra $A^{op} \otimes B$. Plugging units into the second and third slot of equation~\eqref{eq:bimodulerelation} gives a simple diagrammatic proof of this fact.

\ignore{A similar simple definition of quantum relation was briefly considered by Weaver in the introduction of~\cite{Weaver2010} (although without using the opposite algebra) but dismissed because of difficulties expressing properties of such relations, such as symmetry or reflexivity. In the following, we will use the Frobenius algebra structure on a finite-dimensional $C^*$-algebra to express these properties and show that our definition of quantum relation coincides with the finite-dimensional quantum relations of Kuperberg and Weaver~\cite{Weaver2010,Kuperberg2012}.}
\end{remark}
Kuperberg and Weaver~\cite{Weaver2010,Kuperberg2012} define a quantum relation on a von Neumann algebra $\mathcal{M}\subseteq \mathcal{B}(H)$ to be a weak$^*$-closed operator bimodule over the commutant $\mathcal{M}'$ (see Definition 2.1 in~\cite{Weaver2010}). In the finite-dimensional case (see Definition 5.1 in~\cite{Weaver2015}), this definition reduces to the following:

\begin{definition}\label{KuperbergWeaverQreldef}Let $A$ be a finite-dimensional $C^*$-algebra, let $H$ be some finite-dimensional Hilbert space such that $A\subseteq \End(H)$ and let $A'=\{b \in \End(H)~|~ ba=ab~\forall a\in A\}$ be the commutant with respect to this embedding. A \emph{quantum relation in the sense of Kuperberg and Weaver} is a subspace $V\subseteq \End(H)$ which fulfills $A' V A'\subseteq V$.\end{definition}
\noindent
It is shown in Theorem 2.7 of~\cite{Weaver2010} that this definition is independent\footnote{More precisely, it can be shown that if $H$ and $H'$ are finite-dimensional Hilbert spaces such that there are embeddings $A\subseteq \End(H)$ and $A\subseteq \End(H')$, then there is a canonical correspondence between the correspondingly defined quantum relations.} of the embedding $A\subseteq \End(H)$. We now show that, in the finite-dimensional case, this definition is equivalent to our own.
\begin{proposition}\label{prop:WeaverKuperberg}
Given a finite-dimensional $C^*$-algebra $A$, there is a Hilbert space $H$ and an embedding $A\subseteq \End(H)$ such that our notion of quantum relation on $A$ (Definition~\ref{def:quantumrelations}) coincides with that of Kuperberg and Weaver (Definition \ref{KuperbergWeaverQreldef}).
\end{proposition}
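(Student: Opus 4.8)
The plan is to exhibit a concrete embedding of $A$ into some $\End(H)$ for which the two notions of quantum relation literally become the same objects. The natural choice is to use the \emph{left regular representation}: take $H := A$ as a Hilbert space (using the Frobenius inner product) and let $A\hookrightarrow \End(A)$ by left multiplication $a\mapsto L_a$. First I would identify the commutant $A'$ of the image $L_A\subseteq \End(A)$. A standard fact (which follows from the Frobenius algebra structure, or from elementary module theory over semisimple algebras) is that the commutant of $L_A$ acting on $A$ is precisely $R_A$, the algebra of \emph{right} multiplications $R_a: b\mapsto ba$; equivalently, $A' \cong A^{op}$. This is exactly the algebra that appeared in Remark~\ref{rem:quantumrelationprojection}, which is the key link: our quantum relations on $A$ were there shown to be equivalently projections in $A^{op}\otimes A$, i.e. projections in $A'\otimes A$, and via left multiplication these correspond to $A'$-$A'$-sub-bimodules of $\End(A)$ that are spanned by a projection --- but in the finite-dimensional setting every weak$^*$-closed (indeed every) subspace is the image of a projection, so the ``spanned by a projection'' qualifier is vacuous.

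The key steps, in order, would be: (1) set $H=A$ with its Frobenius inner product and embed $A$ via $L:A\to\End(A)$, $a\mapsto L_a$; (2) prove $A' = R_A \cong A^{op}$ by a direct computation --- any $f\in\End(A)$ commuting with all $L_a$ is determined by $f(1_A)=:c$, whence $f = R_c$, and conversely every $R_c$ commutes with every $L_a$ by associativity; (3) invoke Remark~\ref{rem:quantumrelationprojection} to identify our quantum relations on $A$ with projections $p\in A^{op}\otimes A \cong A'\otimes A$, and note that under the obvious map $A'\otimes A \to \End(H)$, $b'\otimes a \mapsto (x\mapsto b' x a)$ (well-defined since $A'$ and $A=L_A$ commute), such a projection $p$ yields the subspace $V := \operatorname{im}(p)\subseteq\End(H)$; (4) check $A' V A'\subseteq V$: since $p$ lies in $A'\otimes A$, left and right multiplication by elements of $A'$ preserve $\operatorname{im}(p)$, giving a Kuperberg--Weaver quantum relation; (5) conversely, given a Kuperberg--Weaver quantum relation $V\subseteq \End(H)$ with $A'VA'\subseteq V$, let $p$ be the orthogonal projector onto $V$; the bimodule condition forces $p\in (A')'\cap(\text{something})$ --- more carefully, $A'VA'\subseteq V$ together with self-adjointness of $V$ (or passing to $V\cap V^*$ if one only wants the projector picture; but in fact Definition~\ref{KuperbergWeaverQreldef} subspaces need not be self-adjoint, so here one should be slightly careful and note that the correspondence is between arbitrary bimodules and arbitrary --- not necessarily self-adjoint --- idempotents, matching Definition~\ref{def:quantumrelations}'s projectors only after the appropriate identification) --- yields that $p$ commutes with left and right $A'$-multiplication, i.e. $p\in (A')'\otimes_{?}\cdots$; cleaner is to observe $\End(H) = A'\otimes A$ as an $A'$-bimodule under the two-sided multiplication, so $A'$-sub-bimodules are exactly the subspaces of the form $A'\otimes W$ is \emph{not} right --- instead use that $\operatorname{End}_{A'\text{-bimod}}(\End(H))$-stable subspaces correspond to left ideals, and conclude via the isomorphism $\End(H)\cong A'\otimes A$ with $A'$ acting only on the first factor from both sides, so that bimodules $\leftrightarrow$ (sub-bimodules of $A'$ over itself) $\otimes A$ $\leftrightarrow$ two-sided ideals of $A'$ tensored appropriately --- and hence correspond to projections in $A\otimes Z(A')^{?}$; at this point I would instead simply cite the independence-of-embedding result (Theorem 2.7 of~\cite{Weaver2010}) and match dimensions/lattices rather than belabour the bijection by hand.

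Concretely I would phrase the final write-up as: embed $A$ via the left regular representation, compute $A'\cong A^{op}$, and then observe that both Definition~\ref{def:quantumrelations} (via Remark~\ref{rem:quantumrelationprojection}) and Definition~\ref{KuperbergWeaverQreldef} describe the same lattice, namely the lattice of $A'$-sub-bimodules of $\End(H)$, since $\End(H)\cong A^{op}\otimes A = A'\otimes A$ with $A'$ acting by left multiplication on the left tensorand from the left and by right multiplication on it from the right, while $A$ acts trivially --- so an $A'$-sub-bimodule is exactly $I\otimes A$ for a two-sided ideal $I\trianglelefteq A'$, equivalently an orthogonal projection in the centre... no: equivalently an arbitrary idempotent, equivalently (identifying $A'\otimes A = A^{op}\otimes A$) an idempotent of $A^{op}\otimes A$, which is precisely our datum. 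The main obstacle is step (5), getting the converse direction cleanly: I must verify that \emph{every} Kuperberg--Weaver quantum relation (an $A'$-$A'$-sub-bimodule of $\End(H)$ for the specific $H = A$) arises from an idempotent in $A^{op}\otimes A$ in the way Remark~\ref{rem:quantumrelationprojection} prescribes, and that this bijection is inverse to the forward construction. This is essentially a bookkeeping check once the identification $\End(A)\cong A^{op}\otimes A$ (as $A^{op}$-bimodule) is in hand, but it requires care about left-versus-right actions and about the (non-)self-adjointness conventions, and it is where I would either do the explicit module-theoretic computation or lean on Theorem 2.7 of~\cite{Weaver2010} to reduce to the single convenient embedding.
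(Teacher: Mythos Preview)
Your setup is exactly the paper's: take $H=A$ with its Frobenius inner product, embed $A$ via left multiplication, and compute $A'=R_A\cong A^{op}$ by the argument you sketch (any $X$ commuting with all $L_a$ satisfies $X=R_{X(e)}$). So steps (1) and (2) match the paper verbatim.

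Where you diverge is in the finish. You route through Remark~\ref{rem:quantumrelationprojection} (projections in $A^{op}\otimes A$) and then try to match $A'$-sub-bimodules of $\End(H)$ with such projections; this is where your write-up visibly unravels into several abandoned attempts, and your final plan is to either do a module-theoretic case analysis or cite Weaver's independence-of-embedding theorem. The paper avoids this detour entirely: it uses the Frobenius self-duality $A\cong A^*$ to identify $\End(A)\cong A\otimes A$ \emph{as a Hilbert space}, and then checks directly that under this identification, left and right composition with $R_a$ in $\End(A)$ become left and right multiplication by $a\otimes e$ and $e\otimes a$ in $A\otimes A$. Thus a Kuperberg--Weaver quantum relation is literally a subspace $\mathcal V\subseteq A\otimes A$ with $(a\otimes e)\mathcal V(e\otimes b)\subseteq\mathcal V$, which is exactly the image of a projector satisfying~\eqref{eq:bimodulerelation}. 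No lattice-matching, no ideal analysis, no appeal to Theorem~2.7 of~\cite{Weaver2010} is needed.

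Your worry about self-adjointness is a red herring: since the left and right $A$-actions on $A\otimes A$ are $*$-closed (the adjoint of multiplication by $a$ is multiplication by $a^*$), a subspace stable under them has orthogonal complement also stable, so the orthogonal projector onto the subspace automatically satisfies~\eqref{eq:bimodulerelation}. This closes the bijection in both directions with no further bookkeeping.
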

\begin{proof} 
As noted in Remark~\ref{rem:quantumrelationprojection}, our quantum relations on a finite-dimensional $C^*$-algebra $A$ can be expressed as projections in $A^{op}\otimes A$; Weaver proves in~\cite[Proposition~2.23]{Weaver2010} that quantum relations in the sense of Kuperberg-Weaver on finite-dimensional $C^*$-algebras correspond precisely to such projections.

For concreteness, we will now provide an alternative proof using the Frobenius structure on the $C^*$-algebra.
We have seen in Section~\ref{sec:diagramsforC*} that a finite-dimensional $C^*$-algebra $A$ admits the structure of a symmetric special dagger Frobenius algebra. In particular, $A$ admits the structure of a Hilbert space, and there is an embedding of $C^*$-algebras $A\subseteq \End(A)$ given as follows: \begin{calign}\nonumber
A\to \End(A) & a\mapsto L_a (:b\in A \mapsto ab)\end{calign}
We now show that the corresponding commutant $A'$ is $*$-isomorphic to the opposite algebra $A^{op}$, defined on the vector space $A$ with multiplication $a\star b:= ba$. The following linear map is clearly a faithful $*$-homomorphism:
\begin{calign}\nonumber 
A^{op} \to A' & a\mapsto R_a(:b\in A \mapsto ba)
\end{calign}
We now show that it is also surjective and therefore a $*$-isomorphism. Let $X:A\to A$ be an element of $A'$, and denote the unit of $A$ by $e\in A$. Then $X(e)\in A$ and we will show that $R_{X(e)}=X$. For all $b\in A$ the following holds:
\[R_{X(e)} (b) = b X(e) = L_b X(e) ~~\superequals{$X\in A'$}~~ X L_b(e) = X(b) \]
A quantum relation in the sense of Kuperberg and Weaver is therefore a subspace $\mathcal{V} \subseteq \End(A)$ such that $R_a\mathcal{V} R_b \subseteq \mathcal{V}$ for all $a,b\in A$.
Since $A$ is a symmetric Frobenius algebra, there is a canonical isomorphism $A\cong A^*$ which induces an isomorphism $\End(A) \cong A \otimes A$.  
Under this isomorphism, left and right composition with $R_a$ in $\End(A)$ can be identified with left and right multiplication with $(a\otimes e) $ and $(e\otimes a)$ in $A\otimes A$, respectively:
\begin{calign}\nonumber R_a \circ - : \End(A) \to \End(A) & -\circ R_b: \End(A) \to \End(A)\\\nonumber
(a\otimes e) \cdot - : A\otimes A\to A\otimes A & -\cdot (e\otimes b) : A\otimes A \to A\otimes A
\end{calign}
Therefore, a quantum relation in the sense of Kuperberg and Weaver can be understood as a subspace $\mathcal{V} \subseteq A\otimes A $ such that $(a\otimes e)\mathcal{V}(e\otimes b)\subseteq \mathcal{V}$ for all $a,b\in A$.

This coincides with our Definition~\ref{def:quantumrelations}.
\end{proof}
\noindent
Following Weaver~\cite{Weaver2015}, we now consider additional properties of quantum relations, such as symmetry and reflexivity.

\begin{definition}\label{def:quantumgraphsbyrelations} A quantum relation $P:A\otimes A\to A \otimes A$ is \emph{symmetric} or \emph{reflexive} if one of the following additional equations holds:\
\def\scl{1.25}
\def\w{0.9375}
\begin{calign}\label{eq:symrefrelation}
\begin{tz}[zx,xscale=0.75,yscale=3/3.5,scale=\scl]
\draw (0,0.5) to (0,1.) to [out=up, in=up,looseness=1.75] node[zxvertex=\zxwhite, pos=0.5] {} (2.5,1.) to (2.5,-1);
\draw (1,0.5) to (1,1.) to [out=up, in=up,looseness=3] node[zxvertex=\zxwhite, pos=0.5]{}(1.5,1.) to (1.5,-1);
\draw (-0.5,2.5) to (-0.5,0.5) to [out=down, in=down, looseness=3] node[zxvertex=\zxwhite, pos=0.5]{} (0,0.5);
\draw (-1.5,2.5) to (-1.5,0.5) to [out=down, in=down, looseness=1.75]node[zxvertex=\zxwhite, pos=0.5]{} (1,0.5);
\boxwidth{\w}
\node[widebox] at (0.5,0.75) {$P$};
\end{tz}
~~~=~~~
\begin{tz}[zx,xscale=0.75,scale=\scl]
\draw (0,0) to (0, 3);
\draw (1,0) to (1,3);
\boxwidth{\w}
\node[widebox] at (0.5, 1.5){$P$};
\end{tz}
&
\begin{tz}[zx,xscale=0.75,scale=\scl]
\clip (-0.3,0) rectangle (1.3,3);
\draw (0,3) to (0,1) to [out=down, in=down] node[zxvertex=\zxwhite, pos=0.5]{} (1,1) to (1,3);
\boxwidth{\w}
\node[widebox] at (0.5,2){$P$};
\end{tz}
~~~=~~~
\begin{tz}[zx,xscale=0.75,scale=\scl]
\clip (-0.2,0) rectangle (1.2,3);
\draw (0,3) to (0,2) to [out=down, in=down]node[zxvertex=\zxwhite, pos=0.5]{} (1,2) to (1,3);
\end{tz}\\ \nonumber
\text{a) symmetric} 
& 
\text{b) reflexive}
\end{calign}
\end{definition}
\noindent
We now show that quantum graphs as in Definition~\ref{def:quantumgraphsbyadjmats} are indeed symmetric and reflexive quantum relations. For a quantum graph $(A,G)$ as in Definition~\ref{def:quantumgraphsbyadjmats}, we introduce the following linear map $P_G:A\otimes A \to A \otimes A$:
\begin{equation}\label{eq:graphprojector}
\begin{tz}[zx]
\draw[string] (0, 0) to (0,3);
\draw (1.5,0) to (1.5, 3);
\boxwidth{1.5}
\node[widebox] at (0.75, 1.5 ) {$P_G$};
\end{tz}
~~~=~~~
\begin{tz}[zx]
\draw[string] (0, 0) to (0,3);
\draw (2.5,0) to (2.5, 3);
\draw (0, 1.5) to (2.5, 1.5) ;
\node[zxvertex=\zxwhite] at (0,1.5){};
\node[zxvertex=\zxwhite] at (2.5,1.5){};
\node[zxnode=\zxwhite] at (1.25, 1.5 ) {$G$};
\end{tz}
~~~:=~~~
\begin{tz}[zx,xscale=2.5/3]
\draw (-1.5,0) to (-1.5,1.5) to [out=up, in=-135] (-0.75, 2.5) to (-0.75,3);
\draw (-0.75, 2.5) to [out=-45, in=up] (0, 1.5) to [out=down, in=135] (0.75,0.5) to (0.75,0);
\draw (0.75,0.5) to [out=45, in=down] (1.5,1.5) to (1.5,3);
\node[zxnode=\zxwhite] at (0, 1.5) {$G$};
\node[zxvertex=\zxwhite, zxdown] at (-0.75, 2.5) {};
\node[zxvertex=\zxwhite, zxup] at (0.75, 0.5) {};
\end{tz}
~~~\superequalseq{eq:propadjacency}~~~
\begin{tz}[zx,xscale=-2.5/3]
\draw (-1.5,0) to (-1.5,1.5) to [out=up, in=-135] (-0.75, 2.5) to (-0.75,3);
\draw (-0.75, 2.5) to [out=-45, in=up] (0, 1.5) to [out=down, in=135] (0.75,0.5) to (0.75,0);
\draw (0.75,0.5) to [out=45, in=down] (1.5,1.5) to (1.5,3);
\node[zxnode=\zxwhite] at (0, 1.5) {$G$};
\node[zxvertex=\zxwhite, zxdown] at (-0.75, 2.5) {};
\node[zxvertex=\zxwhite, zxup] at (0.75, 0.5) {};
\end{tz}
\end{equation}
For a classical graph $(V_G,G)$, it is easily verified that this map~\eqref{eq:graphprojector} is the projector onto the subspace of the symmetric and reflexive relation defining the graph:
\begin{equation}\label{eq:projectorsubset}\left\{ (v,w) ~|~ v \sim_G w \right\} \subseteq V_G \times V_G\end{equation}
Conversely, the adjacency matrix $G$ can be recovered from this projector as follows:
\def\scl{1.33}
\begin{equation}\label{eq:recoveradjacency}
\begin{tz}[zx,scale=\scl]
\draw (0,0) to (0,2);
\node[zxnode=\zxwhite] at (0,1) {$G$};
\end{tz}
~~~:=~~~
\begin{tz}[zx,scale=\scl]
\draw (0,0) to (0,1) to [out=up, in=up,looseness=2.5] node[zxvertex=\zxwhite, pos=0.5] {} (-0.5,1) to (-0.5,-0.25);
\draw (0.75,0) to (0.75,1.75);
\boxwidth{1}
\node[widebox] at (0.375,0.65){$P_G$};
\node[zxvertex=\zxwhite] at (0,0){};
\node[zxvertex=\zxwhite] at (0.75,0){};
\end{tz}
~~~\superequalseq{eq:bimodulerelation}~~~
\begin{tz}[zx,scale=\scl]
\draw (0,-0.25) to (0,1.3);
\draw (0.75,0) to (0.75,1.75);
\boxwidth{1}
\node[widebox] at (0.375,0.65){$P_G$};
\node[zxvertex=\zxwhite] at (0,1.3){};
\node[zxvertex=\zxwhite] at (0.75,0){};
\end{tz}
~~~\superequals{$(-)^\dagger$}~~~
\begin{tz}[zx,xscale=-1,scale=\scl]
\draw (0,-0.25) to (0,1.3);
\draw (0.75,0) to (0.75,1.75);
\boxwidth{1}
\node[widebox] at (0.375,0.65){$P_G$};
\node[zxvertex=\zxwhite] at (0,1.3){};
\node[zxvertex=\zxwhite] at (0.75,0){};
\end{tz}
~~~\superequalseq{eq:bimodulerelation}~~~
\begin{tz}[zx,xscale=-1,scale=\scl]
\draw (0,0) to (0,1) to [out=up, in=up,looseness=2.5] node[zxvertex=\zxwhite, pos=0.5] {} (-0.5,1) to (-0.5,-0.25);
\draw (0.75,0) to (0.75,1.75);
\boxwidth{1}
\node[widebox] at (0.375,0.65){$P_G$};
\node[zxvertex=\zxwhite] at (0,0){};
\node[zxvertex=\zxwhite] at (0.75,0){};
\end{tz}
\end{equation}
The same correspondence holds for general quantum graphs.

\begin{theorem} \label{thm:relationadjacency}Given a quantum graph $(A,G)$ as in Definition~\ref{def:quantumgraphsbyadjmats}, the projector \\{$P_G:A\otimes A\to A \otimes A$} defined in equation~\eqref{eq:graphprojector} is a symmetric and reflexive quantum relation as in Definition~\ref{def:quantumrelations}. Conversely, given a symmetric and reflexive quantum relation $P$ on $A$, the map~\eqref{eq:recoveradjacency} defines a quantum adjacency matrix. These two constructions are mutually inverse.
\end{theorem}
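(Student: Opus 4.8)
The plan is to establish the four assertions in turn: (1) $P_G$ as defined in~\eqref{eq:graphprojector} is a projector; (2) $P_G$ satisfies the bimodule condition~\eqref{eq:bimodulerelation}, so is a quantum relation; (3) $P_G$ is symmetric and reflexive in the sense of~\eqref{eq:symrefrelation}; and (4) the assignment $G\mapsto P_G$ and the assignment $P\mapsto G$ given by~\eqref{eq:recoveradjacency} are mutually inverse. Throughout I would work purely diagrammatically, using only the defining equations~\eqref{eq:propadjacency} of a quantum graph, the Frobenius/special/symmetric axioms~\eqref{eq:assocandunitality},~\eqref{eq:Frobenius},~\eqref{eq:special}, and the self-duality cups and caps~\eqref{eq:cupcapfrob}. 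The key observation that makes everything run is that $P_G$ is, up to sliding along the Frobenius cups, just ``multiply the left leg into $G$ and back out'', so each algebraic property of $P_G$ is a Frobenius-calculus shadow of one of the two axioms in~\eqref{eq:propadjacency}.

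For step (1), I would compute $P_G\circ P_G$: stacking two copies of the middle diagram in~\eqref{eq:graphprojector}, the two ``$G$ attached by a comultiplication and a multiplication'' gadgets merge via the Frobenius equation~\eqref{eq:Frobenius} into a single gadget with $G$ composed with itself along the wire, and then the first (idempotency) equation of~\eqref{eq:propadjacency}, namely $m\circ(G\otimes G)\circ\delta = G$, collapses this back to a single $G$; self-adjointness of $P_G$ follows since $G=G^\dagger$ and the Frobenius structure is a dagger Frobenius structure, so reflecting the diagram and using symmetry~\eqref{eq:special}(b) returns the same map. For step (2), the bimodule condition~\eqref{eq:bimodulerelation} says we may slide a multiplication on the second ($A$-, or $Y$-) leg past $P_G$; since $P_G$ does nothing to the second leg except carry it along a bare wire (the action of $G$ is entirely on the first leg, connected to the second only through the two white Frobenius nodes), this is immediate from associativity and the Frobenius law, plugging in units as in Remark~\ref{rem:quantumrelationprojection} if one wants the short proof. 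For step (3), reflexivity is exactly the statement, after capping off appropriately, that $m\circ(G\otimes G)\circ\delta$ returns the counit-type data, i.e. it reduces to a direct application of the idempotency axiom together with the snake equations; symmetry of $P_G$ as a relation is the content of the second axiom of~\eqref{eq:propadjacency}, $G$ being symmetric with respect to the Frobenius cup ($\sigma\circ\delta\circ G = \delta\circ G$ in suitable form), transported through~\eqref{eq:graphprojector}.

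For step (4), I would first check $G\mapsto P_G\mapsto G$: the composite~\eqref{eq:recoveradjacency}, when one substitutes the definition~\eqref{eq:graphprojector} of $P_G$, is a diagram in which a cup on one leg of $P_G$ meets the $G$-gadget; applying the Frobenius and snake equations straightens everything out and the unitality axiom~\eqref{eq:assocandunitality} removes the spurious multiplication/unit, leaving exactly $G$ — this is precisely the chain of equalities already displayed in~\eqref{eq:recoveradjacency}, so only minor bookkeeping is needed. The other direction $P\mapsto G\mapsto P_{G}$ is the substantive part: given a symmetric reflexive quantum relation $P$, one forms $G$ via~\eqref{eq:recoveradjacency} and must show $P_{G}=P$. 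Here I would use the bimodule property~\eqref{eq:bimodulerelation} of $P$ to move the extra Frobenius structure introduced by~\eqref{eq:graphprojector} around, reflexivity~\eqref{eq:symrefrelation}(b) to cancel the cap-and-cup that~\eqref{eq:recoveradjacency} inserts, and symmetry~\eqref{eq:symrefrelation}(a) to orient the two legs correctly; the fact that $P$ is a \emph{projector} is what guarantees the reconstructed $G$ again satisfies~\eqref{eq:propadjacency} and that no information is lost. I expect this last reconstruction identity $P_{G}=P$ to be the main obstacle: unlike the other steps it is not a one-line consequence of a single axiom but requires combining the bimodule, symmetry and reflexivity conditions of $P$ simultaneously, and care is needed to ensure that the ``multiply-in-then-cup-off-then-re-expand'' round trip genuinely recovers all of $P$ rather than merely a sub-projector; the cleanest route is probably to pass through the projection $p=P(e_A\otimes e_A)\in A^{\mathrm{op}}\otimes A$ of Remark~\ref{rem:quantumrelationprojection} and show that $G$ is the image of $p$ under the canonical iso $A^{\mathrm{op}}\otimes A\cong\mathrm{End}(A)$, reducing the claim to an elementary identity about left/right multiplication operators.
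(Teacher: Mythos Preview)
Your overall plan is close to the paper's, but the paper organises the argument more efficiently, and this reorganisation dissolves exactly the difficulty you flag as the ``main obstacle''. The paper first observes that $G\mapsto P_G$ (via~\eqref{eq:graphprojector}) and $P\mapsto G$ (via~\eqref{eq:recoveradjacency}) are mutually inverse bijections between \emph{arbitrary} linear maps $G:A\to A$ and linear maps $P:A\otimes A\to A\otimes A$ satisfying \emph{only} the bimodule condition~\eqref{eq:bimodulerelation}. This round trip uses nothing except~\eqref{eq:bimodulerelation} together with the Frobenius, special and unit axioms; no projector, symmetry or reflexivity hypotheses on $P$ are needed. With the baseline bijection in hand, the paper then matches properties across it one at a time: $G$ real (i.e.\ $G^\dagger$ equals the Frobenius transpose of $G$) $\Leftrightarrow$ $P_G$ self-adjoint; first equation of~\eqref{eq:propadjacency} $\Leftrightarrow$ $P_G^2=P_G$; second equation of~\eqref{eq:propadjacency} $\Leftrightarrow$ $P_G$ symmetric; reflexivity of $G$ $\Leftrightarrow$ reflexivity of $P_G$. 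Each of these is then an isolated one-line graphical check.

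Consequently your step~(4) needs much less than you fear: to verify $P_G=P$ you only need the bimodule property of $P$. Substituting~\eqref{eq:recoveradjacency} into~\eqref{eq:graphprojector}, the extra multiplication and comultiplication nodes introduced by~\eqref{eq:graphprojector} slide through $P$ using~\eqref{eq:bimodulerelation}, after which unitality and specialness collapse them. Symmetry, reflexivity and idempotence of $P$ play no role here, and there is no danger of recovering ``only a sub-projector''. The detour through $A^{\mathrm{op}}\otimes A$ is valid but unnecessary.

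There is also a genuine slip in your step~(3). Reflexivity of $P_G$ is \emph{not} a consequence of the idempotency axiom $m\circ(G\otimes G)\circ\delta=G$; it corresponds instead to the reflexivity condition on $G$ (the additional equation stated after Definition~\ref{def:quantumgraphsbyadjmats}, assumed throughout Section~\ref{app:quantumrelation}), which for classical graphs reads $G_{v,v}=1$. You should invoke that axiom here, not the first equation of~\eqref{eq:propadjacency}.
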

\begin{proof} 
Given an arbitrary linear map $G:A\to A$, we define the following linear map $P_G:A\otimes A\to A \otimes A$ fulfilling~\eqref{eq:bimodulerelation}:%
\[\begin{tz}[zx]
\draw[string] (0, 0) to (0,3);
\draw (1.5,0) to (1.5, 3);
\boxwidth{1.5}
\node[widebox] at (0.75, 1.5 ) {$P_G$};
\end{tz}
~~~:=~~~
\begin{tz}[zx,xscale=2.5/3]
\draw (-1.5,0) to (-1.5,1.5) to [out=up, in=-135] (-0.75, 2.5) to (-0.75,3);
\draw (-0.75, 2.5) to [out=-45, in=up] (0, 1.5) to [out=down, in=135] (0.75,0.5) to (0.75,0);
\draw (0.75,0.5) to [out=45, in=down] (1.5,1.5) to (1.5,3);
\node[zxnode=\zxwhite] at (0, 1.5) {$G$};
\node[zxvertex=\zxwhite, zxdown] at (-0.75, 2.5) {};
\node[zxvertex=\zxwhite, zxup] at (0.75, 0.5) {};
\end{tz}\]
Conversely, given a linear map $P_G:A\otimes A \to A\otimes A$ fulfilling~\eqref{eq:bimodulerelation}, we define 
\def\scl{1.33}
\[\begin{tz}[zx,scale=\scl]
\draw (0,0) to (0,2);
\node[zxnode=\zxwhite] at (0,1) {$G$};
\end{tz}
~~~:=~~~
\begin{tz}[zx,xscale=-1,scale=\scl]
\draw (0,-0.25) to (0,1.3);
\draw (0.75,0) to (0.75,1.75);
\boxwidth{1}
\node[widebox] at (0.375,0.65){$P_G$};
\node[zxvertex=\zxwhite] at (0,1.3){};
\node[zxvertex=\zxwhite] at (0.75,0){};
\end{tz}
\]
It follows easily from equation~\eqref{eq:bimodulerelation} that these two constructions are mutually inverse. In the following, we say that $G$ is \emph{real} if the following holds:
\[
\begin{tz}[zx]
\draw (0,0) to (0,3);
\node[zxnode=\zxwhite] at (0,1.5) {$G^\dagger$};
\end{tz}
~~~=~~~
\begin{tz}[zx,xscale=-1]
\draw (1,0) to (1,1.5) to [out=up, in=up, looseness=2.5]node[zxvertex=\zxwhite, pos=0.5] {} (0,1.5) to [out=down, in=down, looseness=2.5]node[zxvertex=\zxwhite, pos=0.5] {} (-1,1.5) to (-1,3);
\node[zxnode=\zxwhite] at (0,1.5) {$G$};
\end{tz}
\]
Simple graphical arguments then establish the following:
\begin{itemize}
\item $G$ is real if and only if $P_G$ is self-adjoint.
\item $G$ fulfills the first equation of~\eqref{eq:propadjacency} if and only if $P_G^2=P_G$.
\item $G$ fulfills the second equation of~\eqref{eq:propadjacency} if and only if $P_G$ is symmetric.
\item $G$ fulfills the last equation of~\eqref{eq:propadjacency} if and only if $P_G$ is reflexive.\qedhere
\end{itemize}
\end{proof}
\begin{remark} We observed in Remark~\ref{rem:quantumrelationprojection} that a quantum relation on $A$ can be understood as a projection $p$ in $A^{op}\otimes A $. In terms of the adjacency matrix $G$, the projection corresponding to a quantum graph $(A,G)$ is the following element $p$ of $A\otimes A$:
\begin{equation}
\begin{tz}[zx]
\clip (-0.2, -0.8) rectangle (1.2,1.5);
\draw (0,0) to (0,1.5);
\draw (1,0) to (1,1.5);
\boxwidth{1}
\node[widebox] at (0.5,0){$p$};
\end{tz}
~~~=~~~
\begin{tz}[zx]
\clip (-0.2, -0.8) rectangle (1.2,1.5);
\draw (1,1.5) to (1,0) to [out=down, in=down, looseness=2] node[zxvertex=\zxwhite, pos=0.5]{} (0,0) to (0,1.5);
\boxwidth{1}
\node[zxnode=\zxwhite] at (1,0.5){$G$};
\end{tz}
\end{equation}
In particular, a quantum relation on $A$ given by a projection $p\in A^{op}\otimes A$ is symmetric if $\sigma(p) = p$, where $\sigma:A\otimes A \to A\otimes A$ is the swap map, and reflexive if $m(p)=e_A$, where $m:A\otimes A \to A$ is the multiplication of $A$ and $e_A$ is the unit of $A$.
\end{remark}

\begin{remark}A related notion of quantum graph was introduced and studied by Duan, Severini and Winter~\cite{Duan2013} as a noncommutative quantum version of the \emph{confusability graphs} in classical zero-error communication. These noncommutative graphs are defined as operator systems on finite-dimensional Hilbert spaces. We now show that such quantum graphs correspond to our quantum graphs on matrix algebras.
 \end{remark}
 \begin{definition}\label{def:operatorsystem} Let $H$ be a finite-dimensional Hilbert space. An \emph{operator system} on $H$ is a subspace $\mathcal{V} \subseteq \End(H)$ such that the following hold:
 \begin{calign}
 X\in \mathcal{V} ~~\Rightarrow ~~X^\dagger \in \mathcal{V} 
 &
 \mathbbm{1}_H \in \mathcal{V}
 \end{calign}
 \end{definition}
 \noindent
We prove that every operator system on $H$ gives rise to a quantum graph $(\End(H), G)$ on the endomorphism algebra $\End(H)$ and vice versa.
 
 \begin{proposition}\label{prop:quantumgraphDuanthesame}Let $H$ be a finite-dimensional Hilbert space. There is a canonical correspondence between quantum graphs $(\End(H),G)$ on the endomorphism algebra of $H$ and operator systems on $H$.
 \end{proposition}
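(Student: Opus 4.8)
The plan is to assemble the correspondence by composing identifications that are already available in the paper, specialised to $A=\End(H)$. By Theorem~\ref{thm:relationadjacency}, a quantum graph $(\End(H),G)$ is the same datum as a symmetric and reflexive quantum relation on $\End(H)$, and by Remark~\ref{rem:quantumrelationprojection} a quantum relation on $\End(H)$ is exactly a projection $p$ in the $C^{*}$-algebra $\End(H)^{\op}\otimes\End(H)$. So it suffices to set up a bijection between those projections $p$ that are symmetric and reflexive and the operator systems on $H$ in the sense of Definition~\ref{def:operatorsystem}.

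First I would identify the ambient algebra. The transpose is a $*$-anti-isomorphism $\End(H)\to\End(H^{*})$, hence a $*$-isomorphism $\End(H)^{\op}\cong\End(H^{*})\cong\End(\bar H)$, and therefore $\End(H)^{\op}\otimes\End(H)\cong\End(\bar H\otimes H)$. The inner product on $H$ gives a canonical unitary $\bar H\otimes H\cong H^{*}\otimes H\cong\End(H)$ (Hilbert--Schmidt), so a projection $p\in\End(\bar H\otimes H)$ is the same datum as its range, a linear subspace $\mathcal V\subseteq\End(H)$. All the isomorphisms involved are canonical, so the assignment $p\leftrightarrow\mathcal V$ is canonical; concretely, $\mathcal V$ is the image of the map obtained from $G$ by bending the wires of $P_G$ around the Frobenius cups and caps, exactly as in~\eqref{eq:recoveradjacency}, and $G$ is recovered from $\mathcal V$ by the inverse of that manipulation (mirroring the classical fact that~\eqref{eq:graphprojector} is the projector onto the relation~\eqref{eq:projectorsubset}).

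It then remains to match the two extra axioms. Using the description of a symmetric, resp.\ reflexive, quantum relation as a projection $p$ with $\sigma(p)=p$ for the swap $\sigma$, resp.\ $m(p)=\mathbbm{1}_{H}$ for the multiplication $m$ of $\End(H)$ --- as in the remark following Theorem~\ref{thm:relationadjacency} --- one checks by a routine graphical computation (using that $\End(H)$ carries its standard special symmetric dagger Frobenius structure, with $m$ given by composition and the cup/cap built from the trace) that under $p\leftrightarrow\mathcal V$ the condition $\sigma(p)=p$ becomes $\mathcal V^{\dagger}=\mathcal V$ and $m(p)=\mathbbm{1}_{H}$ becomes $\mathbbm{1}_{H}\in\mathcal V$. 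Thus symmetric reflexive quantum relations on $\End(H)$ correspond precisely to adjoint-closed unital subspaces of $\End(H)$, that is, to operator systems on $H$; composing with the two bijections of the first paragraph yields the asserted canonical correspondence, the axioms of Definition~\ref{def:quantumgraphsbyadjmats} (plus reflexivity) translating term by term into "$P_G$ is a projector'' and "$\mathcal V$ is an operator system'' just as in the proof of Theorem~\ref{thm:relationadjacency}.

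The main obstacle is essentially bookkeeping of conventions: one must track how $\sigma$ and $m$ transport along the chain $\End(H)^{\op}\otimes\End(H)\cong\End(\bar H\otimes H)\cong\End(\End(H))$, and verify in particular that $\sigma$ becomes the adjoint $X\mapsto X^{\dagger}$ (and not the transpose or the entrywise conjugate), that "reflexive'' becomes "$\mathbbm{1}_{H}\in\mathcal V$'' (and not merely "$\mathcal V$ contains a nonzero scalar''), and that it is the \emph{range}, not the kernel, of the orthogonal projector $p$ that one takes as $\mathcal V$. None of this is deep, but it is where a sign or variance error would slip in. A cleaner alternative, which I would write out if it turns out shorter, is to bypass $p$ entirely: deduce from the bimodule condition~\eqref{eq:bimodulerelation} that $P_G$ is left multiplication by a projection in $\End(H)^{\op}\otimes\End(H)\cong\End(\bar H\otimes H)$, and then use a matrix-unit basis of $\End(H)$ to read off $\mathcal V$ directly as the span of the relevant rank-one operators.
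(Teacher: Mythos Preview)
Your proposal is correct and follows essentially the same route as the paper's explicit construction: both reduce, via Theorem~\ref{thm:relationadjacency}, to identifying symmetric reflexive quantum relations on $\End(H)$ with operator systems, and both do this by collapsing the bimodule projector $P_G$ on $\End(H)\otimes\End(H)$ to a projector on $H^*\otimes H\cong\End(H)$ and then matching the symmetry/reflexivity conditions to $\mathcal V^\dagger=\mathcal V$ and $\mathbbm 1_H\in\mathcal V$. The only packaging difference is that you pass through the projection $p\in\End(H)^{\op}\otimes\End(H)$ of Remark~\ref{rem:quantumrelationprojection} and the algebra isomorphism $\End(H)^{\op}\otimes\End(H)\cong\End(\bar H\otimes H)$, whereas the paper unwinds $\End(H)\cong H\otimes H^*$ directly in the diagram for $P_G$ and uses the bimodule equation~\eqref{eq:bimodulerelation} to strip the outer $H$-wires; these are two descriptions of the same map. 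The paper also opens with a one-line argument you do not give: combine Theorem~\ref{thm:relationadjacency} and Proposition~\ref{prop:WeaverKuperberg} with the fact, proved in~\cite{Weaver2015}, that symmetric reflexive Kuperberg--Weaver quantum relations on $\End(H)$ are exactly operator systems on $H$.
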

\begin{proof}
It is shown in~\cite{Weaver2015} that symmetric and reflexive quantum relations (in the sense of Kuperberg and Weaver) on $\End(H)$ coincide with operator systems on $H$. The proposition therefore follows from Proposition~\ref{prop:WeaverKuperberg} and Theorem~\ref{thm:relationadjacency}, in which we prove that our notion of quantum graph coincides with Kuperberg and Weaver's symmetric and reflexive quantum relations.
For concreteness, we explicitly construct the correspondence between quantum graphs on endomorphism algebras and operator systems.

An operator system $\mathcal{V}\subseteq \End(H)$ can be described in terms of the projector $P_{\mathcal{V}}:H^*\otimes H \to H^*\otimes H$ onto the subspace $\mathcal{V}\subseteq \End(H)\cong H^* \otimes H$. This projector fulfills the following equations:%
\def\scl{1.25}%
\def\w{0.9375}%
\begin{calign}\label{eq:operatorsystemgraphics}
\begin{tz}[zx,xscale=0.75,yscale=3/3.5,scale=\scl]
\draw[arrow data={0.165}{<},arrow data={0.91}{<}] (0,0.25) to (0,1.25) to [out=up, in=up,looseness=1.75]  (2.5,1.25) to (2.5,-1);
\draw[arrow data={0.26}{>}, arrow data={0.86}{>}] (1,0.25) to (1,1.25) to [out=up, in=up,looseness=3] (1.5,1.25) to (1.5,-1);
\draw[arrow data={0.25}{<}, arrow data={0.99}{<}] (-0.5,2.5) to (-0.5,0.25) to [out=down, in=down, looseness=3] (0,0.25);
\draw[arrow data={0.14}{>}, arrow data={0.99}{>}] (-1.5,2.5) to (-1.5,0.25) to [out=down, in=down, looseness=1.75] (1,0.25);
\boxwidth{\w}
\node[widebox] at (0.5,0.75) {$P_{\mathcal{V}}$};
\end{tz}
~~~=~~~
\begin{tz}[zx,xscale=0.75,scale=\scl]
\draw[arrow data={0.2}{<}, arrow data={0.8}{<}] (0,0) to (0, 3);
\draw[arrow data={0.2}{>}, arrow data={0.8}{>}] (1,0) to (1,3);
\boxwidth{\w}
\node[widebox] at (0.5, 1.5){$P_{\mathcal{V}}$};
\end{tz}
&
\begin{tz}[zx,xscale=0.75,scale=\scl]
\clip (-0.3,0) rectangle (1.3,3);
\draw[arrow data={0.1}{>}, arrow data={0.325}{>}, arrow data={0.695}{>}, arrow data={0.92}{>}] (0,3) to (0,1) to [out=down, in=down]  (1,1) to (1,3);
\boxwidth{\w}
\node[widebox] at (0.5,2){$P_{\mathcal{V}}$};
\end{tz}
~~~=~~~
\begin{tz}[zx,xscale=0.75,scale=\scl]
\clip (-0.2,0) rectangle (1.2,3);
\draw[arrow data={0.17}{>}, arrow data={0.86}{>}] (0,3) to (0,2) to [out=down, in=down] (1,2) to (1,3);
\end{tz}
\end{calign}
The first equation corresponds to the property that $X\in \mathcal{V}\Rightarrow X^\dagger \in \mathcal{V}$, the second equation encodes that $\mathbbm{1}_H \in \mathcal{V}$.

Let $(\End(H), G)$ be a quantum graph, expressed as a symmetric, reflexive quantum relation $P_G: \End(H) \otimes \End(H) \to \End(H) \otimes \End(H)$ (see Theorem~\ref{thm:relationadjacency} and Definition~\ref{def:quantumgraphsbyrelations}).
The bimodule equation~\eqref{eq:bimodulerelation} can now be expressed as the following first equation; the second equation is obtained from contracting the second wire at the bottom with the third and the sixth with the seventh (here we identify $\End(H) \cong H \otimes H^*$):%
\begin{equation}\label{eq:bimodulematrix}
\begin{tz}[zx]
\draw[arrow data={0.15}{<}, arrow data={0.9}{<}] (0,-1.5) to +(0,3);
\draw[arrow data={0.16}{>}, arrow data={0.91}{>}] (1,-1.5) to +(0,3);
\draw[arrow data={0.15}{>}, arrow data={0.91}{>}] (-1.2,-1.5) to [out=up, in=down] (-0.4,0) to (-0.4,1.5);
\draw[arrow data={0.8}{<}] (-0.8, -1.5) to [out=up, in=up, looseness=3] (-0.4,-1.5);
\draw [arrow data={0.15}{<}, arrow data={0.91}{<}](2.2, -1.5) to[out=up, in=down] (1.4,0) to (1.4,1.5);
\draw[arrow data={0.8}{<}] (1.4, -1.5) to [out=up, in=up, looseness=3]  (1.8,-1.5);
\boxwidth{1.8}
\node[widebox] at (0.5, 0.5) {$P_G$};
\end{tz}
~~~=~~~
\begin{tz}[zx]
\draw[arrow data={0.15}{<}, arrow data={0.9}{<}] (0,-1.5) to +(0,3);
\draw[arrow data={0.16}{>}, arrow data={0.91}{>}]  (1,-1.5) to +(0,3);
\draw[arrow data={0.15}{>}, arrow data={0.91}{>}] (-1.2,-1.5) to (-1.2,0) to [out=up, in=down] (-0.4,1.5);
\draw[arrow data={0.12}{<}, arrow data={0.62}{<}, arrow data={0.91}{<}] (-0.8, -1.5) to (-0.8, 0) to  [out=up, in=up, looseness=3] (-0.4,0) to (-0.4, -1.5);
\draw[arrow data={0.15}{<}, arrow data={0.91}{<}] (2.2, -1.5) to (2.2,0) to [out=up, in=down] (1.4,1.5);
\draw[arrow data={0.12}{<}, arrow data={0.41}{<}, arrow data={0.91}{<}]  (1.4, -1.5) to (1.4, 0) to  [out=up, in=up, looseness=3] (1.8, 0) to (1.8, -1.5);
\boxwidth{1.8}
\node[widebox] at (0.5, -0.5) {$P_G$};
\end{tz}
\hspace{1.25cm} \Rightarrow \hspace{1.25cm}
\dim(H)^2~~
\begin{tz}[zx]
\draw[arrow data={0.15}{<}, arrow data={0.85}{<}] (0,-1.5) to (0,1.5);
\draw[arrow data={0.15}{>}, arrow data={0.85}{>}] (1,-1.5) to +(0,3);
\draw[arrow data={0.15}{>}, arrow data={0.85}{>}] (-0.4,-1.5) to + (0,3);
\draw[arrow data={0.15}{<}, arrow data={0.85}{<}] (1.4,-1.5) to + (0,3);
\boxwidth{1.8}
\node[widebox] at (0.5, 0){$P_G$};
\end{tz}
~~~=~~~
\begin{tz}[zx]
\draw[arrow data={0.15}{>}, arrow data={0.85}{>}] (-1.4, -1.5) to + (0,3);
\draw[arrow data={0.15}{<}, arrow data={0.85}{<}] (2.4,-1.5) to + (0,3);
\draw[arrow data={0.15}{<}, arrow data={0.85}{<}] (0,-1.5) to (0,1.5);
\draw[arrow data={0.15}{>}, arrow data={0.85}{>}] (1,-1.5) to +(0,3);
\draw[arrow data={0.18}{>}, arrow data={0.88}{>}] (-0.4,0) to [out=up, in=up, looseness=5] (-1,0) to [out=down, in=down, looseness=5]  (-0.4,0);
\draw[arrow data={0.18}{<}, arrow data={0.88}{<}] (1.4,0) to [out=up, in=up, looseness=5] (2,0) to [out=down, in=down, looseness=5]  (1.4,0);
\boxwidth{1.8}
\node[widebox] at (0.5, 0){$P_G$};
\end{tz}
\end{equation}
Therefore, quantum relations $P_G: \End(H)\otimes \End(H) \to \End(H) \otimes \End(H)$ can be interconverted into projectors $P_\mathcal{V}:H^*\otimes H \to H^*\otimes H$ as follows:%
\begin{calign}\nonumber
\begin{tz}[zx]
\draw[arrow data={0.15}{<}, arrow data={0.85}{<}] (0,-1.5) to (0,1.5);
\draw[arrow data={0.15}{>}, arrow data={0.85}{>}] (1,-1.5) to +(0,3);
\boxwidth{1}
\node[widebox] at (0.5, 0){$P_{\mathcal{V}}$};
\end{tz}
~~~=
~\frac{1}{\dim(H)^2}~~
\begin{tz}[zx]
\draw[arrow data={0.15}{<}, arrow data={0.85}{<}] (0,-1.5) to (0,1.5);
\draw[arrow data={0.15}{>}, arrow data={0.85}{>}] (1,-1.5) to +(0,3);
\draw[arrow data={0.18}{>}, arrow data={0.88}{>}] (-0.4,0) to [out=up, in=up, looseness=5] (-1,0) to [out=down, in=down, looseness=5]  (-0.4,0);
\draw[arrow data={0.18}{<}, arrow data={0.88}{<}] (1.4,0) to [out=up, in=up, looseness=5] (2,0) to [out=down, in=down, looseness=5]  (1.4,0);
\boxwidth{1.8}
\node[widebox] at (0.5, 0){$P_G$};
\end{tz}
&
\begin{tz}[zx]
\draw[arrow data={0.15}{<}, arrow data={0.85}{<}] (0,-1.5) to (0,1.5);
\draw[arrow data={0.15}{>}, arrow data={0.85}{>}] (1,-1.5) to +(0,3);
\draw[arrow data={0.15}{>}, arrow data={0.85}{>}] (-0.4,-1.5) to + (0,3);
\draw[arrow data={0.15}{<}, arrow data={0.85}{<}] (1.4,-1.5) to + (0,3);
\boxwidth{1.8}
\node[widebox] at (0.5, 0){$P_G$};
\end{tz}
~~~=~~~
\begin{tz}[zx]
\draw[arrow data={0.15}{<}, arrow data={0.85}{<}] (0,-1.5) to (0,1.5);
\draw[arrow data={0.15}{>}, arrow data={0.85}{>}] (1,-1.5) to +(0,3);
\draw[arrow data={0.15}{>}, arrow data={0.85}{>}] (-0.4,-1.5) to + (0,3);
\draw[arrow data={0.15}{<}, arrow data={0.85}{<}] (1.4,-1.5) to + (0,3);
\boxwidth{1}
\node[widebox] at (0.5, 0){$P_{\mathcal{V}}$};
\end{tz}
\end{calign}
It follows from equation~\eqref{eq:bimodulematrix} that these constructions are mutually inverse. Under this correspondence, the map $P_G$ is a projector if and only if $P_{\mathcal{V}}$ is a projector, $P_G$ is a symmetric quantum relation if and only if $P_\mathcal{V}$ fulfills the first equation of~\eqref{eq:operatorsystemgraphics}, and $P_G$ is reflexive if and only if $P_{\mathcal{V}}$ fulfills the second equation of~\eqref{eq:operatorsystemgraphics}.
\end{proof}

\bibliographystyle{plainurl}
\bibliography{QSET}

\end{document}